\renewcommand{\@biblabel}[1]{#1.}
\newtheorem{note1}{Note}          
\newtheorem{te}{Theorem}           
\newtheorem{leme}{Lemma}         
\newtheorem{remark}{Remark}   
\newtheorem{defin}{Definition}  
\begin{document}

\centerline{ \bf\Large{  Derivatives Pricing in Non-Arbitrage Market}}

\vskip 5mm
{\bf \centerline {\Large  N.S. Gonchar \footnote{This work was partially supported   by  the Program of Fundamental Research of the Department of Physics and Astronomy of  the National Academy of Sciences of Ukraine  "Mathematical models of non equilibrium processes in open systems" N 0120U100857.}}     }

\vskip 5mm
\centerline{\bf {Bogolyubov Institute for Theoretical Physics of NAS of Ukraine.}}
\vskip 2mm

\begin{abstract}

The general method is proposed for constructing a family of martingale measures for a wide class of evolution of risky assets. The sufficient conditions are formulated for the evolution of risky assets under which the family of equivalent  martingale measures to the original measure is a non-empty set. The set of martingale measures is constructed from a set of strictly nonnegative random variables, satisfying certain conditions. The inequalities are obtained for the non-negative random variables satisfying certain conditions. Using these inequalities, a new simple  proof of   optional decomposition  theorem for the nonnegative super-martingale is proposed. The family of spot measures is introduced and the representation is found for them. The conditions are found under which each martingale measure is an integral over the set of spot measures. On the basis of nonlinear processes such as ARCH and GARCH, the parametric family of random processes is introduced for which the interval of non-arbitrage prices are found.
The formula is obtained for the fair price of the contract with  option of  European type for the considered parametric processes. The parameters of the introduced random processes are estimated and the estimate is found at which the fair price of  contract with  option is the least.
\end{abstract}

\centerline{{\bf Keywords:}  Random process; Spot set of measures; } 
\centerline{ Optional Doob decomposition; Super-martingale;}
 \centerline{Martingale; Assessment of derivatives.}

\section{Introduction.}
The study of non-arbitrage markets was begun for the first time in Bachelier's work \cite{Bachelier}. Then, in the famous works of Black F. and Scholes M. \cite{Black} and Merton R. S. \cite{Merton} the formula was found for the fair price of the standard call option  of European type.
The absence of arbitrage in the financial market has a very transparent economic sense, since it can be considered reasonably arranged.
The concept of non arbitrage in  financial market is associated with the fact that one cannot earn money without risking, that is,
to make money you need to invest in  risky or risk-free assets. 
The exact mathematical substantiation of the concept of non arbitrage   was first made in the papers
 \cite{HK79}, \cite{HP81} for the finite probability space and in the general case in the paper \cite{DMW90}. In the continuous time evolution of risky asset,  the proof of  absent of arbitrage possibility see in \cite{WalterSchacher}. The value of the established Theorems is that they make it possible to value assets.
They got a special name "The First and The Second Fundamental Asset Pricing   Theorems." Generalizations of these Theorems are contained in papers \cite{Rogers}, \cite{Shiryaev},  \cite{Shiryaev1}.

If the martingale measure is not the only one for a given evolution of a risky asset, then a rather difficult problem  of describing all martingale measures arises in order to evaluate, for example, derivatives.

Assessment of risk in various systems was begun in papers \cite{Gonchar2},\cite{Gonchar555}, \cite{Gonchar557}, \cite{Honchar100}.

Statistical studies of the time series of the logarithm of the price ratio of risky assets contain heavy tails in distributions with strong elongation in the central region. The temporal behavior of these quantities exhibits the property of clustering and a strong dependence on the past. All this should be taken into account when building models for the evolution of risky assets. 

In this paper, we generalize the results of the papers \cite{GoncharSimon},   \cite{GoncharNick1},  \cite{GoncharNick} and   construct the  evolution    of risky assets for which we completely describe the set of equivalent martingale measures. 

The aim of this study is to describe the family of martingale measures for a wide class of risky asset evolutions. The paper proposes the general concept for constructing the family of martingale measures equivalent to a given measure for a wide class of evolutions of risky assets.
In particular, it also contains the description of the family of martingale measures for the evolution of risky assets given  by the ARCH \cite{Engle} and  GARCH  \cite{Bollerslev}, \cite{EngleBollerslev} models.
In  section 2,  we formulate the conditions relative to the evolution of risky assets
and give the examples of risky asset evolution satisfying these conditions. 
 Section 3 contains the  construction of measures by recurrent relations.
It is shown that under the conditions relative to the  evolution of risky assets such construction is meaningful. It is proved that the constructed set of measures is equivalent to an initial measure. In theorem \ref{witka2}, we are proved that under  certain integrability conditions of risky asset evolution the set of constructed measures is a set of martingale measures relative to this evolution of risky asset. In Section 4 we prove the inequalities for the nonnegative random values  very useful for the proof of optional decomposition for the non negative super-martingales relative to the set of all martingale measures. 

First, we show an integral inequality for a nonnegative random variable under the inequality for this nonnegative random variable with respect to the constructed family of measures. Further, using this integral inequality for the non-negative random variable, a pointwise system of inequalities is obtained for this non-negative random variable for a particular  case. After that, the pointwise system of inequalities  is obtained for the non-negative random variable in the general case. Then, using the resulting pointwise system of inequalities, an inequality is established for this non-negative random variable whose right-hand side is such that its conditional mathematical expectation is equal to one.

On the basis of the results of Section 4, in Section 5, we prove the optional decomposition for the non negative super-martingales.
In Section 6, we introduce the spot measures by the recurrent relations and find the representation for them. Using these facts under  certain conditions  we prove 
integral representation for every martingale measure over the set of spot  measures. 

First, the optional decomposition for   diffusion processes super-martingale was opened by  by  El Karoui N. and  Quenez M. C. \cite{KarouiQuenez}. After that, Kramkov D. O. and Follmer H. \cite{Kramkov}, \cite{FolmerKramkov1} proved the optional decomposition for the nonnegative bounded super-martingales.  Folmer H. and Kabanov Yu. M.  \cite{FolmerKabanov1},  \cite{FolmerKabanov}  proved analogous result for an arbitrary super-martingale. Recently, Bouchard B. and Nutz M. \cite{Bouchard1} considered a class of discrete models and proved the necessary and sufficient conditions for the validity of the optional decomposition.

Section 7 contains applications of the results obtained.
 A class of random processes is considered, which contains well-known processes of the type  ARCH and GARCH ones. Two types of random processes are considered, those for which the price of an asset cannot go down to zero and those for which the price can go down to zero during the period under consideration. The first class of processes describes the evolution of well-managed assets. We will call these assets relatively stable.
 For the evolution of relatively stable assets in the period under consideration, the family of martingale measures is one and the same.
The family of martingale measures for the evolution of risky assets whose price can drop to zero is contained in the family of martingale measures for the evolution of relatively stable assets. Each of the martingale measures for the considered class of evolutions is an integral over the set of spot martingale measures.

The interval of non-arbitrage prices is found for a wide class of payoff functions in the case when evolution describes relatively unstable assets.
This range is quite wide for the payoff functions of standard put and call options. The fair price of the super hedge is in this case the starting price of the underlying asset. 
The estimates are found for the fair price of the super-hedge for the introduced class of evolutions with respect to stable assets.
The formulas are found for the fair price of contracts with call and put options for the evolution of assets described by parametric processes.

 The same formulas are found for Asian-type put and call options. A characteristic feature of these estimates is that for the evolution of relatively stable assets the fair price of the super hedge is less than the price of the underlying asset.

In Section 8, the estimates of the parameters of risky assets included in the evolution are obtained. 
The formulas are found for the fair price of contracts with call and put options for the obtained parameter estimates, and the interval of non-arbitrage prices for different statistics is found. The same results are obtained for Asian-style call and put options.

\section{Evolutions of risky assets.}

Let $\{\Omega_N, {\cal F}_N, P_N\}$ be a  direct  product 
of the probability spaces $\{\Omega_i^0, {\cal F}_i^0, P_i^0\}, \ i=\overline{1, N}, $ 
$\Omega_N=\prod\limits_{i=1}^N\Omega_i^0,$ $P_N=\prod\limits_{i=1}^N P_i^0,$
${\cal F}_N=\prod\limits_{i=1}^N  {\cal F}_i^0,$ where
  the $\sigma$-algebra ${\cal F}_N$ is a minimal $\sigma$-algebra, generated by the sets $\prod\limits_{i=1}^N G_i, \  G_i \in {\cal F}_i^0.$ 
 On the measurable space $\{\Omega_N, {\cal F}_N\},$ under the filtration  ${\cal F}_n, \ n=\overline{1, N},$  we understand the minimal 
$\sigma$-algebra generated by the sets $\prod\limits_{i=1}^N G_i, \  G_i \in {\cal F}_i^0,$ where $G_i=\Omega_i^0$ for  $ i>n.$
We also introduce  the probability spaces $\{\Omega_n, {\cal F}_n, P_n\}, n=\overline{1, N}, $ where $\Omega_n=\prod\limits_{i=1}^n\Omega_i^0,$ ${\cal F}_n=\prod\limits_{i=1}^n  {\cal F}_i^0,$  $P_n=\prod\limits_{i=1}^n P_i^0.$
There is a one-to-one correspondence between the sets of the $\sigma$-algebra ${\cal F}_n,$ belonging to the introduced  filtration, and the sets of the  $\sigma$-algebra ${\cal F}_n=\prod\limits_{i=1}^n  {\cal F}_i^0$  of the measurable space $\{\Omega_n, {\cal F}_n\}, n=\overline{1, N}. $  Therefore, we don't  introduce new denotation for the $\sigma$-algebra  ${\cal F}_n$ of the measurable space  $\{\Omega_n, {\cal F}_n\},$ since  it always will be clear the difference between 
 the above introduced  $\sigma$-algebra  ${\cal F}_n$ of  filtration  on the measurable  space $\{\Omega_N, {\cal F}_N\}$ and the  $\sigma$-algebra ${\cal F}_n$  of the measurable space $\{\Omega_n, {\cal F}_n\}, n=\overline{1, N}. $

We assume that the evolution of risky asset $\{S_n\}_{n=0}^N, $ 
 given  on the probability space  $\{\Omega_N, {\cal F}_N, P_N\},$  is consistent with the filtration $ {\cal F}_n$, that is,  $S_n$ is a ${\cal F}_n$-measurable.
Due to the above one-to-one correspondence between the sets of the $\sigma$-algebra ${\cal F}_n,$ belonging to the introduced  filtration, and the sets of the  $\sigma$-algebra ${\cal F}_n$  of the measurable  space $\{\Omega_n, {\cal F}_n\}, n=\overline{1, N},$ we  give the evolution of risky assets in the form $\{S_n(\omega_1, \ldots, \omega_n)\}_{n=0}^N, $ where
$S_n(\omega_1, \ldots, \omega_n)$ is an ${\cal F}_n$-measurable random variable, given on the measurable space  $\{\Omega_n, {\cal F}_n\}.$  It is evident that such evolution is consistent with the filtration $ {\cal F}_n$ on the measurable space $\{\Omega_N, {\cal F}_N, P_N\}.$

Further, we assume that 
$$P_n( (\omega_1, \ldots,\omega_n) \in \Omega_n, \  \Delta S_n>0)>0, $$
\begin{eqnarray}\label{vitasja1}
 P_n( (\omega_1, \ldots,\omega_n) \in \Omega_n, \  \Delta S_n<0)>0, \quad n=\overline{1,N},
\end{eqnarray}
where $\Delta S_n=S_n(\omega_1, \ldots,\omega_n)- S_{n-1}(\omega_1, \ldots, \omega_{n-1}), \  n=\overline{1,N}.$

Let us introduce the denotations
\begin{eqnarray}\label{vitasja2}
\Omega_n^-=\{(\omega_1, \ldots,\omega_n) \in \Omega_n, \ \Delta S_n\leq 0\}, \quad \Omega_n^+=\{(\omega_1, \ldots,\omega_n) \in \Omega_n, \ \Delta S_n > 0\},
\end{eqnarray}
\begin{eqnarray}\label{vitasja3}
\Delta S_n^-=-\Delta S_n \chi_{\Omega_n^-}(\omega_1, \ldots,\omega_n), \quad  \Delta S_n^+=\Delta S_n \chi_{\Omega_n^+}(\omega_1, \ldots,\omega_n),
\end{eqnarray}

$$ V_n(\omega_1, \ldots,\omega_{n-1}, \omega_n^1,  \omega_n^2)=\Delta S_n^-(\omega_1, \ldots,\omega_{n-1}, \omega_n^1)+\Delta S_n^+(\omega_1, \ldots,\omega_{n-1}, \omega_n^2), $$
\begin{eqnarray}\label{vitasja4}
(\omega_1, \ldots,\omega_{n-1}, \omega_n^1) \in \Omega_n^-, \quad (\omega_1, \ldots,\omega_{n-1}, \omega_n^2) \in \Omega_n^+.
\end{eqnarray}

 We use the following denotation $ \Omega_n^{a}, \ n=\overline{1,N}, $ where $a$ takes two values $-$ and $+.$ 

Our assumption,  in this paper, is that for $ \Omega_n^{a}, \ a=-,+,$ the  representations 
\begin{eqnarray}\label{100vitasikja7}
  \Omega_n^{-}=\bigcup\limits_{k=1}^{N_n}[A_n^{0,k-}\times V_{n-1}^k],
\quad   \Omega_n^{+}=\bigcup\limits_{k=1}^{N_n}[A_n^{0,k+}\times V_{n-1}^k], \quad  N_n\leq \infty,
\end{eqnarray}
are true, where
$$\Omega_{n-1}=\bigcup\limits_{k=1}^{N_n}V_{n-1}^k, \ A_n^{0,k-}, \ A_n^{0,k+} \in {\cal F}_n^0, \quad  A_n^{0,k-}\cup A_n^{0,k+}=\Omega_n^0,$$
\begin{eqnarray}\label{101vitasikja7}
 \quad A_n^{0,k-}\cap A_n^{0,k+}=\emptyset, \quad  V_{n-1}^k \cap  V_{n-1}^j=\emptyset,\quad  k \neq j, \quad  V_{n-1}^k \in {\cal F}_{n-1}.
\end{eqnarray}
The number $N_n$ may be finite or infinite. Since $\Omega_n^-\cup \Omega_n^+=\Omega_n,$ $\Omega_n^-\cap \Omega_n^+=\emptyset,$ and $P_n(\Omega_n^-)>0, \ P_n(\Omega_n^+)>0, $ we have

$$P_n(\Omega_n^-)=\sum\limits_{k=1}^{N_n} P_n^0( A_n^{0,k-}) P_{n-1}(V_{n-1}^k ), $$ 
\begin{eqnarray}\label{102vitasikja7}
P_n(\Omega_n^+)=\sum\limits_{k=1}^{N_n} P_n^0( A_n^{0,k+}) P_{n-1}(V_{n-1}^k ), \quad P_n^0( A_n^{0,k-})+P_n^0( A_n^{0,k+})=1.
\end{eqnarray}

Further, in this paper,  we assume that  $P_n^0( A_n^{0,k-})>0, \ P_n^0( A_n^{0,k+})>0, \ n=\overline{1,N}, \ k=\overline{1,N_n}.$
We also assume some technical suppositions: there exist subsets $ B_{n,i}^{0,k -} \in {\cal F}_n^0,$  $ \   i=\overline{1,I_n},\ I_n>1, $  and $ B_{n,s}^{0,k +} \in {\cal F}_n^0,$  $ \   s=\overline{1,S_n},\ S_n>1, $ 
satisfying the conditions
  $$   \ B_{n,i}^{0,k -}\cap B_{n,j}^{0,k -}=\emptyset,\  i \neq j, \quad  \ B_{n,s}^{0,k +}\cap B_{n,l}^{0,k +}=\emptyset,\  s\neq l, \quad k=\overline{1,N_n},$$ 
$$  P_n^0( B_{n,i}^{0,k -})>0,\    i=\overline{1,I_n},\   P_n^0( B_{n,s}^{0,k +})>0,\    s=\overline{1,S_n}, \quad k=\overline{1,N_n},$$
\begin{eqnarray}\label{1vitasika1ja9}
A_n^{0,k -}=\bigcup\limits_{i=1}^{I_n}B_{n,i}^{0,k -},  \quad A_n^{0,k +}=\bigcup\limits_{s=1}^{S_n}B_{n,s}^{0,k +}, \quad k=\overline{1,N_n}.
\end{eqnarray}

Below, we give the examples of  evolutions $\{S_n(\omega_1, \ldots,\omega_n)\}_{n=1}^N $ for which the representations  (\ref{100vitasikja7}) are true.
 
Suppose that  the random values $ a_i(\omega_1, \ldots,\omega_i), $ $\eta_i(\omega_i)$ satisfy the inequalities
$0 < a_i(\omega_1, \ldots,\omega_i)\leq 1,$ $1+\eta_i(\omega_i) \geq 0,$
$P_i^0(\eta_i(\omega_i)<0)>0,$ $P_i^0(\eta_i(\omega_i)>0)>0,$ 
$ i=\overline{1,N}.$
If $S_n(\omega_1, \ldots,\omega_n) $ is given by the formula
\begin{eqnarray}\label{tin1vitasika1ja9}
S_n(\omega_1, \ldots,\omega_n)=S_0\prod\limits_{i=1}^n(1+a_i(\omega_1, \ldots,\omega_i)\eta_i(\omega_i)), \quad n=\overline{1,N},
\end{eqnarray}
then
$$ \{\omega_i \in \Omega_i^0, \ 
\eta_i(\omega_i) \leq 0\}=A_i^{0,1-}, \quad \{\omega_i \in \Omega_i^0, \ 
\eta_i(\omega_i) > 0\}=A_i^{0,1+},$$
\begin{eqnarray}\label{tinna1vitasika1ja10}
V_{i-1}^1= \Omega_{i-1},  \quad \Omega_i^-=A_i^{0,1-}\times \Omega_{i-1}, \quad \Omega_i^+=A_i^{0,1+}\times \Omega_{i-1},\quad i=\overline{1,N}.
\end{eqnarray}
In general case, let us consider the evolution of risky asset $\{S_n(\omega_1, \ldots,\omega_n)\}_{n=1}^N, $   given by the formula
$$S_n(\omega_1, \ldots,\omega_n)=$$
\begin{eqnarray}\label{tin1vitasika1ja11}
S_0\prod\limits_{i=1}^n(1+\sum\limits_{k=1}^{N_i}\eta_i^k(\omega_i)
\chi_{V_{i-1}^k}(\omega_1, \ldots,\omega_{i-1}) a_i^k(\omega_1, \ldots,\omega_i)), \quad n=\overline{1,N},
\end{eqnarray}
where the random values $ a_i^k(\omega_1, \ldots,\omega_i), $ $\eta_i^k(\omega_i)$ satisfy the inequalities \\
$0 < a_i^k(\omega_1, \ldots,\omega_i)\leq 1,$ $1+\eta_i^k(\omega_i) \geq 0,$
$P_i^0(\eta_i^k(\omega_i)<0)>0,$ $P_i^0(\eta_i^k(\omega_i)>0)>0,$ 
$ i=\overline{1,N},$ \  $ k=\overline{1,N_n},$
 and
$\bigcup\limits_{k=1}^{N_i} V_{i-1}^k= \Omega_{i-1}, \ V_{i-1}^k\cap V_{i-1}^s,$ $ k \neq s .$
Then, if to put 
$$ \{\omega_i \in \Omega_i^0, \ 
\eta_i^k(\omega_i) \leq 0\}=A_i^{0,k-}, \quad \{\omega_i \in \Omega_i^0, \ 
\eta_i^k(\omega_i) > 0\}=A_i^{0,k+},$$
we obtain
\begin{eqnarray}\label{tin1vitasika1ja10}
 \Omega_i^-=\bigcup\limits_{k=1}^{N_i} [A_i^{0,k-}\times V_{i-1}^k], \quad \Omega_i^+=\bigcup\limits_{k=1}^{N_i} [A_i^{0,k+}\times V_{i-1}^k],\quad i=\overline{1,N}.
\end{eqnarray}

$$ \Delta S_n(\omega_1, \ldots,\omega_{n-1}, \omega_{n})\leq 0, \quad (\omega_1, \ldots,\omega_{n-1}, \omega_{n}) \in \Omega_n^-, ,\quad n=\overline{1,N}, $$
\begin{eqnarray}\label{tinkakrasunja3}
\Delta S_n(\omega_1, \ldots,\omega_{n-1}, \omega_{n})>0, \quad (\omega_1, \ldots,\omega_{n-1}, \omega_{n}) \in \Omega_n^+,\quad n=\overline{1,N}.
\end{eqnarray}

\section{Construction of the set of martingale measures.}

 In this section, we present the construction of the set of  measures  on the basis of evolution of risky assets given by the formulas (\ref{tin1vitasika1ja9}), (\ref{tin1vitasika1ja11}) on the measurable space $\{\Omega_N, {\cal F}_N\}.$ For this purpose,  we use the set  of nonnegative  random values $\alpha_n(\{\omega_1^1, \ldots,\omega_{n-1}^1, \omega_n^1\};\{\omega_1^2, \ldots,\omega_{n-1}^2, \omega_n^2\}),$  given on the probability space 
$\{\Omega_n^-\times \Omega_n^+, {\cal F}_n^-\times {\cal F}_n^+, P_n^-\times P_n^+\}, \ n=\overline{1,N},$ where $ {\cal F}_n^-={\cal F}_n\cap \Omega_n^-,  \ {\cal F}_n^+={\cal F}_n\cap \Omega_n^+.$ The measure $P_n^-$ is a contraction of the measure $P_n$ on the $\sigma$-algebra $ {\cal F}_n^-$ and
the measure $P_n^+$ is a contraction of the measure $P_n$ on the $\sigma$-algebra $ {\cal F}_n^+.$ 
   After that, we prove that this set of measures, defined the above set of random values,  is equivalent to the measure $P_N.$
At last, Theorem \ref{witka2} gives the sufficient conditions under that the constructed set of measures is a set of martingale measures for the considered evolution of risky assets.
Sometimes, we use the abbreviated denotations $\{\omega_1^1, \ldots,\omega_n^1\}=\{\omega\}_n^1, \{\omega_1^2, \ldots,\omega_n^2\}=\{\omega\}_n^2.$

We assume that  the set of  random values $\alpha_n(\{\omega_1^1, \ldots,\omega_n^1\};\{\omega_1^2, \ldots,\omega_n^2\})=\alpha_n(\{\omega\}_n^1;\{\omega\}_n^2),$ $  (\{\omega\}_n^1;\{\omega\}_n^2)  \in \Omega_n^-\times\Omega_n^+, $ $ \ n=\overline{1,N},$ satisfies  the following conditions: 

$$ P_n^-\times P_n^+( (  \{\omega\}_n^1;\{\omega\}_n^2)   \in \Omega_n^-\times \Omega_n^+,  \alpha_n(  \{\omega\}_n^1;\{\omega\}_n^2)>0)=$$
\begin{eqnarray}\label{1vitasja7}
P_n(\Omega_n^-)\times P_n(\Omega_n^+), \quad n=\overline{1,N};
\end{eqnarray}
 $$  \int\limits_{\Omega_n^0\times \Omega_n^0}\chi_{\Omega_n^-}(\omega_1^1, \ldots,\omega_{n-1}^1, \omega_n^1) \chi_{\Omega_n^+}(\omega_1^2, \ldots,\omega_{n-1}^2, \omega_n^2)\times$$
$$\alpha_n(\{\omega_1^1, \ldots,\omega_{n-1}^1, \omega_n^1\};\{\omega_1^2, \ldots, \omega_{n-1}^2,\omega_n^2\})\times $$
$$\frac{\Delta S_n^+(\omega_1, \ldots,\omega_{n-1}, \omega_n^2) \Delta S_n^-(\omega_1, \ldots,\omega_{n-1}, \omega_n^1)}{V_n(\omega_1, \ldots,\omega_{n-1}, \omega_n^1,  \omega_n^2)}d P_n^0(\omega_n^1) dP_n^0(\omega_n^2)< \infty, $$
 $$(\{\omega_1^1, \ldots,\omega_{n-1}^1\};\{\omega_1^2, \ldots,\omega_{n-1}^2\}) \in \Omega_{n-1}\times\Omega_{n-1}, $$
\begin{eqnarray}\label{2vitasja7}
 (\omega_1, \ldots,\omega_{n-1}) \in \Omega_{n-1},  \quad n=\overline{1,N};
\end{eqnarray}
$$  \int\limits_{\Omega_n^0\times \Omega_n^0}\chi_{\Omega_n^-}(\omega_1^1, \ldots,\omega_{n-1}^1, \omega_n^1)\chi_{\Omega_n^+}(\omega_1^2, \ldots,\omega_{n-1}^2, \omega_n^2)\times $$

$$\alpha_n(\{\omega_1^1, \ldots,\omega_{n-1}^1, \omega_n^1\};\{\omega_1^2, \ldots, \omega_{n-1}^2,\omega_n^2\})dP_n^0(\omega_n^1) dP_n^0(\omega_n^2)=1,$$
\begin{eqnarray}\label{3vitasja7} 
 (\{\omega_1^1, \ldots,\omega_{n-1}^1\};\{\omega_1^2, \ldots,\omega_{n-1}^2\}) \in \Omega_{n-1}\times\Omega_{n-1},  \quad n=\overline{1,N}.
\end{eqnarray}

In the next Lemma \ref{witka0}, we give the sufficient conditions under which the conditions (\ref{1vitasja7}) - (\ref{3vitasja7}) are valid.

\begin{leme}\label{witka0}
 Suppose that for  $ \Omega_n^{a}, a=-,+, \ n=\overline{1,N}, $   the representations (\ref{100vitasikja7})
are true. If the conditions 
$$ \inf\limits_{1\leq k\leq N_n} P_n^0(A_n^{0,k -}\setminus B_{n,i}^{0,k-})>0, \quad i=\overline{1,I_n}, \quad  I_n>1,\quad n=\overline{1,N}, $$ $$  \inf\limits_{1\leq k\leq N_n} P_n^0(A_n^{0,k +}\setminus B_{n,s}^{0,k+})>0,\quad s=\overline{1,S_n}, \quad S_n>1, \quad n=\overline{1,N},$$
$$ \inf\limits_{1\leq k\leq N_n} P_n^0( B_{n,i}^{0,k-})>0, \quad i=\overline{1,I_n}, \quad  I_n>1,\quad n=\overline{1,N}, $$ $$  \inf\limits_{1\leq k \leq N_n} P_n^0( B_{n,s}^{0,k+})>0,\quad s=\overline{1,S_n}, \quad S_n>1, \quad n=\overline{1,N},$$
\begin{eqnarray}\label{pupswitkapup0}
\int\limits_{\Omega_N} \Delta S_n^-(\omega_1, \ldots,\omega_{n-1}, \omega_n) d P_N <\infty, \quad \ n=\overline{1,N},
\end{eqnarray}
are true,
then the set of bounded  random values  $\alpha_n(\{\omega\}_n^1;\{\omega\}_n^2),$ satisfying the conditions  (\ref{1vitasja7}) - (\ref{3vitasja7}), is a  nonempty set.
\end{leme}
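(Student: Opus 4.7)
The plan is to exhibit one explicit bounded candidate for $\alpha_n$ and to check the three conditions (\ref{1vitasja7})--(\ref{3vitasja7}) directly from the partition structure (\ref{100vitasikja7}). The candidate I would use is the piecewise-constant product density
\begin{equation*}
\alpha_n(\{\omega\}_n^1;\{\omega\}_n^2)
=\sum_{k,j=1}^{N_n}\chi_{V_{n-1}^k}(\{\omega\}_{n-1}^1)\,\chi_{V_{n-1}^j}(\{\omega\}_{n-1}^2)\,
\frac{\chi_{A_n^{0,k-}}(\omega_n^1)}{P_n^0(A_n^{0,k-})}\cdot
\frac{\chi_{A_n^{0,j+}}(\omega_n^2)}{P_n^0(A_n^{0,j+})}.
\end{equation*}
Since $A_n^{0,k-}\supseteq B_{n,1}^{0,k-}$, the hypothesis $\inf_k P_n^0(B_{n,1}^{0,k-})>0$ gives a uniform lower bound on $P_n^0(A_n^{0,k-})$ in $k$, and analogously for the $+$ side, so $\alpha_n$ is bounded by some finite constant $C_n$.

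The verifications of (\ref{1vitasja7}) and (\ref{3vitasja7}) are then essentially combinatorial. For (\ref{1vitasja7}), given any point $(\{\omega\}_n^1;\{\omega\}_n^2)\in\Omega_n^-\times\Omega_n^+$, the partition (\ref{100vitasikja7}) supplies unique indices $k,j$ with $\{\omega\}_{n-1}^1\in V_{n-1}^k$, $\omega_n^1\in A_n^{0,k-}$ and $\{\omega\}_{n-1}^2\in V_{n-1}^j$, $\omega_n^2\in A_n^{0,j+}$, so the corresponding summand is strictly positive and $\alpha_n>0$ everywhere on $\Omega_n^-\times\Omega_n^+$. For (\ref{3vitasja7}), fixing $(\{\omega\}_{n-1}^1;\{\omega\}_{n-1}^2)\in V_{n-1}^k\times V_{n-1}^j$, the identity $\chi_{\Omega_n^-}(\{\omega\}_{n-1}^1,\omega_n^1)=\chi_{A_n^{0,k-}}(\omega_n^1)$ and its $+$ analogue collapse the sum to a single term, and a Fubini calculation gives the double integral exactly $1$.

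Condition (\ref{2vitasja7}) is where the integrability hypothesis (\ref{pupswitkapup0}) actually enters. On $\Omega_n^-\times\Omega_n^+$ one has $V_n\geq\Delta S_n^+(\omega_n^2)>0$, yielding the pointwise bound
\begin{equation*}
\frac{\Delta S_n^+(\omega_n^2)\,\Delta S_n^-(\omega_n^1)}{V_n(\omega_n^1,\omega_n^2)}\leq\Delta S_n^-(\omega_n^1).
\end{equation*}
Combined with $\alpha_n\leq C_n$, the integral in (\ref{2vitasja7}) is controlled, for each fixed physical history $(\omega_1,\ldots,\omega_{n-1})$, by $C_n\int_{\Omega_n^0}\Delta S_n^-(\omega_1,\ldots,\omega_{n-1},\omega_n^1)\,dP_n^0(\omega_n^1)$, i.e.\ by $C_n\,E[\Delta S_n^-\mid{\cal F}_{n-1}]$, which is $P_{n-1}$-a.s.\ finite by (\ref{pupswitkapup0}).

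The main obstacle I foresee is notational rather than analytic: one must keep straight the distinction between the two independent dummy histories $\{\omega\}_{n-1}^1,\{\omega\}_{n-1}^2$ parameterising $\alpha_n$ and the single physical history $(\omega_1,\ldots,\omega_{n-1})$ appearing in $\Delta S_n^\pm$ in (\ref{2vitasja7}). Because the candidate $\alpha_n$ factorises cleanly in the $\omega_n^1$- and $\omega_n^2$-variables, the double integral in (\ref{2vitasja7}) factorises as well and the bound above applies uniformly, so this mismatch never actually obstructs the estimate; the subpartitions $B_{n,i}^{0,k\pm}$ are used here only through the uniform positivity they force on the $P_n^0(A_n^{0,k\pm})$.
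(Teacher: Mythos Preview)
Your argument is correct: the piecewise-constant density $\alpha_n=\sum_{k,j}\chi_{V_{n-1}^k}\chi_{V_{n-1}^j}\,\chi_{A_n^{0,k-}}\chi_{A_n^{0,j+}}/(P_n^0(A_n^{0,k-})P_n^0(A_n^{0,j+}))$ is bounded by the uniform lower bounds on $P_n^0(A_n^{0,k\pm})$, is strictly positive on $\Omega_n^-\times\Omega_n^+$, integrates to $1$ in the $(\omega_n^1,\omega_n^2)$ variables for every fixed pair of histories, and the bound $\Delta S_n^+\Delta S_n^-/V_n\le\Delta S_n^-$ combined with Fubini handles (\ref{2vitasja7}) $P_{n-1}$-a.e., exactly as the paper itself obtains it.

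The paper, however, does not construct your single candidate. It builds instead an entire parametric family
\[
\alpha_n=\sum_{i,s}\gamma_{i,s}\,\alpha_n^{i-}\alpha_n^{s+},\qquad
\alpha_{n,k,i}^-=(1-\delta_i^n)\frac{\chi_{B_{n,i}^{0,k-}}}{P_n^0(B_{n,i}^{0,k-})}+\delta_i^n\frac{\chi_{A_n^{0,k-}\setminus B_{n,i}^{0,k-}}}{P_n^0(A_n^{0,k-}\setminus B_{n,i}^{0,k-})},
\]
indexed by the weights $\delta_i^n,\mu_s^n\in(0,1)$ and $\gamma_{i,s}\ge0$. Your candidate is (essentially) the degenerate member of this family, and for the bare nonemptiness assertion of the Lemma it is enough. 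The reason the paper invests in the full family is that Lemma~\ref{witka6} and the pointwise inequalities of Section~4 require \emph{varying} $\alpha_n$ over densities concentrated near arbitrary points of $A_n^{0,k-}\times A_n^{0,k+}$; the subpartitions $B_{n,i}^{0,k\pm}$ and the complement conditions $\inf_kP_n^0(A_n^{0,k\pm}\setminus B_{n,i}^{0,k\pm})>0$ are there precisely to guarantee that every member of that family stays bounded and strictly positive. So your remark that the $B$'s enter ``only through the uniform positivity they force on $P_n^0(A_n^{0,k\pm})$'' is true of your proof, but underestimates their role in the paper's programme.
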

\begin{proof}
Let us put 
$$\alpha_{n}^{i-} (\omega_1^1, \ldots,\omega_{n}^1)=\sum\limits_{k=1}^{N_n}\alpha_{n,k, i}^-( \omega_n^1)\chi_{A_n^{0,k-}}(\omega_n^1) \chi_{V_{n-1}^k}(\omega_1^1, \ldots,\omega_{n-1}^1),  $$

$$\alpha_{n}^{s+} (\omega_1^2, \ldots,\omega_{n}^2)=\sum\limits_{k=1}^{N_n}\alpha_{n,k, s}^+( \omega_n^2)\chi_{A_n^{0,k+}}(\omega_n^2) \chi_{V_{n-1}^k}(\omega_1^2, \ldots,\omega_{n-1}^2),  $$
where
$$\alpha_{n,k, i}^-( \omega_n^1)=
(1-\delta_i^n)\frac{\chi_{B_{n,i}^{0,k-}}(\omega_n^1)}{P_n^0(B_{n,i}^{0,k-})}+\delta_i^n \frac{\chi_{A_n^{0,k-}\setminus B_{n,i}^{0,k-}}(\omega_n^1)}{P_n^0(A_n^{0,k-}\setminus B_{n,i}^{0,k-})},$$
\begin{eqnarray}\label{2vitasja9}
\ 0< \delta_i^n<1, \quad  i=\overline{1, I_n}, \quad  
 k=\overline{1,N_n},
\end{eqnarray}
$$\alpha_{n,k, s}^+( \omega_n^2)=
(1-\mu_s^n)\frac{\chi_{B_{n,s}^{0,k+}}(\omega_n^2)}{P_n^0(B_{n,s}^{0,k+})}+\mu_s^n \frac{\chi_{A_n^{0,k+}\setminus B_{n,s}^{0,k+}}(\omega_n^2)}{P_n^0(A_n^{0,k+}\setminus B_{n,s}^{0,k+})},$$
\begin{eqnarray}\label{29vitasikafja9}
 0< \mu_s^n<1,  \quad  s=\overline{1, S_n}, \quad
 k=\overline{1,N_n}.
\end{eqnarray}
If to introduce   the nonnegative set of real numbers
\begin{eqnarray}\label{89vitasikafja9}
\gamma_{i, s}\geq 0,  \quad  i=\overline{1, I_n}, \quad s=\overline{1, S_n}, \quad  \sum\limits_{ i, s=1}^{I_n,  S_n} \gamma_{i,s}=1, \quad n=\overline{1,N},
\end{eqnarray}
 then
$$\alpha_n(\{ \omega_1^1, \ldots,\omega_{n}^1\};\{ \omega_1^2, \ldots,\omega_{n}^2\})=$$
\begin{eqnarray}\label{4vitasja9}
\sum\limits_{i, s=1}^{ I_n, S_n} \gamma_{i,s} 
\alpha_{n}^{i-} (\omega_1^1, \ldots,\omega_{n}^1)
\alpha_{n}^{s+} (\omega_1^2, \ldots,\omega_{n}^2),  \quad n=\overline{1,N},
\end{eqnarray}
satisfies the condition (\ref{1vitasja7}) - (\ref{3vitasja7}).

Really, due to the  Lemma \ref{witka0} conditions,   the     random values $\alpha_n(\{\omega\}_n^1;\{\omega\}_n^2\}),$ $ \  n=\overline{1,N},$  are strictly positive by construction. Therefore, the conditions (\ref{1vitasja7}) are true.

Due to the boundedness of  $ \alpha_n(\{\omega\}_n^1;\{\omega\}_n^2\}) \leq C, \  n=\overline{1,N}, \ 0<C< \infty, $ the inequalities
 $$  \int\limits_{\Omega_n^0\times \Omega_n^0}\chi_{\Omega_n^-}(\omega_1^1, \ldots,\omega_{n-1}^1, \omega_n^1) \chi_{\Omega_n^+}(\omega_1^2, \ldots,\omega_{n-1}^2, \omega_n^2)\times$$
$$\alpha_n(\{\omega_1^1, \ldots,\omega_{n-1}^1, \omega_n^1\};\{\omega_1^2, \ldots, \omega_{n-1}^2,\omega_n^2\})\times $$
$$\frac{\Delta S_n^+(\omega_1, \ldots,\omega_{n-1}, \omega_n^2) \Delta S_n^-(\omega_1, \ldots,\omega_{n-1}, \omega_n^1)}{V_n(\omega_1, \ldots,\omega_{n-1}, \omega_n^1,  \omega_n^2)}d P_n^0(\omega_n^1) dP_n^0(\omega_n^2)\leq $$ 
\begin{eqnarray}\label{pip2vitasja7pip}
 C  \int\limits_{\Omega_n^0}\Delta S_n^-(\omega_1, \ldots,\omega_{n-1}, \omega_n^1)d P_n^0(\omega_n^1) < \infty, \quad n=\overline{1,N}, 
\end{eqnarray}
are true for almost everywhere 
 $(\omega_1, \ldots,\omega_{n-1}) \in \Omega_{n-1}, \ n=\overline{1,N},$
relative to the measure $P_{n-1},$ owing  to the inequalities (\ref{pupswitkapup0}) and Foubini Theorem. This proves the inequality (\ref{2vitasja7}).
The equality (\ref{3vitasja7}) is also satisfied due to the construction of $\alpha_n(\{\omega\}_n^1;\{\omega\}_n^2).$
 Lemma \ref{witka0} is proved.
\end{proof}

  The values, which  the random variables $\alpha_n(\{\omega\}_n^1;\{\omega\}_n^2\}),\ n=\overline{1,N},$ constructed in Lemma \ref{witka0}, take, are determined by the values at points $ \omega_n^1 \in \Omega_n^{0-}$  and $ \omega_n^2 \in \Omega_n^{0+}$ for all $(\omega_1, \ldots,\omega_{n-1}) \in \Omega_{n-1}.$

On the basis of the set of  random values  $\alpha_n(\{\omega\}_n^1;\{\omega\}_n^2), \ n=\overline{1,N},$ constructed in Lemma \ref{witka0},  let us introduce into consideration  the family of measure $\mu_0(A)$ on the measurable space $\{\Omega_N, {\cal F}_N\}$ by the recurrent relations 

$$\mu_{N}^{(\omega_1, \ldots,\omega_{N-1})}(A)=\int\limits_{\Omega_N^0\times \Omega_N^0} \chi_{\Omega_N^-}(\omega_1, \ldots, \omega_{N-1},\omega_N^1) \chi_{\Omega_N^+}(\omega_1, \ldots, \omega_{N-1}, \omega_N^2)\times$$
$$\alpha_N(\{\omega_1, \ldots, \omega_{N-1},\omega_N^1\};\{\omega_1, \ldots, \omega_{N-1}, \omega_N^2\})\times $$ $$ \left[\frac{\Delta S_N^+(\omega_1, \ldots,\omega_{N-1}, \omega_N^2)}{V_N(\omega_1, \ldots,\omega_{N-1}, \omega_N^1,  \omega_N^2)}\mu_{N}^{(\omega_1, \ldots,\omega_{N-1}, \omega_N^1)}(A)+ \right.$$
\begin{eqnarray}\label{vitasja6}
\left. \frac{\Delta S_N^-(\omega_1, \ldots,\omega_{N-1}, \omega_N^1)}{V_N(\omega_1, \ldots,\omega_{N-1}, \omega_N^1,  \omega_N^2)}\mu_N^{(\omega_1, \ldots,\omega_{N-1}, \omega_N^2)}(A)\right]dP_N^0(\omega_N^1)dP_N^0(\omega_N^2),
\end{eqnarray}

$$\mu_{n-1}^{(\omega_1, \ldots,\omega_{n-1})}(A)=\int\limits_{\Omega_n^0\times \Omega_n^0} \chi_{\Omega_n^-}(\omega_1, \ldots,\omega_{n-1}, \omega_n^1) \chi_{\Omega_n^+}(\omega_1, \ldots,\omega_{n-1}, \omega_n^2)\times$$

$$\alpha_n(\{\omega_1, \ldots,\omega_{n-1}, \omega_n^1\};\{\omega_1, \ldots,\omega_{n-1}, \omega_n^2\})\times $$ $$ \left[\frac{\Delta S_n^+(\omega_1, \ldots,\omega_{n-1}, \omega_n^2)}{V_n(\omega_1, \ldots,\omega_{n-1}, \omega_n^1,  \omega_n^2)}\mu_{n}^{(\omega_1, \ldots,\omega_{n-1}, \omega_n^1)}(A)+ \right.$$
\begin{eqnarray}\label{vitasja5}
\left. \frac{\Delta S_n^-(\omega_1, \ldots,\omega_{n-1}, \omega_n^1)}{V_n(\omega_1, \ldots,\omega_{n-1}, \omega_n^1,  \omega_n^2)}\mu_{n}^{(\omega_1, \ldots,\omega_{n-1}, \omega_n^2)}(A)\right]dP_n^0(\omega_n^1)dP_n^0(\omega_n^2), \quad n=\overline{2,N},
\end{eqnarray}
$$\mu_{0}(A)=\int\limits_{\Omega_1^0\times \Omega_1^0} \chi_{\Omega_1^-}(\omega_1^1) \chi_{\Omega_1^+}( \omega_1^2)
\alpha_1( \omega_1^1; \omega_1^2)\times $$
\begin{eqnarray}\label{pupecvitasja5}
 \left[\frac{\Delta S_1^+(\omega_1^2)}{V_1( \omega_1^1,  \omega_1^2)}\mu_{1}^{( \omega_1^1)}(A)+ 
\frac{\Delta S_1^-(\omega_1^1)}{V_1( \omega_1^1,  \omega_1^2)}\mu_{1}^{( \omega_1^2)}(A)\right]dP_1^0(\omega_1^1)dP_1^0(\omega_1^2), 
\end{eqnarray}
where we put
\begin{eqnarray}\label{pupecvitasjapuc5}
 \mu_N^{(\omega_1, \ldots,\omega_{N-1}, \omega_N)}(A)=\chi_{A}(\omega_1, \ldots,\omega_{N-1}, \omega_N), \quad A \in {\cal F}_N.
\end{eqnarray}

\begin{leme}\label{witka1}
Suppose that the conditions of  Lemma \ref{witka0} are true.
 For the measure $\mu_0(A),\  A \in {\cal F}_N,$ constructed by the recurrent relations (\ref{vitasja6}) - (\ref{pupecvitasja5}),  the representation 
\begin{eqnarray}\label{vitasja10}
\mu_0(A)=\int\limits_{\Omega_N}\prod\limits_{n=1}^N\psi_n(\omega_1, \ldots,\omega_n)\chi_{A}(\omega_1, \ldots,\omega_N)\prod\limits_{i=1}^Nd P_i^0(\omega_i) 
\end{eqnarray}
is true and $\mu_0(\Omega_N)=1,$ that is, the measure $\mu_0(A)$ is a probability measure being equivalent to the measure $P_N,$ where we put
$$ \psi_n(\omega_1, \ldots,\omega_n)=\chi_{\Omega_n^-}(\omega_1, \ldots,\omega_{n-1}, \omega_n) \psi_n^1(\omega_1, \ldots,\omega_n)+$$
\begin{eqnarray}\label{vitasja11}
\chi_{\Omega_n^+}(\omega_1, \ldots,\omega_{n-1}, \omega_n) \psi_n^2(\omega_1, \ldots,\omega_n),
\end{eqnarray}
$$\psi_n^1(\omega_1, \ldots,\omega_{n-1},\omega_n)=$$ $$\int\limits_{\Omega_n^0}\chi_{\Omega_n^+}(\omega_1, \ldots,\omega_{n-1}, \omega_n^2)\alpha_n(\{\omega_1, \ldots, \omega_{n-1},\omega_n^1\};\{\omega_1, \ldots,\omega_{n-1}, \omega_n^2\}) \times $$
\begin{eqnarray}\label{vitasja12}
\frac{\Delta S_n^+(\omega_1, \ldots,\omega_{n-1}, \omega_n^2)}{V_n(\omega_1, \ldots,\omega_{n-1}, \omega_n^1,  \omega_n^2)}d P_n^0(\omega_n^2), \quad (\omega_1, \ldots,\omega_{n-1}) \in \Omega_{n-1},
\end{eqnarray} 

$$\psi_n^2(\omega_1, \ldots,\omega_{n-1},\omega_n)=$$ $$\int\limits_{\Omega_n^0}\chi_{\Omega_n^-}(\omega_1, \ldots,\omega_{n-1}, \omega_n^1)\alpha_n(\{\omega_1, \ldots,\omega_{n-1},\omega_n^1\};\{\omega_1, \ldots,\omega_{n-1},\omega_n^2\}) \times $$
\begin{eqnarray}\label{vitasja13}
\frac{\Delta S_n^-(\omega_1, \ldots,\omega_{n-1}, \omega_n^1)}{V_n(\omega_1, \ldots,\omega_{n-1}, \omega_n^1,  \omega_n^2)}d P_n^0(\omega_n^1), \quad  (\omega_1, \ldots,\omega_{n-1}) \in \Omega_{n-1}.
\end{eqnarray} 
\end{leme}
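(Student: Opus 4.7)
The plan is to establish the representation (\ref{vitasja10}) by backward induction on $n$, unrolling the recurrent definition (\ref{vitasja6})--(\ref{pupecvitasja5}) step by step, then to verify the normalization and equivalence separately using conditions (\ref{3vitasja7}) and (\ref{1vitasja7}).

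The key observation for the inductive step is that in the bracket on the right-hand side of (\ref{vitasja5}) the first summand depends on $\omega_n^2$ only through $\alpha_n$ and $\Delta S_n^+/V_n$, while the second depends on $\omega_n^1$ only through $\alpha_n$ and $\Delta S_n^-/V_n$. Therefore, for fixed $(\omega_1,\ldots,\omega_{n-1})$, integrating the first summand with respect to $dP_n^0(\omega_n^2)$ collapses it, by the very definition (\ref{vitasja12}), to $\chi_{\Omega_n^-}(\{\omega\}_{n-1},\omega_n^1)\,\psi_n^1(\{\omega\}_{n-1},\omega_n^1)\,\mu_n^{(\{\omega\}_{n-1},\omega_n^1)}(A)$, and symmetrically for the second summand via (\ref{vitasja13}). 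Renaming the surviving integration variable to a common $\omega_n$, and using (\ref{vitasja11}), I obtain the clean one-step recursion
\begin{equation*}
\mu_{n-1}^{(\omega_1,\ldots,\omega_{n-1})}(A)=\int_{\Omega_n^0}\psi_n(\omega_1,\ldots,\omega_n)\,\mu_n^{(\omega_1,\ldots,\omega_n)}(A)\,dP_n^0(\omega_n).
\end{equation*}
Iterating this identity downward from $n=N$ and using the boundary value (\ref{pupecvitasjapuc5}) together with Fubini yields (\ref{vitasja10}). Legitimacy of Fubini here follows from the integrability supplied by (\ref{2vitasja7}) and the estimate $\Delta S_n^\pm/V_n\le 1$.

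For the normalization $\mu_0(\Omega_N)=1$, I will show by a separate one-step computation that $\int_{\Omega_n^0}\psi_n(\omega_1,\ldots,\omega_n)\,dP_n^0(\omega_n)=1$ for every fixed $(\omega_1,\ldots,\omega_{n-1})\in\Omega_{n-1}$. Indeed, substituting (\ref{vitasja11})--(\ref{vitasja13}) and swapping the order of integration (again via Fubini and the bound $\Delta S_n^\pm/V_n\le 1$), the sum of the two resulting double integrals combines into
\begin{equation*}
\int_{\Omega_n^0\times\Omega_n^0}\chi_{\Omega_n^-}(\{\omega\}_{n-1},\omega_n^1)\chi_{\Omega_n^+}(\{\omega\}_{n-1},\omega_n^2)\alpha_n(\cdot;\cdot)\,\frac{\Delta S_n^-(\omega_n^1)+\Delta S_n^+(\omega_n^2)}{V_n(\{\omega\}_{n-1},\omega_n^1,\omega_n^2)}\,dP_n^0(\omega_n^1)dP_n^0(\omega_n^2),
\end{equation*}
which, by the defining identity $V_n=\Delta S_n^-+\Delta S_n^+$, reduces precisely to the left side of the normalization condition (\ref{3vitasja7}), hence equals $1$. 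Applying this to each layer in (\ref{vitasja10}) with $\chi_A\equiv 1$ gives $\mu_0(\Omega_N)=1$.

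For equivalence with $P_N$, it suffices to check that $\psi_n>0$ $P_N$-almost everywhere on $\Omega_n^-\cup\Omega_n^+=\Omega_n$ for each $n$. By (\ref{vitasja11}), this reduces to the strict positivity of $\psi_n^1$ on $\Omega_n^-$ and of $\psi_n^2$ on $\Omega_n^+$. Both follow from the positivity condition (\ref{1vitasja7}) on $\alpha_n$ combined with $P_n(\Omega_n^\pm)>0$ from (\ref{vitasja1}) and the fact that the kernels $\Delta S_n^\pm/V_n$ are strictly positive on $\Omega_n^-\times\Omega_n^+$. I anticipate the main obstacle to be purely bookkeeping: carefully tracking the two dummy variables $\omega_n^1,\omega_n^2$ in the recurrence and verifying at each inductive step that their respective one-sided integrations produce exactly the $\psi_n^1$ and $\psi_n^2$ defined in (\ref{vitasja12})--(\ref{vitasja13}) and not some variant with mismatched indicator functions.
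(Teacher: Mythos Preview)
Your proposal is correct and follows essentially the same route as the paper: both derive the one-step identity $\mu_{n-1}^{(\omega_1,\ldots,\omega_{n-1})}(A)=\int_{\Omega_n^0}\psi_n(\omega_1,\ldots,\omega_n)\,\mu_n^{(\omega_1,\ldots,\omega_n)}(A)\,dP_n^0(\omega_n)$ by splitting the bracket in (\ref{vitasja5}) and integrating out the redundant variable, then iterate downward from (\ref{pupecvitasjapuc5}); both verify $\int_{\Omega_n^0}\psi_n\,dP_n^0=1$ via $V_n=\Delta S_n^-+\Delta S_n^+$ and condition (\ref{3vitasja7}), and both infer equivalence from strict positivity of $\psi_n$. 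Your explicit invocation of Fubini and the bound $\Delta S_n^\pm/V_n\le 1$, together with the more detailed equivalence discussion, are minor elaborations on what the paper leaves implicit.
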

\begin{proof} Due to Lemma \ref{witka0} conditions, the set of the strictly positive 
bounded random values $\alpha_n(\{\omega\}_n^1; \{\omega\}_n^2),$ $  n=\overline{1,N}, $ satisfying the conditions (\ref{1vitasja7}) - (\ref{3vitasja7}),  is a non empty set.
 We prove  Lemma \ref{witka1} by induction  down. 
Let us denote
\begin{eqnarray}\label{13vitasja13}
 \mu_N^{(\omega_1, \ldots,\omega_{N-1}, \omega_N)}(A)=\chi_{A}(\omega_1, \ldots,\omega_N).
\end{eqnarray}
Then,
$$
\int\limits_{\Omega_N^0}\psi_N(\omega_1, \ldots,\omega_{N-1},\omega_N)\mu_N^{(\omega_1, \ldots,\omega_{N-1}, \omega_N)}(A) d P_N^0(\omega_N)=
$$
$$\int\limits_{\Omega_N^0}\chi_{\Omega_N^-}(\omega_1, \ldots,\omega_{N-1}, \omega_N)\psi_N^1(\omega_1, \ldots,\omega_{N-1},\omega_N) \mu_N^{(\omega_1, \ldots,\omega_{N-1}, \omega_N)}(A)d P_N^0(\omega_ N)+   $$

$$\int\limits_{\Omega_N^0}\chi_{\Omega_N^+}(\omega_1, \ldots,\omega_{N-1}, \omega_N)\psi_N^2(\omega_1, \ldots,\omega_{N-1},\omega_N) \mu_N^{(\omega_1, \ldots,\omega_{N-1}, \omega_N)}(A)d P_N^0(\omega_N)=   $$

$$\int\limits_{\Omega_N^0}\chi_{\Omega_N^-}(\omega_1, \ldots,\omega_{N-1}, \omega_N^1)\psi_N^1(\omega_1, \ldots,\omega_{N-1},\omega_N^1) \mu_N^{(\omega_1, \ldots,\omega_{N-1}, \omega_N^1)}(A)d P_N^0(\omega_N^1)+   $$
\begin{eqnarray}\label{11vitasja14}
\int\limits_{\Omega_N^0}\chi_{\Omega_N^+}(\omega_1, \ldots,\omega_{N-1}, \omega_N^2)\psi_N^2(\omega_1, \ldots,\omega_{N-1},\omega_N^2) \mu_N^{(\omega_1, \ldots,\omega_{N-1}, \omega_N^2)}(A)d P_N^0(\omega_N^2).   \end{eqnarray}
Substituting $\psi_N^1(\omega_1, \ldots,\omega_{N-1},\omega_N^1), \  \psi_N^2(\omega_1, \ldots,\omega_{N-1},\omega_N^2) $ into (\ref{11vitasja14}), we obtain
$$
\int\limits_{\Omega_N^0}\psi_N(\omega_1, \ldots,\omega_{N-1},\omega_N) \mu_N^{(\omega_1, \ldots,\omega_{N-1}, \omega_N)}(A)d P_N^0(\omega_N)=$$
$$\int\limits_{\Omega_N^0\times \Omega_N^0} \chi_{\Omega_N^-}(\omega_1, \ldots,\omega_{N-1}, \omega_N^1) \chi_{\Omega_N^+}(\omega_1, \ldots,\omega_{N-1}, \omega_N^2)\times$$

$$\alpha_N(\{\omega_1, \ldots,\omega_{N-1},\omega_N^1\};\{\omega_1, \ldots,\omega_{N-1}, \omega_N^2\})\times $$ 
$$ \left[\frac{\Delta S_N^+(\omega_1, \ldots,\omega_{N-1}, \omega_N^2)}{V_N(\omega_1, \ldots,\omega_{N-1}, \omega_N^1,  \omega_N^2)}\mu_{N}^{(\omega_1, \ldots,\omega_{N-1}, \omega_N^1)}(A)+ \right.$$
$$\left. \frac{\Delta S_N^-(\omega_1, \ldots,\omega_{N-1}, \omega_N^1)}{V_N(\omega_1, \ldots,\omega_{N-1}, \omega_N^1,  \omega_N^2)}\mu_{N}^{(\omega_1, \ldots,\omega_{N-1}, \omega_N^2)}(A)\right]dP_N^0(\omega_N^1)dP_N^0(\omega_N^2)=$$
\begin{eqnarray}\label{15vitasja15}
 \mu_{N-1}^{(\omega_1, \ldots,\omega_{N-1})}(A).
\end{eqnarray}
Suppose that we are proved that
$$ \mu_n^{(\omega_1, \ldots,\omega_{n-1}, \omega_n)}(A)=$$
\begin{eqnarray}\label{vitasja14}
\int\limits_{\prod\limits_{i=n+1}^N \Omega_i^0}
\prod\limits_{i=n+1}^N\psi_i(\omega_1, \ldots,\omega_i)\chi_{A}(\omega_1, \ldots,\omega_N)\prod\limits_{i=n+1}^Nd P_i^0(\omega_i).
\end{eqnarray}
Let us calculate
$$
\int\limits_{\Omega_n^0}\psi_n(\omega_1, \ldots,\omega_{n-1},\omega_n) \mu_n^{(\omega_1, \ldots,\omega_{n-1}, \omega_n)}(A)d P_n^0(\omega_n)=
$$
$$\int\limits_{\Omega_n^0}\chi_{\Omega_n^-}(\omega_1, \ldots,\omega_{n-1}, \omega_n)\psi_n^1(\omega_1, \ldots,\omega_{n-1},\omega_n) \mu_n^{(\omega_1, \ldots,\omega_{n-1}, \omega_n)}(A)d P_n^0(\omega_n)+   $$

$$\int\limits_{\Omega_n^0}\chi_{\Omega_n^+}(\omega_1, \ldots,\omega_{n-1}, \omega_n)\psi_n^2(\omega_1, \ldots,\omega_{n-1},\omega_n) \mu_n^{(\omega_1, \ldots,\omega_{n-1}, \omega_n)}(A)d P_n^0(\omega_n)=   $$

$$\int\limits_{\Omega_n^0}\chi_{\Omega_n^-}(\omega_1, \ldots,\omega_{n-1}, \omega_n^1)\psi_n^1(\omega_1, \ldots,\omega_{n-1},\omega_n^1) \mu_n^{(\omega_1, \ldots,\omega_{n-1}, \omega_n^1)}(A)d P_n^0(\omega_n^1)+   $$
\begin{eqnarray}\label{10vitasja14}
\int\limits_{\Omega_n^0}\chi_{\Omega_n^+}(\omega_1, \ldots,\omega_{n-1}, \omega_n^2)\psi_n^2(\omega_1, \ldots,\omega_{n-1},\omega_n^2) \mu_n^{(\omega_1, \ldots,\omega_{n-1}, \omega_n^2)}(A)d P_n^0(\omega_n^2).   \end{eqnarray}
Substituting $\psi_n^1(\omega_1, \ldots,\omega_{n-1},\omega_n^1),   \psi_n^2(\omega_1, \ldots,\omega_{n-1},\omega_n^2) $ into (\ref{10vitasja14}), we obtain
$$
\int\limits_{\Omega_n^0}\psi_n(\omega_1, \ldots,\omega_{n-1},\omega_n) \mu_n^{(\omega_1, \ldots,\omega_{n-1}, \omega_n)}(A)d P_n^0(\omega_n)=$$
$$\int\limits_{\Omega_n^0\times \Omega_n^0} \chi_{\Omega_n^-}(\omega_1, \ldots,\omega_{n-1}, \omega_n^1) \chi_{\Omega_n^+}(\omega_1, \ldots,\omega_{n-1}, \omega_n^2)\times$$

$$\alpha_n(\{\omega_1, \ldots, \omega_{n-1},\omega_n^1\};\{\omega_1, \ldots,\omega_{n-1}, \omega_n^2\}) \times $$ 
$$\left[\frac{\Delta S_n^+(\omega_1, \ldots,\omega_{n-1}, \omega_n^2)}{V_n(\omega_1, \ldots,\omega_{n-1}, \omega_n^1,  \omega_n^2)}\mu_{n}^{(\omega_1, \ldots,\omega_{n-1}, \omega_n^1)}(A)+ \right.$$
\begin{eqnarray}\label{tinvitasja15}
\left. \frac{\Delta S_n^-(\omega_1, \ldots,\omega_{n-1}, \omega_n^1)}{V_n(\omega_1, \ldots,\omega_{n-1}, \omega_n^1,  \omega_n^2)}\mu_{n}^{(\omega_1, \ldots,\omega_{n-1}, \omega_n^2)}(A)\right]dP_n^0(\omega_n^1)dP_n^0(\omega_n^2).
\end{eqnarray}
From the recurrent relations (\ref{vitasja6}) - (\ref{pupecvitasja5}), we have
$$\mu_{n-1}^{(\omega_1, \ldots,\omega_{n-1})}(A)=\int\limits_{\Omega_n^0\times \Omega_n^0} \chi_{\Omega_n^-}(\omega_1, \ldots,\omega_{n-1}, \omega_n^1) \chi_{\Omega_n^+}(\omega_1, \ldots,\omega_{n-1}, \omega_n^2)\times$$

$$\alpha_n(\{\omega_1, \ldots,\omega_{n-1}, \omega_n^1\};\{\omega_1, \ldots,\omega_{n-1}, \omega_n^2\})\times $$ 
$$ \left[\frac{\Delta S_n^+(\omega_1, \ldots,\omega_{n-1},\omega_{n-1}, \omega_n^2)}{V_n(\omega_1, \ldots,\omega_{n-1}, \omega_n^1,  \omega_n^2)}\mu_{n}^{(\omega_1, \ldots,\omega_{n-1}, \omega_n^1)}(A)+ \right.$$
\begin{eqnarray}\label{vitasja15}
\left. \frac{\Delta S_n^-(\omega_1, \ldots,\omega_{n-1}, \omega_n^1)}{V_n(\omega_1, \ldots,\omega_{n-1}, \omega_n^1,  \omega_n^2)}\mu_{n}^{(\omega_1, \ldots,\omega_{n-1}, \omega_n^2)}(A)\right]dP_n^0(\omega_n^1)dP_n^0(\omega_n^2), \quad n=\overline{1,N}.
\end{eqnarray}
From the last equality, we have
\begin{eqnarray}\label{11vitasja15}
\mu_{n-1}^{(\omega_1, \ldots,\omega_{n-1})}=
\int\limits_{\Omega_n^0}\psi_n(\omega_1, \ldots,\omega_{n-1},\omega_n) \mu_n^{(\omega_1, \ldots,\omega_{n-1}, \omega_n)}(A)d P_n^0(\omega_n), \quad n=\overline{1,N}.
\end{eqnarray}
Substituting into (\ref{11vitasja15}) the induction supposition (\ref{vitasja14}),
we obtain
$$ \mu_{n-1}^{(\omega_1, \ldots,\omega_{n-1})}(A)=$$
\begin{eqnarray}\label{vitasja16}
 \int\limits_{\prod\limits_{i=n}^N \Omega_i^0}
\prod\limits_{i=n}^N\psi_i(\omega_1, \ldots,\omega_i)\chi_{A}(\omega_1, \ldots,\omega_N)\prod\limits_{i=n}^Nd P_i^0(\omega_i).
\end{eqnarray}
To prove that $\mu_0(\Omega_N)=1,$ let us prove  the equality
\begin{eqnarray}\label{vitasja23}
\int\limits_{\Omega_n^0}\psi_n(\omega_1, \ldots,\omega_n)d P_n^0(\omega_n)=1,  \quad  (\omega_1, \ldots,\omega_{n-1}) \in \Omega_{n-1}, \quad 
 \quad n=\overline{1,N}.
\end{eqnarray}
We  have
$$\int\limits_{\Omega_n^0}\psi_n(\omega_1, \ldots,\omega_n)d P_n^0(\omega_n)=$$

$$\int\limits_{\Omega_n^0}\int\limits_{\Omega_n^0}
\chi_{\Omega_n^-}(\omega_1, \ldots,\omega_{n-1}, \omega_n^1) \chi_{\Omega_n^+}(\omega_1, \ldots,\omega_{n-1}, \omega_n^2)\times $$
$$\alpha_n(\{\omega_1, \ldots,\omega_{n-1},\omega_n^1\};\{\omega_1, \ldots,\omega_{n-1}, \omega_n^2\}) \times $$ $$\left[\frac{\Delta S_n^+(\omega_1, \ldots,\omega_{n-1}, \omega_n^2)}{V_n(\omega_1, \ldots,\omega_{n-1}, \omega_n^1,  \omega_n^2)}+\right.$$
$$\left.\frac{\Delta S_n^-(\omega_1, \ldots,\omega_{n-1}, \omega_n^1)}{V_n(\omega_1, \ldots,\omega_{n-1}, \omega_n^1,  \omega_n^2)}\right]d P_n^0(\omega_n^1)d P_n^0(\omega_n^2)=$$
$$\int\limits_{\Omega_n^0}\int\limits_{\Omega_n^0}
\chi_{\Omega_n^-}(\omega_1, \ldots,\omega_{n-1},\omega_n^1) \chi_{\Omega_n^+}(\omega_1, \ldots,\omega_{n-1},\omega_n^2)\times$$
\begin{eqnarray}\label{vitasja25}
\alpha_n(\{\omega_1, \ldots,\omega_{n-1},\omega_n^1\};\{\omega_1, \ldots,\omega_{n-1},\omega_n^2\})d P_n^0(\omega_n^1)d P_n^0(\omega_n^2)=1.
\end{eqnarray}
The last equality follows from the fact that the set of random values $\alpha_n(\{\omega_1\}_n^1;\{\omega_1\}_n^2), $ $ \ n=\overline{1,N},$ satisfies the condition (\ref{3vitasja7}).
The equalities (\ref{vitasja23}) proves that every  measure (\ref{vitasja10}), defined by the set of random values  $\alpha_n(\{\omega_1^1, \ldots,\omega_n^1\};\{\omega_1^2, \ldots,\omega_n^2\}),$ $  n=\overline{1,N},$  satisfying the conditions  (\ref{1vitasja7}),   (\ref{3vitasja7}), is a probability measure being equivalent to the measure $P_N.$

This proves Lemma \ref{witka1}
\end{proof}

\begin{note1}\label{pups_witka3}
Due to the equality (\ref{vitasja23}), the contraction of measure $\mu_0(A), A \in {\cal F}_N,$ on the $\sigma$-algebra  ${\cal F}_n$ of filtration we denote by  $\mu_0^n.$ If  $ A$ belongs to the $\sigma$-algebra   ${\cal F}_n$ of filtration, then $A=B\times \prod\limits_{i=n+1}^N\Omega_i^0,$ where $B$ belongs to the $\sigma$-algebra ${\cal F}_n$ of the measurable space $\{\Omega_n, {\cal F}_n\},$ therefore, for this contraction we obtain the formula
\begin{eqnarray}\label{pups_witka4}
\mu_0^n(A)=\int\limits_{\Omega_n}\prod\limits_{i=1}^n\psi_i(\omega_1, \ldots,\omega_i)\chi_{B}(\omega_1, \ldots,\omega_n)\prod\limits_{i=1}^n d P_i^0(\omega_i), \quad  B \in {\cal F}_n.  
\end{eqnarray}
Further, we also use the probability spaces $\{\Omega_n, {\cal F}_n, \mu_0^n\}, \ n=\overline{1,N},$ where under the measure  $ \mu_0^n(B), B \in {\cal F}_n,$ we understand the measure, given by the formula 
\begin{eqnarray}\label{pups_witka5}
\mu_0^n(B)=\int\limits_{\Omega_n}\prod\limits_{i=1}^n\psi_i(\omega_1, \ldots,\omega_i)\chi_{B}(\omega_1, \ldots,\omega_n)\prod\limits_{i=1}^n d P_i^0(\omega_i), \quad  B \in {\cal F}_n. 
\end{eqnarray}
\end{note1}

\begin{note1}\label{witka3}
Assume that for $\alpha_n(\{\omega_1^1, \ldots,\omega_{n-1}^1,\omega_n^1\};\{\omega_1^2, \ldots,\omega_{n-1}^2,\omega_n^2\}),$ constructed in Lemma \ref{witka0}, the inequalities  
 $$0< c_n \leq \alpha_n(\{\omega_1^1, \ldots,\omega_{n-1}^1,\omega_n^1\};\{\omega_1^2, \ldots,\omega_{n-1}^2,\omega_n^2\})\leq C_n<\infty, $$
are true. Suppose that   the conditions
\begin{eqnarray}\label{vitasja28} 
\Delta S_n^-(\omega_1, \ldots,\omega_{n-1}, \omega_n) \leq B_n< \infty, \quad n=\overline{1,N},  
\end{eqnarray}
are valid, where $c_n, \ C_n, \ B_n $ are constant,
then
the set of equivalent  measures to the measure $P_N,$  described in Lemma \ref{witka1}, is nonempty one. 
\end{note1}
\begin{proof}
Due to  Lemma \ref{witka1} conditions,   the equality (\ref{1vitasja7}) is true.
Further,
 $$ \int\limits_{\Omega_n^0}\int\limits_{\Omega_n^0}\chi_{\Omega_n^-}(\omega_1^1, \ldots,\omega_{n-1}^1, \omega_n^1) \chi_{\Omega_n^+}(\omega_1^2, \ldots,\omega_{n-1}^2, \omega_n^2) \times $$ $$\alpha_n(\{\omega_1^1, \ldots,\omega_{n-1}^1, \omega_n^1\};\{\omega_1^2, \ldots, \omega_{n-1}^2,\omega_n^2\})\times $$
$$
\frac{\Delta S_n^+(\omega_1, \ldots,\omega_{n-1}, \omega_n^2) \Delta S_n^-(\omega_1, \ldots,\omega_{n-1}, \omega_n^1)}{V_n(\omega_1, \ldots,\omega_{n-1}, \omega_n^1,  \omega_n^2)}dP_n^0(\omega_n^1) dP_n^0(\omega_n^2)\leq B_n,$$

$$(\{\omega_1^1, \ldots,\omega_{n-1}^1\};  \{\omega_1^2, \ldots,\omega_{n-1}^2\}) \in \Omega_{n-1}\times \Omega_{n-1},\quad  (\omega_1, \ldots,\omega_{n-1}) \in \Omega_{n-1},$$

$$  \int\limits_{\Omega_n^0\times \Omega_n^0}\chi_{\Omega_n^-}(\omega_1^1, \ldots,\omega_{n-1}^1, \omega_n^1)\chi_{\Omega_n^+}(\omega_1^2, \ldots,\omega_{n-1}^2, \omega_n^2)\times $$

$$\alpha_n(\{\omega_1^1, \ldots,\omega_{n-1}^1, \omega_n^1\};\{\omega_1^2, \ldots, \omega_{n-1}^2,\omega_n^2\})dP_n^0(\omega_n^1) dP_n^0(\omega_n^2)=1,$$
\begin{eqnarray}\label{vitasja32}
(\{\omega_1^1, \ldots,\omega_{n-1}^1\};  \{\omega_1^2, \ldots,\omega_{n-1}^2\}) \in \Omega_{n-1}\times \Omega_{n-1}.
\end{eqnarray}

The last  inequality and the equality (\ref{vitasja32}) means that the  conditions (\ref{1vitasja7}) - (\ref{3vitasja7}) are satisfied.
Note \ref{witka3} is proved.
\end{proof}

For the nonnegative random value  $\alpha_n(\{\omega_1^1, \ldots,\omega_n^1\};\{\omega_1^2, \ldots,\omega_n^2\}),$ given
  on the measurable space $\{\Omega_n^-\times\Omega_n^+,{\cal F}_n^-\times {\cal F}_n^+\},$   ${\cal F}_n^-={\cal F}_n \cap\Omega_n^- ,$   ${\cal F}_n^+={\cal F}_n \cap\Omega_n^+, 
n=\overline{1,N},$
let us define the integral for the nonnegative  random value $f_N(\omega_1, \ldots,\omega_N)$ relative to the measure $\mu_0(A)$  using the recurrent relations
$$\mu_{n-1}^{f_N}(\omega_1, \ldots,\omega_{n-1})=$$ $$\int\limits_{\Omega_n^0\times \Omega_n^0} \chi_{\Omega_n^-}(\omega_1, \ldots,\omega_{n-1},\omega_n^1) \chi_{\Omega_n^+}(\omega_1, \ldots,\omega_{n-1},\omega_n^2)\times$$

$$\alpha_n(\{\omega_1, \ldots,\omega_{n-1},\omega_n^1\};\{\omega_1, \ldots,\omega_{n-1}, \omega_n^2\}) \times $$ $$\left[\frac{\Delta S_n^+(\omega_1, \ldots,\omega_{n-1}, \omega_n^2)}{V_n(\omega_1, \ldots,\omega_{n-1}, \omega_n^1,  \omega_n^2)}\mu_{n}^{f_N}(\omega_1, \ldots,\omega_{n-1}, \omega_n^1)+ \right.$$
\begin{eqnarray}\label{vitasja18}
\left. \frac{\Delta S_n^-(\omega_1, \ldots,\omega_{n-1}, \omega_n^1)}{V_n(\omega_1, \ldots,\omega_{n-1}, \omega_n^1,  \omega_n^2)}\mu_{n}^{f_N}(\omega_1, \ldots,\omega_{n-1}, \omega_n^2)\right]d P_n^0(\omega_n^1)d P_n^0(\omega_n^2), \quad n=\overline{1,N},
\end{eqnarray}

$$\mu_{N-1}^{f_N}(\omega_1, \ldots,\omega_{N-1})=\int\limits_{\Omega_N^0\times \Omega_N^0} \chi_{\Omega_N^-}(\omega_1, \ldots, \omega_{N-1},\omega_N^1) \chi_{\Omega_N^+}(\omega_1, \ldots, \omega_{N-1}, \omega_N^2)\times$$
$$\alpha_N(\{\omega_1, \ldots, \omega_{N-1},\omega_N^1\};\{\omega_1, \ldots, \omega_{N-1}, \omega_N^2\})\times$$ 
$$ \left[\frac{\Delta S_N^+(\omega_1, \ldots,\omega_{N-1}, \omega_N^2)}{V_N(\omega_1, \ldots,\omega_{N-1}, \omega_N^1,  \omega_N^2)}f_{N}(\omega_1, \ldots,\omega_{N-1}, \omega_N^1)+ \right.$$
\begin{eqnarray}\label{vitasja19}
\left. \frac{\Delta S_N^-(\omega_1, \ldots,\omega_{N-1}, \omega_N^1)}{V_N(\omega_1, \ldots,\omega_{N-1}, \omega_N^1,  \omega_N^2)}f_N(\omega_1, \ldots,\omega_{N-1}, \omega_N^2)\right]dP_N^0(\omega_N^1)dP_N^0(\omega_N^2).
\end{eqnarray}

From the formula (\ref{vitasja10})  of  Lemma \ref{witka1},  it follows that
\begin{eqnarray}\label{vitasja21}
E^{\mu_0}f_N=\int\limits_{\Omega_N}\prod\limits_{n=1}^N\psi_n(\omega_1, \ldots,\omega_n)f_N(\omega_1, \ldots,\omega_{N-1}, \omega_N)\prod\limits_{i=1}^Nd P_i^0(\omega_i)
\end{eqnarray}
for every nonnegative  ${\cal F}_ N$-measurable random value $f_N(\omega_1, \ldots,\omega_{N-1}, \omega_N).$

\begin{te}\label{witka2}
Suppose that  the conditions of Lemma \ref{witka0} 
are true.
Then, the set of nonnegative random values  $\alpha_n(\{\omega\}_n^1; \{\omega\}_n^2), n=\overline{1,N},$ satisfying the conditions 
$$E^{\mu_0}|\Delta S_n(\omega_1, \ldots,\omega_{n-1}, \omega_n)|=$$
\begin{eqnarray}\label{vitasja22}
\int\limits_{\Omega_N}\prod\limits_{i=1}^N\psi_i(\omega_1, \ldots,\omega_i) |\Delta S_n(\omega_1, \ldots,\omega_{n-1}, \omega_n)| \prod\limits_{i=1}^Nd P_i^0(\omega_i)<\infty, \quad n=\overline{1,N},
\end{eqnarray}
is a nonempty one and the convex linear span of the  set of  measures (\ref{vitasja10}), defined by the random values  $\alpha_n(\{\omega_1^1, \ldots,\omega_n^1\};\{\omega_1^2, \ldots,\omega_n^2\}),$ $  n=\overline{1,N},$  satisfying the conditions (\ref{vitasja22}), is a set of  martingale measures being equivalent to the measure $P_N.$ 
\end{te}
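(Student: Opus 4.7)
The plan is to establish the theorem in three stages: nonemptiness of the family satisfying (\ref{vitasja22}), verification of the martingale identity, and stability under convex combinations.

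For nonemptiness, I would start from the bounded random variables $\alpha_n$ constructed explicitly in Lemma \ref{witka0}. Observing that $\Delta S_n^+(\omega_n^2)/V_n(\omega_n^1,\omega_n^2)\le 1$ and $\Delta S_n^-(\omega_n^1)/V_n(\omega_n^1,\omega_n^2)\le 1$ pointwise, the definitions (\ref{vitasja12}), (\ref{vitasja13}) give $\psi_n^1\le C$ and $\psi_n^2\le C$ whenever $\alpha_n\le C$, hence $\psi_n\le C$ a.s. From the representation (\ref{vitasja10}) of Lemma \ref{witka1} it then follows that $E^{\mu_0}[\Delta S_n^-]\le C^N\int_{\Omega_N}\Delta S_n^-\,dP_N,$ which is finite by hypothesis (\ref{pupswitkapup0}). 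Finiteness of $E^{\mu_0}[\Delta S_n^+]$, and therefore of $E^{\mu_0}|\Delta S_n|$, will be a byproduct of the martingale computation below, via the identity $E^{\mu_0}[\Delta S_n^+\mid\mathcal{F}_{n-1}]=E^{\mu_0}[\Delta S_n^-\mid\mathcal{F}_{n-1}]$.

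For the martingale property, I need to show that $E^{\mu_0}[\chi_B\,\Delta S_n]=0$ for every $B\in\mathcal{F}_{n-1}$. Using (\ref{vitasja21}) and the identity $\int_{\Omega_i^0}\psi_i\,dP_i^0=1$ from (\ref{vitasja23}) for all indices $i>n$, I can integrate out the variables $\omega_{n+1},\ldots,\omega_N$ and reduce the expectation to
\[
\int_{\Omega_{n-1}}\!\prod_{i=1}^{n-1}\psi_i\,\chi_B\left[\int_{\Omega_n^0}\psi_n(\omega_1,\ldots,\omega_n)\,\Delta S_n(\omega_1,\ldots,\omega_n)\,dP_n^0(\omega_n)\right]\!\prod_{i=1}^{n-1}dP_i^0.
\]
Splitting $\Delta S_n=\chi_{\Omega_n^+}\Delta S_n^+-\chi_{\Omega_n^-}\Delta S_n^-$ and using the decomposition (\ref{vitasja11}) of $\psi_n$, the inner integral becomes
\[
\int_{\Omega_n^0}\chi_{\Omega_n^+}\psi_n^2\,\Delta S_n^+\,dP_n^0-\int_{\Omega_n^0}\chi_{\Omega_n^-}\psi_n^1\,\Delta S_n^-\,dP_n^0.
\]
Substituting the definitions (\ref{vitasja12}) and (\ref{vitasja13}) and invoking Fubini (justified by (\ref{2vitasja7})), both double integrals are identical, namely
\[
\int_{\Omega_n^0\times\Omega_n^0}\chi_{\Omega_n^-}(\omega_n^1)\chi_{\Omega_n^+}(\omega_n^2)\,\alpha_n\,\frac{\Delta S_n^+(\omega_n^2)\,\Delta S_n^-(\omega_n^1)}{V_n(\omega_n^1,\omega_n^2)}\,dP_n^0(\omega_n^1)dP_n^0(\omega_n^2),
\]
so their difference vanishes. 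This simultaneously yields $E^{\mu_0}[\Delta S_n\mid\mathcal{F}_{n-1}]=0$ and the identity $E^{\mu_0}[\Delta S_n^+\mid\mathcal{F}_{n-1}]=E^{\mu_0}[\Delta S_n^-\mid\mathcal{F}_{n-1}]$ that closes the integrability argument.

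Finally, convex combinations of probability measures equivalent to $P_N$ with strictly positive weights remain probability measures equivalent to $P_N$, and both the integrability condition $E^{\nu}|\Delta S_n|<\infty$ and the conditional expectation identity $E^{\nu}[\Delta S_n\mid\mathcal{F}_{n-1}]=0$ are preserved under such combinations by linearity. Thus the convex hull of the measures (\ref{vitasja10}) corresponding to admissible $\alpha_n$ is a family of equivalent martingale measures for $\{S_n\}_{n=0}^N$. The main obstacle I anticipate is the careful bookkeeping needed to swap the order of integration in the conditional expectation computation and to justify that the pointwise cancellation extends to a genuine identity in $L^1(\mu_0)$; the hypotheses (\ref{2vitasja7}) and (\ref{pupswitkapup0}) are precisely what is needed for Fubini's theorem at each step.
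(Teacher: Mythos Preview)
Your proposal is correct and follows essentially the same route as the paper: both arguments use the bounded $\alpha_n$ from Lemma~\ref{witka0} together with $\psi_n\le C$ to control $E^{\mu_0}|\Delta S_n|$ via the hypothesis $\int\Delta S_n^-\,dP_N<\infty$, and both establish the martingale property by integrating out $\omega_{n+1},\ldots,\omega_N$ with (\ref{vitasja23}) and then showing the inner integral $\int_{\Omega_n^0}\psi_n\,\Delta S_n\,dP_n^0$ vanishes through the same symmetric-double-integral cancellation. The only cosmetic difference is that the paper computes $E^{\mu_0}|\Delta S_n|$ directly as twice the symmetric integral (formula (\ref{vitunjazajka2})), whereas you bound $E^{\mu_0}[\Delta S_n^-]$ first and recover $E^{\mu_0}[\Delta S_n^+]$ from the cancellation identity; both are justified by (\ref{2vitasja7}) and (\ref{pupswitkapup0}).
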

\begin{proof}

Taking into account the equality (\ref{vitasja23}), the conditions (\ref{vitasja22})  can be written in the form

$$\int\limits_{\Omega_N}\prod\limits_{i=1}^N\psi_i(\omega_1, \ldots,\omega_i) |\Delta S_n(\omega_1, \ldots,\omega_{n-1}, \omega_n)| \prod\limits_{i=1}^Nd P_i^0(\omega_i)=$$
$$\int\limits_{\Omega_n}\prod\limits_{i=1}^n\psi_i(\omega_1, \ldots,\omega_i) |\Delta S_n(\omega_1, \ldots,\omega_{n-1}, \omega_n)| \prod\limits_{i=1}^n d P_i^0(\omega_i)=$$
$$2 \int\limits_{\Omega_{n-1}}\prod\limits_{i=1}^{n-1}\psi_i(\omega_1, \ldots,\omega_i) 
 \int\limits_{\Omega_n^0}\int\limits_{\Omega_n^0}\chi_{\Omega_n^-}(\omega_1, \ldots,\omega_{n-1}, \omega_n^1) \chi_{\Omega_n^+}(\omega_1, \ldots,\omega_{n-1}, \omega_n^2) \times $$ $$\alpha_n(\{\omega_1, \ldots,\omega_{n-1}, \omega_n^1\};\{\omega_1, \ldots, \omega_{n-1},\omega_n^2\})\times $$
$$\frac{\Delta S_n^+(\omega_1, \ldots,\omega_{n-1}, \omega_n^2) \Delta S_n^-(\omega_1, \ldots,\omega_{n-1}, \omega_n^1)}{V_n(\omega_1, \ldots,\omega_{n-1}, \omega_n^1,  \omega_n^2)} \times$$
\begin{eqnarray}\label{vitunjazajka2}
d P_n^0(\omega_n^1)  d P_n^0(\omega_n^2)  \prod\limits_{i=1}^{n-1} d P_i^0(\omega_i), \quad n=\overline{1,N}.
\end{eqnarray}
Since the conditions  of Lemma \ref{witka0} are true, then the the set of  bounded  random values  $\alpha_n(\{\omega_1^1, \ldots,\omega_n^1\};\{\omega_1^2, \ldots,\omega_n^2\}),$ $  n=\overline{1,N},$ constructed in Lemma \ref{witka0},
satisfy the conditions (\ref{1vitasja7}) - (\ref{3vitasja7}).

From the equality (\ref{vitunjazajka2}) for the set of bounded random values   $\alpha_n(\{\omega\}_n^1;\{\omega\}_n^2),$ $  n=\overline{1,N},$ satisfying the conditions (\ref{1vitasja7}) - (\ref{3vitasja7}),  we obtain the inequality
$$\int\limits_{\Omega_N}\prod\limits_{i=1}^N\psi_i(\omega_1, \ldots,\omega_i) |\Delta S_n(\omega_1, \ldots,\omega_{n-1}, \omega_n)| \prod\limits_{i=1}^Nd P_i^0(\omega_i) \leq$$
\begin{eqnarray}\label{vitunjazajka1}
 C \int\limits_{\Omega_N}\Delta S_n^-(\omega_1, \ldots,\omega_{n-1}, \omega_n^1)d P_N<\infty, \quad n=\overline{1,N},
\end{eqnarray}
for a certain constant $0<  C<\infty.$ This proves that the set of nonnegative random values $\alpha_n(\{\omega_1^1, \ldots,\omega_n^1\};\{\omega_1^2, \ldots,\omega_n^2\}),$ $  n=\overline{1,N},$  satisfying the conditions (\ref{vitasja22}), is a non empty set.

Let us prove that
$$\int\limits_{\Omega_n^0}\psi_n(\omega_1, \ldots,\omega_n)\Delta S_n(\omega_1, \ldots,\omega_n) d P_n^0(\omega_n)=0,$$
\begin{eqnarray}\label{vitasja24}
  (\omega_1, \ldots,\omega_{n-1}) \in \Omega_{n-1}, \quad  n=\overline{1,N}.
\end{eqnarray}
Really, 
$$\int\limits_{\Omega_n^0}\psi_n(\omega_1, \ldots,\omega_n)\Delta S_n(\omega_1, \ldots,\omega_n) d P_n^0(\omega_n)=$$
$$\int\limits_{\Omega_n^0}\int\limits_{\Omega_n^0}
\chi_{\Omega_n^-}(\omega_1, \ldots,\omega_{n-1}, \omega_n^1) \chi_{\Omega_n^+}(\omega_1, \ldots,\omega_{n-1}, \omega_n^2)\times $$
$$\alpha_n(\{\omega_1, \ldots,\omega_{n-1}, \omega_n^1\};\{\omega_1, \ldots,\omega_{n-1}, \omega_n^2\}) \times $$ $$\left[- \frac{ \Delta S_n^+(\omega_1, \ldots,\omega_{n-1}, \omega_n^2)}{V_n(\omega_1, \ldots,\omega_{n-1}, \omega_n^1,  \omega_n^2)} \Delta S_n^-(\omega_1, \ldots,\omega_{n-1}, \omega_n^1)+\right.$$
\begin{eqnarray}\label{vitasja26}
\left.\frac{ \Delta S_n^-(\omega_1, \ldots,\omega_{n-1}, \omega_n^1)}{V_n(\omega_1, \ldots,\omega_{n-1}, \omega_n^1,  \omega_n^2)}\Delta S_n^+(\omega_1, \ldots,\omega_{n-1}, \omega_n^2)\right]d P_n^0(\omega_n^1)d P_n^0(\omega_n^2)=0, 
\end{eqnarray}
due to the condition (\ref{2vitasja7}).

To complete the proof of Theorem \ref{witka2},  let $ A$ belongs to the filtration $ {\cal F}_{n-1},$ then $A=B\times\prod\limits_{i=n}^N\Omega_i^0,$ where $B$ belongs to the $\sigma$-algebra ${\cal F}_{n-1}$ of the measurable space $\{\Omega_{n-1}, {\cal F}_{n-1}\}.$ Taking into account the equality (\ref{vitasja25}), (\ref{vitasja26}), we have, due to Foubini theorem,

$$\int\limits_{\Omega_N}\prod\limits_{i=1}^N\psi_i(\omega_1, \ldots,\omega_i)\chi_{A}(\omega_1, \ldots, \omega_N)\Delta S_n(\omega_1, \ldots, \omega_n)\prod\limits_{i=1}^N d P_i^0(\omega_i)=$$

$$ \int\limits_{\Omega_n}\prod\limits_{i=1}^n\psi_i(\omega_1, \ldots,\omega_i)
\chi_{B}(\omega_1, \ldots, \omega_{n-1})\Delta S_n(\omega_1, \ldots, \omega_n)\prod\limits_{i=1}^n d P_i^0(\omega_i)=$$
$$ \int\limits_{\Omega_{n-1}}\prod\limits_{i=1}^{n-1}\psi_i(\omega_1, \ldots,\omega_i)
\chi_{B}(\omega_1, \ldots, \omega_{n-1})\prod\limits_{i=1}^{n-1} d P_i^0(\omega_i)\times $$
\begin{eqnarray}\label{vitasja27} 
\int\limits_{\Omega_n^0}\psi_i(\omega_1, \ldots,\omega_n) \Delta S_n(\omega_1, \ldots, \omega_n)d P_n^0(\omega_n)=0.
\end{eqnarray}
The last means that $E^{\mu_0}\{S_n(\omega_1, \ldots, \omega_n)|{\cal F}_{n-1}\}=S_{n-1}(\omega_1, \ldots, \omega_{n-1}).$
Since every measure, belonging to the convex linear span of the measures considered above,  is a finite  sum of such measures, then it is a martingale measure being equivalent to the measure $P_N.$ 
Theorem \ref{witka2} is proved.
\end{proof}
Our aim is to describe this convex span of martingale measures in particular cases.

\section{Inequalities for the nonnegative random values.}

In this section, we prove some inequalities, which will be very useful for to prove optional decomposition for super-martingale relative to all martingale measures.
First, we prove an integral inequality for a nonnegative random variable under the fulfillment of the inequality for this nonnegative random variable with respect to the constructed family of measures $\mu_0(A).$ Further, using this integral inequality for the non-negative random variable, a pointwise system of inequalities is obtained for this non-negative random variable for a particular  case. After that, the pointwise system of inequalities  is obtained for the non-negative random variable in the general case. Then, using the resulting pointwise system of inequalities, the inequality is established for this non-negative random variable whose right-hand side is such that its conditional mathematical expectation is equal to one.

\begin{defin}\label{vitnick1}
Let $\{\Omega_1, {\cal F}_1\}$ be a measurable space. The  decomposition $A_{n,k}, \  n,k=\overline{1, \infty},$ of the space $\Omega_1$  we call exhaustive one, if the following conditions are valid:\\
1) $A_{n,k} \in {\cal F}_1,$ \ $ A_{n,k}\cap A_{n,s}=\emptyset, \ k\neq s,$ \ $\bigcup\limits_{k=1}^\infty A_{n,k}=\Omega_1, \ n=\overline{1, \infty};$\\
2) the $(n+1)$-th decomposition is a sub-decomposition of the $n$-th one, that is, for every $j,$ $A_{n+1,j} \subseteq  A_{n,k}$ for a certain $k=k(j);$\\
3) the minimal $\sigma$-algebra containing all $A_{n,k}, \ n,k=\overline{1, \infty},$ coincides with ${\cal F}_1.$
\end{defin}
\begin{leme}\label{2vitanick1tin1}
Let $\{\Omega_1, {\cal F}_1\}$ be  a measurable space with a complete separable metric space  $\Omega_1$ and Borel $\sigma$-algebra  ${\cal F}_1$ on it. Then,  
$\{\Omega_1, {\cal F}_1\}$ has an exhaustive decomposition.
\end{leme}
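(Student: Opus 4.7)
The plan is to build the exhaustive decomposition explicitly using the separability of $\Omega_1$, and then verify that the resulting partitions satisfy all three properties of Definition \ref{vitnick1}.

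First, I would exploit separability to fix a countable dense subset $\{x_j\}_{j=1}^\infty \subset \Omega_1$. For each $n \geq 1$, the open balls $B(x_j, 1/n)$, $j \geq 1$, cover $\Omega_1$ (by density). I disjointify them by setting
\[
C_{n,j} \;=\; B(x_j, 1/n) \,\setminus\, \bigcup_{i<j} B(x_i, 1/n), \qquad j=1,2,\ldots
\]
Each $C_{n,j}$ is Borel (difference of open sets), the sets $\{C_{n,j}\}_{j\geq 1}$ are pairwise disjoint, their union is $\Omega_1$, and $\mathrm{diam}\,C_{n,j}\leq 2/n$. Thus for each $n$ we have a countable Borel partition $\pi_n=\{C_{n,j}\}_{j\geq 1}$ whose pieces shrink as $n$ grows. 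At this stage condition~(1) is already verified for each $\pi_n$ separately.

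Next, I would force refinement by passing to common refinements. Define the $n$-th partition of the exhaustive decomposition as the common refinement of $\pi_1,\ldots,\pi_n$: the non-empty sets of the form
\[
C_{1,j_1}\cap C_{2,j_2}\cap \cdots \cap C_{n,j_n}, \qquad (j_1,\ldots,j_n)\in\mathbb{N}^n.
\]
Since $\mathbb{N}^n$ is countable, I can enumerate these non-empty intersections as $A_{n,1},A_{n,2},\ldots$, obtaining a countable Borel partition of $\Omega_1$. By construction, every $A_{n+1,j}$ is obtained by intersecting some $A_{n,k}$ with one further $C_{n+1,\ell}$, so $A_{n+1,j}\subseteq A_{n,k}$ for a unique $k=k(j)$. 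This establishes condition~(2), while condition~(1) is inherited from the fact that each level is a disjoint Borel partition.

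It remains to verify condition~(3). Let $\mathcal{A}$ be the $\sigma$-algebra generated by all $A_{n,k}$; clearly $\mathcal{A}\subseteq{\cal F}_1$ since every $A_{n,k}$ is Borel. For the reverse inclusion, fix any open $U\subseteq\Omega_1$ and any $x\in U$. Choose $n$ so large that the ball $B(x,2/n)\subseteq U$, and let $A_{n,k(x)}$ be the unique element of the $n$-th partition containing $x$. Because $A_{n,k(x)}$ is contained in some $C_{n,j}\subseteq B(x_j,1/n)$, and $x\in C_{n,j}$ forces $d(x,x_j)<1/n$, we obtain $A_{n,k(x)}\subseteq B(x_j,1/n)\subseteq B(x,2/n)\subseteq U$. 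Hence
\[
U \;=\; \bigcup_{x\in U} A_{n(x),k(x)},
\]
and this is a countable union (there are only countably many distinct $A_{n,k}$). Thus every open set lies in $\mathcal{A}$, whence ${\cal F}_1\subseteq\mathcal{A}$, giving equality.

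The only mildly delicate point is condition~(3): one has to recognize that the shrinking-diameter property together with density allows every open set to be written as a countable union of pieces from the partitions. Completeness of the metric is not actually needed for the argument; separability and the Borel structure suffice. Everything else is a straightforward disjointification and common-refinement construction.
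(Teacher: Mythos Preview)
Your proof is correct. The paper does not actually prove this lemma; it only cites the author's earlier papers \cite{GoncharSimon}, \cite{GoncharNick1} for the argument, so there is nothing in the present text to compare against. Your construction---disjointified balls of shrinking radius around a countable dense set, followed by common refinements, and the observation that the resulting pieces generate the Borel $\sigma$-algebra because every open set is a union of them---is the standard way to produce such a decomposition, and your remark that completeness is not used is accurate. One trivial cosmetic point: Definition~\ref{vitnick1} indexes each level by $k=\overline{1,\infty}$, so if at some level only finitely many non-empty intersections arise you should pad with empty sets; this does not affect any of the three conditions.
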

The proof of Lemma \ref{2vitanick1tin1} see, for example,   in \cite{GoncharSimon},
\cite{GoncharNick1}.

For the proof of integral  inequalities,  
 we cannot require  the fulfillment  for the random values $\alpha_n(\{\omega_1^1, \ldots,\omega_n^1\};\{\omega_1^2, \ldots,\omega_n^2\}),$ $  n=\overline{1,N},$  the condition (\ref{2vitasja7}) in the Lemma \ref{witka4}.

\begin{leme}\label{witka4}
Suppose that $\Omega_n^0$ is a complete separable metric space, 
${\cal F}_n^0$ is a corresponding Borel $\sigma$-algebra on  $\Omega_n^0, \ n=\overline{1,N},$ and the conditions of Lemma \ref{witka0} are valid. 
If, on the probability space  
$\{ \Omega_{n-1},    {\cal F}_{n-1}, \mu_0^{n-1}\},$ for each  $B \in {\cal F}_{n-1},$ $\mu_0^{n-1}(B)>0,$   the nonnegative  random value $f_n(\omega_1, \ldots,\omega_{n-1}, \omega_n)$ satisfies
the inequality
\begin{eqnarray}\label{vitasja33}
\frac{1}{\mu_0^{n-1}(B)}\int\limits_{B}\int\limits_{\Omega_n^0} \prod\limits_{i=1}^n\psi_i(\omega_1, \ldots,\omega_i)f_n(\omega_1, \ldots, \omega_n)\prod\limits_{i=1}^n d P_i^0(\omega_i)\leq 1, \quad B \in {\cal F}_ {n-1},
\end{eqnarray}
 then the inequality

$$\int\limits_{\Omega_n^0}
\psi_n(\omega_1, \ldots,\omega_n)f_n(\omega_1, \ldots, \omega_n) d P_n^0(\omega_n)\leq 1,  $$
\begin{eqnarray}\label{1vitasja33}
 \{\omega_1, \ldots, \omega_{n-1}\} \in \Omega_{n-1}, \ n=\overline{1, N},
\end{eqnarray}
is true almost everywhere relative to the measure $P_{n-1}.$
\end{leme}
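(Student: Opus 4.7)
My plan is to reduce the statement to a one-line measure-theoretic fact: if a nonnegative measurable function has integral at most $\mu(B)$ over every $B$, then it is bounded by $1$ almost everywhere. The polish space hypothesis and Lemma \ref{2vitanick1tin1} need not be used at this stage; they are presumably being set up for subsequent lemmas that refine an integral inequality into a pointwise one.

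First I would introduce the nonnegative $\mathcal{F}_{n-1}$-measurable function
$$g(\omega_1,\ldots,\omega_{n-1}) \;=\; \int_{\Omega_n^0} \psi_n(\omega_1,\ldots,\omega_n)\,f_n(\omega_1,\ldots,\omega_n)\, dP_n^0(\omega_n) \;\in\; [0,\infty].$$
Applying Fubini--Tonelli to the nonnegative integrand in (\ref{vitasja33}) and using the representation (\ref{pups_witka5}) of $\mu_0^{n-1}$, the hypothesis rewrites as
$$\int_B g\,d\mu_0^{n-1} \;\le\; \mu_0^{n-1}(B) \qquad \text{for every } B \in \mathcal{F}_{n-1}.$$
The case $\mu_0^{n-1}(B)=0$ is trivial, since then $\prod_{i=1}^{n-1}\psi_i$ vanishes $P_{n-1}$-a.e.\ on $B$ and both sides are zero. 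In particular, taking $B=\Omega_{n-1}$ shows $g$ is finite $\mu_0^{n-1}$-a.e.

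Next, for each integer $m\ge 1$, the set $B_m := \{g \ge 1 + 1/m\}$ lies in $\mathcal{F}_{n-1}$. If $\mu_0^{n-1}(B_m) > 0$, then
$$\int_{B_m} g\,d\mu_0^{n-1} \;\ge\; \left(1 + \frac{1}{m}\right)\mu_0^{n-1}(B_m) \;>\; \mu_0^{n-1}(B_m),$$
contradicting the displayed inequality. Hence $\mu_0^{n-1}(B_m)=0$ for every $m$, and since $\{g>1\}=\bigcup_m B_m$ we get $\mu_0^{n-1}(\{g>1\})=0$.

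Finally, I would transfer the almost-everywhere statement from $\mu_0^{n-1}$ to $P_{n-1}$. The conditions of Lemma \ref{witka0} provide strictly positive bounded $\alpha_n$ satisfying (\ref{1vitasja7})--(\ref{3vitasja7}); by (\ref{vitasja12}), (\ref{vitasja13}) this forces $\psi_i^1>0$ $P_i^0$-a.e.\ on $\Omega_i^{0-}$ and $\psi_i^2>0$ $P_i^0$-a.e.\ on $\Omega_i^{0+}$, so that $\psi_i>0$ $P_i$-almost everywhere. Therefore $\mu_0^{n-1}$ admits a strictly positive density with respect to $P_{n-1}$ (Lemma \ref{witka1}), so the two measures are equivalent and $g\le 1$ holds $P_{n-1}$-a.e., which is exactly (\ref{1vitasja33}). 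The only point requiring care is this last equivalence step and verifying that the Fubini rearrangement in the first paragraph is legitimate despite possibly infinite $g$, which I handle by working in $[0,\infty]$ throughout.
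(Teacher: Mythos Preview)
Your argument is correct and is in fact simpler than the paper's. The paper proceeds differently: it invokes Lemma~\ref{2vitanick1tin1} to produce an exhaustive decomposition $\{B_{m,k}\}$ of $\Omega_{n-1}$, forms the conditional expectations $E^{\mu_0^{n-1}}\{\varphi_{n-1}\mid \bar{\cal F}_m\}$ with respect to the generated $\sigma$-algebras (these are precisely the averages $\mu_0^{n-1}(B_{m,k})^{-1}\int_{B_{m,k}}\varphi_{n-1}\,d\mu_0^{n-1}$, each $\le 1$ by hypothesis), and then lets $m\to\infty$ using martingale convergence to recover $\varphi_{n-1}\le 1$ almost everywhere. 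Your level-set argument $\{g\ge 1+1/m\}$ reaches the same conclusion without the Polish-space hypothesis or any martingale machinery, and your observation that the separability assumption is not needed at this stage is accurate. The paper's heavier route is not wasted, however: the exhaustive-decomposition technique it sets up here is the genuine workhorse in the subsequent Theorem~\ref{witka5} and Lemma~\ref{witka6}, where one must localize in the additional variables $(\omega_n^1,\omega_n^2)$ and a simple level-set trick no longer suffices.
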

\begin{proof}
The metric space $\Omega_{n-1}$   is a complete separable metric space  with the metric $\rho(x,y)=\sum\limits_{i=1}^{n-1}\rho_i(x_i,y_i), $  where $x=(x_1, \ldots, x_{n-1}),  y=(y_1, \ldots, y_{n-1}) \in \Omega_{n-1},$  \ $\ (x_i, y_i) \in \Omega_i^0, $ \  $\rho_i(x_i,y_i)$ is a metric in $\Omega_{i}^0.$ This means that the metric space $\Omega_{n-1}$ has an  exhaustive decomposition$\{B_{m k}\}_{m,k=1}^\infty.$ Suppose that $(\omega_1, \ldots, \omega_{n-1}) \in B_{m, k}$ for a certain $k,$ depending on $m,$ and there exists  an infinite number of $m$ for which $ \mu_0^{n-1}( B_{m, k})>0.$ 
On the probability space  
$\{ \Omega_{n-1},    {\cal F}_{n-1}, \mu_0^{n-1}\},$   for every integrable finite valued random value $ \varphi_{n-1}(\omega_1, \ldots,\omega_{n-1})$  the sequence 
$E^{ \mu_0^{n-1}}\{ \varphi_{n-1}(\omega_1,\ldots, \omega_{n-1})|\bar {\cal F}_m\}$ converges to $ \varphi_{n-1}(\omega_1, \ldots, \omega_{n-1})$ with probability one, as $m \to \infty,$ since it is a regular martingale. Here, we denoted  $\bar {\cal F}_m$  the $\sigma$-algebra, generated by the sets $B_{m,k}, k=\overline{1,\infty}.$

 It is evident that for those $B_{m,k},$ for which $ \mu_0^{n-1}(B_{m,k})\neq 0,$

$$E^{ \mu_0^{n-1}}\{ \varphi_{n-1}(\omega_1,\ldots, \omega_n)|\bar {\cal F}_m\}=$$
\begin{eqnarray}\label{mykolatina1}
\frac{\int\limits_{B_{m,k}} \varphi_{n-1}(\omega_1,\ldots, \omega_{n-1}) d \mu_0^{n-1}}{ \mu_0^{n-1}(B_{m,k})}, \quad (\omega_1,\ldots, \omega_n) \in B_{m,k}.
\end{eqnarray}

 Denote $A_m=A_m(\omega_1,\ldots, \omega_{n-1})$ those sets  $B_{m,k}$   for which  $(\omega_1,\ldots, \omega_n) \in B_{m,k}$ for a certain $k,$ depending on $m,$  and $\mu_0^{n-1}(A_m)>0$. Then, for every integrable finite valued $ \varphi_{n-1}(\omega_1,\ldots, \omega_{n-1})$
\begin{eqnarray}\label{mykolatina22}
\lim\limits_{m \to \infty} \frac{\int\limits_{A_m} \varphi_{n-1}(\omega_1,\ldots, \omega_{n-1})d \mu_0^{n-1}}{ \mu_0^{n-1}(A_m)}= \varphi_{n-1}(\omega_1,\ldots, \omega_{n-1})
\end{eqnarray}
almost everywhere relative to the measure $\mu_0^{n-1}.$
If to put 
$$  \varphi_{n-1}(\omega_1,\ldots, \omega_{n-1})=$$
\begin{eqnarray}\label{mykolatina23}
\int\limits_{\Omega_n^0}\psi_n(\omega_1, \ldots,\omega_n)f_n(\omega_1, \ldots, \omega_n) d P_n^0(\omega_n),  \quad (\omega_1, \ldots, \omega_{n-1}) \in \Omega_{n-1}, 
\end{eqnarray}
then we obtain the proof of  Lemma \ref{witka4}.
\end{proof}

In Theorem \ref{witka5}, we assume that for $\Delta S_n(\omega_1, \ldots,\omega_{n-1}, \omega_n), \  n=\overline{1,N},$ the representation
$$\Delta S_n(\omega_1, \ldots,\omega_{n-1}, \omega_n)=$$
$$S_{n-1}(\omega_1, \ldots,\omega_{n-1}) a_n (\omega_1, \ldots,\omega_{n-1}, \omega_n) \eta_n( \omega_n)=$$
\begin{eqnarray}\label{vitasja34}
d_n (\omega_1, \ldots,\omega_{n-1}, \omega_n) \eta_n( \omega_n), \quad n=\overline{1,N},\quad S_0>0,
\end{eqnarray}
is true, where the random values 
$d_n (\omega_1, \ldots,\omega_{n-1}, \omega_n), $
$a_n(\omega_1, \ldots,\omega_{n-1}, \omega_n),$ $\eta_n( \omega_n), $ $ \  n=\overline{1,N},$ given  on the probability space  $\{\Omega_n, {\cal F}_n, P_n\},$  satisfy the conditions
 $$ 0 < a_n(\omega_1, \ldots,\omega_{n-1}, \omega_n)\leq 1, \quad 1+a_n(\omega_1,\quad \ldots,\omega_{n-1}, \omega_n)\eta_n( \omega_n)> 0, $$ 
\begin{eqnarray}\label{pupsvitasjapop34}
 d_n (\omega_1, \ldots,\omega_{n-1}, \omega_n)>0, \quad P_n^0(\eta_n( \omega_n)>0)>0, \quad  P_n^0(\eta_n( \omega_n)<0)>0.
\end{eqnarray}
From these conditions we obtain
$\Omega_n^-=\Omega_n^{0-}\times \Omega_{n-1}, \ \Omega_n^+=\Omega_n^{0+}\times \Omega_{n-1},$
where $\Omega_n^{0-}=\{\omega_n \in \Omega_n^0, \eta_n( \omega_n)\leq 0\},  \ \Omega_n^{0+}=\{\omega_n \in \Omega_n^0, \eta_n( \omega_n)> 0\}.$

From the suppositions above, it follows that $P_n^0(\Omega_n^{0-})>0, \ P_n^0(\Omega_n^{0+})>0.$
The measure $ P_n^{0-}$ is a contraction of the measure $P_n^0$ on the $\sigma$-algebra ${\cal F}_n^{0-}=\Omega_n^{0-}\cap {\cal F}_n^0,$  $ P_n^{0+}$ is a contraction of the measure $P_n^0$ on the $\sigma$-algebra ${\cal F}_n^{0+}=\Omega_n^{0+}\cap {\cal F}_n^0.$

\begin{te}\label{witka5} Let $\Omega_i^0$ be a complete separable metric space and let  ${\cal F}_i^0$ be a Borell $\sigma$-algebra on $\Omega_i^0, \ i=\overline{1,N}.$
Suppose that for $\Delta S_n(\omega_1, \ldots,\omega_{n-1}, \omega_n), \  n=\overline{1,N},$ the representation (\ref{vitasja34}) is valid and
Lemma \ref{witka4}  conditions are true.
Then, for the nonnegative  random value  $f_n(\omega_1, \ldots,\omega_{n-1}, \omega_n)$  the inequalities 
$$\chi_{\Omega_n^{0-}}(\omega_n^1)\chi_{\Omega_n^{0+}}(\omega_n^2)\left[\frac{\Delta S_n^+(\omega_1, \ldots,\omega_{n-1}, \omega_n^2)}{V_n(\omega_1, \ldots,\omega_{n-1}, \omega_n^1,  \omega_n^2)} f_n(\omega_1, \ldots,\omega_{n-1}, \omega_n^1)+\right.$$

$$\left.\frac{\Delta S_n^-(\omega_1, \ldots,\omega_{n-1}, \omega_n^1)}{V_n(\omega_1, \ldots,\omega_{n-1}, \omega_n^1,  \omega_n^2)}f_n(\omega_1, \ldots,\omega_{n-1}, \omega_n^2)\right]\leq 1,$$
\begin{eqnarray}\label{vitasja36}
(\omega_1, \ldots,\omega_{n-1}) \in \Omega_{n-1}, \quad (\omega_n^1, \omega_n^2) \in \Omega_n^{0-}\times \Omega_n^{0+}, \quad n=\overline{1,N},
\end{eqnarray}
are true almost everywhere relative to the measure $P_{n-1}\times P_n^{0-}\times P_n^{0+}$ on the measurable space $\{\Omega_{n-1}\times \Omega_n^{0-}\times \Omega_n^{0+}, {\cal F}_{n-1}\times {\cal F}_{n}^{0-} \times {\cal F}_{n}^{0+}\}$.
\end{te}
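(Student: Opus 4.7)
The overall plan is to invert the integral inequality of Lemma \ref{witka4} by exploiting the freedom to choose $\alpha_n$: I would concentrate $\alpha_n$ on product rectangles in $\Omega_n^{0-}\times\Omega_n^{0+}$ and then pass to a pointwise bound through a martingale/Lebesgue-type differentiation on the product space. Under the representation (\ref{vitasja34}) the decomposition (\ref{100vitasikja7}) is trivial ($N_n=1$, $A_n^{0,1\mp}=\Omega_n^{0\mp}$, $V_{n-1}^1=\Omega_{n-1}$), so $\alpha_n$ may be taken as a function of the current pair $(\omega_n^1,\omega_n^2)$ alone.

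First, I would rewrite the hypothesis. Suppressing past arguments, set
\[
g(\omega_n^1,\omega_n^2)=\frac{\Delta S_n^+(\omega_n^2)}{V_n(\omega_n^1,\omega_n^2)}f_n(\omega_n^1)+\frac{\Delta S_n^-(\omega_n^1)}{V_n(\omega_n^1,\omega_n^2)}f_n(\omega_n^2).
\]
Expanding $\psi_n$ through (\ref{vitasja11})--(\ref{vitasja13}) and applying Fubini, the bound $\int_{\Omega_n^0}\psi_n f_n\,dP_n^0\leq 1$ from Lemma \ref{witka4} becomes
\[
\int_{\Omega_n^{0-}\times\Omega_n^{0+}}\alpha_n(\omega_n^1,\omega_n^2)\,g(\omega_n^1,\omega_n^2)\,dP_n^0(\omega_n^1)\,dP_n^0(\omega_n^2)\leq 1,
\]
valid for every admissible $\alpha_n$ from Lemma \ref{witka0} and for $P_{n-1}$-almost every past.

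Next I would localize. For any Borel sets $B^-\subset\Omega_n^{0-}$ and $B^+\subset\Omega_n^{0+}$ of positive $P_n^0$-measure, take $\alpha_n$ from the template (\ref{4vitasja9}) with $I_n=S_n=2$, $B_{n,1}^{0,1-}=B^-$, $B_{n,1}^{0,1+}=B^+$, $\gamma_{1,1}=1$, and let $\delta_1^n=\mu_1^n\downarrow 0$. Each such $\alpha_n$ is admissible, and passing to the limit (the $\delta$-proportional background terms vanish while the principal term converges to the normalized characteristic function of $B^-\times B^+$) yields the averaged inequality
\[
\frac{1}{P_n^0(B^-)P_n^0(B^+)}\int_{B^-\times B^+}g(\omega_n^1,\omega_n^2)\,dP_n^0(\omega_n^1)\,dP_n^0(\omega_n^2)\leq 1
\]
for $P_{n-1}$-almost every past.

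Finally I would differentiate. By Lemma \ref{2vitanick1tin1}, $\Omega_n^{0-}$ and $\Omega_n^{0+}$ admit exhaustive decompositions whose Cartesian products give an exhaustive decomposition $\{D_{m,k}\}$ of $\Omega_n^{0-}\times\Omega_n^{0+}$. Applying the averaged inequality to each $D_{m,k}$ of positive $P_n^{0-}\otimes P_n^{0+}$-measure and invoking the regular-martingale convergence argument already used in the proof of Lemma \ref{witka4}, now on $(\Omega_n^{0-}\times\Omega_n^{0+},{\cal F}_n^{0-}\times{\cal F}_n^{0+},P_n^{0-}\otimes P_n^{0+})$ with the filtration generated by $\{D_{m,k}\}$, produces the pointwise bound $g(\omega_n^1,\omega_n^2)\leq 1$ for $P_n^{0-}\otimes P_n^{0+}$-almost every $(\omega_n^1,\omega_n^2)$.

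The main technical hurdle is coordinating the two ``almost everywhere'' clauses: Lemma \ref{witka4} leaves an exceptional past set that a priori depends on the test $\alpha_n$, while the differentiation step leaves an exceptional set in the current pair. Both must be kept countable: only countably many cells $D_{m,k}$ and countably many rational values of $\delta$ enter the argument, so a single countable union of null pasts suffices. A final Fubini step then combines the two exceptional sets into one null set on $\Omega_{n-1}\times\Omega_n^{0-}\times\Omega_n^{0+}$ relative to $P_{n-1}\times P_n^{0-}\times P_n^{0+}$, exactly as tolerated by (\ref{vitasja36}).
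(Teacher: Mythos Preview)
Your proposal is correct and follows essentially the same route as the paper: rewrite the Lemma~\ref{witka4} bound as $\int_{\Omega_n^{0-}\times\Omega_n^{0+}}\alpha_n\,g\,d(P_n^0\times P_n^0)\leq 1$, then localize by choosing $\alpha_n$ concentrated on cells of an exhaustive decomposition of the product space and pass to the pointwise bound via martingale differentiation. The only cosmetic difference is that the paper takes $\alpha_n$ directly as a two-term mixture $(1-\varepsilon)\chi_{A_{m,k}}/\mu_n(A_{m,k})+\varepsilon\chi_{A_{m,k}^c}/\mu_n(A_{m,k}^c)$ on an exhaustive decomposition $\{A_{m,k}\}$ of $\Omega_n^{0-}\times\Omega_n^{0+}$, whereas you build $\alpha_n$ from the product template (\ref{4vitasja9}) on rectangles $B^-\times B^+$; your handling of the joint null set via countability and Fubini is also more explicit than the paper's.
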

\begin{proof} Under Theorem \ref{witka5} conditions, the set of martingale measures is a nonempty one.
Due to the equality (\ref{vitasja23}), we obtain 
$$\int\limits_{\Omega_N}\prod\limits_{i=1}^N\psi_i(\omega_1, \ldots,\omega_i)f_n(\omega_1, \ldots, \omega_n)\prod\limits_{i=1}^Nd P_i^0(\omega_i)=$$
\begin{eqnarray}\label{vitasja38}
\int\limits_{\Omega_n}\prod\limits_{i=1}^n\psi_i(\omega_1, \ldots,\omega_i)f_n(\omega_1, \ldots, \omega_n)\prod\limits_{i=1}^nd P_i^0(\omega_i).
\end{eqnarray}
 Further, 
$$\int\limits_{\Omega_n^0}\psi_n(\omega_1, \ldots,\omega_n)f_n(\omega_1, \ldots,\omega_n) d P_n^0(\omega_n)=$$
$$\int\limits_{\Omega_n^0}\int\limits_{\Omega_n^0}
\chi_{\Omega_n^-}(\omega_1, \ldots, \omega_{n-1}, \omega_n^1) \chi_{\Omega_n^+}(\omega_1, \ldots,\omega_{n-1}, \omega_n^2)\times $$
$$\alpha_n(\{\omega_1, \ldots,\omega_{n-1},\omega_n^1\};\{\omega_1, \ldots,\omega_{n-1},\omega_n^2\})\times $$ $$ \left[\frac{\Delta S_n^+(\omega_1, \ldots,\omega_{n-1}, \omega_n^2)}{V_n(\omega_1, \ldots,\omega_{n-1}, \omega_n^1,  \omega_n^2)}  f_n(\omega_1, \ldots,\omega_{n-1}, \omega_n^1)+\right.$$
\begin{eqnarray}\label{vitasja39}
\left.\frac{\Delta S_n^-(\omega_1, \ldots,\omega_{n-1}, \omega_n^1)}{V_n(\omega_1, \ldots,\omega_{n-1}, \omega_n^1,  \omega_n^2)}f_n(\omega_1, \ldots,\omega_{n-1}, \omega_n^2)\right]d P_n^0(\omega_n^1)d P_n^0(\omega_n^2). 
\end{eqnarray}

$$\chi_{\Omega_n^-}(\omega_1, \ldots,\omega_n^1)=\chi_{\Omega_{n-1}}(\omega_1, \ldots,\omega_{n-1}) \chi_{\Omega_n^{0-}}(\omega_n^1),$$
\begin{eqnarray}\label{tinvitasja40}
\chi_{\Omega_n^+}(\omega_1, \ldots,\omega_n^2)=\chi_{\Omega_{n-1}}(\omega_1, \ldots,\omega_{n-1}) \chi_{\Omega_n^{0+}}(\omega_n^2).
\end{eqnarray}
Due to Lemma \ref{witka4},  the inequality
$$ \int\limits_{\Omega_n^{0}}\int\limits_{\Omega_n^{0}}\chi_{\Omega_n^{0-}}(\omega_n^1)\chi_{\Omega_n^{0+}}(\omega_n^2)\alpha_n(\{\omega_1, \ldots, \omega_{n-1}, \omega_n^1\};\{\omega_1, \ldots, \omega_{n-1}, \omega_n^2\})
\times $$
$$ \left[\frac{\Delta S_n^+(\omega_1, \ldots,\omega_{n-1}, \omega_n^2)}{V_n(\omega_1, \ldots,\omega_{n-1}, \omega_n^1,  \omega_n^2)}  f_n(\omega_1, \ldots,\omega_{n-1}, \omega_n^1)+\right.$$
\begin{eqnarray}\label{vitasja40}
\left.\frac{\Delta S_n^-(\omega_1, \ldots,\omega_{n-1}, \omega_n^1)}{V_n(\omega_1, \ldots,\omega_{n-1}, \omega_n^1,  \omega_n^2)}f_n(\omega_1, \ldots,\omega_{n-1}, \omega_n^2)\right]d P_n^0(\omega_n^1)d P_n^0(\omega_n^2) \leq 1, 
\end{eqnarray}
 is true almost everywhere relative to the measure $P_{n-1}$ on the $\sigma$-algebra ${\cal F}_{n-1}.$
Let us put
\begin{eqnarray}\label{vitasja42}
 \alpha_n(\{\omega_1, \ldots,\omega_{n-1},\omega_n^1\};\{\omega_1, \ldots,\omega_{n-1}, \omega_n^2\})=
 \alpha_n(\omega_n^1;\omega_n^2),
\end{eqnarray}
where $  \alpha_n(\omega_n^1;\omega_n^2)$ satisfy the condition
\begin{eqnarray}\label{vitasja43}
 \int\limits_{\Omega_n^{0-}}\int\limits_{\Omega_n^{0+}} \alpha_n(\omega_n^1;\omega_n^2)d P_n^0(\omega_n^1)d P_n^0(\omega_n^2)=1.
\end{eqnarray}
Since, on the probability space $\{\Omega_n^{0-}\times \Omega_n^{0+}, $${\cal F}_n^{0-}\times {\cal F}_n^{0+}, P_n^{0-}\times  P_n^{0+}\},$ there  exists an exhaustive decomposition $\{A_{m,k}\}_{m,k=1}^\infty,$ let us put
\begin{eqnarray}\label{vitasja44}
 \alpha_n(\omega_n^1;\omega_n^2)=(1-\varepsilon) \frac{\chi_{A_{m,k}}(\omega_n^1;\omega_n^2)}{\mu_n(A_{m,k})}+ \varepsilon \frac{ \chi_{\Omega_n^{0-}\times \Omega_n^{0+}\setminus A_{m,k}}(\omega_n^1;\omega_n^2)}{\mu_n(\Omega_n^{0-}\times \Omega_n^{0+}\setminus A_{m,k})},
\end{eqnarray}
where $\mu_n(A)=[P_n^{0-}\times  P_n^{0+}](A), \ A \in {\cal F}_n^{0-}\times {\cal F}_n^{0+},$ and we assume that $\mu_n(A_{m,k})>0, $ $\mu_n(\Omega_n^{0-}\times \Omega_n^{0+}\setminus A_{m,k}) >0. $
Suppose that $(\omega_n^1;\omega_n^2) \in A_{m,k}$ and $\mu_n( A_{m,k})>0$ for the infinite number of $m$ and $k.$
Then,
$$ \int\limits_{\Omega_n^{0}}\int\limits_{\Omega_n^{0}}
\chi_{\Omega_n^{0-}}(\omega_n^1)\chi_{\Omega_n^{0+}}(\omega_n^2)\left[(1-\varepsilon) \frac{\chi_{A_{m,k}}(\omega_n^1;\omega_n^2)}{\mu_n(A_{m,k})}+ \varepsilon \frac{ \chi_{\Omega_n^{0-}\times \Omega_n^{0+}\setminus A_{m,k}}(\omega_n^1;\omega_n^2)}{\mu_n(\Omega_n^{0-}\times \Omega_n^{0+}\setminus A_{m,k})}\right]
\times $$
$$ \left[\frac{\Delta S_n^+(\omega_1, \ldots,\omega_{n-1}, \omega_n^2)}{V_n(\omega_1, \ldots,\omega_{n-1}, \omega_n^1,  \omega_n^2)}  f_n(\omega_1, \ldots,\omega_{n-1}, \omega_n^1)+\right.$$
\begin{eqnarray}\label{vitasja45}
\left.\frac{\Delta S_n^-(\omega_1, \ldots,\omega_{n-1}, \omega_n^1)}{V_n(\omega_1, \ldots,\omega_{n-1}, \omega_n^1,  \omega_n^2)}f_n(\omega_1, \ldots,\omega_{n-1}, \omega_n^2)\right]d P_n^0(\omega_n^1)d P_n^0(\omega_n^2) \leq 1.
\end{eqnarray}
Going to the limit as $m,k \to \infty$ and then as $\varepsilon \to 0,$ we obtain the inequality
$$\chi_{\Omega_n^{0,-}}(\omega_n^1) \chi_{\Omega_n^{0,+}}(\omega_n^2)\left[\frac{\Delta S_n^+(\omega_1, \ldots,\omega_{n-1}, \omega_n^2)}{V_n(\omega_1, \ldots,\omega_{n-1}, \omega_n^1,  \omega_n^2)}  f_n(\omega_1, \ldots,\omega_{n-1}, \omega_n^1)+\right.$$
\begin{eqnarray}\label{vitasja46}
\left.\frac{\Delta S_n^-(\omega_1, \ldots,\omega_{n-1}, \omega_n^1)}{V_n(\omega_1, \ldots,\omega_{n-1}, \omega_n^1,  \omega_n^2)}f_n(\omega_1, \ldots,\omega_{n-1}, \omega_n^2)\right] \leq 1, \quad (\omega_1, \ldots,\omega_{n-1}) \in \Omega_{n-1}.
\end{eqnarray}
which is valid almost everywhere relative to the measure $\mu_n.$
Theorem \ref{witka5} is proved.
\end{proof}

\begin{leme}\label{witka6}
Let $\Omega_n^0$ be a complete separable metric space and let ${\cal F}_n^0$ be a Borel $\sigma$-algebra on  $\Omega_n^0, \ n=\overline{1,N}$.
 If the conditions of  Lemma \ref{witka4}  are true, then the inequality
$$\chi_{\Omega_n^-}(\omega_1, \ldots,\omega_{n-1},\omega_n^1) \chi_{\Omega_n^+}(\omega_1, \ldots,\omega_{n-1}, \omega_n^2)\times$$ $$ \left[\frac{\Delta S_n^+(\omega_1, \ldots,\omega_{n-1}, \omega_n^2)}{V_n(\omega_1, \ldots,\omega_{n-1}, \omega_n^1,  \omega_n^2)}  f_n(\omega_1, \ldots,\omega_{n-1}, \omega_n^1)+\right.$$
\begin{eqnarray}\label{2vitasja46}
\left.\frac{\Delta S_n^-(\omega_1, \ldots,\omega_{n-1}, \omega_n^1)}{V_n(\omega_1, \ldots,\omega_{n-1}, \omega_n^1,  \omega_n^2)}f_n(\omega_1, \ldots,\omega_{n-1}, \omega_n^2)\right] \leq 1, \quad (\omega_1, \ldots,\omega_{n-1}) \in \Omega_{n-1},
\end{eqnarray}
 is valid almost everywhere relative to the measure $P_{n-1}\times [P_n^0\times P_n^0]$  on the measurable space $\{\Omega_{n-1}\times \Omega_n^{0}\times \Omega_n^{0}, {\cal F}_{n-1}\times {\cal F}_{n}^0 \times {\cal F}_{n}^0\}.$ 
\end{leme}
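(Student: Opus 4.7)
The plan is to extend the proof of Theorem \ref{witka5}, whose product-structure hypothesis $\Omega_n^\pm = \Omega_n^{0\pm} \times \Omega_{n-1}$ corresponds to the special case $N_n = 1$ of the general representation (\ref{100vitasikja7}). In the general setting I would localize to each slab $V_{n-1}^k \in {\cal F}_{n-1}$ from the partition $\Omega_{n-1} = \bigcup_{k=1}^{N_n} V_{n-1}^k$. On such a slab the characteristic functions collapse: $\chi_{\Omega_n^-}(\omega_1,\ldots,\omega_{n-1},\omega_n) = \chi_{V_{n-1}^k}(\omega_1,\ldots,\omega_{n-1})\, \chi_{A_n^{0,k-}}(\omega_n)$ and analogously for $\chi_{\Omega_n^+}$, so the restricted situation has exactly the product form needed to import the Theorem \ref{witka5} argument. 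Moreover, off the set $\{\chi_{\Omega_n^-}\chi_{\Omega_n^+} \neq 0\}$ the claimed inequality is the trivial $0 \leq 1$, so only the behaviour on $\bigcup_k \bigl(V_{n-1}^k \times A_n^{0,k-} \times A_n^{0,k+}\bigr)$ matters.

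For a fixed $k$ I would choose $\alpha_n(\{\omega\}_n^1;\{\omega\}_n^2) = \alpha_n(\omega_n^1;\omega_n^2)$ history-independent, essentially supported on $A_n^{0,k-} \times A_n^{0,k+}$, and normalized so that its $dP_n^0 \times dP_n^0$ integral equals $1$. To keep this inside the framework of Lemma \ref{witka0} (so that (\ref{1vitasja7})--(\ref{3vitasja7}) hold and the corresponding $\mu_0$ is a valid equivalent probability measure), I would express $\alpha_n$ as a convex combination that concentrates on a target slice yet stays strictly positive, mimicking (\ref{2vitasja9})--(\ref{89vitasikafja9}). For this $\alpha_n$ the hypothesis (\ref{vitasja33}) of Lemma \ref{witka4} is available, yielding $\int_{\Omega_n^0} \psi_n f_n \, dP_n^0 \leq 1$ for $P_{n-1}$-a.e. $(\omega_1,\ldots,\omega_{n-1})$, and in particular on $V_{n-1}^k$.

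Since $\Omega_n^0$ is complete separable metric, $A_n^{0,k-} \times A_n^{0,k+}$ inherits an exhaustive decomposition $\{A_{m,j}^{(k)}\}_{m,j=1}^\infty$ by Lemma \ref{2vitanick1tin1}. Following the template of (\ref{vitasja44})--(\ref{vitasja46}), I would take $\alpha_n(\omega_n^1;\omega_n^2) = (1-\varepsilon)\chi_{A_{m,j}^{(k)}}/\mu_n(A_{m,j}^{(k)}) + \varepsilon \chi_{(A_n^{0,k-}\times A_n^{0,k+}) \setminus A_{m,j}^{(k)}}/\mu_n\bigl((A_n^{0,k-}\times A_n^{0,k+}) \setminus A_{m,j}^{(k)}\bigr)$, plug it into the unfolded $P_n^0 \times P_n^0$ form of $\int_{\Omega_n^0} \psi_n f_n \, dP_n^0 \leq 1$, and then pass to the limit $m,j \to \infty$ followed by $\varepsilon \to 0$. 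The regular-martingale convergence invoked in the proof of Lemma \ref{witka4} extracts the pointwise inequality (\ref{2vitasja46}) on $V_{n-1}^k \times A_n^{0,k-} \times A_n^{0,k+}$ almost everywhere with respect to $P_{n-1} \times P_n^0 \times P_n^0$. Patching over $k = \overline{1, N_n}$ (which are disjoint and cover $\Omega_{n-1}$) yields the global assertion. The main obstacle I anticipate is the bookkeeping needed to exhibit each localized and $\varepsilon$-perturbed $\alpha_n$ as part of a legitimate family satisfying (\ref{1vitasja7})--(\ref{3vitasja7}), while simultaneously collapsing onto a vanishing neighbourhood inside the slab; the device is the same convex-combination construction that underlies Lemma \ref{witka0}, but it must be propagated so that Lemma \ref{witka4}'s integral inequality remains available at every stage of the limit.
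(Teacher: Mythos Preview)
Your overall strategy---localize to each slab $V_{n-1}^k$, invoke Lemma~\ref{witka4}, use an exhaustive decomposition to shrink $\alpha_n$ onto a point, then take limits---is exactly the paper's route. The paper also uses product-form $\alpha_n = \alpha_n^{-}\alpha_n^{+}$ built from one-dimensional exhaustive decompositions $E_{rs}^{nk-}\subset A_n^{0,k-}$ and $E_{mi}^{nk+}\subset A_n^{0,k+}$ rather than a two-dimensional decomposition of the product, but that is a cosmetic difference.

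There is, however, a genuine gap in your implementation. You propose a \emph{history-independent} $\alpha_n(\omega_n^1;\omega_n^2)$ essentially supported on $A_n^{0,k-}\times A_n^{0,k+}$ for one fixed $k$. Such an $\alpha_n$ cannot satisfy the normalization condition (\ref{3vitasja7}) for \emph{all} $(\omega_1,\ldots,\omega_{n-1})\in\Omega_{n-1}$: when $(\omega_1,\ldots,\omega_{n-1})\in V_{n-1}^{k'}$ with $k'\neq k$, the relevant integration region becomes $A_n^{0,k'-}\times A_n^{0,k'+}$, on which your $\alpha_n$ carries only an $\varepsilon$-sized residual mass, so the integral is far from $1$. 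Without (\ref{3vitasja7}) the measure $\mu_0$ is not a probability and Lemma~\ref{witka4} is not available. The paper resolves precisely this point by making $\alpha_n$ history-dependent: it sets
\[
\alpha_n^{r,s,m,i}(\{\omega\}_n^1;\{\omega\}_n^2)=\Bigl(\sum_{k}\alpha_{n,k,r,s}^{-}(\omega_n^1)\chi_{A_n^{0,k-}}(\omega_n^1)\chi_{V_{n-1}^k}(\{\omega\}_{n-1}^1)\Bigr)\Bigl(\sum_{k}\alpha_{n,k,m,i}^{+}(\omega_n^2)\chi_{A_n^{0,k+}}(\omega_n^2)\chi_{V_{n-1}^k}(\{\omega\}_{n-1}^2)\Bigr),
\]
so that upon restriction to the diagonal $\omega_i^1=\omega_i^2=\omega_i$ for $i<n$ only the $k$-th summand survives on $V_{n-1}^k$, and the normalization holds slab by slab automatically. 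Once you insert these $\chi_{V_{n-1}^k}$ factors, your remaining limit argument goes through exactly as you describe.
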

\begin{proof}
Due to the conditions  for $\Omega_n^a, a=-,+,$ the representation 
\begin{eqnarray}\label{1vitasja46}
\Omega_n^a=\bigcup\limits_{k=1}^{N_n}[A_{n}^{0, k a}\times V_{n-1}^{k}]
\end{eqnarray}
is true. Owing to Lemma \ref{witka6}  conditions, there exists an exhaustive decomposition $D_{m i}^n, \  m, i=\overline{1,\infty}, $ such that  $\bigcup\limits_{i=1}^\infty D_{m i}^n=\Omega_n^0, \ m=\overline{1,\infty}. $
Let us denote $A_{n}^{0, k a}\cap  D_{m i}^n=  E_{m i}^{n k a}.$
It is evident that $ E_{m i}^{n k a}$ forms an exhaustive decomposition of sets $A_{n}^{0, k a}, \ n=\overline{1,N}, \ k=\overline{1,\infty}, \  a=-,+, $ correspondingly.
Due to  Lemma \ref{witka4},  the inequality
\begin{eqnarray}\label{5vitasja33}
\int\limits_{\Omega_n^0}
\psi_n(\omega_1, \ldots,\omega_n)f_n(\omega_1, \ldots, \omega_n) d P_n^0(\omega_n)\leq 1,  \quad (\omega_1, \ldots, \omega_{n-1}) \in \Omega_{n-1}, 
\end{eqnarray}
is true almost everywhere relative to the measure $P_{n-1}.$
The equality 
$$\int\limits_{\Omega_n^0}\psi_n(\omega_1, \ldots,\omega_n)f_n(\omega_1, \ldots,\omega_n) d P_n^0(\omega_n)=$$
$$\int\limits_{\Omega_n^0}\int\limits_{\Omega_n^0}
\chi_{\Omega_n^-}(\omega_1, \ldots,\omega_{n-1}, \omega_n^1) \chi_{\Omega_n^+}(\omega_1, \ldots,\omega_{n-1}, \omega_n^2)\times $$
$$\alpha_n(\{\omega_1, \ldots,\omega_{n-1}, \omega_n^1\};\{\omega_1, \ldots,\omega_{n-1}, \omega_n^2\})\times $$ $$ \left[\frac{\Delta S_n^+(\omega_1, \ldots,\omega_{n-1}, \omega_n^2)}{V_n(\omega_1, \ldots,\omega_{n-1}, \omega_n^1,  \omega_n^2)}  f_n(\omega_1, \ldots,\omega_{n-1}, \omega_n^1)+\right.$$
\begin{eqnarray}\label{1vitasja39}
\left.\frac{\Delta S_n^-(\omega_1, \ldots,\omega_{n-1}, \omega_n^1)}{V_n(\omega_1, \ldots,\omega_{n-1}, \omega_n^1,  \omega_n^2)}f_n(\omega_1, \ldots,\omega_{n-1}, \omega_n^2)\right]d P_n^0(\omega_n^1)d P_n^0(\omega_n^2) 
\end{eqnarray}
is valid.
From the equality (\ref{1vitasja39}) and Lemma \ref{witka4},  the inequality
$$\int\limits_{\Omega_n^0}\int\limits_{\Omega_n^0}
\chi_{\Omega_n^-}(\omega_1, \ldots,\omega_{n-1}, \omega_n^1) \chi_{\Omega_n^+}(\omega_1, \ldots,\omega_{n-1},\omega_n^2)\times $$
$$\alpha_n(\{\omega_1, \ldots,\omega_{n-1}, \omega_n^1\};\{\omega_1, \ldots,\omega_{n-1},\omega_n^2\})\times $$ 
$$ \left[\frac{\Delta S_n^+(\omega_1, \ldots,\omega_{n-1}, \omega_n^2)}{V_n(\omega_1, \ldots,\omega_{n-1}, \omega_n^1,  \omega_n^2)}  f_n(\omega_1, \ldots,\omega_{n-1}, \omega_n^1)+\right.$$
\begin{eqnarray}\label{1vitasja40}
\left.\frac{\Delta S_n^-(\omega_1, \ldots,\omega_{n-1}, \omega_n^1)}{V_n(\omega_1, \ldots,\omega_{n-1}, \omega_n^1,  \omega_n^2)}f_n(\omega_1, \ldots,\omega_{n-1}, \omega_n^2)\right]d P_n^0(\omega_n^1)d P_n^0(\omega_n^2) \leq 1, 
\end{eqnarray}
 is true almost everywhere relative to the measure $P_{n-1}$ on the $\sigma$-algebra ${\cal F}_{n-1}.$
Let us put 
$$\alpha_{n}^{r,s-} (\omega_1^1, \ldots,\omega_{n}^1)=\sum\limits_{k=1}^{N_n}\alpha_{n,k,r,s}^-( \omega_n^1)\chi_{A_n^{0,k-}}(\omega_n^1) \chi_{V_{n-1}^k}(\omega_1^1, \ldots,\omega_{n}^1),   $$

$$\alpha_{n}^{m,i+} (\omega_1^2, \ldots,\omega_{n}^2)=\sum\limits_{k=1}^{N_n}\alpha_{n,k,m,i}^+( \omega_n^2)\chi_{A_n^{0,k+}}(\omega_n^2) \chi_{V_{n-1}^k}(\omega_1^2, \ldots,\omega_{n-1}^2),   $$
\begin{eqnarray}\label{1vitatinnnna40}
 \alpha_{n}^{r,s,m,i} (\{\omega_1^1, \ldots,\omega_{n}^1\}; \{\omega_1^2, \ldots,\omega_{n}^2\}) = \alpha_{n}^{r,s-} (\omega_1^1, \ldots,\omega_{n}^1)\alpha_{n}^{m,i+} (\omega_1^2, \ldots,\omega_{n}^2),
\end{eqnarray}
where
$$ \alpha_{n,k,r,s}^-( \omega_n^1)=  \left[(1-\delta)\frac{\chi_{E_{r s}^{n k -}}(\omega_n^1)}{P_n^0(E_{r s}^{n k -})}+\delta \frac{\chi_{A_n^{0 k-} \setminus E_{r s}^{n k -}}(\omega_n^1)}{P_n^0(A_n^{0 k-} \setminus E_{r s}^{n k -})}\right],$$

\begin{eqnarray}\label{1vitatinnnna41}
\alpha_{n,k,m,i}^+( \omega_n^2)= \left[(1-\delta)\frac{\chi_{E_{m i}^{n k +}}(\omega_n^2)}{P_n^0(E_{m i}^{n k +})}+\delta \frac{\chi_{A_n^{0 k+} \setminus E_{m i}^{n k +}}(\omega_n^2)}{P_n^0(A_n^{0 k+} \setminus E_{m i}^{n k +})}\right], \quad 0<\delta <1.
\end{eqnarray}
In the formulas (\ref{1vitatinnnna41}), we assume  that the inequalities 
\begin{eqnarray}\label{supervitochka1}
  P_n^0(E_{r s}^{n k -})>0, \  P_n^0(A_n^{0 k-} \setminus E_{r s}^{n k -})>0, \   P_n^0(E_{m i}^{n k +})>0, \   P_n^0(A_n^{0 k+} \setminus E_{m i}^{n k +})>0,
\end{eqnarray}
are true.
Let us consider
$$ \alpha_{n}^{r,s,m,i} (\{\omega_1, \ldots,\omega_{n-1}, \omega_{n-1}^1\}; \{\omega_1, \ldots,\omega_{n-1},  \omega_{n}^2\}) =$$ 
\begin{eqnarray}\label{1vitatinnnna42}
\alpha_{n}^{r,s-} (\omega_1, \ldots,\omega_{n-1},\omega_{n}^1)\alpha_{n}^{m,i+} (\omega_1, \ldots,\omega_{n-1}, \omega_{n}^2).
\end{eqnarray}
Suppose that $(\omega_1, \ldots,\omega_{n-1}) \in V_{n-1}^{k}$ for a certain $k.$
Then,
$$ \alpha_{n}^{r,s,m,i} (\{\omega_1, \ldots,\omega_{n-1}, \omega_{n-1}^1\}; \{\omega_1, \ldots,\omega_{n-1},  \omega_{n}^2\}) =$$ 
$$\left[(1-\delta)\frac{\chi_{E_{r s}^{n k -}}(\omega_n^1)}{P_n^0(E_{r s}^{n k -})}+\delta \frac{\chi_{A_n^{0 k-} \setminus E_{r s}^{n k -}}(\omega_n^1)}{P_n^0(A_n^{0 k-} \setminus E_{r s}^{n k -})}\right]\times$$
\begin{eqnarray}\label{2vitasja40}
 \left[(1-\delta)\frac{\chi_{E_{m i}^{n k +}}(\omega_n^2)}{P_n^0(E_{m i}^{n k +})}+\delta \frac{\chi_{A_n^{0 k+} \setminus E_{m i}^{n k +}}(\omega_n^2)}{P_n^0(A_n^{0 k+} \setminus E_{m i}^{n k +})}\right].
\end{eqnarray}
 We assume that the point $ (\omega_n^1,\omega_n^2) \in E_{r s}^{n k -}\times  E_{m i}^{n k +}$ for the infinite number of $r,s$  and  $m,i $ , where $P_n^0( E_{r s}^{n k -})>0, \ P_n^0(  E_{m i}^{n k +})>0.$

  Substituting (\ref{2vitasja40}) into (\ref{1vitasja40}) and going to the limit  as  $m, k \to \infty$  $r, s  \to \infty$  and then as  $\delta \to 0,$ we obtain the needed inequality. Lemma \ref{witka6} is proved.
\end{proof}

\begin{te}\label{witka7}
Suppose that the conditions of Theorem \ref{witka5} are true.  If 
for a certain  $\omega_n^1 \in \Omega_n^{0-}$ and $\omega_n^2 \in \Omega_n^{0+}$ the inequalities 

$$\sup\limits_{(\omega_1, \ldots,  \omega_{n-1}) \in \Omega_{n-1}} \frac{1}{\Delta S_n^-(\omega_1, \ldots,  \omega_{n-1}, \omega_n^1)}< \infty,$$
\begin{eqnarray}\label{tinwickpussy1}
\sup\limits_{(\omega_1, \ldots,  \omega_{n-1}) \in \Omega_{n-1}} \frac{1}{\Delta S_n^+(\omega_1, \ldots,  \omega_{n-1}, \omega_n^2)}< \infty,  \quad  n=\overline{1,N}, 
\end{eqnarray}
are true, then  the  nonnegative random values $f_n(\omega_1, \ldots,\omega_{n-1}, \omega_n),\  n=\overline{1,N},$ satisfy the  inequalities
$$ f_n(\omega_1, \ldots,\omega_{n-1}, \omega_n) \leq $$ 
\begin{eqnarray}\label{1vitasjatinnaaa47}
(1+ \gamma_{n-1}(\omega_1, \ldots,\omega_{n-1})\Delta S_n(\omega_1, \ldots,\omega_{n-1}, \omega_n)), \quad  n=\overline{1,N},
\end{eqnarray} 
where $\gamma_{n-1}(\omega_1, \ldots,\omega_{n-1}) $ is a bounded ${\cal F}_{n-1}$-measurable random value.  
\end{te}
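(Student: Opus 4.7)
The natural starting point is to rewrite the pointwise inequality of Theorem~\ref{witka5} in a separated form. Multiplying through by $V_n(\omega_1,\ldots,\omega_{n-1},\omega_n^1,\omega_n^2)=\Delta S_n^-(\omega_n^1)+\Delta S_n^+(\omega_n^2)$ and collecting the $f_n(\omega_n^1)$ terms on one side and the $f_n(\omega_n^2)$ terms on the other, the inequality (\ref{vitasja36}) becomes
$$\Delta S_n^+(\omega_1,\ldots,\omega_{n-1},\omega_n^2)\bigl[f_n(\omega_1,\ldots,\omega_n^1)-1\bigr]\le \Delta S_n^-(\omega_1,\ldots,\omega_{n-1},\omega_n^1)\bigl[1-f_n(\omega_1,\ldots,\omega_n^2)\bigr].$$
On the set where $\Delta S_n^\pm$ are strictly positive, dividing by the product $\Delta S_n^-(\omega_n^1)\Delta S_n^+(\omega_n^2)$ yields the key separation
$$\frac{f_n(\omega_1,\ldots,\omega_{n-1},\omega_n^2)-1}{\Delta S_n^+(\omega_1,\ldots,\omega_{n-1},\omega_n^2)}\le\frac{1-f_n(\omega_1,\ldots,\omega_{n-1},\omega_n^1)}{\Delta S_n^-(\omega_1,\ldots,\omega_{n-1},\omega_n^1)},$$
in which the left-hand side depends only on $\omega_n^2$ and the right-hand side only on $\omega_n^1$. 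This is exactly the shape that forces the existence of a separating ${\cal F}_{n-1}$-measurable number.

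Given this, I would define
$$\gamma_{n-1}(\omega_1,\ldots,\omega_{n-1}):=\sup_{\omega\in\Omega_n^{0+}}\frac{f_n(\omega_1,\ldots,\omega_{n-1},\omega)-1}{\Delta S_n^+(\omega_1,\ldots,\omega_{n-1},\omega)}.$$
The separation inequality ensures the supremum is automatically bounded above by $(1-f_n(\omega_1,\ldots,\omega_{n-1},\omega_n^1))/\Delta S_n^-(\omega_1,\ldots,\omega_{n-1},\omega_n^1)$ for the fixed $\omega_n^1$ appearing in the hypothesis. Since $f_n\ge 0$, this in turn is dominated by $1/\Delta S_n^-(\omega_1,\ldots,\omega_{n-1},\omega_n^1)$, which by (\ref{tinwickpussy1}) is uniformly bounded in $(\omega_1,\ldots,\omega_{n-1})$. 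For a lower bound I would evaluate the supremand at the fixed $\omega_n^2$ from the hypothesis: since $f_n\ge 0$ this gives $\gamma_{n-1}\ge -1/\Delta S_n^+(\omega_1,\ldots,\omega_{n-1},\omega_n^2)$, again uniformly bounded by (\ref{tinwickpussy1}). Thus the two hypotheses on the suprema are used precisely to furnish the two-sided bound on $\gamma_{n-1}$.

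It remains to verify the target inequality (\ref{1vitasjatinnaaa47}) pointwise. On $\Omega_n^{0+}$, where $\Delta S_n=\Delta S_n^+>0$, it is immediate from the very definition of $\gamma_{n-1}$: $(f_n(\omega_n)-1)/\Delta S_n^+(\omega_n)\le\gamma_{n-1}$ rearranges to exactly (\ref{1vitasjatinnaaa47}). On $\Omega_n^{0-}$, where $\Delta S_n=-\Delta S_n^-\le 0$, I would apply the separated inequality with $\omega_n^1=\omega_n$ and any $\omega_n^2\in\Omega_n^{0+}$, then take the supremum over $\omega_n^2$: this yields $(1-f_n(\omega_n))/\Delta S_n^-(\omega_n)\ge\gamma_{n-1}$, which rearranges to $f_n(\omega_n)\le 1-\gamma_{n-1}\Delta S_n^-(\omega_n)=1+\gamma_{n-1}\Delta S_n(\omega_n)$ as required. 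The boundary points where $\Delta S_n^\pm$ may vanish are handled directly by the original inequality (\ref{vitasja36}), which there degenerates to $f_n\le 1$.

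The main technical obstacle is the ${\cal F}_{n-1}$-measurability of the supremum, since $\Omega_n^{0+}$ is uncountable. The standard remedy is to invoke the separability of the Borel space $(\Omega_n^{0+},{\cal F}_n^{0+})$ and replace the supremum by one over a countable dense set; by the already established almost-sure validity of (\ref{vitasja36}) and a continuity-in-measure argument of the same flavour as in Lemma~\ref{witka6}, the restriction to a countable subset does not change the value of $\gamma_{n-1}$ almost surely. Everything else in the argument is routine manipulation of sign cases for $\Delta S_n$.
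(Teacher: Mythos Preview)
Your argument is correct and follows essentially the same route as the paper: rearrange (\ref{vitasja36}) into the separated form, define $\gamma_{n-1}$ as the extremum of one side, and use the hypothesis (\ref{tinwickpussy1}) together with $f_n\ge 0$ for the two-sided bound; measurability comes from separability of $\Omega_n^0$. The only cosmetic difference is that the paper takes $\gamma_{n-1}$ to be the infimum of the right-hand side over $\omega_n^1\in\Omega_n^{0-}$ rather than your supremum of the left-hand side over $\omega_n^2\in\Omega_n^{0+}$; since the separation inequality gives $\sup_{\omega_n^2}\le\inf_{\omega_n^1}$, either choice works and the subsequent verification on $\Omega_n^{0\pm}$ is simply mirrored.
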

\begin{proof}
From the inequality (\ref{vitasja46}), it follows the inequality
$$f_n (\omega_1, \ldots,  \omega_{n-1}, \omega_n^2) \leq $$
\begin{eqnarray}\label{3vitamyk16}
 1+\frac{1- f_n (\omega_1, \ldots,  \omega_{n-1}, \omega_n^1) }{\Delta S_n^-(\omega_1, \ldots,  \omega_{n-1}, \omega_n^1)}\Delta S_n^+(\omega_1, \quad  \ldots,  \omega_{n-1}, \omega_n^2), \ \omega_n^1 \in \Omega_n^{0-}, \   \omega_n^2 \in \Omega_n^{0+}.
\end{eqnarray} 
Let us define
\begin{eqnarray}\label{31vitamyk16}
\gamma_{n-1}(\omega_1, \ldots,  \omega_{n-1})=
\inf_{\{\omega_n^1,  \eta_n^-(\omega_n^1) > 0\}}\frac{1- f_n (\omega_1, \ldots,  \omega_{n-1}, \omega_n^1) }{\Delta S_n^-(\omega_1, \ldots,  \omega_{n-1}, \omega_n^1)},
\end{eqnarray}
then, taking into account the inequality (\ref{3vitamyk16}), we  obtain the inequality
\begin{eqnarray}\label{3vitamyk17}
f_n (\omega_1, \ldots,  \omega_{n-1}, \omega_n^2) \leq 1+\gamma_{n-1}(\omega_1, \ldots,  \omega_{n-1})\Delta S_n^+(\omega_1, \ldots,  \omega_{n-1}, \omega_n^2). 
\end{eqnarray} 
From the definition of  $\gamma_{n-1}(\omega_1, \ldots,  \omega_{n-1}),$ we obtain the inequality
\begin{eqnarray}\label{3vitamyk18}
f_n (\omega_1, \ldots,  \omega_{n-1}, \omega_n^1) \leq 1-\gamma_{n-1}(\omega_1, \ldots,  \omega_{n-1})\Delta S_n^-(\omega_1, \ldots,  \omega_{n-1}, \omega_n^1).
\end{eqnarray} 
The inequalities (\ref{3vitamyk17}), (\ref{3vitamyk18}) give the inequality
\begin{eqnarray}\label{3vitamyk19}
f_n (\omega_1, \ldots,  \omega_{n-1}, \omega_n) \leq 1+\gamma_{n-1}(\omega_1, \ldots,  \omega_{n-1})\Delta S_n(\omega_1, \ldots,  \omega_{n-1}, \omega_n).
\end{eqnarray}
Let us prove the boundedness of $\gamma_{n-1}(\omega_1, \ldots,  \omega_{n-1}).$ From the inequalities (\ref{3vitamyk17}), (\ref{3vitamyk18}) we obtain

$$\frac{1}{\Delta S_n^-(\omega_1, \ldots,  \omega_{n-1}, \omega_n^1)} \geq $$
\begin{eqnarray}\label{3vitamyk20}
 \gamma_{n-1}(\omega_1, \ldots,  \omega_{n-1}) \geq -  \frac{1}{\Delta S_n^+(\omega_1, \ldots,  \omega_{n-1}, \omega_n^2)}.
\end{eqnarray}
Due to Theorem \ref{witka7} conditions, we obtain the boundedness of $\gamma_{n-1}(\omega_1, \ldots,  \omega_{n-1}).$
The ${\cal F}_{n-1}$ measurability of the random value $\gamma_{n-1}(\omega_1, \ldots,  \omega_{n-1})$  follows from the fact that $\Omega_n^0$ is separable metric space and infimum  is reached on the countable set, which is dense in  $\Omega_n^0.$ Theorem \ref{witka7} is proved.
\end{proof}

\begin{te}\label{1witka7}
Let the conditions of Lemma \ref{witka6} be valid. If there exist 
$ \omega_n^1 \in A_n^{0 k -}, \  \omega_n^2 \in A_n^{0 k +},$ and 
the real numbers $a_k, \ b_k, \ k=\overline{1, N_n},$ such that
$$ \sup\limits_{(\omega_1, \ldots,  \omega_{n-1}) \in V_{n-1}^k } \frac{1}{\Delta S_n^-(\omega_1, \ldots,  \omega_{n-1}, \omega_n^1)}=a_k^n<\infty, $$
$$\sup\limits_{(\omega_1, \ldots,  \omega_{n-1}) \in V_{n-1}^k }  \frac{1}{\Delta S_n^+(\omega_1, \ldots,  \omega_{n-1}, \omega_n^2)}=b_k^n<\infty,  \quad k=\overline{1, N_n},  \quad  n=\overline{1,N}, $$
\begin{eqnarray}\label{3vitamyktinaaa21}
\max\limits_{1\leq n \leq N}\sup\limits_{1\leq k\leq N_n}\max\{a_k^n, b_k^n\}<\infty,
\end{eqnarray}
then there exists  a bounded ${\cal F}_{n-1}$-measurable random value $\gamma_{n-1}(\omega_1, \ldots,\omega_{n-1}) $ such that the inequalities
$$ f_n(\omega_1, \ldots,\omega_{n-1}, \omega_n)) \leq$$
\begin{eqnarray}\label{2vitasjatinnaaa47}
(1+ \gamma_{n-1}(\omega_1, \ldots,\omega_{n-1})\Delta S_n(\omega_1, \ldots,\omega_{n-1}, \omega_n)), \quad  n=\overline{1,N},
\end{eqnarray}
are true. 
\end{te}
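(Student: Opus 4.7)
The plan is to localize the argument of Theorem \ref{witka7} to each cell $V_{n-1}^k$ of the given partition of $\Omega_{n-1}$ and then paste the pieces together. Fix $k$ and let $(\omega_1,\ldots,\omega_{n-1})\in V_{n-1}^k$. Because of the representation $\Omega_n^{a}=\bigcup_k A_n^{0,k a}\times V_{n-1}^k$, the characteristic functions in the pointwise inequality (\ref{2vitasja46}) of Lemma \ref{witka6} reduce on this cell to $\chi_{A_n^{0,k-}}(\omega_n^1)\chi_{A_n^{0,k+}}(\omega_n^2)$, so on the rectangle $A_n^{0,k-}\times A_n^{0,k+}$ one recovers exactly the inequality that drove Theorem \ref{witka7}. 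Solving it for $f_n(\omega_n^2)$ yields the analogue of (\ref{3vitamyk16}).

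Next I would define, for each $k$,
\[
\gamma_{n-1}^k(\omega_1,\ldots,\omega_{n-1})=\inf_{\omega_n^1\in A_n^{0,k-}}\frac{1-f_n(\omega_1,\ldots,\omega_{n-1},\omega_n^1)}{\Delta S_n^{-}(\omega_1,\ldots,\omega_{n-1},\omega_n^1)},
\]
and set $\gamma_{n-1}=\sum_{k=1}^{N_n}\gamma_{n-1}^k\,\chi_{V_{n-1}^k}$. The pair of inequalities (\ref{3vitamyk17})--(\ref{3vitamyk18}) then carries over verbatim on $V_{n-1}^k\times\Omega_n^0$: the bound on $A_n^{0,k-}$ is tautological from the definition of the infimum, and the bound on $A_n^{0,k+}$ is obtained by passing to the infimum in the rearrangement produced in the first step. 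Combining both halves produces (\ref{2vitasjatinnaaa47}) on each cell, hence globally via the partition $\{V_{n-1}^k\}$.

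For boundedness I would evaluate the two defining inequalities at the specific witnesses $\omega_n^1\in A_n^{0,k-}$ and $\omega_n^2\in A_n^{0,k+}$ furnished by (\ref{3vitamyktinaaa21}). Since $f_n\ge 0$, the upper bound collapses to $\gamma_{n-1}^k\le 1/\Delta S_n^{-}(\omega_n^1)\le a_k^n$, while the lower bound becomes $\gamma_{n-1}^k\ge -1/\Delta S_n^{+}(\omega_n^2)\ge -b_k^n$, exactly as in (\ref{3vitamyk20}). The uniform supremum in (\ref{3vitamyktinaaa21}) then yields $|\gamma_{n-1}|\le \sup_{n,k}\max(a_k^n,b_k^n)<\infty$.

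The main subtlety will be the ${\cal F}_{n-1}$-measurability of each $\gamma_{n-1}^k$ (and hence of $\gamma_{n-1}$). As in the closing sentence of the proof of Theorem \ref{witka7}, this relies on the completeness and separability of $\Omega_n^0$: a countable dense subset of $A_n^{0,k-}$ realises the infimum, turning it into a countable infimum of ${\cal F}_{n-1}$-measurable maps. A secondary technical point is that Lemma \ref{witka6} delivers the driving inequality only almost everywhere with respect to $P_{n-1}\times P_n^0\times P_n^0$; one should argue that redefining $f_n$ on a null set does not invalidate the pointwise bound, or interpret (\ref{2vitasjatinnaaa47}) in the almost-everywhere sense consistent with the rest of the paper.
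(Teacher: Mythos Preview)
Your proposal is correct and follows essentially the same route as the paper: localize to each $V_{n-1}^k$, use the pointwise inequality from Lemma \ref{witka6} restricted to $A_n^{0,k-}\times A_n^{0,k+}$, define $\gamma_{n-1}^k$ as the infimum over $A_n^{0,k-}$ exactly as you wrote, bound it via the witnesses in (\ref{3vitamyktinaaa21}), and then glue with $\gamma_{n-1}=\sum_k\gamma_{n-1}^k\chi_{V_{n-1}^k}$. Your remarks on measurability via separability and on the almost-everywhere nature of (\ref{2vitasja46}) match (and slightly sharpen) the paper's treatment.
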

\begin{proof}
For $\omega_n^1 \in A_n^{0 k -}, \  \omega_n^2 \in A_n^{0 k +}$ and 
$(\omega_1, \ldots,\omega_{n-1}) \in V_{n-1}^k,$  we have that 
$(\omega_1, \ldots,\omega_{n-1}, \omega_n^1) \in \Omega_n^-,$ $ \ (\omega_1, \ldots,\omega_{n-1}, \omega_n^2) \in \Omega_n^+.$ 
Then, from the inequality (\ref{2vitasja46}), we obtain the inequality
 $$ \left[\frac{\Delta S_n^+(\omega_1, \ldots,\omega_{n-1}, \omega_n^2)}{V_n(\omega_1, \ldots,\omega_{n-1}, \omega_n^1,  \omega_n^2)}  f_n(\omega_1, \ldots,\omega_{n-1}, \omega_n^1)+\right.$$
\begin{eqnarray}\label{2vitasjatinnaa46}
\left.\frac{\Delta S_n^-(\omega_1, \ldots,\omega_{n-1}, \omega_n^1)}{V_n(\omega_1, \ldots,\omega_{n-1}, \omega_n^1,  \omega_n^2)}f_n(\omega_1, \ldots,\omega_{n-1}, \omega_n^2)\right] \leq 1.
\end{eqnarray}
  From the inequality (\ref{2vitasjatinnaa46}), it follows the inequality
\begin{eqnarray}\label{3vitamyktinnaaa16}
f_n (\omega_1, \ldots,  \omega_{n-1}, \omega_n^2) \leq 1+\frac{1- f_n (\omega_1, \ldots,  \omega_{n-1}, \omega_n^1) }{\Delta S_n^-(\omega_1, \ldots,  \omega_{n-1}, \omega_n^1)}\Delta S_n^+(\omega_1, \ldots,  \omega_{n-1}, \omega_n^2).
\end{eqnarray} 
Let us define
$$\gamma_{n-1}^k(\omega_1, \ldots,  \omega_{n-1})=$$
\begin{eqnarray}\label{31vitamyktinnaaa16}
\inf_{\{\omega_n^1 \in A_n^{0,k -} \}}\frac{1- f_n (\omega_1, \ldots,  \omega_{n-1}, \omega_n^1) }{\Delta S_n^-(\omega_1, \ldots,  \omega_{n-1}, \omega_n^1)}, \quad (\omega_1, \ldots,  \omega_{n-1}) \in V_{n-1}^k,
\end{eqnarray}
then, taking into account the inequality (\ref{3vitamyktinnaaa16}), we  have the inequality
\begin{eqnarray}\label{3vitamyktinnaaa17}
f_n (\omega_1, \ldots,  \omega_{n-1}, \omega_n^2) \leq 1+\gamma_{n-1}^k(\omega_1, \ldots,  \omega_{n-1})\Delta S_n^+(\omega_1, \ldots,  \omega_{n-1}, \omega_n^2). 
\end{eqnarray} 
From the definition of  $\gamma_{n-1}^k(\omega_1, \ldots,  \omega_{n-1}),$ we obtain the inequality 
\begin{eqnarray}\label{3vitamyktinnaaa18}
f_n (\omega_1, \ldots,  \omega_{n-1}, \omega_n^1) \leq 1-\gamma_{n-1}^k(\omega_1, \ldots,  \omega_{n-1})\Delta S_n^-(\omega_1, \ldots,  \omega_{n-1}, \omega_n^1).
\end{eqnarray} 
The inequalities (\ref{3vitamyktinnaaa17}), (\ref{3vitamyktinnaaa18}) give the inequality
\begin{eqnarray}\label{3vitamyktinnaaa19}
f_n (\omega_1, \ldots,  \omega_{n-1}, \omega_n) \leq 1+\gamma_{n-1}^k(\omega_1, \ldots,  \omega_{n-1})\Delta S_n(\omega_1, \ldots,  \omega_{n-1}, \omega_n).
\end{eqnarray}
Let us prove the boundedness of $\gamma_{n-1}^k(\omega_1, \ldots,  \omega_{n-1}).$ From the inequalities (\ref{3vitamyktinnaaa17}), (\ref{3vitamyktinnaaa18}), we obtain the inequalities

$$ a_k^n=\sup\limits_{(\omega_1, \ldots,  \omega_{n-1}) \in V_{n-1}^k} \frac{1}{\Delta S_n^-(\omega_1, \ldots,  \omega_{n-1}, \omega_n^1)} \geq $$
\begin{eqnarray}\label{3vitamyktinnaaa20}
 \gamma_{n-1}^k(\omega_1, \ldots,  \omega_{n-1}) \geq - \sup\limits_{(\omega_1, \ldots,  \omega_{n-1}) \in V_{n-1}^k} \frac{1}{\Delta S_n^+(\omega_1, \ldots,  \omega_{n-1}, \omega_n^2)}=- b_k^n.
\end{eqnarray}
From this, it follows the boundedness of $\gamma_{n-1}^k(\omega_1, \ldots,  \omega_{n-1}).$
The ${\cal F}_{n-1}$ measurability of the random value $\gamma_{n-1}^k(\omega_1, \ldots,  \omega_{n-1})$  follows from the fact that $\Omega_n^0$ is separable metric space and infimum  is reached on the countable set, which is dense in  $\Omega_n^0.$
To complete the proof of Theorem \ref{1witka7}, let us put
\begin{eqnarray}\label{kiss3vitamyktinnaaa20}
 \gamma_{n-1}(\omega_1, \ldots,  \omega_{n-1})=\sum\limits_{k=1}^{N_n}\chi_{V_{n-1}^k}((\omega_1, \ldots,  \omega_{n-1})\gamma_{n-1}^k(\omega_1, \ldots,  \omega_{n-1}), 
\end{eqnarray}
then for such $\gamma_{n-1}(\omega_1, \ldots,  \omega_{n-1})$ the inequality (\ref{2vitasjatinnaaa47})
are satisfied. Theorem \ref{1witka7} is proved.
\end{proof}

\section{Optional decomposition for super-martingales.}
In this section, we give simple proof of optional decomposition for the nonnegative super-martingale relative to the set of equivalent martingale measures. Such a proof first appeared  in the paper \cite{GoncharNick1}.
First, the optional decomposition for   diffusion processes super-martingale was opened by    El Karoui N. and  Quenez M. C. \cite{KarouiQuenez}. After that, Kramkov D. O. and Follmer H. \cite{Kramkov}, \cite{FolmerKramkov1} proved the optional decomposition for the nonnegative bounded super-martingales.  Folmer H. and Kabanov Yu. M.  \cite{FolmerKabanov1},  \cite{FolmerKabanov}  proved analogous result for an arbitrary super-martingale. Recently, Bouchard B. and Nutz M. \cite{Bouchard1} considered a class of discrete models and proved the necessary and sufficient conditions for the validity of the optional decomposition. 

\begin{te}\label{Tinnna1}
Let $\Omega_i^0$ be a complete separable metric space and let ${\cal F}_i^0$ be a Borell $\sigma$-algebra on $\Omega_i^0, \ i=\overline{1,N}.$ Suppose that the 
evolution $\{S_n(\omega_1, \ldots,\omega_{n})\}_{n=1}^N $ of risky assets satisfies the conditions of Theorems \ref{witka2}, \ref{witka5}, \ref{witka7},  \ref{1witka7}, then for every nonnegative super-martingale $\{f_n^1(\omega_1, \ldots,\omega_{n})\}_{n=0}^N$ relative to the set of martingale  measure $M,$ described in Theorem \ref{witka2}, the optional decomposition is true.
\end{te}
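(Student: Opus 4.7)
The plan is to reduce the statement to the pointwise inequality (\ref{2vitasjatinnaaa47}) obtained in Theorem \ref{1witka7}, and then construct the decomposition increment by increment. Fix $n\in\{1,\ldots,N\}$ and, on the set $\{f_{n-1}^1>0\}$, introduce the normalized ratio
\begin{equation*}
h_n(\omega_1,\ldots,\omega_n)=\frac{f_n^1(\omega_1,\ldots,\omega_n)}{f_{n-1}^1(\omega_1,\ldots,\omega_{n-1})},
\end{equation*}
and set $h_n=0$ on $\{f_{n-1}^1=0\}$ (on this set the super-martingale property forces $f_n^1=0$ as well, since $f_n^1\geq 0$ and $E^{\mu_0}[f_n^1\mid\mathcal F_{n-1}]\leq f_{n-1}^1=0$). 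Then $h_n$ is nonnegative and $\mathcal F_n$-measurable, and the super-martingale inequality $E^{\mu_0}[f_n^1\mid\mathcal F_{n-1}]\leq f_{n-1}^1$ for every $\mu_0\in M$ rewrites as $E^{\mu_0}[h_n\mid\mathcal F_{n-1}]\leq 1$, which in integrated form reads exactly the hypothesis (\ref{vitasja33}) of Lemma \ref{witka4}, for every measure $\mu_0^{n-1}$ attached to the whole family constructed in Lemma \ref{witka1}.

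The next step is to pass from the integral inequality to the pointwise one. The family of random variables $\alpha_n(\{\omega\}_n^1;\{\omega\}_n^2)$ constructed in Lemma \ref{witka0} is rich enough that, after following the approximation argument used in Lemmas \ref{witka4}, \ref{witka6} and in Theorems \ref{witka5}, \ref{witka7}, \ref{1witka7} --- namely, plugging in the exhaustive decomposition built out of the sets $E_{m i}^{n k\pm}$ and letting the parameter $\delta\to 0$ --- we obtain, almost everywhere on the appropriate product space,
\begin{equation*}
h_n(\omega_1,\ldots,\omega_{n-1},\omega_n)\leq 1+\gamma_{n-1}(\omega_1,\ldots,\omega_{n-1})\,\Delta S_n(\omega_1,\ldots,\omega_{n-1},\omega_n),
\end{equation*}
with $\gamma_{n-1}$ bounded and $\mathcal F_{n-1}$-measurable, by Theorem \ref{1witka7}. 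Multiplying by $f_{n-1}^1$ and letting $\xi_{n-1}=f_{n-1}^1\gamma_{n-1}$ (an $\mathcal F_{n-1}$-measurable random variable), we arrive at the key increment inequality
\begin{equation*}
f_n^1-f_{n-1}^1\leq \xi_{n-1}\,\Delta S_n,\qquad n=\overline{1,N}.
\end{equation*}

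Having this, define the predictable strategy $\{\xi_{n-1}\}_{n=1}^N$ and the nondecreasing process
\begin{equation*}
g_0=0,\qquad g_n=\sum_{k=1}^n\bigl(f_{k-1}^1+\xi_{k-1}\Delta S_k-f_k^1\bigr),\quad n=\overline{1,N},
\end{equation*}
so that $g_n$ is $\mathcal F_n$-measurable, $g_n\geq g_{n-1}$, and by construction
\begin{equation*}
f_n^1=f_0^1+\sum_{k=1}^n\xi_{k-1}\Delta S_k-g_n,\qquad n=\overline{0,N},
\end{equation*}
which is the optional decomposition. Finally, one checks that $\sum_{k=1}^n\xi_{k-1}\Delta S_k$ is a martingale with respect to every $\mu_0\in M$, which is automatic from Theorem \ref{witka2} and the boundedness of $\xi_{n-1}$ together with the integrability assumption (\ref{vitasja22}).

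The main obstacle is the second step: justifying the passage from the family of integral inequalities $E^{\mu_0}[h_n\mid\mathcal F_{n-1}]\leq 1$, one for each $\mu_0\in M$, to a single pointwise inequality of the form (\ref{2vitasjatinnaaa47}). This is precisely where the flexibility of the set $M$ matters --- one must exploit that the measures $\mu_0$ obtained from the building blocks $\alpha_{n,k,i}^{-}$, $\alpha_{n,k,s}^{+}$ with arbitrary choices of the base sets from the exhaustive decomposition allow the conditional expectation with respect to $\mu_0^{n-1}$ to concentrate on arbitrary small sets in $\mathcal F_{n-1}$ and, inside the $n$-th coordinate, on arbitrary pairs $(\omega_n^1,\omega_n^2)\in\Omega_n^{0-}\times\Omega_n^{0+}$. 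Technically this is done exactly as in Lemma \ref{witka6} and Theorems \ref{witka5}, \ref{1witka7}, but some care is needed to check that the nullsets produced in those approximation arguments can be chosen uniformly so that, after the limit, one single $\gamma_{n-1}$ works $P_{n-1}$-a.e., and so that the ${\cal F}_{n-1}$-measurability is preserved via the separability of $\Omega_n^0$.
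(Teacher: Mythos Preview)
Your proposal is correct and follows essentially the same route as the paper: form the ratio $f_n^1/f_{n-1}^1$, feed it through Lemma \ref{witka4} and Theorems \ref{witka5}--\ref{1witka7} to get the pointwise bound $f_n^1/f_{n-1}^1\le 1+\gamma_{n-1}\Delta S_n$, and then read off the decomposition. Two cosmetic differences: the paper avoids the set $\{f_{n-1}^1=0\}$ by first adding a positive constant $a$ (rather than treating that set separately as you do), and note that your $\xi_{n-1}=f_{n-1}^1\gamma_{n-1}$ is not bounded in general---it is $\gamma_{n-1}$ that is bounded---so the martingale property of $\sum\xi_{k-1}\Delta S_k$ should be argued via integrability of $g_n$ and $f_n^1$ rather than boundedness of $\xi_{n-1}$.
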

\begin{proof}
Without loss of generality, we assume that $f_n^1(\omega_1, \ldots,\omega_{n}) \geq a, $ where $ a $ is a real positive number.  If it is not so, then we can come to the super-martingale $f_n^1(\omega_1, \ldots,\omega_{n})+a.$ Let us consider the set of random values 
\begin{eqnarray}\label{Tinnna2}
f_n(\omega_1, \ldots,\omega_{n})=\frac{f_n^1(\omega_1, \ldots,\omega_{n})}{f_{n-1}^1(\omega_1, \ldots,\omega_{n-1})}, \quad n=\overline{1, N}.
\end{eqnarray}
Every random value $f_n(\omega_1, \ldots,\omega_{n})$ satisfies the conditions of Lemma \ref{witka4}. Due to Theorems \ref{witka7},  \ref{1witka7}, the inequalities 
\begin{eqnarray}\label{Tinnna3}
\frac{f_n^1(\omega_1, \ldots,\omega_{n})}{f_{n-1}^1(\omega_1, \ldots,\omega_{n-1})} \leq 1+\gamma_{n-1}(\omega_1, \ldots,\omega_{n-1})\Delta S_n(\omega_1, \ldots,\omega_{n}), \quad n=\overline{1, N},
\end{eqnarray}
are true, where  $\gamma_{n-1}(\omega_1, \ldots,\omega_{n-1})$ is a bounded ${\cal F}_{n-1}$-measurable random value. Since $E^Q|\Delta S_n(\omega_1, \ldots,\omega_{n})|< \infty, \ Q\in M,$ we have 
\begin{eqnarray}\label{Tinnna4}
E^Q\{ \gamma_{n-1}(\omega_1, \ldots,\omega_{n-1})\Delta S_n(\omega_1, \ldots,\omega_{n}) |{\cal F}_{n-1}\}=0,  \quad Q \in M, \quad n=\overline{1, N}.
 \end{eqnarray}
Let us denote
\begin{eqnarray}\label{Tinnna5}
\xi_n^0(\omega_1, \ldots,\omega_{n})=1+\gamma_{n-1}(\omega_1, \ldots,\omega_{n-1})\Delta S_n(\omega_1, \ldots,\omega_{n}), \quad n=\overline{1, N}.
 \end{eqnarray}
Then, from the inequalities (\ref{Tinnna3}), we obtain the inequalities
$$f_n^1(\omega_1, \ldots,\omega_{n}) \leq$$
\begin{eqnarray}\label{Tinnna6}
 f_{n-1}^1(\omega_1, \ldots,\omega_{n-1}) +
 f_{n-1}^1(\omega_1, \ldots,\omega_{n-1})[\xi_n^0(\omega_1, \ldots,\omega_{n})-1], \quad n=\overline{1, N}.
 \end{eqnarray}
Introduce the denotations
$$g_n(\omega_1, \ldots,\omega_{n})=$$
\begin{eqnarray}\label{Tinnna7}
-f_n^1(\omega_1, \ldots,\omega_{n})+ f_{n-1}^1(\omega_1, \ldots,\omega_{n-1})\xi_n^0(\omega_1, \ldots,\omega_{n}), \quad n=\overline{1, N}.
 \end{eqnarray}
Then, $g_n(\omega_1, \ldots,\omega_{n})\geq 0, \ n=\overline{1, N}, $ and
\begin{eqnarray}\label{Tinnna8}
 E^Q g_n(\omega_1, \ldots,\omega_{n})\leq E^Q f_n^1(\omega_1, \ldots,\omega_{n})+ E^Q f_n^1(\omega_1, \ldots,\omega_{n-1}).
 \end{eqnarray}
The equalities (\ref{Tinnna7}) give the equalities
$$f_n^1(\omega_1, \ldots,\omega_{n})=$$
\begin{eqnarray}\label{Tinnna9}
f_0^1+\sum\limits_{i=1}^n f_{i-1}^1(\omega_1, \ldots,\omega_{n-1})[\xi_i^0(\omega_1, \ldots,\omega_{i})-1] - \sum\limits_{i=1}^n g_i(\omega_1, \ldots,\omega_{i}),
\ n=\overline{1, N}.
 \end{eqnarray}
Let us put 
\begin{eqnarray}\label{Tinnna10}
M_n(\omega_1, \ldots,\omega_{n})=f_0^1+\sum\limits_{i=1}^n f_{i-1}^1(\omega_1, \ldots,\omega_{i-1})[\xi_i^0(\omega_1, \ldots,\omega_{i})-1],
\quad n=\overline{1, N},
 \end{eqnarray}
then $E^Q\{M_n(\omega_1, \ldots,\omega_{n})|{\cal F}_{n-1}\}=M_{n-1}(\omega_1, \ldots,\omega_{n-1}).$ Theorem \ref{Tinnna1} is proved.
\end{proof}

\section{Spot measures and integral representation for martingale measures.}

In this section, we introduce the family of spot measures. After that, we obtain the representations for the family of  spot measures and define integral over these set of measures. The sufficient conditions are found, under which the integral over these set of measures is a set of martingale measures being equivalent to the initial measure. The introduced  family of spot measures  is a family
of extreme points for these set of equivalent  measures.

We give an evident construction of the set of martingale measures for risky assets evolution, given by the formula (\ref{tin1vitasika1ja9}). First  of all, to do that we consider  a simple case as the measures  $P_n^0$ is concentrated at two points  $\omega_n^1, \omega_n^2 \in \Omega_n^0, $  where $\omega_n^1 \in A_n^{0 k-},
\omega_n^2 \in A_n^{0 k+}$ for a certain $k,$ depending on $n,$ for the representation $\Omega_n^-$ and  $\Omega_n^+,$ given by the formula (\ref{100vitasikja7}). Let us put $P_n^0(\omega_n^1)=p_n^k,$  $P_n^0(\omega_n^2)=1-p_n^k,$ where $0< p_n^k< 1.$ Then, to satisfy the conditions
(\ref{1vitasja7}) - (\ref{3vitasja7}), we need to put 
\begin{eqnarray}\label{pupsvitasjapups1}
\alpha_n(\{\omega_1^1, \ldots, \omega_n^1\} ; \{\omega_1^2, \ldots, \omega_n^2\})=\frac{1}{ p_n^k(1-p_n^k)}, \quad n=\overline{1,N},
\end{eqnarray}
and to require  that 
$$\Delta S_n^-(\omega_1, \ldots,\omega_{n-1}, \omega_n^1)< \infty, \quad
(\omega_1, \ldots,\omega_{n-1}, \omega_n^1) \in \Omega_n^-,$$
\begin{eqnarray}\label{pupsvitasjapups2}
 \Delta S_n^+(\omega_1, \ldots,\omega_{n-1}, \omega_n^2)<\infty, \quad  (\omega_1, \ldots,\omega_{n-1}, \omega_n^2) \in \Omega_n^+.
\end{eqnarray}
Let us denote $\mu_{\{\omega_n^1,\omega_n^2\},\ldots ,\{\omega_N^1, \omega_N^2\}}(A)$ the measure, generated by the recurrent relations (\ref{vitasja6}) - (\ref{pupecvitasja5}), for the  measures $P_n^0, \ n=\overline{1,N},$ concentrated at two points.
For the  point  $\{\omega_n^1,\omega_n^2\},\ldots ,\{\omega_N^1, \omega_N^2\} \in   \Omega_N \times\Omega_N,$
 the recurrent relations (\ref{vitasja6}) - (\ref{pupecvitasja5}) is converted 
relative to the set of measures $\mu_{\{\omega_n^1,\omega_n^2\},\ldots ,\{\omega_N^1, \omega_N^2\}}^{(\omega_1, \ldots,\omega_{n-1})}(A)$ into the recurrent relations 

$$\mu_{\{\omega_N^1, \omega_N^2\}}^{(\omega_1, \ldots,\omega_{N-1})}(A)= \chi_{\Omega_N^-}(\omega_1, \ldots,\omega_{N-1}, \omega_N^1) \chi_{\Omega_N^+}(\omega_1, \ldots,\omega_{N-1}, \omega_N^2)\times$$
$$ \left[\frac{\Delta S_N^+(\omega_1, \ldots,\omega_{N-1}, \omega_N^2)}{V_N(\omega_1, \ldots,\omega_{N-1}, \omega_N^1,  \omega_N^2)}\mu_{N}^{(\omega_1, \ldots,\omega_{N-1}, \omega_N^1)}(A)+ \right.$$
\begin{eqnarray}\label{2vitasja6}
\left. \frac{\Delta S_N^-(\omega_1, \ldots,\omega_{N-1}, \omega_N^1)}{V_N(\omega_1, \ldots,\omega_{N-1}, \omega_N^1,  \omega_N^2)}\mu_N^{(\omega_1, \ldots,\omega_{N-1}, \omega_N^2)}(A)\right], \quad A \in {\cal F}_N,
\end{eqnarray}

$$\mu_{\{\omega_n^1,\omega_n^2\},\ldots ,\{\omega_N^1, \omega_N^2\}}^{(\omega_1, \ldots,\omega_{n-1})}(A)= \chi_{\Omega_n^-}(\omega_1, \ldots,\omega_{n-1},\omega_n^1) \chi_{\Omega_n^+}(\omega_1, \ldots,\omega_{n-1}, \omega_n^2)\times$$

$$ \left[\frac{\Delta S_n^+(\omega_1, \ldots,\omega_{n-1}, \omega_n^2)}{V_n(\omega_1, \ldots,\omega_{n-1}, \omega_n^1,  \omega_n^2)}\mu_{\{\omega_{n+1}^1,\omega_{n+1}^2\},\ldots ,\{\omega_N^1, \omega_N^2\}}^{(\omega_1, \ldots,\omega_{n-1}, \omega_n^1)}(A)+ \right.$$
\begin{eqnarray}\label{1vitasja5}
\left. \frac{\Delta S_n^-(\omega_1, \ldots,\omega_{n-1}, \omega_n^1)}{V_n(\omega_1, \ldots,\omega_{n-1}, \omega_n^1,  \omega_n^2)}\mu_{\{\omega_{n+1}^1,\omega_{n+1}^2\},\ldots ,\{\omega_N^1, \omega_N^2\}}^{(\omega_1, \ldots,\omega_{n-1}, \omega_n^2)}(A)\right], \quad n=\overline{2,N}, \quad A \in {\cal F}_N,
\end{eqnarray}

$$\mu_{\{\omega_n^1,\omega_n^2\},\ldots ,\{\omega_N^1, \omega_N^2\}}(A)= \chi_{\Omega_1^-}(\omega_1^1) \chi_{\Omega_1^+}(\omega_1^2)\times$$
\begin{eqnarray}\label{pupsja1vitasja5}
 \left[\frac{\Delta S_1^+(\omega_n^2)}{V_1(\omega_1^1,  \omega_1^2)}\mu_{\{\omega_{2}^1,\omega_{2}^2\},\ldots ,\{\omega_N^1, \omega_N^2\}}^{(\omega_1^1)}(A)+
 \frac{\Delta S_1^-(\omega_1^1)}{V_1(\omega_1^1,  \omega_1^2)}\mu_{\{\omega_{2}^1,\omega_{2}^2\},\ldots ,\{\omega_N^1, \omega_N^2\}}^{(\omega_1^2)}(A)\right], 
\end{eqnarray}
where we put
\begin{eqnarray}\label{pupsja19vitasja5}
 \mu_N^{(\omega_1, \ldots,\omega_{N-1}, \omega_N)}(A)=\chi_{A}(\omega_1, \ldots,\omega_{N-1}, \omega_N), \quad A \in {\cal F}_N.
\end{eqnarray}
The recurrent relations  (\ref{2vitasja6}) -  (\ref{pupsja1vitasja5}) we call the recurrent relations for the spot measures $\mu_{\{\omega_n^1,\omega_n^2\},\ldots ,\{\omega_N^1, \omega_N^2\}}(A).$

Let us consider  the random values
$$ \psi_n(\omega_1, \ldots,\omega_n)=\chi_{\Omega_n^-}(\omega_1, \ldots,\omega_{n-1}, \omega_n) \psi_n^1(\omega_1, \ldots,\omega_n)+$$
\begin{eqnarray}\label{vitasja111}
\chi_{\Omega_n^+}(\omega_1, \ldots,\omega_{n-1}, \omega_n) \psi_n^2(\omega_1, \ldots,\omega_n),
\end{eqnarray}
where
$$\psi_n^1(\omega_1, \ldots,\omega_{n-1},\omega_n)=\chi_{\Omega_n^+}(\omega_1, \ldots,\omega_{n-1}, \omega_n^2) \times $$
\begin{eqnarray}\label{vitasja112}
\frac{\Delta S_n^+(\omega_1, \ldots,\omega_{n-1}, \omega_n^2)}{V_n(\omega_1, \ldots,\omega_{n-1}, \omega_n^1,  \omega_n^2)}, \quad (\omega_1, \ldots,\omega_{n-1}) \in \Omega_{n-1},
\end{eqnarray} 

$$\psi_n^2(\omega_1, \ldots,\omega_{n-1},\omega_n)=\chi_{\Omega_n^-}(\omega_1, \ldots,\omega_{n-1}, \omega_n^1) \times $$
\begin{eqnarray}\label{vitasja113}
\frac{\Delta S_n^-(\omega_1, \ldots,\omega_{n-1}, \omega_n^1)}{V_n(\omega_1, \ldots,\omega_{n-1}, \omega_n^1,  \omega_n^2)}, \quad  (\omega_1, \ldots,\omega_{n-1}) \in \Omega_{n-1}.
\end{eqnarray}

\begin{leme}\label{vitasja114}
For the spot measure $\mu_{\{\omega_1^1,\omega_1^2\},\ldots ,\{\omega_N^1, \omega_N^2\}}(A)$ the representation 

$$\mu_{\{\omega_1^1,\omega_1^2\},\ldots ,\{\omega_N^1, \omega_N^2\}}(A)=$$
\begin{eqnarray}\label{vitasja115}
\sum\limits_{i_1=1}^2\ldots \sum\limits_{i_N=1}^2\prod\limits_{j=1}^N\psi_j(\omega_1^{i_1}, \ldots, \omega_j^{i_j})\chi_{A}(\omega_1^{i_1}, \ldots, \omega_N^{i_N}), \quad A \in {\cal F}_N,
\end{eqnarray} 
is true.
\end{leme}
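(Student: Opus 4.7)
The plan is to establish the representation by backward induction on the level $n$, descending from $n=N$ to $n=1$. At each step $n$, I will prove the intermediate identity
\begin{equation*}
\mu_{\{\omega_n^1,\omega_n^2\},\ldots ,\{\omega_N^1, \omega_N^2\}}^{(\omega_1, \ldots,\omega_{n-1})}(A)
=\sum\limits_{i_n=1}^2\!\cdots\!\sum\limits_{i_N=1}^2\prod\limits_{j=n}^N\psi_j(\omega_1,\ldots,\omega_{n-1},\omega_n^{i_n},\ldots,\omega_j^{i_j})\chi_{A}(\omega_1,\ldots,\omega_{n-1},\omega_n^{i_n},\ldots,\omega_N^{i_N}),
\end{equation*}
and the statement of the lemma will be obtained at $n=1$ from the boundary relation (\ref{pupsja1vitasja5}) together with the convention (\ref{pupsja19vitasja5}).

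The base case $n=N$ is a direct unfolding of (\ref{2vitasja6}) combined with (\ref{pupsja19vitasja5}): the two summands of (\ref{2vitasja6}) match, respectively, the values $\psi_N^1(\omega_1,\ldots,\omega_{N-1},\omega_N^1)$ and $\psi_N^2(\omega_1,\ldots,\omega_{N-1},\omega_N^2)$ as written in (\ref{vitasja112})--(\ref{vitasja113}), both carrying the common indicator factor $\chi_{\Omega_N^-}(\omega_N^1)\chi_{\Omega_N^+}(\omega_N^2)$.

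For the inductive step I will plug the hypothesis at level $n+1$ (applied twice, once at $\omega_n^{i_n}=\omega_n^1$ and once at $\omega_n^{i_n}=\omega_n^2$) into the recurrence (\ref{1vitasja5}). The two resulting coefficients in front of $\sum_{i_{n+1},\ldots,i_N}\prod_{j=n+1}^N\psi_j\,\chi_A$ will be
\begin{equation*}
c_1=\chi_{\Omega_n^-}(\omega_n^1)\chi_{\Omega_n^+}(\omega_n^2)\frac{\Delta S_n^+(\omega_n^2)}{V_n(\omega_n^1,\omega_n^2)},\qquad
c_2=\chi_{\Omega_n^-}(\omega_n^1)\chi_{\Omega_n^+}(\omega_n^2)\frac{\Delta S_n^-(\omega_n^1)}{V_n(\omega_n^1,\omega_n^2)}.
\end{equation*}
The crux is the identification $c_{i_n}=\psi_n(\omega_1,\ldots,\omega_{n-1},\omega_n^{i_n})$. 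This follows because $\omega_n^1\in\Omega_n^{0-}$ and $\omega_n^2\in\Omega_n^{0+}$ are disjoint, so evaluating the definition (\ref{vitasja111}) at $\omega_n=\omega_n^1$ kills the $\chi_{\Omega_n^+}$ term and leaves $\psi_n^1$, which by (\ref{vitasja112}) equals $c_1$; symmetrically for $\omega_n=\omega_n^2$. Substituting this identification and absorbing the factor $\psi_n(\omega_n^{i_n})$ into the product yields the claimed representation at level $n$.

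The main (and only) delicate point will be the bookkeeping of indicator functions: one must verify that the common factor $\chi_{\Omega_n^-}(\omega_n^1)\chi_{\Omega_n^+}(\omega_n^2)$ arising from (\ref{1vitasja5}) is correctly absorbed into the corresponding $\chi_{\Omega_n^-}$ or $\chi_{\Omega_n^+}$ appearing inside $\psi_n^1$ and $\psi_n^2$ as written in (\ref{vitasja112})--(\ref{vitasja113}). Once this is checked, the induction closes; setting $n=1$ and using (\ref{pupsja1vitasja5}) to eliminate the remaining superscript gives (\ref{vitasja115}). Nothing here requires the integrability conditions of the previous sections; the identity is a purely combinatorial consequence of the two-point structure of $P_n^0$.
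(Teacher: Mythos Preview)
Your proposal is correct and follows essentially the same backward-induction approach as the paper: both establish the intermediate identity for $\mu_{\{\omega_n^1,\omega_n^2\},\ldots ,\{\omega_N^1, \omega_N^2\}}^{(\omega_1, \ldots,\omega_{n-1})}(A)$ by checking that the cross-terms $\chi_{\Omega_n^-}(\omega_n^i)\chi_{\Omega_n^+}(\omega_n^i)$ vanish (since $\omega_n^1\in\Omega_n^{0-}$ and $\omega_n^2\in\Omega_n^{0+}$ are disjoint), leaving exactly the two summands of the recurrence (\ref{1vitasja5}). The only cosmetic difference is direction: the paper starts from the sum $\sum_{i_n}\psi_n(\omega_n^{i_n})\mu^{(\ldots,\omega_n^{i_n})}(A)$ and verifies it satisfies the recurrence, while you start from the recurrence and substitute the induction hypothesis to recover the sum---the computation is identical.
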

\begin{proof}
The proof of Lemma \ref{vitasja114} we lead by induction  down. Let us prove the equality
$$\mu_{\{\omega_N^1, \omega_N^2\}}^{(\omega_1, \ldots,\omega_{N-1})}(A)= $$
\begin{eqnarray}\label{2119vitasja119}
 \sum\limits_{i_N=1}^2 \psi_N(\omega_1, \ldots, \omega_{N-1}, \omega_N^{i_N})\chi_{A}(\omega_1, \ldots, \omega_{N-1}, \omega_N^{i_N}). 
\end{eqnarray}
Really,
$$ \psi_N(\omega_1, \ldots, \omega_{N-1}, \omega_N^{1})\chi_{A}(\omega_1, \ldots, \omega_{N-1}, \omega_N^{1})+$$ 
$$  \psi_N(\omega_1, \ldots, \omega_{N-1}, \omega_N^{2})\chi_{A}(\omega_1, \ldots, \omega_{N-1}, \omega_N^{2})=$$

$$\left[\chi_{\Omega_N^-}(\omega_1, \ldots,\omega_{N-1},\omega_N^1) \chi_{\Omega_N^+}(\omega_1, \ldots,\omega_{N-1},\omega_N^2) \frac{\Delta S_N^+(\omega_1, \ldots,\omega_{N-1}, \omega_N^2)}{V_N(\omega_1, \ldots,\omega_{N-1}, \omega_N^1,  \omega_N^2)}+\right.  $$

$$\left.\chi_{\Omega_N^-}(\omega_1, \ldots, \omega_{N-1},\omega_N^1) \chi_{\Omega_N^+}(\omega_1, \ldots,\omega_{N-1},\omega_N^1) \frac{\Delta S_N^-(\omega_1, \ldots,\omega_{N-1}, \omega_N^1)}{V_N(\omega_1, \ldots,\omega_{N-1}, \omega_N^1,  \omega_N^2)}\right]\times $$ $$\chi_{A}(\omega_1, \ldots, \omega_{N-1}, \omega_N^{1})+  $$

$$\left[\chi_{\Omega_N^-}(\omega_1, \ldots,\omega_{N-1},\omega_N^2) \chi_{\Omega_N^+}(\omega_1, \ldots,\omega_{N-1},\omega_N^2) \frac{\Delta S_N^+(\omega_1, \ldots,\omega_{N-1}, \omega_N^2)}{V_N(\omega_1, \ldots,\omega_{N-1}, \omega_N^1,  \omega_N^2)}+\right.  $$

$$\left.\chi_{\Omega_N^-}(\omega_1, \ldots,\omega_{N-1}, \omega_N^1) \chi_{\Omega_N^+}(\omega_1, \ldots, \omega_{N-1}, \omega_N^2) \frac{\Delta S_N^-(\omega_1, \ldots,\omega_{N-1}, \omega_N^1)}{V_N(\omega_1, \ldots,\omega_{N-1}, \omega_N^1,  \omega_N^2)}\right]\times$$ $$\chi_{A}(\omega_1, \ldots, \omega_{N-1}, \omega_N^{2})=  $$

$$ \chi_{\Omega_N^-}(\omega_1, \ldots, \omega_{N-1}, \omega_N^1) \chi_{\Omega_N^+}(\omega_1, \ldots, \omega_{N-1}, \omega_N^2)\times$$
 $$ \left[\frac{\Delta S_N^+(\omega_1, \ldots,\omega_{N-1}, \omega_N^2)}{V_N(\omega_1, \ldots,\omega_{N-1}, \omega_N^1,  \omega_N^2)}\chi_{A}(\omega_1, \ldots, \omega_{N-1}, \omega_N^{1})+ \right.$$
\begin{eqnarray}\label{2vitasja116}
\left. \frac{\Delta S_N^-(\omega_1, \ldots,\omega_{N-1}, \omega_N^1)}{V_N(\omega_1, \ldots,\omega_{N-1}, \omega_N^1,  \omega_N^2)}
\chi_{A}(\omega_1, \ldots, \omega_{N-1}, \omega_N^{2})\right], \quad A \in {\cal F}_N.
\end{eqnarray}
The last prove the needed. Suppose that we proved that the equality
$$\mu_{\{\omega_{n+1}^1,\omega_{n+1}^2\},\ldots ,\{\omega_N^1, \omega_N^2\}}^{(\omega_1, \ldots,\omega_{n-1}, \omega_n)}(A)=$$

$$\sum\limits_{i_{n+1}=1}^2\ldots \sum\limits_{i_N=1}^2\prod\limits_{j=n+1}^N\psi_j(\omega_1, \ldots, \omega_{n}, \omega_{n+1}^{i_{n+1}}, \ldots, \omega_j^{i_j})\chi_{A}(\omega_1, \ldots, \omega_{n},\omega_{n+1}^{i_{n+1}}, \ldots, \omega_N^{i_N}), $$
\begin{eqnarray}\label{vitasja117}
A \in {\cal F}_N,
\end{eqnarray} 
is true. By the same way as above, we have

$$\sum\limits_{i_n=1}^2\psi_n(\omega_1, \ldots, \omega_{n-1}, \omega_{n}^{i_{n}})\mu_{\{\omega_{n+1}^1,\omega_{n+1}^2\},\ldots ,\{\omega_N^1, \omega_N^2\}}^{(\omega_1, \ldots,\omega_{n-1}, \omega_n^{i_n})}(A)=$$

$$\chi_{\Omega_n^-}(\omega_1, \ldots,\omega_{n-1},\omega_n^1) \chi_{\Omega_n^+}(\omega_1, \ldots,\omega_{n-1},\omega_n^2)\times$$

$$ \left[\frac{\Delta S_n^+(\omega_1, \ldots,\omega_{n-1}, \omega_n^2)}{V_n(\omega_1, \ldots,\omega_{n-1}, \omega_n^1,  \omega_n^2)}\mu_{\{\omega_{n+1}^1,\omega_{n+1}^2\},\ldots ,\{\omega_N^1, \omega_N^2\}}^{(\omega_1, \ldots,\omega_{n-1}, \omega_n^1)}(A)+ \right.$$
$$\left. \frac{\Delta S_n^-(\omega_1, \ldots,\omega_{n-1}, \omega_n^1)}{V_n(\omega_1, \ldots,\omega_{n-1}, \omega_n^1,  \omega_n^2)}\mu_{\{\omega_{n+1}^1,\omega_{n+1}^2\},\ldots ,\{\omega_N^1, \omega_N^2\}}^{(\omega_1, \ldots,\omega_{n-1}, \omega_n^2)}(A)\right]=$$
\begin{eqnarray}\label{2vitasja118}
\mu_{\{\omega_n^1,\omega_n^2\},\ldots ,\{\omega_N^1, \omega_N^2\}}^{(\omega_1, \ldots,\omega_{n-1})}(A),  \quad A \in {\cal F}_N.
\end{eqnarray}
The last proves Lemma \ref{vitasja114}.
\end{proof}

Let us define the integral for the random value $f_N(\omega_1, \ldots,\omega_{N-1}, \omega_N)$ relative to the measure $\mu_{\{\omega_1^1,\omega_1^2\},\ldots ,\{\omega_N^1, \omega_N^2\}}(A)$ by the formula

$$ \int\limits_{\Omega_N}f_N(\omega_1, \ldots,\omega_{N-1}, \omega_N) d\mu_{\{\omega_1^1,\omega_1^2\},\ldots ,\{\omega_N^1, \omega_N^2\}}=
$$
\begin{eqnarray}\label{vitasja120}
\sum\limits_{i_1=1}^2\ldots \sum\limits_{i_N=1}^2\prod\limits_{j=1}^N\psi_j(\omega_1^{i_1}, \ldots, \omega_j^{i_j})f_N(\omega_1^{i_1}, \ldots, \omega_N^{i_N}).
\end{eqnarray} 

To describe the convex set of equivalent martingale measures, we introduce the family of  $\alpha$-spot measures, depending on the point $(\{\omega_1^{1},\{\omega_1^{2} \}, \ldots, \{\omega_N^{1},\{\omega_N^{2} \})$ belonging to $\Omega_N\times \Omega_N$ and  the set of strictly positive random values
\begin{eqnarray}\label{witakolja1}
\alpha_n(\{\omega_1^1, \ldots,\omega_{n-1}^1, \omega_{n}^1\};\{\omega_1^2, \ldots, \omega_{n-1}^2, \omega_{n}^{2}\}), \quad n=\overline{1,N},
\end{eqnarray}
at points $W_n=(\{\omega_1^{1}, \ldots, \omega_n^{1}\};\{\omega_1^{2}, \ldots, \omega_n^{2}\}),$ being  constructed by the point $(\{\omega_1^{1},\omega_1^{2} \}, \ldots, \{\omega_N^{1}, \omega_N^{2} \}).$ 

Further, in this section, we assume that the evolution of risky asset is  given by the formula  (\ref{tin1vitasika1ja9}). Therefore, in this case
\begin{eqnarray}\label{pupwitakoljapup1}
\Omega_n^-=\Omega_n^{0-}\times \Omega_{n-1}, \quad \Omega_n^+=\Omega_n^{0+}\times \Omega_{n-1}, \quad n=\overline{1,N},
\end{eqnarray}
and the condition (\ref{3vitasja7}) is formulated, as follows: 
$$ \int\limits_{\Omega_n^{0}\times \Omega_n^{0}}\chi_{\Omega_n^{0-}}
(\omega_n^1)\chi_{\Omega_n^{0+}}(\omega_n^2)\alpha_n(\{\omega_1^1, \ldots,\omega_{n-1}^1, \omega_{n}^1\};\{\omega_1^2, \ldots, \omega_{n-1}^2, \omega_{n}^{2}\})\times$$
\begin{eqnarray}\label{pupwitakolja1} 
d P_n^{0}(\omega_n^1)  d P_n^{0}(\omega_n^2)=1, \quad n=\overline{1,N}. 
\end{eqnarray}

Let us determine  the random values

$$ \psi_n^{\alpha}(\omega_1, \ldots,\omega_n)= 
\chi_{\Omega_n^-}(\omega_1, \ldots,\omega_{n-1}, \omega_n) \psi_n^1(\omega_1, \ldots,\omega_n)+$$
\begin{eqnarray}\label{vitakolja111}
\chi_{\Omega_n^+}(\omega_1, \ldots,\omega_{n-1}, \omega_n) \psi_n^2(\omega_1, \ldots,\omega_n),
\end{eqnarray}

$$\psi_n^1(\omega_1, \ldots,\omega_{n-1},\omega_n)=$$
 $$\alpha_n(\{\omega_1^1, \ldots,\omega_{n-1}^1, \omega_{n}^1\};\{\omega_1^2, \ldots, \omega_{n-1}^2, \omega_{n}^{2}\})\chi_{\Omega_n^+}(\omega_1, \ldots,\omega_{n-1}, \omega_n^2) \times $$
\begin{eqnarray}\label{vitakolja112}
\frac{\Delta S_n^+(\omega_1, \ldots,\omega_{n-1}, \omega_n^2)}{V_n(\omega_1, \ldots,\omega_{n-1}, \omega_n^1,  \omega_n^2)}, \quad (\omega_1, \ldots,\omega_{n-1}) \in \Omega_{n-1},
\end{eqnarray} 

$$\psi_n^2(\omega_1, \ldots,\omega_{n-1},\omega_n)=$$ $$\alpha_n(\{\omega_1^1, \ldots,\omega_{n-1}^1, \omega_{n}^1\};\{\omega_1^2, \ldots, \omega_{n-1}^2, \omega_{n}^{2}\})\chi_{\Omega_n^-}(\omega_1, \ldots,\omega_{n-1}, \omega_n^1)  \times $$
\begin{eqnarray}\label{vitakolja113}
\frac{\Delta S_n^-(\omega_1, \ldots,\omega_{n-1}, \omega_n^1)}{V_n(\omega_1, \ldots,\omega_{n-1}, \omega_n^1,  \omega_n^2)}, \quad  (\omega_1, \ldots,\omega_{n-1}) \in \Omega_{n-1}.
\end{eqnarray} 
Let us define the set of $\alpha$-spot measures on the $\sigma$-algebra ${\cal F}_N$ by the formula
\begin{eqnarray}\label{vitakolja114}
\mu^{\alpha}_{W_N}(A)=\sum\limits_{i_1=1}^2\ldots \sum\limits_{i_N=1}^2\prod\limits_{j=1}^N\psi_j^{\alpha}(\omega_1^{i_1}, \ldots, \omega_j^{i_j})\chi_A(\omega_1^{i_1}, \ldots, \omega_N^{i_N}), \quad A \in {\cal F}_N,
\end{eqnarray} 
and the set of the measures
$$\mu_0(A)=$$
\begin{eqnarray}\label{3vitakolja7}
\int\limits_{\Omega_N\times \Omega_N}\sum\limits_{i_1=1}^2\ldots \sum\limits_{i_N=1}^2\prod\limits_{j=1}^N\psi_j^{\alpha}(\omega_1^{i_1}, \ldots, \omega_j^{i_j})\chi_A(\omega_1^{i_1}, \ldots, \omega_N^{i_N})d P_N\times d P_N, \ A \in {\cal F}_N.
\end{eqnarray}

\begin{te}\label{vitasja121}
Suppose that the conditions of  Lemma \ref{witka0}
are true. If the strictly positive random values
\begin{eqnarray}\label{vitakolja115}
\alpha_n(\{\omega_1^1, \ldots,\omega_n^1\};\{\omega_1^2, \ldots,\omega_n^2\}), \ n=\overline{1,N}, 
\end{eqnarray} 
 given on the 
probability space $\{\Omega_n\times \Omega_n, {\cal F}_n \times {\cal F}_n,  P_n \times  P_n \}, \ n=\overline{1,N}, $
satisfy the conditions (\ref{pupwitakolja1}),
then for the  measure $\mu_0(A),$  given by the formula (\ref{3vitakolja7}),
the representation
$$\mu_0(A)=$$
\begin{eqnarray}\label{vitasja122}
\int\limits_{\Omega_N\times \Omega_N}\prod\limits_{i=1}^N\alpha_{i}
(\{\omega_1^1, \ldots ,\omega_i^1\}; \{\omega_1^2,\ldots ,\omega_i^2\}) \mu_{\{\omega_1^1,\omega_1^2\},\ldots ,\{\omega_N^1, \omega_N^2\}}(A) d P_N \times d P_N
\end{eqnarray}
is true.
\end{te}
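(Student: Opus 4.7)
The plan is to show that this theorem is essentially a bookkeeping identity. Comparing the definitions (\ref{vitakolja111})–(\ref{vitakolja113}) of $\psi_j^{\alpha}$ with the definitions (\ref{vitasja111})–(\ref{vitasja113}) of the plain $\psi_j$ used in the spot-measure representation of Lemma \ref{vitasja114}, one sees that $\psi_j^{\alpha}$ differs from $\psi_j$ by precisely one extra factor, namely $\alpha_j(\{\omega_1^1,\ldots,\omega_j^1\};\{\omega_1^2,\ldots,\omega_j^2\})$. The key observation is that this factor depends only on the ``outer'' pair of full paths, and not on the choice of indices $i_1,\ldots,i_j$ that select coordinates inside the multi-sum in (\ref{3vitakolja7}).

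Concretely, I would first record, for each $j=\overline{1,N}$ and every choice $i_1,\ldots,i_j\in\{1,2\}$, the pointwise identity
\[
\psi_j^{\alpha}(\omega_1^{i_1},\ldots,\omega_j^{i_j})=\alpha_j(\{\omega_1^1,\ldots,\omega_j^1\};\{\omega_1^2,\ldots,\omega_j^2\})\,\psi_j(\omega_1^{i_1},\ldots,\omega_j^{i_j}),
\]
which is immediate from (\ref{vitakolja112})–(\ref{vitakolja113}) versus (\ref{vitasja112})–(\ref{vitasja113}): the ``pair'' quantities $V_j$ and $\alpha_j$ are evaluated at the fixed outer point $(\omega_j^1,\omega_j^2)$ independently of $i_j$, while the indicators $\chi_{\Omega_j^{\pm}}$ and the one-sided increments $\Delta S_j^{\pm}$ carry the dependence on $\omega_j^{i_j}$. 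Taking the product over $j=1,\ldots,N$ and substituting into (\ref{3vitakolja7}), the factor $\prod_{j=1}^N\alpha_j(\{\omega^1\}_j;\{\omega^2\}_j)$ does not depend on the summation indices and can be pulled outside, giving
\[
\mu_0(A)=\int_{\Omega_N\times\Omega_N}\prod_{i=1}^N\alpha_i(\{\omega^1\}_i;\{\omega^2\}_i)\Bigl[\sum_{i_1=1}^2\cdots\sum_{i_N=1}^2\prod_{j=1}^N\psi_j(\omega_1^{i_1},\ldots,\omega_j^{i_j})\chi_A(\omega_1^{i_1},\ldots,\omega_N^{i_N})\Bigr]dP_N\times dP_N.
\]
By Lemma \ref{vitasja114}, namely the representation (\ref{vitasja115}), the bracket on the right-hand side is exactly $\mu_{\{\omega_1^1,\omega_1^2\},\ldots,\{\omega_N^1,\omega_N^2\}}(A)$, and (\ref{vitasja122}) follows.

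The only nontrivial point is to justify that all the rearrangements are legitimate. The multi-sum in (\ref{3vitakolja7}) is a \emph{finite} sum of $2^N$ nonnegative terms, so pulling an $\{i_1,\ldots,i_N\}$-independent factor out of it is purely algebraic and requires no convergence argument. Applying Fubini to interchange the outer integration against $dP_N\times dP_N$ with the finite sum is automatic from nonnegativity. Finiteness of the resulting integral $\mu_0(A)$ (in fact $\mu_0(\Omega_N)=1$) rests on the hypotheses of Lemma \ref{witka0} together with the normalization (\ref{pupwitakolja1}), which were already shown in Lemma \ref{witka1} to imply that the analogous construction yields a probability measure. Thus the main ``obstacle'' is simply reading the notation carefully enough to recognize that $\alpha_j$ passes through the $i$-sums as a constant; once this is made explicit, the proof is essentially a one-line substitution.
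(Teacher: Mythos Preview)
Your proof is correct but takes a more direct route than the paper. The paper proceeds by a downward induction: it first shows, starting from the spot-measure recursion (\ref{2vitasja6})--(\ref{pupsja1vitasja5}), that multiplying by the $\alpha_i$'s and integrating yields the recurrence (\ref{5vitasja126}); it then separately verifies that the $\psi^{\alpha}$-based definition (\ref{3vitakolja7}) satisfies the \emph{same} recurrence (\ref{119vitasja130}), and concludes that the two coincide level by level. Your argument bypasses the induction entirely by observing the pointwise factorization $\psi_j^{\alpha}=\alpha_j\cdot\psi_j$ (with $\alpha_j$ independent of the summation indices $i_1,\ldots,i_j$), pulling $\prod_j\alpha_j$ through the finite sum, and then invoking Lemma~\ref{vitasja114} once. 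This is shorter and more transparent; the paper's recursive approach, on the other hand, stays parallel to the constructions in Section~3 and makes the connection with the recurrent definitions (\ref{vitasja6})--(\ref{pupecvitasja5}) explicit, which is useful for the subsequent Theorems~\ref{10vitakolja121} and~\ref{100vitakolja121}.
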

\begin{proof} Due to Lemma \ref{witka0}, the set of random values  $ \alpha_n(\{\omega_1^1, \ldots,\omega_n^1\};\{\omega_1^2, \ldots,\omega_n^2\}),$  $ \ n=\overline{1,N}, $ satisfying the conditions  (\ref{pupwitakolja1}), is a non empty set.    

We prove Theorem \ref{vitasja121} by induction  down. For the   spot measure the  relation
$$\mu_{\{\omega_N^1, \omega_N^2\}}^{(\omega_1, \ldots,\omega_{N-1})}(A)= $$
$$ \chi_{\Omega_N^-}(\omega_1, \ldots, \omega_{N-1}, \omega_N^1) \chi_{\Omega_N^+}(\omega_1, \ldots, \omega_{N-1}, \omega_N^2)\times$$
 $$ \left[\frac{\Delta S_N^+(\omega_1, \ldots,\omega_{N-1}, \omega_N^2)}{V_N(\omega_1, \ldots,\omega_{N-1}, \omega_N^1,  \omega_N^2)}\chi_{A}(\omega_1, \ldots, \omega_{N-1}, \omega_N^{1})+ \right.$$
\begin{eqnarray}\label{2vitasja122}
\left. \frac{\Delta S_N^-(\omega_1, \ldots,\omega_{N-1}, \omega_N^1)}{V_N(\omega_1, \ldots,\omega_{N-1}, \omega_N^1,  \omega_N^2)}
\chi_{A}(\omega_1, \ldots, \omega_{N-1}, \omega_N^{2})\right], \quad A \in {\cal F}_N,
\end{eqnarray}
is true.
Multiplying the relation (\ref{2vitasja122}) on $\alpha_N(\{\omega_1^1, \ldots, \omega_{N-1}^1, \omega_N^1\}; \{\omega_1^2, \ldots, \omega_{N-1}^2, \omega_N^2\})$ and after that,
integrating  relative to the measure $P_N^0\times P_N^0$ on the set $ \Omega_N^0\times \Omega_N^0,$ we obtain
$$\int\limits_{\Omega_N^0} \int\limits_{\Omega_N^0}
\alpha_N(\{\omega_1^1, \ldots, \omega_{N-1}^1, \omega_N^1\}; \{\omega_1^2, \ldots, \omega_{N-1}^2, \omega_N^2\})\times$$
$$\mu_{\{\omega_N^1, \omega_N^2\}}^{(\omega_1, \ldots,\omega_{N-1})}(A) d P_N^0(\omega_N^1)d P_N^0(\omega_N^2)= $$
$$ \int\limits_{\Omega_N^0} \int\limits_{\Omega_N^0} \alpha_N(\{\omega_1^1, \ldots, \omega_{N-1}^1, \omega_N^1\}; \{\omega_1^2, \ldots, \omega_{N-1}^2, \omega_N^2\})\times$$ 
$$ \chi_{\Omega_N^-}(\omega_1, \ldots, \omega_{N-1}, \omega_N^1) \chi_{\Omega_N^+}(\omega_1, \ldots, \omega_{N-1}, \omega_N^2)\times$$
 $$ \left[\frac{\Delta S_N^+(\omega_1, \ldots,\omega_{N-1}, \omega_N^2)}{V_N(\omega_1, \ldots,\omega_{N-1}, \omega_N^1,  \omega_N^2)}\chi_{A}(\omega_1, \ldots, \omega_{N-1}, \omega_N^{1})+ \right.$$

$$\left. \frac{\Delta S_N^-(\omega_1, \ldots,\omega_{N-1}, \omega_N^1)}{V_N(\omega_1, \ldots,\omega_{N-1}, \omega_N^1,  \omega_N^2)}
\chi_{A}(\omega_1, \ldots, \omega_{N-1}, \omega_N^{2})\right] d P_N^0(\omega_N^1)d P_N^0(\omega_N^2)=$$
\begin{eqnarray}\label{5vitasja122}
\mu_{N-1}^{(\omega_1, \ldots,\omega_{N-1})}(A), \quad A \in {\cal F}_N.
\end{eqnarray}

Suppose that  we proved  the equality

$$\int\limits_{\prod\limits_{i=n+1}^N[\Omega_i^0\times \Omega_i^0]}   
\prod\limits_{i=n+1}^N \alpha_i(\{\omega_1^1, \ldots,\omega_n^1, \omega_{n+1}^1, \ldots, \omega_{i}^1\}; \{\omega_1^2, \ldots,\omega_n^2, \omega_{n+1}^2, \ldots, \omega_{i}^2\})\times$$
\begin{eqnarray}\label{5vitasja123}
 \mu_{\{\omega_{n+1}^1,\omega_{n+1}^2\},\ldots ,\{\omega_N^1, \omega_N^2\}}^{(\omega_1, \ldots,\omega_{n})}(A)\prod\limits_{i=n+1}^N d P_{i}^0(\omega_i^1)d P_{i}^0(\omega_i^2)=\mu_n^{(\omega_1, \ldots,\omega_n)}(A).
\end{eqnarray}
Then, using the induction supposition (\ref{5vitasja123}), the relation for the spot measure
$$\mu_{\{\omega_n^1,\omega_n^2\},\ldots ,\{\omega_N^1, \omega_N^2\}}^{(\omega_1, \ldots,\omega_{n-1})}(A)= $$
$$\chi_{\Omega_n^-}(\omega_1, \ldots,\omega_{n-1}, \omega_n^1) \chi_{\Omega_n^+}(\omega_1, \ldots,\omega_{n-1}, \omega_n^2)\times$$

$$ \left[\frac{\Delta S_n^+(\omega_1, \ldots,\omega_{n-1}, \omega_n^2)}{V_n(\omega_1, \ldots,\omega_{n-1}, \omega_n^1,  \omega_n^2)}\mu_{\{\omega_{n+1}^1,\omega_{n+1}^2\},\ldots ,\{\omega_N^1, \omega_N^2\}}^{(\omega_1, \ldots,\omega_{n-1}, \omega_n^1)}(A)+ \right.$$
\begin{eqnarray}\label{5vitasja124}
\left. \frac{\Delta S_n^-(\omega_1, \ldots,\omega_{n-1}, \omega_n^1)}{V_n(\omega_1, \ldots,\omega_{n-1}, \omega_n^1,  \omega_n^2)}\mu_{\{\omega_{n+1}^1,\omega_{n+1}^2\},\ldots ,\{\omega_N^1, \omega_N^2\}}^{(\omega_1, \ldots,\omega_{n-1}, \omega_n^2)}(A)\right],  \quad A \in {\cal F}_N,
\end{eqnarray}
and multiplying it  on 
$\prod\limits_{i=n}^N \alpha_i(\{\omega_1^1, \ldots,\omega_{n-1}^1, \omega_{n}^1, \ldots, \omega_{i}^1\}; \{\omega_1^2, \ldots,\omega_{n-1}^2, \omega_{n}^2, \ldots, \omega_{i}^2\}) $  and then integrating  relative to the measure $\prod\limits_{i=n}^N[P_i^0\times P_i^0]$  on the set $\prod\limits_{i=n}^N[\Omega_i^0\times \Omega_i^0],$ we obtain the equality

$$\int\limits_{\Omega_n^0\times \Omega_n^0} \chi_{\Omega_n^-}(\omega_1, \ldots,\omega_{n-1}, \omega_n^1) \chi_{\Omega_n^+}(\omega_1, \ldots,\omega_{n-1}, \omega_n^2)\times$$

$$\alpha_n(\{\omega_1^1, \ldots,\omega_n^1\};\{\omega_1^2, \ldots,\omega_n^2\}) \left[\frac{\Delta S_n^+(\omega_1, \ldots,\omega_{n-1}, \omega_n^2)}{V_n(\omega_1, \ldots,\omega_{n-1}, \omega_n^1,  \omega_n^2)}\mu_{n}^{(\omega_1, \ldots,\omega_{n-1}, \omega_n^1)}(A)+ \right.$$

$$ \left. \frac{\Delta S_n^-(\omega_1, \ldots,\omega_{n-1}, \omega_n^1)}{V_n(\omega_1, \ldots,\omega_{n-1}, \omega_n^1,  \omega_n^2)}\mu_{n}^{(\omega_1, \ldots,\omega_{n-1}, \omega_n^2)}(A)\right]dP_n^0(\omega_n^1)dP_n^0(\omega_n^2)=$$
\begin{eqnarray}\label{5vitasja125}
\mu_{n-1}^{(\omega_1, \ldots,\omega_{n-1})}(A), \quad n=\overline{1,N}.
\end{eqnarray}
Thus, we proved the following recurrent relations
$$\mu_{n-1}^{(\omega_1, \ldots,\omega_{n-1})}(A)=\int\limits_{\Omega_n^0\times \Omega_n^0} \chi_{\Omega_n^-}(\omega_1, \ldots, \omega_{n-1}, \omega_n^1) \chi_{\Omega_n^+}(\omega_1, \ldots,\omega_{n-1}, \omega_n^2)\times$$
$$\alpha_n(\{\omega_1^1, \ldots,\omega_n^1\};\{\omega_1^2, \ldots,\omega_n^2\}) \left[\frac{\Delta S_n^+(\omega_1, \ldots,\omega_{n-1}, \omega_n^2)}{V_n(\omega_1, \ldots,\omega_{n-1}, \omega_n^1,  \omega_n^2)}\mu_{n}^{(\omega_1, \ldots,\omega_{n-1}, \omega_n^1)}(A)+ \right.$$
\begin{eqnarray}\label{5vitasja126}
\left. \frac{\Delta S_n^-(\omega_1, \ldots,\omega_{n-1}, \omega_n^1)}{V_n(\omega_1, \ldots,\omega_{n-1}, \omega_n^1,  \omega_n^2)}\mu_n^{(\omega_1, \ldots,\omega_{n-1}, \omega_n^2)}(A)\right]dP_n^0(\omega_n^1)dP_n^0(\omega_n^2), \quad n=\overline{1,N}.
\end{eqnarray}
To finish the proof of Theorem \ref{vitasja121}, let us calculate
\begin{eqnarray}\label{119vitasja119}
 \int\limits_{\Omega_N^0\times \Omega_N^0} \sum\limits_{i_N=1}^2 \psi_N^{\alpha}(\omega_1, \ldots, \omega_{N-1}, \omega_N^{i_N})\chi_{A}(\omega_1, \ldots, \omega_{N-1}, \omega_N^{i_N}) d P_N^0(\omega_N^1) d P_N^0(\omega_N^2).
\end{eqnarray}
 Calculating the expression
$$ \sum\limits_{i_N=1}^2 \psi_N^{\alpha}(\omega_1, \ldots, \omega_{N-1}, \omega_N^{i_N})\chi_{A}(\omega_1, \ldots, \omega_{N-1}, \omega_N^{i_N})=$$
$$ \psi_N^{\alpha}(\omega_1, \ldots, \omega_{N-1}, \omega_N^{1})\chi_{A}(\omega_1, \ldots, \omega_{N-1}, \omega_N^{1})+$$ 
$$  \psi_N^{\alpha}(\omega_1, \ldots, \omega_{N-1}, \omega_N^{2})\chi_{A}(\omega_1, \ldots, \omega_{N-1}, \omega_N^{2})=$$

$$\alpha_N(\{\omega_1^1, \ldots,\omega_N^1\};\{\omega_1^2, \ldots,\omega_N^2\})\times $$
$$ \chi_{\Omega_N^-}(\omega_1, \ldots, \omega_{N-1}, \omega_N^1) \chi_{\Omega_N^+}(\omega_1, \ldots, \omega_{N-1}, \omega_N^2)\times$$
 $$ \left[\frac{\Delta S_N^+(\omega_1, \ldots,\omega_{N-1}, \omega_N^2)}{V_N(\omega_1, \ldots,\omega_{N-1}, \omega_N^1,  \omega_N^2)}\chi_{A}(\omega_1, \ldots, \omega_{N-1}, \omega_N^{1})+ \right.$$
\begin{eqnarray}\label{119vitasja116}
\left. \frac{\Delta S_N^-(\omega_1, \ldots,\omega_{N-1}, \omega_N^1)}{V_N(\omega_1, \ldots,\omega_{N-1}, \omega_N^1,  \omega_N^2)}
\chi_{A}(\omega_1, \ldots, \omega_{N-1}, \omega_N^{2})\right], \quad A \in {\cal F}_N,
\end{eqnarray}
and substituting (\ref{119vitasja116}) into (\ref{119vitasja119}), we obtain the equality
$$\int\limits_{\Omega_N^0\times \Omega_N^0} \sum\limits_{i_N=1}^2 \psi_N^{\alpha}(\omega_1, \ldots, \omega_{N-1}, \omega_N^{i_N})\chi_{A}(\omega_1, \ldots, \omega_{N-1}, \omega_N^{i_N}) d P_N^0(\omega_N^1) d P_N^0(\omega_N^2)=$$
\begin{eqnarray}\label{119vitasja120}
\mu_{N-1}^{(\omega_1, \ldots, \omega_{N-1})}(A).
\end{eqnarray}
Suppose that we already proved the equality
$$\int\limits_{\prod\limits_{i=n+1}^N\Omega_i^0\times \Omega_i^0} \sum\limits_{i_{n+1}=1}^2\ldots \sum\limits_{i_N=1}^2\prod\limits_{j=1}^N\psi_j^{\alpha}(\omega_1, \ldots, \omega_n, \omega_{n+1}^{i_{n+1}}\ldots, \omega_j^{i_j}) \prod\limits_{i=n+1}^N d P_i^0(\omega_i^1) d P_i^0(\omega_i^2)=$$
\begin{eqnarray}\label{119vitasja121}
\mu_{n}^{(\omega_1, \ldots, \omega_{n})}(A).
\end{eqnarray}
Then,  acting as above, we obtain the equalities
$$\int\limits_{\Omega_n^0\times \Omega_n^0} \sum\limits_{i_n=1}^2 \psi_n^{\alpha}(\omega_1, \ldots, \omega_{n-1}, \omega_n^{i_n})\mu_n^{(\omega_1, \ldots, \omega_{n-1}, \omega_n^{i_n})}(A)d P_n^0(\omega_n^1) d P_n^0(\omega_n^2)=$$
$$ \int\limits_{\Omega_n^0\times \Omega_n^0} \alpha_n(\{\omega_1^1, \ldots,\omega_N^1\};\{\omega_1^2, \ldots,\omega_N^2\})\times $$
$$ \chi_{\Omega_n^-}(\omega_1, \ldots, \omega_{n-1}, \omega_n^1) \chi_{\Omega_n^+}(\omega_1, \ldots, \omega_{n-1}, \omega_n^2)\times$$
 $$ \left[\frac{\Delta S_n^+(\omega_1, \ldots,\omega_{n-1}, \omega_n^2)}{V_n(\omega_1, \ldots,\omega_{n-1}, \omega_n^1,  \omega_n^2)}
\mu_n^{(\omega_1, \ldots, \omega_{n-1}, \omega_n^{1})}(A)
+ \right.$$

$$\left. \frac{\Delta S_n^-(\omega_1, \ldots,\omega_{n-1}, \omega_n^1)}{V_n(\omega_1, \ldots,\omega_{n-1}, \omega_n^1,  \omega_n^2)}
\mu_n^{(\omega_1, \ldots, \omega_{n-1}, \omega_n^{2})}(A)
\right] d P_n^0(\omega_n^1) d P_n^0(\omega_n^2)=$$
\begin{eqnarray}\label{119vitasja130}
\mu_{n-1}^{(\omega_1, \ldots, \omega_{n-1})}(A), \quad A \in {\cal F}_N.
\end{eqnarray}
We proved that the recurrent relations (\ref{119vitasja130}) are the same as the recurrent relations (\ref{5vitasja126}).
This proves Theorem \ref{vitasja121}.
\end{proof}
Let us introduce the denotations
$$\mu_{\{\omega_1^1,\omega_1^2\},\ldots ,\{\omega_N^1, \omega_N^2\}}(\Omega_N)=\sum\limits_{i_1=1}^2\ldots \sum\limits_{i_N=1}^2\prod\limits_{j=1}^N\psi_j(\omega_1^{i_1}, \ldots, \omega_j^{i_j}),$$
\begin{eqnarray}\label{500vitasja500}
 W_N= \{\omega_1^1, \ldots, \omega_N^1; \omega_1^2,\ldots, \omega_N^2\}=\{\{\omega\}_N^1, \{\omega\}_N^2\}.
\end{eqnarray}
Further,  only those points $(\{\omega_1^1, \omega_1^2\}, \ldots, \{\omega_N^1, \omega_N^2\}) \in \Omega_N \times \Omega_N$ play important role for which 
$\mu_{\{\omega_1^1,\omega_1^2\},\ldots ,\{\omega_N^1, \omega_N^2\}}(\Omega_N) \neq 0.$

Below, in the next two Theorems, we assume that
 the random value
\begin{eqnarray}\label{pupetsvitakolja115}
\alpha_n(\{\omega_1^1, \ldots,\omega_n^1\};\{\omega_1^2, \ldots,\omega_n^2\})
\end{eqnarray} 
 given on the 
probability space $\{\Omega_n\times \Omega_n, {\cal F}_n \times {\cal F}_n,  P_n \times  P_n \}, \ n=\overline{1,N}, $
satisfy the conditions (\ref{pupwitakolja1}).

Under the above conditions, for the measure $\mu_0(A),$ given by the formula (\ref{vitasja122}), the representation
 \begin{eqnarray}\label{2vitakolja10}
\mu_0(A)=\int\limits_{\Omega_N}\prod\limits_{n=1}^N\psi_n(\omega_1, \ldots,\omega_n)\chi_{A}(\omega_1, \ldots, \omega_N) \prod\limits_{i=1}^Nd P_i^0(\omega_i)
\end{eqnarray}
 is true, where 
$$ \psi_n(\omega_1, \ldots,\omega_n)=\chi_{\Omega_n^-}(\omega_1, \ldots,\omega_{n-1}, \omega_n) \psi_n^1(\omega_1, \ldots,\omega_n)+$$
\begin{eqnarray}\label{pupets2vitakolja11}
\chi_{\Omega_n^+}(\omega_1, \ldots,\omega_{n-1}, \omega_n) \psi_n^2(\omega_1, \ldots,\omega_n),
\end{eqnarray}
$$\psi_n^1(\omega_1, \ldots,\omega_{n-1},\omega_n)=\int\limits_{\Omega_n^0}\chi_{\Omega_n^+}(\omega_1, \ldots,\omega_{n-1}, \omega_n^2)\alpha_n(\{\omega_1^1, \ldots,\omega_n^1\};\{\omega_1^2, \ldots,\omega_n^2\}) \times $$
\begin{eqnarray}\label{pupets2vitakolja12}
\frac{\Delta S_n^+(\omega_1, \ldots,\omega_{n-1}, \omega_n^2)}{V_n(\omega_1, \ldots,\omega_{n-1}, \omega_n^1,  \omega_n^2)}d P_n^0(\omega_n^2), \quad (\omega_1, \ldots,\omega_{n-1}) \in \Omega_{n-1},
\end{eqnarray} 

$$\psi_n^2(\omega_1, \ldots,\omega_{n-1},\omega_n)=\int\limits_{\Omega_n^0}\chi_{\Omega_n^-}(\omega_1, \ldots,\omega_{n-1}, \omega_n^1)\alpha_n(\{\omega_1^1, \ldots,\omega_n^1\};\{\omega_1^2, \ldots,\omega_n^2\}) \times $$
\begin{eqnarray}\label{pupets2vitakolja13}
\frac{\Delta S_n^-(\omega_1, \ldots,\omega_{n-1}, \omega_n^1)}{V_n(\omega_1, \ldots,\omega_{n-1}, \omega_n^1,  \omega_n^2)}d P_n^0(\omega_n^1), \quad  (\omega_1, \ldots,\omega_{n-1}) \in \Omega_{n-1}.
\end{eqnarray} 
Due to the conditions (\ref{pupwitakolja1}) relative to the random values $\alpha_n(\{\omega\}_n^1;\{\omega\}_n^2),$ we have
\begin{eqnarray}\label{puppups2vitakolja13}
\int\limits_{\Omega_n^0} \psi_n(\omega_1, \ldots,\omega_n) d P_n^0(\omega_n)=1, \quad n=\overline{1,N}.
\end{eqnarray} 
for $ \psi_n(\omega_1, \ldots,\omega_n),$ given by the formula (\ref{pupets2vitakolja11}). The proof of the  equalities (\ref{puppups2vitakolja13}) is the same as in Theorem \ref{witka2}.

\begin{te}\label{10vitakolja121}
Suppose that the conditions of Lemma \ref{witka0} are true.
Then,  the set of strictly positive  random values  $\alpha_n(\{\omega\}_n^1; \{\omega\}_n^2), n=\overline{1,N},$ satisfying the conditions 
$$E^{\mu_0}|\Delta S_n(\omega_1, \ldots,\omega_{n-1}, \omega_n)|=$$
\begin{eqnarray}\label{2pupsavitasja2}
\int\limits_{\Omega_N}\prod\limits_{i=1}^N\psi_i(\omega_1, \ldots,\omega_i) |\Delta S_n(\omega_1, \ldots,\omega_{n-1}, \omega_n)| \prod\limits_{i=1}^Nd P_i^0(\omega_i)<\infty, \quad n=\overline{1,N},
\end{eqnarray}
is a non empty set   for the  measures  $\mu_0(A),$  given by the formula (\ref{vitasja122}). The measure $\mu_0(A),$ constructed by the
 strictly positive  random values  $\alpha_n(\{\omega\}_n^1; \{\omega\}_n^2), n=\overline{1,N},$ satisfying the conditions (\ref{pupwitakolja1}), (\ref{2pupsavitasja2}) is  a   martingale measure for the  evolution of risky asset, given by the formula (\ref{tin1vitasika1ja9}). Every measure, belonging to the convex linear span of such measures, is also
martingale measure for the  evolution of risky asset, given by the formula (\ref{tin1vitasika1ja9}).  They are  equivalent  to the measure $P_N.$ 
 The set of spot measures  $\mu_{\{\omega_1^1,\omega_1^2\},\ldots ,\{\omega_N^1, \omega_N^2\}}(A)$  is a set of martingale measures for the  evolution of risky asset, given by the formula (\ref{tin1vitasika1ja9}).
\end{te}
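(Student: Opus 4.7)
The proof would mirror Theorem \ref{witka2}, with the genuinely new content being the martingale property of the spot measures. I would begin with non-emptiness of the admissible family. For the simpler geometry (\ref{pupwitakoljapup1}) arising from (\ref{tin1vitasika1ja9}), the construction in Lemma \ref{witka0} (specialized to a single block, $N_n=1$) produces strictly positive bounded $\alpha_n(\{\omega\}_n^1;\{\omega\}_n^2)$ satisfying (\ref{pupwitakolja1}); with these bounded $\alpha_n$ the first-moment bound (\ref{2pupsavitasja2}) reduces, via exactly the calculation (\ref{vitunjazajka2})--(\ref{vitunjazajka1}), to the hypothesis $\int_{\Omega_N}\Delta S_n^- dP_N<\infty$ of Lemma \ref{witka0}.

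Next, for each such $\alpha_n$ I would verify that $\mu_0$ is a martingale measure, repeating the Fubini argument of Theorem \ref{witka2}. Using representation (\ref{2vitakolja10}) together with the normalization (\ref{puppups2vitakolja13}), for any $B\in{\cal F}_{n-1}$ the integral $\int_{\Omega_N}\prod_i\psi_i\,\chi_B\,\Delta S_n \prod dP_i^0$ factorizes, the inner factor being $\int_{\Omega_n^0}\psi_n\Delta S_n\, dP_n^0$. Splitting this via $\Delta S_n = \Delta S_n^+\chi_{\Omega_n^+} - \Delta S_n^-\chi_{\Omega_n^-}$ and substituting (\ref{pupets2vitakolja12})--(\ref{pupets2vitakolja13}) gives the same cancellation as in (\ref{vitasja26}), so $E^{\mu_0}\{S_n\mid {\cal F}_{n-1}\}=S_{n-1}$. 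Equivalence of $\mu_0$ to $P_N$ follows from the strict positivity of $\psi_n$ (inherited from that of $\alpha_n$ and the positivity of $\Delta S_n^\pm$ on $\Omega_n^\pm$), and the convex-span claim is then immediate: a finite convex combination of martingale measures equivalent to $P_N$ is again of that type by linearity of conditional expectation.

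The main separate step is to show that every spot measure $\mu_{\{\omega_1^1,\omega_1^2\},\ldots,\{\omega_N^1,\omega_N^2\}}$ is itself a martingale measure. Unlike $\mu_0$, a spot measure is discrete --- supported on at most $2^N$ tree vertices --- and is generally not equivalent to $P_N$, so the reduction above does not apply. I would argue directly at each tree node: by (\ref{vitasja115}) together with the two-point form of $\psi_n$ in (\ref{vitasja111})--(\ref{vitasja113}), the $\mu_{W_N}$-conditional distribution of $S_n$ given ${\cal F}_{n-1}$ is concentrated on $\{\omega_n^1,\omega_n^2\}$ with weights $\Delta S_n^+(\omega_1,\ldots,\omega_{n-1},\omega_n^2)/V_n$ and $\Delta S_n^-(\omega_1,\ldots,\omega_{n-1},\omega_n^1)/V_n$, which already sum to $1$ by the definition of $V_n$. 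Substituting the two values $S_{n-1}-\Delta S_n^-$ and $S_{n-1}+\Delta S_n^+$ of $S_n$ gives the one-line identity
\[
\frac{\Delta S_n^+}{V_n}\bigl(S_{n-1}-\Delta S_n^-\bigr)+\frac{\Delta S_n^-}{V_n}\bigl(S_{n-1}+\Delta S_n^+\bigr)=\frac{(\Delta S_n^++\Delta S_n^-)S_{n-1}}{V_n}=S_{n-1},
\]
which is the two-state binomial martingale identity. The only real obstacle is bookkeeping the tree indices $(i_1,\ldots,i_N)$ in (\ref{vitasja115}) cleanly, so that the conditional expectation at level $n$ factors through the partial product $\prod_{j=1}^{n-1}\psi_j$; once this indexing is laid out, the martingale identity holds verbatim at every node and therefore globally for $\mu_{W_N}$.
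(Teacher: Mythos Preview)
Your proposal is correct and follows essentially the same route as the paper: non-emptiness via Lemma \ref{witka0}, the martingale/equivalence property of $\mu_0$ by repeating the Fubini--cancellation argument of Theorem \ref{witka2} against the representation (\ref{2vitakolja10}), and the spot-measure martingale property by the direct two-point binomial identity at each level. The paper organizes the last step by first computing the two elementary sums $\sum_{i_j=1}^2\psi_j=\chi_{\Omega_j^{0-}}(\omega_j^1)\chi_{\Omega_j^{0+}}(\omega_j^2)$ and $\sum_{i_j=1}^2\psi_j\,\Delta S_j=0$, then factoring $\int_A\Delta S_n\,d\mu_{W_N}$ through $\prod_{j\le n-1}\psi_j$ exactly as you anticipate; your ``one-line identity'' is the same cancellation written in terms of $S_n$ rather than $\Delta S_n$.
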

\begin{proof}
The first fact, that the set of random values  $\alpha_n(\{\omega\}_n^1; \{\omega\}_n^2), n=\overline{1,N},$  satisfying the conditions (\ref{pupwitakolja1}), (\ref{2pupsavitasja2}) is a non empty one, follows from Lemma \ref{witka0}.
From the representation (\ref{2vitakolja10}) for the set of measures $\mu_0(A)$, given by the formula (\ref{vitasja122}), as in the proof of Theorem \ref{witka2}, it is proved that this set of measures  is a set of martingale measures being    equivalent  to the measure $P_N.$ 

Let us prove the last statement of Theorem \ref{10vitakolja121}.
Since for the spot measure $\mu_{\{\omega_1^1,\omega_1^2\},\ldots ,\{\omega_N^1, \omega_N^2\}}(A)$ the representation 
$$\mu_{\{\omega_1^1,\omega_1^2\},\ldots ,\{\omega_N^1, \omega_N^2\}}(A)=$$
\begin{eqnarray}\label{4vitakolja115}
\sum\limits_{i_1=1}^2\ldots \sum\limits_{i_N=1}^2\prod\limits_{j=1}^N\psi_j(\omega_1^{i_1}, \ldots, \omega_j^{i_j})\chi_{A}(\omega_1^{i_1}, \ldots, \omega_N^{i_N}), \quad A \in {\cal F}_N,
\end{eqnarray}
is true, let us calculate
$$\sum\limits_{i_j=1}^2\psi_j(\omega_1^{i_1}, \ldots, \omega_j^{i_j})=\psi_j(\omega_1^{i_1}, \ldots, \omega_{j-1}^{i_{j-1}},\omega_j^{1})+
\psi_j(\omega_1^{i_1}, \ldots,\omega_{j-1}^{i_{j-1}}, \omega_j^{2})=$$
$$ \chi_{\Omega_j^-}(\omega_1^{i_1}, \ldots, \omega_{j-1}^{i_{j-1}}, \omega_j^{1})\psi_j^1(\omega_1^{i_1}, \ldots, \omega_{j-1}^{i_{j-1}} \omega_j^{1})+$$
$$\chi_{\Omega_n^+}(\omega_1^{i_1}, \ldots, \omega_{j-1}^{i_{j-1}}, \omega_j^{1}) \psi_j^2(\omega_1^{i_1}, \ldots, \omega_{j-1}^{i_{j-1}} \omega_j^{1})+$$
$$ \chi_{\Omega_j^-}(\omega_1^{i_1}, \ldots, \omega_{j-1}^{i_{j-1}}, \omega_j^{2})\psi_j^1(\omega_1^{i_1}, \ldots, \omega_{j-1}^{i_{j-1}} \omega_j^{2})+$$
$$\chi_{\Omega_n^+}(\omega_1^{i_1}, \ldots, \omega_{j-1}^{i_{j-1}}, \omega_j^{2}) \psi_j^2(\omega_1^{i_1}, \ldots, \omega_{j-1}^{i_{j-1}} \omega_j^{2})=$$

$$ \chi_{\Omega_j^-}(\omega_1^{i_1}, \ldots, \omega_{j-1}^{i_{j-1}}, \omega_j^{1})\chi_{\Omega_j^+}(\omega_1^{i_1}, \ldots, \omega_{j-1}^{i_{j-1}}, \omega_j^{2}) \frac{\Delta S_j^+(\omega_1^{i_1}, \ldots, \omega_{j-1}^{i_{j-1}}, \omega_j^{2})}{V_j(\omega_1^{i_1}, \ldots, \omega_{j-1}^{i_{j-1}}, \omega_j^1,  \omega_j^2)}+ $$

$$ \chi_{\Omega_j^+}(\omega_1^{i_1}, \ldots, \omega_{j-1}^{i_{j-1}}, \omega_j^{1})\chi_{\Omega_j^-}(\omega_1^{i_1}, \ldots, \omega_{j-1}^{i_{j-1}}, \omega_j^{1}) \frac{\Delta S_j^-(\omega_1^{i_1}, \ldots, \omega_{j-1}^{i_{j-1}}, \omega_j^{1})}{V_j(\omega_1^{i_1}, \ldots, \omega_{j-1}^{i_{j-1}}, \omega_j^1,  \omega_j^1)}+ $$

$$ \chi_{\Omega_j^-}(\omega_1^{i_1}, \ldots, \omega_{j-1}^{i_{j-1}}, \omega_j^{2})\chi_{\Omega_j^+}(\omega_1^{i_1}, \ldots, \omega_{j-1}^{i_{j-1}}, \omega_j^{2}) \frac{\Delta S_j^+(\omega_1^{i_1}, \ldots, \omega_{j-1}^{i_{j-1}}, \omega_j^{2})}{V_j(\omega_1^{i_1}, \ldots, \omega_{j-1}^{i_{j-1}}, \omega_j^1,  \omega_j^2)}+ $$

$$ \chi_{\Omega_j^+}(\omega_1^{i_1}, \ldots, \omega_{j-1}^{i_{j-1}}, \omega_j^{2})\chi_{\Omega_j^-}(\omega_1^{i_1}, \ldots, \omega_{j-1}^{i_{j-1}}, \omega_j^{1}) \frac{\Delta S_j^-(\omega_1^{i_1}, \ldots, \omega_{j-1}^{i_{j-1}}, \omega_j^{1})}{V_j(\omega_1^{i_1}, \ldots, \omega_{j-1}^{i_{j-1}}, \omega_j^1,  \omega_j^1)}= $$

$$ \chi_{\Omega_j^-}(\omega_1^{i_1}, \ldots, \omega_{j-1}^{i_{j-1}}, \omega_j^{1})\chi_{\Omega_j^+}(\omega_1^{i_1}, \ldots, \omega_{j-1}^{i_{j-1}}, \omega_j^{2}) \frac{\Delta S_j^+(\omega_1^{i_1}, \ldots, \omega_{j-1}^{i_{j-1}}, \omega_j^{2})}{V_j(\omega_1^{i_1}, \ldots, \omega_{j-1}^{i_{j-1}}, \omega_j^1,  \omega_j^2)}+ $$

$$ \chi_{\Omega_j^+}(\omega_1^{i_1}, \ldots, \omega_{j-1}^{i_{j-1}}, \omega_j^{2})\chi_{\Omega_j^-}(\omega_1^{i_1}, \ldots, \omega_{j-1}^{i_{j-1}}, \omega_j^{1}) \frac{\Delta S_j^-(\omega_1^{i_1}, \ldots, \omega_{j-1}^{i_{j-1}}, \omega_j^{1})}{V_j(\omega_1^{i_1}, \ldots, \omega_{j-1}^{i_{j-1}}, \omega_j^1,  \omega_j^1)}= $$

$$ \chi_{\Omega_j^-}(\omega_1^{i_1}, \ldots, \omega_{j-1}^{i_{j-1}}, \omega_j^{1})\chi_{\Omega_j^+}(\omega_1^{i_1}, \ldots, \omega_{j-1}^{i_{j-1}}, \omega_j^{2})=\chi_{\Omega_j^{0-}}(\omega_j^{1})\chi_{\Omega_j^{0+}}(\omega_j^{2})=$$
\begin{eqnarray}\label{apm2}
 \left\{\begin{array}{l l l l} 1,  &  \omega_j^{1} \in \Omega_j^{0-}&  \omega_j^{2} \in \Omega_j^{0+}, \\ 
0, & \mbox{otherwise,}
\end{array} \right. , \quad j=\overline{1,N}.
\end{eqnarray}
Further, 
$$\sum\limits_{i_j=1}^2\psi_j(\omega_1^{i_1}, \ldots, \omega_j^{i_j})\Delta S_j(\omega_1^{i_1}, \ldots, \omega_j^{i_j})=$$ 
$$\psi_j(\omega_1^{i_1}, \ldots, \omega_{j-1}^{i_{j-1}},\omega_j^{1})\Delta S_j(\omega_1^{i_1}, \ldots,\omega_{j-1}^{i_{j-1}}, \omega_j^{1})+$$
$$\psi_j(\omega_1^{i_1}, \ldots,\omega_{j-1}^{i_{j-1}}, \omega_j^{2})\Delta S_j(\omega_1^{i_1}, \ldots, \omega_{j-1}^{i_{j-1}},\omega_j^{2})=$$

$$ \chi_{\Omega_j^-}(\omega_1^{i_1}, \ldots, \omega_{j-1}^{i_{j-1}}, \omega_j^{1})\chi_{\Omega_j^+}(\omega_1^{i_1}, \ldots, \omega_{j-1}^{i_{j-1}}, \omega_j^{2})\times$$

$$\left[ - \frac{\Delta S_j^+(\omega_1^{i_1}, \ldots, \omega_{j-1}^{i_{j-1}}, \omega_j^{2})}{V_j(\omega_1^{i_1}, \ldots, \omega_{j-1}^{i_{j-1}}, \omega_j^1,  \omega_j^2)}\Delta S_j^-(\omega_1^{i_1}, \ldots,\omega_{j-1}^{i_{j-1}}, \omega_j^{1})+\right. $$
\begin{eqnarray}\label{1apm2}
\left. \frac{\Delta S_j^-(\omega_1^{i_1}, \ldots, \omega_{j-1}^{i_{j-1}}, \omega_j^{1})}{V_j(\omega_1^{i_1}, \ldots, \omega_{j-1}^{i_{j-1}}, \omega_j^1,  \omega_j^1)}\Delta S_j^+(\omega_1^{i_1}, \ldots,\omega_{j-1}^{i_{j-1}}, \omega_j^{2})\right]=0, \quad j=\overline{1,N}. 
\end{eqnarray}
Let us prove that the set of measures $\mu_{\{\omega_1^1,\omega_1^2\},\ldots ,\{\omega_N^1, \omega_N^2\}}(A)$ is a set of martingale measures.
Really, for $A,$ belonging to the $\sigma$-algebra ${\cal F}_{n-1}$ of the  filtration we have   $A=B\times \prod\limits_{i=n}^N \Omega_i^0,$ where $B$ belongs to $\sigma$-algebra ${\cal F}_{n-1}$ of the measurable space $\{\Omega_{n-1}, {\cal F}_{n-1}\}.$ Then,
$$\int\limits_{A}\Delta S_n(\omega_1, \ldots, \omega_n) d \mu_{\{\omega_1^1,\omega_1^2\},\ldots ,\{\omega_N^1, \omega_N^2\}}=$$
$$\sum\limits_{i_1=1}^2\ldots \sum\limits_{i_N=1}^2\prod\limits_{j=1}^N\psi_j(\omega_1^{i_1}, \ldots, \omega_j^{i_j}) \chi_{B}(\omega_1^{i_1}, \ldots, \omega_{n-1}^{i_{n-1}})\Delta S_n(\omega_1^{i_1}, \ldots, \omega_n^{i_n})=$$

$$\sum\limits_{i_1=1}^2\ldots \sum\limits_{i_n=1}^2\prod\limits_{j=1}^n\psi_j(\omega_1^{i_1}, \ldots, \omega_j^{i_j}) \chi_{B}(\omega_1^{i_1}, \ldots, \omega_{n-1}^{i_{n-1}})\Delta S_n(\omega_1^{i_1}, \ldots, \omega_n^{i_n})=$$

$$\sum\limits_{i_1=1}^2\ldots \sum\limits_{i_{n-1}=1}^2\prod\limits_{j=1}^{n-1}\psi_j(\omega_1^{i_1}, \ldots, \omega_j^{i_j}) \chi_{B}(\omega_1^{i_1}, \ldots, \omega_{n-1}^{i_{n-1}}) \times $$
\begin{eqnarray}\label{vita123}
\sum\limits_{i_n=1}^2 \psi_n(\omega_1^{i_1}, \ldots, \omega_n^{i_n}) 
\Delta S_n(\omega_1^{i_1}, \ldots, \omega_n^{i_n})=0, \quad A \in {\cal F}_{n-1}.\end{eqnarray}
The last means the needed statement.
Theorem \ref{10vitakolja121} is proved.

\end{proof}

Below, in Theorem \ref{100vitakolja121}, we present the consequence of Theorems  \ref{vitasja121},   \ref{10vitakolja121}.

\begin{te}\label{100vitakolja121}
Let the evolution of risky asset be given by the formula  (\ref{tin1vitasika1ja9}) and let Lemma \ref{witka0} conditions  be true.
Suppose that the random value 
 $\alpha_N(\{\omega\}_N^1;\{\omega\}_N^2),$ given on the 
probability space $\{\Omega_N^-\times \Omega_N^+, {\cal F}_N^- \times {\cal F}_N^+,  P_N^-\times  P_N^+ \}, $
satisfy the conditions
$$ P_N^-\times P_N^+(  (\{\omega_1^1, \ldots,\omega_N^1\};\{\omega_1^2, \ldots,\omega_N^2\}),  \alpha_N(\{\omega_1^1, \ldots,\omega_N^1\};\{\omega_1^2, \ldots,\omega_N^2\})>0)=$$
\begin{eqnarray}\label{1vitasikkolja7}
\prod\limits_{n=1}^NP_n^0(\Omega_n^{0-})\times P_n^0(\Omega_n^{0+});
\end{eqnarray}

 $$  \int\limits_{\Omega_n^{0-}\times \Omega_n^{0+}}\alpha_n(\{\omega_1^1, \ldots,\omega_{n-1}^1, \omega_n^1\};\{\omega_1^2, \ldots, \omega_{n-1}^2,\omega_n^2\})\times $$

$$\frac{\Delta S_n^+(\omega_1, \ldots,\omega_{n-1}, \omega_n^2) \Delta S_n^-(\omega_1, \ldots,\omega_{n-1}, \omega_n^1)}{V_n(\omega_1, \ldots,\omega_{n-1}, \omega_n^1,  \omega_n^2)}dP_n^0(\omega_n^1) dP_n^0(\omega_n^2)< \infty, $$
\begin{eqnarray}\label{2vitasikkolja7}
(\omega_1, \ldots,\omega_{n-1}) \in \Omega_{n-1};
\end{eqnarray}

\begin{eqnarray}\label{3vitasikkolja7} 
   \int\limits_{\prod\limits_{i=1}^N[\Omega_i^{0-}\times \Omega_i^{0+}]}\alpha_N(\{\omega_1^1, \ldots,\omega_N^1\};\{\omega_1^2, \ldots,\omega_N^2\})\prod\limits_{i=1}^N d P_i^0(\omega_i^1) d P_i^0(\omega_i^2)=1,  
\end{eqnarray}
where
\begin{eqnarray}\label{pupapup7} 
 \alpha_n(\{\omega_1^1, \ldots,\omega_{n-1}^1, \omega_n^1\};\{\omega_1^2, \ldots, \omega_{n-1}^2,\omega_n^2\})=
\end{eqnarray}
$$ \frac{\int\limits_{\prod\limits_{i=n+1}^N [\Omega_i^{0-}\times \Omega_i^{0+}]}\alpha_N(\{\omega_1^1, \ldots,\omega_N^1\};\{\omega_1^2, \ldots,\omega_N^2\})\prod\limits_{i=n+1}^N d P_i^0(\omega_i^1) d P_i^0(\omega_i^2)}{\int\limits_{\prod\limits_{i=n}^N [\Omega_i^{0-}\times \Omega_i^{0+}]}\alpha_N(\{\omega_1^1, \ldots,\omega_N^1\};\{\omega_1^2, \ldots,\omega_N^2\})\prod\limits_{i=n}^N d P_i^0(\omega_i^1) d P_i^0(\omega_i^2)  }, \ n=\overline{1,N}. $$

 If the set of strictly positive  random values  $\alpha_n(\{\omega\}_n^1; \{\omega\}_n^2), n=\overline{1,N},$ given by the formula (\ref{pupapup7}), satisfies the condition 
$$E^{\mu_0}|\Delta S_n(\omega_1, \ldots,\omega_{n-1}, \omega_n)|=$$
\begin{eqnarray}\label{pupapup8} 
\int\limits_{\Omega_N}\prod\limits_{i=1}^N\psi_i(\omega_1, \ldots,\omega_i) |\Delta S_n(\omega_1, \ldots,\omega_{n-1}, \omega_n)| \prod\limits_{i=1}^Nd P_i^0(\omega_i)<\infty, \quad n=\overline{1,N},
\end{eqnarray}
then, for the  martingale  measure $\mu_0(A)$  
the representation
$$\mu_0(A)=$$
\begin{eqnarray}\label{vitasikja122}
\int\limits_{\Omega_N\times \Omega_N}
\alpha_N(\{\omega_1^1, \ldots,\omega_N^1\};\{\omega_1^2, \ldots,\omega_N^2\}) \mu_{\{\omega_1^1,\omega_1^2\},\ldots ,\{\omega_N^1, \omega_N^2\}}(A) d P_N \times d P_N
\end{eqnarray}
is true.
\end{te}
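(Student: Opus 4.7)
The plan is to reduce the assertion to Theorem \ref{vitasja121} (together with Theorem \ref{10vitakolja121}) by exposing a telescoping structure hidden in the definition (\ref{pupapup7}) of the $\alpha_n$. The heart of the proof is the algebraic observation that the product $\prod_{n=1}^N \alpha_n$ collapses to the single function $\alpha_N$; everything else is routine bookkeeping of hypotheses.

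First I would introduce the auxiliary quantities
$$F_n(\{\omega\}_n^1;\{\omega\}_n^2)=\int\limits_{\prod\limits_{i=n+1}^N[\Omega_i^{0-}\times \Omega_i^{0+}]}\alpha_N(\{\omega\}_N^1;\{\omega\}_N^2)\prod\limits_{i=n+1}^NdP_i^0(\omega_i^1)dP_i^0(\omega_i^2),$$
for $n=0,1,\ldots,N$, with the convention that for $n=N$ the empty product gives $F_N=\alpha_N$. By (\ref{3vitasikkolja7}) we have $F_0=1$, and by (\ref{1vitasikkolja7}) together with Fubini each $F_n$ is a well-defined, strictly positive, ${\cal F}_n\times{\cal F}_n$-measurable random variable. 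The definition (\ref{pupapup7}) then reads simply
$$\alpha_n(\{\omega\}_n^1;\{\omega\}_n^2)=\frac{F_n(\{\omega\}_n^1;\{\omega\}_n^2)}{F_{n-1}(\{\omega\}_{n-1}^1;\{\omega\}_{n-1}^2)},\quad n=\overline{1,N},$$
so the product telescopes to
$$\prod\limits_{n=1}^N\alpha_n(\{\omega\}_n^1;\{\omega\}_n^2)=\frac{F_N}{F_0}=\alpha_N(\{\omega\}_N^1;\{\omega\}_N^2).$$

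Next I would verify that this family $\{\alpha_n\}_{n=1}^N$ meets the hypotheses of Theorem \ref{vitasja121}. The normalization (\ref{pupwitakolja1}) follows directly from $\alpha_n=F_n/F_{n-1}$: integrating the numerator over $(\omega_n^1,\omega_n^2)\in\Omega_n^{0-}\times\Omega_n^{0+}$ produces exactly $F_{n-1}$, and dividing by $F_{n-1}$ gives $1$. The integrability needed for the recurrent definition of $\mu_0$ is supplied by (\ref{2vitasikkolja7}) (applied at level $n$ after using Fubini to integrate out the remaining $\omega_{n+1}^{1,2},\ldots,\omega_N^{1,2}$ against $\alpha_N$). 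For the evolution (\ref{tin1vitasika1ja9}) the decomposition $\Omega_n^\pm=\Omega_n^{0\pm}\times\Omega_{n-1}$ makes (\ref{2vitasikkolja7}) equivalent to (\ref{2vitasja7}). Lemma \ref{witka0} is assumed directly, and the martingale/moment condition (\ref{2pupsavitasja2}) of Theorem \ref{10vitakolja121} is precisely (\ref{pupapup8}).

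With these verifications in place, Theorem \ref{vitasja121} yields
$$\mu_0(A)=\int\limits_{\Omega_N\times\Omega_N}\prod\limits_{n=1}^N\alpha_n(\{\omega\}_n^1;\{\omega\}_n^2)\,\mu_{\{\omega_1^1,\omega_1^2\},\ldots,\{\omega_N^1,\omega_N^2\}}(A)\,dP_N\times dP_N,$$
and substituting the telescoping identity $\prod_{n=1}^N\alpha_n=\alpha_N$ obtained in the first step delivers (\ref{vitasikja122}). The martingale-measure property and equivalence to $P_N$ then follow from Theorem \ref{10vitakolja121}. I expect the only subtle point to be checking that the division in $F_n/F_{n-1}$ is legitimate (i.e.\ that $F_{n-1}>0$ a.e.) and that the integrability hypothesis (\ref{2vitasikkolja7}) indeed propagates, via Fubini, to the one needed by the construction of $\mu_0$ at every intermediate level $n$; the telescoping itself is purely algebraic once the notation $F_n$ is in place.
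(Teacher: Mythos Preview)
Your proposal is correct and follows essentially the same approach as the paper: the paper also observes the telescoping identity $\alpha_N=\prod_{n=1}^N\alpha_n$ (what you write as $F_N/F_0$), checks that the $\alpha_n$ from (\ref{pupapup7}) satisfy the conditions (\ref{1vitasja7})--(\ref{3vitasja7}), and then invokes Theorems \ref{vitasja121} and \ref{10vitakolja121}. Your introduction of the auxiliary $F_n$ makes the telescoping more transparent, but the logic is the same.
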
 
\begin{proof}
The random values 
$ \alpha_n(\{\omega_1^1, \ldots,\omega_{n-1}^1, \omega_n^1\};\{\omega_1^2, \ldots, \omega_{n-1}^2,\omega_n^2\}), \ n=\overline{1,N},$ satisfy the conditions
 (\ref{1vitasja7}) - (\ref{3vitasja7}), due to the conditions of Theorem \ref{100vitakolja121}.  
It is evident that
\begin{eqnarray}\label{vitasikja222}
\alpha_N(\{\omega_1^1, \ldots,\omega_N^1\};\{\omega_1^2, \ldots,\omega_N^2\})=\prod\limits_{n=1}^N \alpha_n(\{\omega_1^1, \ldots,\omega_n^1\};\{\omega_1^2, \ldots,\omega_n^2\}).
\end{eqnarray}
Due to Theorem \ref{10vitakolja121},  $\mu_0(A),$ given by the formula (\ref{vitasikja122}),  is a martingale measure being equivalent   to the measure $P_N.$

Let us indicate how to construct the random values $\alpha_N(\{\omega\}_N^1;\{\omega\}_N^2),$ since these random values determine the set of all martingale measures. Suppose that the random value $\alpha_i^k(\omega_i^1,\omega_i^2), \   k=\overline{1,K},$ is a bounded strictly positive random value, given on the measurable space $\{\Omega_i^{0-}\times \Omega_i^{0+}, {\cal F}_i^{0-}\times {\cal F}_i^{0+}\},$ $  i=\overline{1,N},$ and satisfying the conditions
\begin{eqnarray}\label{5vitasikaja500}
 \int\limits_{ \Omega_i^{0-}\times \Omega_i^{0+}}  \alpha_i^k(\omega_i^1,\omega_i^2) d P_i^0(\omega_i^1) d P_i^0(\omega_i^2)=1, \quad i=\overline{1,N}, \quad k=\overline{1,K}.  
\end{eqnarray}
Let us denote
\begin{eqnarray}\label{6vitasikaja500}
\alpha_N^k(\{\omega_1^1, \ldots,\omega_N^1\};\{\omega_1^2, \ldots,\omega_N^2\})=\prod\limits_{i=1}^N\alpha_i^k(\omega_i^1,\omega_i^2), \quad k=\overline{1,K},
\end{eqnarray}
where  $K$ runs natural numbers. If $\gamma_k, \ k=\overline{1,K}, $ are strictly positive real numbers such that $\sum\limits_{k=1}^K \gamma_k=1,$ then
\begin{eqnarray}\label{7vitasikaja500}
\alpha_N(\{\omega_1^1, \ldots,\omega_N^1\};\{\omega_1^2, \ldots,\omega_N^2\})=\sum\limits_{k=1}^K \gamma_k \alpha_N^k(\{\omega_1^1, \ldots,\omega_N^1\};\{\omega_1^2, \ldots,\omega_N^2\}) 
\end{eqnarray}
satisfy the conditions of Theorem \ref{100vitakolja121}.  The set of random values (\ref{7vitasikaja500}) is dense in the set of random values $\alpha_N(\{\omega_1^1, \ldots,\omega_N^1\};\{\omega_1^2, \ldots,\omega_N^2\}),$ satisfying the condition (\ref{1vitasikkolja7}) - (\ref{3vitasikkolja7}).
Theorem \ref{100vitakolja121} is proved.
\end{proof}
Another way to construct $\alpha_N(\{\omega_1^1, \ldots,\omega_N^1\};\{\omega_1^2, \ldots,\omega_N^2\})$ is to use the equalities (\ref{pupwitakolja1}).
The set of $\alpha_n(\{\omega_1^1, \ldots,\omega_{n-1}^1, \omega_{n}^1\};\{\omega_1^2, \ldots, \omega_{n-1}^2, \omega_{n}^{2}\})$ can construct  as follows: suppose that $\alpha_n^1(\{\omega_1^1, \ldots,\omega_{n-1}^1, \omega_{n}^1\};\{\omega_1^2, \ldots, \omega_{n-1}^2, \omega_{n}^{2}\})$
satisfies the inequalities
\begin{eqnarray}\label{pupwitakolja3}
0< h_n \leq \alpha_n^1(\{\omega_1^1, \ldots,\omega_{n-1}^1, \omega_{n}^1\};\{\omega_1^2, \ldots, \omega_{n-1}^2, \omega_{n}^{2}\})\leq H_n<\infty
\end{eqnarray}
for a certain  real positive numbers $ h_n, H_n.$  If to put
$$\alpha_n(\{\omega_1^1, \ldots,\omega_{n-1}^1, \omega_{n}^1\};\{\omega_1^2, \ldots, \omega_{n-1}^2, \omega_{n}^{2}\})=$$
\begin{eqnarray}\label{pupwitakolja4}
\frac{\alpha_n^1(\{\omega_1^1, \ldots,\omega_{n-1}^1, \omega_{n}^1\};\{\omega_1^2, \ldots, \omega_{n-1}^2, \omega_{n}^{2}\})}{ \int\limits_{ \Omega_n^{0-}\times \Omega_n^{0+}}  \alpha_n^1(\{\omega_1^1, \ldots,\omega_{n-1}^1, \omega_{n}^1\};\{\omega_1^2, \ldots, \omega_{n-1}^2, \omega_{n}^{2}\})d P_n^0(\omega_n^1) d P_n^0(\omega_n^2)},
\end{eqnarray}
then the set of random values $\alpha_n(\{\omega_1^1, \ldots,\omega_{n-1}^1, \omega_{n}^1\};\{\omega_1^2, \ldots, \omega_{n-1}^2, \omega_{n}^{2}\}),\ n=\overline{1,N},$ is bounded and    satisfy the conditions (\ref{1vitasja7}) - (\ref{3vitasja7}) under the conditions of Theorem \ref{10vitakolja121}.
We can put
 $$\alpha_N(\{\omega_1^1, \ldots,\omega_N^1\};\{\omega_1^2, \ldots,\omega_N^2\})=$$
\begin{eqnarray}\label{pupwitakolja5}
\prod\limits_{n=1}^N\alpha_n(\{\omega_1^1, \ldots,\omega_{n-1}^1, \omega_{n}^1\};\{\omega_1^2, \ldots, \omega_{n-1}^2, \omega_{n}^{2}\}).
\end{eqnarray}
It is evident that $\alpha_n(\{\omega_1^1, \ldots,\omega_{n-1}^1, \omega_{n}^1\};\{\omega_1^2, \ldots, \omega_{n-1}^2, \omega_{n}^{2}\}), \ n=\overline{1,N},$
must satisfy the conditions (\ref{pupapup8}). 

\section{Derivatives assessment.}

In the papers \cite{Eberlein},  \cite{Bellamy}, the range of non arbitrage prices are established. In the paper \cite{Eberlein}, for the Levy exponential model, the price of super-hedge for call option coincides with the price of the underlying asset under the assumption that the Levy process has unlimited variation, does not contain a Brownian component, with negative jumps of arbitrary magnitude.
The same result is true, obtained in the paper \cite{Bellamy}, if the process describing the evolution of the underlying asset is a diffusion process with the jumps described by Poisson jump process.
In these papers the evolution is described by continuous processes. Below, we consider the  discrete evolution of risky assets that is more realistic from the practical point of view.
Two types of risky asset evolutions are considered: 1)  the price of an asset can take any non negative  value; 2) the price of the risky asset may not fall below a given positive  value for finite time of evolution. For each of these types  of evolutions of risky assets, the bounds of non-arbitrage prices  for a wide class of contingent liabilities are found, among which are the payoff functions of standard call and put options.  

Below, on the probability space  $\{\Omega_N, {\cal F}_N, P_N\},$ where $\Omega_N=\prod\limits_{i=1}^N\Omega_i^0,$   ${\cal F}_N=\prod\limits_{i=1}^N {\cal F}_i^0, $ $P_N=\prod\limits_{i=1}^NP_i^0,$
$\Omega_i^0$ is a complete separable metric space, ${\cal F}_i^0$ is a Borel $\sigma$-algebra on $\Omega_i^0,$ $P_i^0 $ is a probability measure  on ${\cal F}_i^0, \  i=\overline{1,N}, $ we consider the evolution of risky asset given by the formula
$$S_n(\omega_1, \ldots,\omega_{n})=$$
\begin{eqnarray}\label{tinwau1}
S_0 \prod\limits_{i=1}^n(1+
a_i(\omega_1, \ldots,\omega_{i-1})(e^{\sigma_i(\omega_1, \ldots,\omega_{i-1})\varepsilon_i(\omega_i)} -1)),\quad n=\overline{1,N}, 
\end{eqnarray}
where $a_i(\omega_1, \ldots,\omega_{i-1}), \sigma_i(\omega_1, \ldots,\omega_{i-1})$ are ${\cal F}_{i-1}$-measurable   random values, satisfying the conditions $0 <a_i(\omega_1, \ldots,\omega_{i-1})\leq 1, \  \sigma_i (\omega_1, \ldots,\omega_{i-1})\geq \sigma_i>0, $ where $ \sigma_i, \ i=\overline{1,N},$ are real positive numbers. Further, we assume that the random value $\varepsilon_i(\omega_i)$ satisfy the conditions: there exists $\omega_i^1 \in \Omega_i^0$ such that  $\varepsilon_i(\omega_i^1)=0, \ i=\overline{1,N},$
and  for every real number $t>0,$ \ $P_i^0(\varepsilon_i(\omega_i)<- t)>0, $ \
$P_i^0(\varepsilon_i(\omega_i)>t)>0, $ \ $i=\overline{1,N}.$
 
For the evolution of risky asset (\ref{tinwau1}),  we have
$$ \Delta S_n(\omega_1, \ldots,\omega_{n-1},\omega_{n} )=$$
\begin{eqnarray}\label{1tinwau1}
  S_{n-1}(\omega_1, \ldots,\omega_{n-1})a_n(\omega_1, \ldots,\omega_{n-1}) 
(e^{\sigma_n(\omega_1, \ldots,\omega_{n-1})\varepsilon_n(\omega_n)} -1)=
\end{eqnarray}
$$  d_n(\omega_1, \ldots,\omega_{n-1},\omega_{n})(e^{\sigma_n\varepsilon_n(\omega_n)} -1),    $$
where 
$$   d_n(\omega_1, \ldots,\omega_{n-1},\omega_{n})=$$
\begin{eqnarray}\label{vito2tinwau1chka}
 S_{n-1}(\omega_1, \ldots,\omega_{n-1})a_n(\omega_1, \ldots,\omega_{n-1}) 
\frac{(e^{\sigma_n(\omega_1, \ldots,\omega_{n-1}) \varepsilon_n(\omega_n)} -1)}{(e^{\sigma_n\varepsilon_n(\omega_n)} -1)}.  
\end{eqnarray}
It is evident that $ d_n(\omega_1, \ldots,\omega_{n-1},\omega_{n})>0,$ therefore for $\Delta S_n(\omega_1, \ldots,\omega_{n-1},\omega_{n} )$ the representation 
(\ref{vitasja34}) is true with $ \eta_n(\omega_n)=(e^{\sigma_n\varepsilon_n(\omega_n)} -1).$ 
Therefore,
$$\frac{ \Delta S_n^+(\omega_1, \ldots,\omega_{n-1},\omega_{n}^2 )}{V_n(\omega_1, \ldots,\omega_{n-1}, \omega_{n}^1,\omega_{n}^2 )}=$$

\begin{eqnarray}\label{2tinwau1}
\frac{e^{\sigma_n(\omega_1, \ldots,\omega_{n-1})\varepsilon_n(\omega_n^2)} -1}{e^{\sigma_n(\omega_1, \ldots,\omega_{n-1})\varepsilon_n(\omega_n^2)}-e^{\sigma_n(\omega_1, \ldots,\omega_{n-1})\varepsilon_n(\omega_n^1)}},\quad \omega_n^2 \in \Omega_n^{0+}, \quad (\omega_1, \ldots,\omega_{n-1}) \in \Omega_{n-1},
\end{eqnarray}
$$\frac{ \Delta S_n^-(\omega_1, \ldots,\omega_{n-1},\omega_{n^1} )}{V_n(\omega_1, \ldots,\omega_{n-1}, \omega_{n}^1,\omega_{n}^2 )}=$$
\begin{eqnarray}\label{3tinwau1}
\frac{1 - e^{\sigma_n(\omega_1, \ldots,\omega_{n-1})\varepsilon_n(\omega_n^1)} }{e^{\sigma_n(\omega_1, \ldots,\omega_{n-1})\varepsilon_n(\omega_n^2)}-e^{\sigma_n(\omega_1, \ldots,\omega_{n-1})\varepsilon_n(\omega_n^1)}},\quad \omega_n^1 \in \Omega_n^{0-}, \quad (\omega_1, \ldots,\omega_{n-1}) \in \Omega_{n-1},
\end{eqnarray}
where we denoted
$$\Omega_n^{0-}=\{\omega_n \in \Omega_n^0, \varepsilon_n (\omega_n)\leq 0\}, \quad
\Omega_n^{0+}=\{\omega_n \in \Omega_n^0, \varepsilon_n (\omega_n)> 0\}, $$
\begin{eqnarray}\label{4tinwau1}
 \Omega_{n}^-=\Omega_n^{0-}\times \Omega_{n-1}, \quad    \Omega_{n}^+=\Omega_n^{0+}\times \Omega_{n-1}.
\end{eqnarray}

From the formulas (\ref{2tinwau1}),  (\ref{3tinwau1}) and Theorem \ref{witka2},
it follows that the set of martingale measures $M$ do not depend on the random values 
$a_i (\omega_1, \ldots,\omega_{i-1}), \ i=\overline{1,N}.$
If to put $a_i (\omega_1, \ldots,\omega_{i-1})=1,  \ i=\overline{1,N},$ in the formula  (\ref{tinwau1}), then  for the risky asset evolution we obtain the formula
\begin{eqnarray}\label{tochka4tinwau1vito}
S_n(\omega_1, \ldots,\omega_{n-1},\omega_{n})=S_0 \prod\limits_{i=1}^n e^{\sigma_i(\omega_1, \ldots,\omega_{i-1})\varepsilon_i(\omega_i)},\quad n=\overline{1,N}. 
\end{eqnarray}
The evolution of risky assets, given by the formula (\ref{tochka4tinwau1vito}), includes a wide class of evolutions of risky assets, used in economics. 
For example, under an appropriate choice of probability spaces $\{\Omega_i^0, {\cal F}_i^0, P_i^0\}$ and a choice of  sequence of independent  random values $\varepsilon_i(\omega_i)$ with the normal  distribution $N(0,1),$  we obtain  ARCH model (Autoregressive Conditional  Heteroskedastic Model) introduced by Engle in \cite{Engle}  and  GARCH model (Generalized Autoregressive Conditional  Heteroskedastic Model) introduced later by Bollerslev in \cite{Bollerslev}. In these models, the random variables $\sigma_i (\omega_1, \ldots,\omega_{i-1}), \ i=\overline{1,N},$ are called the  volatilities which satisfy the nonlinear set of  equations.

The very important case of  evolution of risky assets  (\ref{tinwau1}) is when $a_i(\omega_1, \ldots,\omega_{i-1})=a_i, \ i=\overline{1,N}, $ are constants, that is, 
\begin{eqnarray}\label{tinwau2}
S_n(\omega_1, \ldots,\omega_{n-1},\omega_{n})=S_0 \prod\limits_{i=1}^n(1+
a_i (e^{\sigma_i(\omega_1, \ldots,\omega_{i-1})\varepsilon_i(\omega_i)} -1)),\quad n=\overline{1,N}, 
\end{eqnarray}
where $0 \leq a_i \leq 1.$

If $0 < a_i < 1, i=\overline{1,N},$ then the evolution of risky asset, given by the formula  (\ref{tinwau2}), we call the evolution of relatively stable asset.

Further, we assume that   the evolution of risky asset given by the formulas (\ref{tinwau1}), (\ref{tochka4tinwau1vito}), (\ref{tinwau2}) satisfy the conditions
\begin{eqnarray}\label{tinvitochka4tinwau1}
\int\limits_{\Omega_N}S_n(\omega_1, \ldots,\omega_{n-1},\omega_{n}) d P_N< \infty, \quad n=\overline{1,N}.
\end{eqnarray}
From the conditions (\ref{tinvitochka4tinwau1}), it follows the inequalities
\begin{eqnarray}\label{vitunjawau2}
\int\limits_{\Omega_N}\Delta S_{n}^- (\omega_1, \ldots,\omega_{n})d P_N< \infty, \quad n=\overline{1,N}. 
\end{eqnarray}
Taking into account that  
$$ \Delta S_n^-(\omega_1, \ldots,\omega_{n-1},\omega_{n}^1 )=$$
\begin{eqnarray}\label{tinvitochka4tinwau2}
S_{n-1}(\omega_1, \ldots,\omega_{n-1})a_n(\omega_1, \ldots,\omega_{n-1}) 
(1- e^{\sigma_n(\omega_1, \ldots,\omega_{n-1})\varepsilon_n(\omega_n^1)}), \quad \omega_n^1 \in \Omega_n^{0-},
\end{eqnarray}
$$ \Delta S_n^+(\omega_1, \ldots,\omega_{n-1},\omega_{n}^2 )=$$
\begin{eqnarray}\label{tinvitochka4tinwau2}
S_{n-1}(\omega_1, \ldots,\omega_{n-1})a_n(\omega_1, \ldots,\omega_{n-1}) 
(e^{\sigma_n(\omega_1, \ldots,\omega_{n-1})\varepsilon_n(\omega_n^2)} -1), \quad \omega_n^2 \in \Omega_n^{0+},
\end{eqnarray}
we have
\begin{eqnarray}\label{tinvitochka4tinwau4}
\frac{1}{ \Delta S_n^-(\omega_1, \ldots,\omega_{n-1},\omega_{n}^1 )}\leq \frac{1}{\prod\limits_{i=1}^{n-1}(1-a_i^1) a_n^0(1- e^{\sigma_n \varepsilon_n(\omega_n^1)})}<\infty, \quad \varepsilon_n(\omega_n^1)<0,
\end{eqnarray}
\begin{eqnarray}\label{tinvitochka4tinwau5}
\frac{1}{ \Delta S_n^+(\omega_1, \ldots,\omega_{n-1},\omega_{n}^2 )}\leq \frac{1}{\prod\limits_{i=1}^{n-1}(1-a_i^1) a_n^0(e^{\sigma_n \varepsilon_n(\omega_n^2)}-1)}<\infty, \quad \varepsilon_n(\omega_n^2)>0,
\end{eqnarray}
under the conditions that
\begin{eqnarray}\label{tinvitochka4tinwau3}
0< a_n^0 \leq a_n(\omega_1, \ldots,\omega_{n-1})\leq a_n^1<1, \quad n=\overline{1,N}.
\end{eqnarray}
\begin{te}\label{vitochkakiss1}
On the probability space $\{\Omega_N, {\cal F}_N, P_N\},$  let the evolution of risky asset be given by one of the formula (\ref{tinwau1}), (\ref{tochka4tinwau1vito}),  (\ref{tinwau2}) that satisfies  the conditions  (\ref{tinvitochka4tinwau1}).

 If the inequalities $0<a_n^0 \leq a_n(\omega_1,\ldots, \omega_{n-1})\leq a_n^1<1, \ 0<a_i<1, \  i=\overline{1,N},$  are true,  then the set of martingale measures $M$ is the same for every evolution of risky assets, given by the formulas  (\ref{tinwau1}), (\ref{tinwau2}).
 For every non-negative super-martingale relative to the set of martingale measures $M$ the optional decomposition is valid.
 Every measure of $M$ is an integral over the spot measures. The fair price $f_0$ of super-hedge for the nonnegative payoff function $f(x)$ is given by the formula
\begin{eqnarray}\label{vitochkakiss2}
f_0=\sup\limits_{P \in M}E^P f(S_N)=\sup\limits_{\omega_i^1 \in \Omega_i^{0-}, \omega_i^2 \in \Omega_i^{0+}, i=\overline{1,N}}\int\limits_{\Omega_N}f(S_N) d \mu_{\{\omega_1^1,\omega_1^2\},\ldots ,\{\omega_N^1, \omega_N^2\}}.
\end{eqnarray}
The set of martingale measures $M_1$ for the evolution of risky asset, given by the formula (\ref{tochka4tinwau1vito}), is contained in the set $M.$
\end{te}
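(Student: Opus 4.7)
The plan is to collect the observations already stated around formulas (\ref{2tinwau1})--(\ref{tinvitochka4tinwau5}) and apply the machinery developed in Theorems \ref{witka2}, \ref{witka5}, \ref{witka7}, \ref{1witka7}, \ref{Tinnna1} and \ref{100vitakolja121} in sequence. First I verify that the construction of martingale measures is insensitive to the coefficients $a_i$. Formulas (\ref{2tinwau1}), (\ref{3tinwau1}) show that the ratios $\Delta S_n^+/V_n$ and $\Delta S_n^-/V_n$ depend only on $\sigma_n(\omega_1,\ldots,\omega_{n-1})$ and on $\varepsilon_n(\omega_n)$, and not on $a_n$. Since the whole recurrent construction (\ref{vitasja6})--(\ref{pupecvitasja5}) of $\mu_0$ uses only these ratios and the random values $\alpha_n$, and since the martingale condition for $\{S_n\}$ is equivalent to $E[(e^{\sigma_n\varepsilon_n}-1)|\mathcal{F}_{n-1}]=0$ (the positive $\mathcal{F}_{n-1}$-measurable factor $S_{n-1}a_n$ being irrelevant), the families of martingale measures for (\ref{tinwau1}) and (\ref{tinwau2}) coincide once the integrability condition (\ref{vitasja22}) is met. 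Under (\ref{tinvitochka4tinwau1}), the bound (\ref{vitunjawau2}) and the construction of bounded $\alpha_n$ from Lemma \ref{witka0} combine to give (\ref{vitasja22}), so $M$ is non-empty and equals the set produced by Theorem \ref{witka2} for either evolution.

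Next I verify the hypotheses of the optional decomposition theorem. The lower bounds $0<a_n^0\leq a_n(\omega)\leq a_n^1<1$ (resp.\ $0<a_i<1$) together with the existence of $\omega_n^1,\omega_n^2$ with $\varepsilon_n(\omega_n^1)<0<\varepsilon_n(\omega_n^2)$ yield the uniform estimates (\ref{tinvitochka4tinwau4}), (\ref{tinvitochka4tinwau5}), which are exactly the hypothesis (\ref{tinwickpussy1}) of Theorem \ref{witka7}. Since $\Omega_n^-,\Omega_n^+$ have the product form (\ref{4tinwau1}), one is in the simple setting of Theorem \ref{witka5}, and Theorem \ref{witka7} applies directly. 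Therefore Theorem \ref{Tinnna1} gives the optional decomposition for every nonnegative super-martingale relative to $M$. For the integral representation over the spot measures, I verify the hypotheses of Theorem \ref{100vitakolja121}: the product structure (\ref{4tinwau1}) makes (\ref{100vitasikja7}) trivial with $N_n=1$, the positivity conditions of Lemma \ref{witka0} hold on $\Omega_n^{0\pm}$, and (\ref{tinvitochka4tinwau1}) gives (\ref{pupapup8}). The representation (\ref{vitasikja122}) then exhibits each $P\in M$ as an integral over the spot measures.

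For the fair-price identity (\ref{vitochkakiss2}), the inequality $f_0=\sup_{P\in M}E^Pf(S_N)\geq E^P f(S_N)$ is trivial; conversely, using (\ref{vitasikja122}) I can write
\begin{equation*}
E^Pf(S_N)=\int_{\Omega_N\times\Omega_N}\alpha_N(W_N)\Bigl(\int_{\Omega_N}f(S_N)\,d\mu_{W_N}\Bigr)dP_N\otimes dP_N\leq \sup_{W_N}\int_{\Omega_N}f(S_N)\,d\mu_{W_N},
\end{equation*}
since $\alpha_N$ integrates to $1$ by (\ref{3vitasikkolja7}). The reverse inequality $\sup_{P\in M}\geq \sup_{W_N}\int f\,d\mu_{W_N}$ will be obtained by approximation: given a point $W_N=(\{\omega_i^1,\omega_i^2\})$, I choose the bounded $\alpha_n$ of Lemma \ref{witka0} with $B_{n,i}^{0,k\pm}$ shrinking to $\{\omega_n^1\}$ resp.\ $\{\omega_n^2\}$ and weights $\delta_i^n,\mu_s^n\to 0$ (as in the proof of Theorem \ref{witka5}), so that the resulting equivalent martingale measure concentrates on the two-point-per-step tree generating $\mu_{W_N}$, and by Fatou/dominated convergence (justified by (\ref{tinvitochka4tinwau1})) the integrals $E^Pf(S_N)$ converge to $\int f\,d\mu_{W_N}$. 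This approximation is the only genuinely delicate step.

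Finally, for $M_1\subseteq M$, I observe that (\ref{tochka4tinwau1vito}) is (\ref{tinwau1}) with $a_i\equiv 1$, so the normalized ratios (\ref{2tinwau1}), (\ref{3tinwau1}) and hence the martingale condition coincide, while the integrability (\ref{tinvitochka4tinwau1}) for (\ref{tochka4tinwau1vito}) is \emph{stronger} than for (\ref{tinwau1}) since $|\Delta S_n^{(\ref{tinwau1})}|\leq |\Delta S_n^{(\ref{tochka4tinwau1vito})}|\prod_{i<n}(1-a_i+a_i e^{\sigma_i\varepsilon_i})/\prod_{i<n}e^{\sigma_i\varepsilon_i}$ on the relevant sets. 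Hence any $P\in M_1$ satisfies (\ref{vitasja22}) for (\ref{tinwau1}) and is itself martingale for $\{S_n\}$ of (\ref{tinwau1}), i.e.\ $P\in M$. The principal difficulty in the whole argument is the spot-measure approximation in paragraph three; the rest amounts to checking hypotheses of already-proved theorems.
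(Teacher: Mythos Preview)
Your overall strategy matches the paper's: verify in turn the hypotheses of Theorems \ref{witka2}, \ref{witka7}, \ref{Tinnna1} and \ref{100vitakolja121}. The treatment of the optional decomposition and of the spot-measure representation is identical, and on the fair-price identity (\ref{vitochkakiss2}) you are actually more explicit than the paper, which simply cites \cite{GoncharNick} for $f_0=\sup_{P\in M}E^Pf(S_N)$ and then invokes Theorem \ref{100vitakolja121} for the passage to the supremum over spot measures.

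There is, however, a gap in your argument that $M$ coincides for the evolutions (\ref{tinwau1}) and (\ref{tinwau2}). Your observation that the ratios (\ref{2tinwau1})--(\ref{3tinwau1}) do not involve $a_n$ correctly shows that the densities $\psi_n$ of Lemma \ref{witka1}, and hence the candidate measures $\mu_0$, coincide for both evolutions. But membership in $M$ also requires the integrability condition (\ref{vitasja22}), and $|\Delta S_n|$ \emph{does} depend on $a_n$; you assert integrability ``is met'' but never show that the two integrability conditions are \emph{equivalent}, i.e.\ that a given $\mu_0$ satisfies (\ref{vitasja22}) for $S_n^1$ if and only if it does for $S_n^2$. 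The paper closes this gap quantitatively: it proves two-sided bounds $0<A_N^2\le S_n^1/S_n^2\le B_N^2<\infty$ and then two-sided bounds on $|\Delta S_n^1|/|\Delta S_n^2|$ (formulas (\ref{kisskiss3})--(\ref{kisskiss12})), which force the two versions of (\ref{vitasja22}) to hold or fail together. The same ratio-bound device, now one-sided via $S_n^3/S_n^2\le\prod_i 1/a_i$ (formulas (\ref{kisskiss13})--(\ref{kisskiss14})), is what the paper uses for the inclusion $M_1\subseteq M$; your displayed inequality for that part, while algebraically true, does not yield a usable uniform bound because the factor $\prod_{i<n}(1-a_i+a_ie^{\sigma_i\varepsilon_i})/\prod_{i<n}e^{\sigma_i\varepsilon_i}$ is unbounded as $\varepsilon_i\to-\infty$.
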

\begin{proof}
From the equalities (\ref{2tinwau1}) - (\ref{3tinwau1}) and the inequalities (\ref{tinvitochka4tinwau1}), it follows that the set $ M$ 
is a nonempty one and every martingale  measure constructed by the set of random values 
$\alpha_n(\omega_1^1,\ldots,\omega_n^1;\omega_1^2,\ldots,\omega_n^2), \ n=\overline{1,N},$ belongs to the set $M,$ if the  inequalities (\ref{vitasja22}) are true.
To prove that the set of martingale measures, defined by the evolutions (\ref{tinwau1}),   (\ref{tinwau2}),  coincide it is necessary to prove the inequalities
\begin{eqnarray}\label{kisskiss3}
0 <  A_n^1\leq \frac {S_n^1(\omega_1,\ldots,\omega_n)}{S_n^2(\omega_1,\ldots,\omega_n)}\leq B_n^1<\infty, \quad n=\overline{1,N},
\end{eqnarray}
where we denoted by $S_n^1 (\omega_1,\ldots,\omega_n) $ the evolution, given by the formula (\ref{tinwau1}), and by $S_n^2 (\omega_1,\ldots,\omega_n) $  the evolution, given by the formula  (\ref{tinwau2}). Under the conditions of Theorem \ref{vitochkakiss1}, we have
$$ \frac {S_n^1(\omega_1,\ldots,\omega_n)}{S_n^2(\omega_1,\ldots,\omega_n)} =$$
\begin{eqnarray}\label{kisskiss4}
\frac{S_0 \prod\limits_{i=1}^n(1+
a_i(\omega_1, \ldots,\omega_{i-1})(e^{\sigma_i(\omega_1, \ldots,\omega_{i-1})\varepsilon_i(\omega_i)} -1))}{S_0 \prod\limits_{i=1}^n(1+
a_i (e^{\sigma_i(\omega_1, \ldots,\omega_{i-1})\varepsilon_i(\omega_i)} -1))}, \quad n=\overline{1,N}. 
\end{eqnarray}
Since 
$$ \frac{1+
a_i(\omega_1, \ldots,\omega_{i-1})(e^{\sigma_i(\omega_1, \ldots,\omega_{i-1})\varepsilon_i(\omega_i)} -1)}{1+
a_i(e^{\sigma_i(\omega_1, \ldots,\omega_{i-1})\varepsilon_i(\omega_i)} -1)}=$$
\begin{eqnarray}\label{kisskiss5}
 \frac{1-a_i(\omega_1, \ldots,\omega_{i-1})+
a_i(\omega_1, \ldots,\omega_{i-1})e^{\sigma_i(\omega_1, \ldots,\omega_{i-1})\varepsilon_i(\omega_i)}}{1-a_i+
a_i e^{\sigma_i(\omega_1, \ldots,\omega_{i-1})\varepsilon_i(\omega_i)}}=D_i, \quad i=\overline{1,N},
\end{eqnarray}
we have
$$\frac{1-a_i^1+a_i^0 e^{\sigma_i(\omega_1, \ldots,\omega_{i-1})\varepsilon_i(\omega_i)}}{1-a_i+
a_i e^{\sigma_i(\omega_1, \ldots,\omega_{i-1})\varepsilon_i(\omega_i)}}\leq  D_i    \leq$$
\begin{eqnarray}\label{kisskiss6}
 \frac{1-a_i^0+a_i^1 e^{\sigma_i(\omega_1, \ldots,\omega_{i-1})\varepsilon_i(\omega_i)}}{1-a_i+
a_i e^{\sigma_i(\omega_1, \ldots,\omega_{i-1})\varepsilon_i(\omega_i)}}, \quad i=\overline{1,N}.  
\end{eqnarray}
Let us denote
$$ A_i=\inf\limits_{(\omega_1, \ldots,\omega_{i}) \in \Omega_i}\frac{1-a_i^1+a_i^0 e^{\sigma_i(\omega_1, \ldots,\omega_{i-1})\varepsilon_i(\omega_i)}}{1-a_i+
a_i e^{\sigma_i(\omega_1, \ldots,\omega_{i-1})\varepsilon_i(\omega_i)}}, \quad i=\overline{1,N},  $$
\begin{eqnarray}\label{kisskiss7}
 B_i=\sup\limits_{(\omega_1, \ldots,\omega_{i}) \in \Omega_i}\frac{1-a_i^0+a_i^1 e^{\sigma_i(\omega_1, \ldots,\omega_{i-1})\varepsilon_i(\omega_i)}}{1-a_i+
a_i e^{\sigma_i(\omega_1, \ldots,\omega_{i-1})\varepsilon_i(\omega_i)}}, \quad i=\overline{1,N}.  
\end{eqnarray}
It is evident that $0<A_i, B_i<\infty, \ i=\overline{1,N},$ and 
\begin{eqnarray}\label{kisskiss8}
A_i \leq D_i \leq B_i, \quad i=\overline{1,N},
\end{eqnarray}
therefore
\begin{eqnarray}\label{kisskiss9}
A_n^1=\prod\limits_{i=1}^n A_i \leq \frac {S_n^1(\omega_1,\ldots,\omega_n)}{S_n^2(\omega_1,\ldots,\omega_n)}\leq \prod\limits_{i=1}^n B_i=B_n^1, \quad  n=\overline{1,N}.  
\end{eqnarray}
So,
\begin{eqnarray}\label{kisskiss10} 
A_N^2  \leq \frac {S_n^1(\omega_1,\ldots,\omega_n)}{S_n^2(\omega_1,\ldots,\omega_n)}\leq B_N^2, \quad  n=\overline{1,N},
\end{eqnarray}
where we put 
$A_N^2=\min\limits_{1\leq n \leq N}A_n^1, \  B_N^2=\max\limits_{1\leq n \leq N}B_n^1.$
Since 
$$| \Delta S_n^1(\omega_1, \ldots,\omega_{n-1},\omega_{n} )|=$$
\begin{eqnarray}\label{kisskiss1}
S_{n-1}^1(\omega_1, \ldots,\omega_{n-1})a_n(\omega_1, \ldots,\omega_{n-1}) 
|(e^{\sigma_n(\omega_1, \ldots,\omega_{n-1})\varepsilon_n(\omega_n)} -1)|, 
\end{eqnarray}

$$| \Delta S_n^2(\omega_1, \ldots,\omega_{n-1},\omega_{n} )|=$$
\begin{eqnarray}\label{kisskiss2}
S_{n-1}^2(\omega_1, \ldots,\omega_{n-1})a_n 
|(e^{\sigma_n(\omega_1, \ldots,\omega_{n-1})\varepsilon_n(\omega_n)} -1)|, 
\end{eqnarray}
we have
$$ \frac{| \Delta S_n^1(\omega_1, \ldots,\omega_{n-1},\omega_{n} )|}{| \Delta S_n^2(\omega_1, \ldots,\omega_{n-1},\omega_{n} )|} = $$
\begin{eqnarray}\label{kisskiss11}
 \frac{S_{n-1}^1(\omega_1, \ldots,\omega_{n-1}) a_n(\omega_1, \ldots,\omega_{n-1})}{ S_{n-1}^2(\omega_1, \ldots,\omega_{n-1})a_n}.  
\end{eqnarray}
Taking into account the obtained inequalities, we have the inequalities
\begin{eqnarray}\label{kisskiss12}
A_N^2\frac {\min\limits_{1\leq n\leq N}a_n^0}{\max\limits_{1\leq n\leq N} a_n} \leq \frac{| \Delta S_n^1(\omega_1, \ldots,\omega_{n-1},\omega_{n} )|}{| \Delta S_n^2(\omega_1, \ldots,\omega_{n-1},\omega_{n} )|} \leq B_N^2 \frac {\max\limits_{1\leq n\leq N}a_n^1}{\min\limits_{1\leq n\leq N} a_n},\quad n=\overline{1,N}.
\end{eqnarray}
The  inequalities (\ref{kisskiss12}) proves that the set of martingale measures for the evolutions of risky assets given by the formulas   (\ref{tinwau1}),   (\ref{tinwau2}) are the same, since the inequalities (\ref{vitasja22}) for the evolutions of risky assets, given by formulas (\ref{tinwau1}),   (\ref{tinwau2}), are fulfilled simultaneously.

For the evolution of risky assets (\ref{tinwau2}), satisfying the conditions (\ref{tinvitochka4tinwau3}), the inequalities (\ref{tinvitochka4tinwau4}), (\ref{tinvitochka4tinwau5}) are true. From this, it follows  that the conditions of Theorem \ref{Tinnna1} are valid. This proves the optional decomposition for every nonnegative super-martingale relative to the family of martingale measures $M.$ From \cite{GoncharNick}, it follows  the formula for the fair price $f_0$ of super-hedge
\begin{eqnarray}\label{vitochkakiss3}
f_0=\sup\limits_{P \in M}E^P f(S_N).
\end{eqnarray}
Further, the conditions of Theorem \ref{100vitakolja121} is also true. Therefore, the formula 
\begin{eqnarray}\label{vitochkakiss4}
\sup\limits_{P \in M}E^P f(S_N)=\sup\limits_{\omega_i^1 \in \Omega_i^{0-}, \omega_i^2 \in \Omega_i^{0+}, i=\overline{1,N}}\int\limits_{\Omega_N}f(S_N) d \mu_{\{\omega_1^1,\omega_1^2\},\ldots ,\{\omega_N^1, \omega_N^2\}}
\end{eqnarray}
is valid. 

To complete  the proof of Theorem \ref{vitochkakiss1}, it needs to show that the set
$M_1\subseteq M.$
Let us denote $S_n^3(\omega_1,\ldots ,\omega_n)$ the evolution of risky asset, given by the formula  (\ref{tochka4tinwau1vito}). Then, as above
\begin{eqnarray}\label{kisskiss13}
\frac{S_n^3(\omega_1,\ldots ,\omega_n)}{S_n^2(\omega_1,\ldots ,\omega_n)}\leq \prod\limits_{i=1}^n\frac{1}{a_i}=C_n,\quad n=\overline{1,N}.
\end{eqnarray}
Therefore,
$$ \frac{| \Delta S_n^3(\omega_1, \ldots,\omega_{n-1},\omega_{n} )|}{| \Delta S_n^2(\omega_1, \ldots,\omega_{n-1},\omega_{n} )|} = $$
\begin{eqnarray}\label{kisskiss14}
 \frac{S_{n-1}^3(\omega_1, \ldots,\omega_{n-1}) }{S_{n-1}^2(\omega_1, \ldots,\omega_{n-1}) a_n}\leq \frac{\max\limits_{1\leq n \leq N}C_n}{\min\limits_{1\leq n \leq N} a_n}, \quad n=\overline{1,N}.
\end{eqnarray}
The inequality (\ref{kisskiss14}) proves the needed statement.
Theorem \ref{vitochkakiss1} is proved.
\end{proof}

 \begin{te}\label{tinvika1}
On the probability space $\{\Omega_N, {\cal F}_N, P_N\},$  let the evolution of risky asset be given by the formula (\ref{tinwau1}). Suppose that $0 \leq a_i(\omega_1, \ldots,\omega_{i-1})\leq  1,$ $ \sigma_i(\omega_1, \ldots,\omega_{i-1})>\sigma_i>0, \  i=\overline{1,N},$
 and $a_n=1$ for a certain $1 \leq n \leq N.$
If the  nonnegative   payoff function $f(x),   \ x \in [0,\infty),$ satisfies the conditions:\\
1) $f(0)=0, \ f(x) \leq a x, \  \lim\limits_{x \to \infty}\frac{f(x)}{x}=a,\ a >0,$
then 
\begin{eqnarray}\label{tinvika3}
\sup\limits_{P \in M}E^P f(S_N)=a S_0.
\end{eqnarray}
If, in addition,  the nonnegative payoff function $f(x)$ is a convex down one, then
\begin{eqnarray}\label{tinvika4}
\inf\limits_{P \in M}E^P f(S_N)=f(S_0), 
\end{eqnarray}
where $M$ is a  set of equivalent martingale measures for the  evolution of  risky asset,  given by the formula (\ref{tinwau1}). The interval of non-arbitrage prices of contingent liability $f(S_N)$ lies in the set $ [f(S_0), a S_0].$
\end{te}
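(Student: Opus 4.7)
The proof separates into the supremum identity $(\ref{tinvika3})$ and, under the convexity hypothesis, the infimum identity $(\ref{tinvika4})$; together they yield the stated interval of non-arbitrage prices. For the upper bound of the supremum, any $P\in M$ makes $\{S_n\}$ a $P$-martingale with $E^P S_N=S_0$, so $f(x)\le ax$ together with $f\ge 0$ gives $E^P f(S_N)\le a\,E^P S_N=aS_0$; hence $\sup_{P\in M}E^P f(S_N)\le aS_0$.

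For the matching lower bound, I build spot measures whose expectations approach $aS_0$. Fix the index $n$ with $a_n\equiv 1$. Using the hypotheses $P_i^0(\varepsilon_i>t)>0$ and $P_i^0(\varepsilon_i<-t)>0$ for every $t>0$, pick, for each $k$, points $\omega_i^{1,k}\in\Omega_i^{0-}$, $\omega_i^{2,k}\in\Omega_i^{0+}$ with $\varepsilon_i(\omega_i^{1,k})=-\delta_k$, $\varepsilon_i(\omega_i^{2,k})=\delta_k$ for $i\ne n$, and $\varepsilon_n(\omega_n^{1,k})=-M_k$, $\varepsilon_n(\omega_n^{2,k})=M_k$, where $\delta_k\downarrow 0$ and $M_k\uparrow\infty$. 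By Theorem~\ref{10vitakolja121} the associated spot measure $\mu_k$ is a martingale measure for $\{S_n\}$, supported on at most $2^N$ atoms. On every atom, the factors $1+a_i(\cdot)(e^{\sigma_i(\cdot)\varepsilon_i(\omega_i^{j,k})}-1)$ for $i\ne n$ converge uniformly to $1$ as $\delta_k\to 0$; since $a_n\equiv 1$, the step-$n$ factor on the down-branch equals $e^{-\sigma_n(\cdot)M_k}\to 0$ and on the up-branch equals $e^{\sigma_n(\cdot)M_k}\to\infty$. Consequently, on paths through $\omega_n^{1,k}$ one has $S_N\to 0$ and $f(S_N)\to f(0)=0$, while on paths through $\omega_n^{2,k}$ the ratio $f(S_N)/S_N\to a$.

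The spot recursion $(\ref{5vitasja125})$ assigns conditional mass
\[
\frac{\Delta S_n^-}{V_n}=\frac{1-e^{-\sigma_n(\cdot)M_k}}{e^{\sigma_n(\cdot)M_k}-e^{-\sigma_n(\cdot)M_k}}
\]
to the up-branch and the complementary mass to the down-branch at step $n$. Multiplying the up-branch mass by $f(S_N)\sim aS_{n-1}e^{\sigma_n(\cdot)M_k}$ yields a contribution that tends to $aS_{n-1}$, while the down-branch contribution vanishes in the limit. Summing the atomic contributions over the $2^{n-1}$ pre-$n$ paths and using the martingale identity $E^{\mu_k}S_{n-1}=S_0$ gives $E^{\mu_k}f(S_N)\to aS_0$, which combined with the upper bound proves $(\ref{tinvika3})$. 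For $(\ref{tinvika4})$ under convex $f$ with $f(0)=0$, Jensen's inequality applied to the $P$-martingale $S_n$ yields $E^P f(S_N)\ge f(E^P S_N)=f(S_0)$ for every $P\in M$; running the above construction with $M_k$ replaced by $\delta_k\to 0$ at step $n$ as well forces $S_N\to S_0$ on every atom of $\mu_k$, and continuity of $f$ on $[0,\infty)$ then gives $E^{\mu_k}f(S_N)\to f(S_0)$.

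The principal technical obstacle is the passage to the limit at step $n$: one must track the cancellation between a vanishing conditional mass of order $e^{-\sigma_n M_k}$ and a payoff of order $e^{\sigma_n M_k}$ on the up-branch, using only the one-sided asymptotic $f(x)/x\to a$ rather than a global bound. The role of the hypothesis $a_n=1$ is decisive—it is precisely this condition that allows $S_n$, and therefore $S_N$ on the down-branch, to be driven to zero, which opens the gap between the super-hedge price $aS_0$ and the initial price $S_0$.
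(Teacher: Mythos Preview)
Your argument is essentially correct and follows the same route as the paper: bound the supremum from above by $f(x)\le ax$ and from below via spot measures with the $n$-th coordinate sent to $\pm\infty$; handle the infimum by Jensen plus a degenerate spot measure. Two points are worth noting.

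First, the paper's version is cleaner because it exploits a hypothesis stated just before $(\ref{tinwau1})$: for every $i$ there exists $\omega_i^1\in\Omega_i^0$ with $\varepsilon_i(\omega_i^1)=0$. Choosing these exact zeros for $s\ne n$ puts all the step-$s$ spot mass on the branch where the multiplicative factor is $1$, so the spot measure collapses immediately to a two-atom measure in the $n$-th coordinate and the limit computation $(\ref{tin15vika9})$ is one line. Your $\delta_k\downarrow 0$ approximation achieves the same thing, but you then have to carry $2^{N-1}$ residual atoms through the $M_k\to\infty$ limit and argue uniformity; this works but is unnecessary given the zero hypothesis. The same remark applies to the infimum: setting every $\varepsilon_i(\omega_i^1)=0$ gives $f(S_0)$ directly, without appeal to continuity of $f$.

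Second, there is a small logical gap. Spot measures are atomic and hence not equivalent to $P_N$, so $\mu_k\notin M$; showing $E^{\mu_k}f(S_N)\to aS_0$ does not by itself bound $\sup_{P\in M}E^Pf(S_N)$ from below. You need to invoke the representation $(\ref{vitochkakiss2})$ of Theorem~\ref{vitochkakiss1} (or directly Theorem~\ref{100vitakolja121}), which identifies the supremum over $M$ with the supremum over spot measures. The paper does this explicitly at the start of its proof. A minor related point: the assumption $P_i^0(\varepsilon_i<-t)>0$ guarantees only that $\{\varepsilon_i<-\delta_k\}$ is nonempty, not that $\varepsilon_i=-\delta_k$ is attained; you should phrase the choice as $\varepsilon_i(\omega_i^{1,k})\in(-2\delta_k,-\delta_k)$ or similar.
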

\begin{proof} 
Since the conditions of  Theorem \ref{vitochkakiss1} are satisfied, then  the formula
\begin{eqnarray}\label{tinmmarstina2}
\sup\limits_{Q \in M}\int\limits_{\Omega_N}f(S_N) dQ=\sup\limits_{\omega_i^1 \in \Omega_i^{0-}, \omega_i^2 \in \Omega_i^{0+}, i=\overline{1,N}}\int\limits_{\Omega_N}f(S_N) d \mu_{\{\omega_1^1,\omega_1^2\},\ldots ,\{\omega_N^1, \omega_N^2\}}
\end{eqnarray}
is true, where for the spot measure $\mu_{\{\omega_1^1,\omega_1^2\},\ldots ,\{\omega_N^1, \omega_N^2\}}(A)$ the representation 

$$\mu_{\{\omega_1^1,\omega_1^2\},\ldots ,\{\omega_N^1, \omega_N^2\}}(A)=$$
\begin{eqnarray}\label{tinvitasja115}
\sum\limits_{i_1=1}^2\ldots \sum\limits_{i_N=1}^2\prod\limits_{j=1}^N\psi_j(\omega_1^{i_1}, \ldots, \omega_j^{i_j})\chi_{A}(\omega_1^{i_1}, \ldots, \omega_N^{i_N}), \quad A \in {\cal F}_N,
\end{eqnarray}
is valid, and 
$$\sup\limits_{\omega_i^1 \in \Omega_i^{0-}, \omega_i^2 \in \Omega_i^{0+}, i=\overline{1,N}}\int\limits_{\Omega_N}f(S_N) d \mu_{\{\omega_1^1,\omega_1^2\},\ldots ,\{\omega_N^1, \omega_N^2\}}=$$

$$\sup\limits_{\omega_i^1 \in \Omega_i^{0-}, \omega_i^2 \in \Omega_i^{0+}, i=\overline{1,N}}\sum\limits_{i_1=1, \ldots,i_N=1}^2 \prod\limits_{j=1}^N\psi_j(\omega_1^{i_1}, \ldots, \omega_j^{i_j}) \times $$ 
\begin{eqnarray}\label{tinvika5}
  f\left(S_0\prod\limits_{s=1}^N\left(1+a_s(\omega_1^{i_{1}}, \ldots, \omega_{s-1}^{i_{s-1}}) 
\left(e^{\sigma_s(\omega_1^{i_{1}}, \ldots, \omega_{s-1}^{i_{s-1}}) \varepsilon_s (\omega_{s}^{i_{s}})}-1\right) \right)\right),
\end{eqnarray}
where we denoted 
 $ \Omega_s^{0-}=\{\omega_s \in \Omega_s^0, \  \varepsilon_s(\omega_{s})\leq 0\},$  
 $ \Omega_s^{0+}=\{\omega_s \in \Omega_s^0, \  \varepsilon_s(\omega_{s})> 0\}.$  
 From the inequality, $f(S_N) \leq a S_N,$ we have 
\begin{eqnarray}\label{tinvika6}
\sup\limits_{Q \in M}\int\limits_{\Omega}f(S_N) dQ \leq a S_0.
\end{eqnarray}
To prove the inverse inequality, we use the inequality
$$\sup\limits_{Q \in M}\int\limits_{\Omega}f(S_N) dQ \geq$$
$$\sum\limits_{i_1=1, \ldots,i_N=1}^2 \prod\limits_{j=1}^N\psi_j(\omega_1^{i_1}, \ldots, \omega_j^{i_j}) \times $$ 
\begin{eqnarray}\label{pupochka5}
f\left(S_0\prod\limits_{s=1}^N\left(1+a_s(\omega_1^{i_{1}}, \ldots, \omega_{s-1}^{i_{s-1}}) 
\left(e^{\sigma_s(\omega_1^{i_{1}}, \ldots, \omega_{s-1}^{i_{s-1}}) \varepsilon_s(\omega_{s}^{i_{s}})}-1\right) \right)\right).
\end{eqnarray}
In the right hand  side of the  last inequality, let us put $\varepsilon_s (\omega_{s}^{1})=0, \ s \neq n.$ Such elementary events $\omega_s^1$ exist, due to the conditions relative to  the random values $ \varepsilon_s(\omega_{s}),\ s=\overline{1,N}.$ 
We obtain 
$$\sum\limits_{i_1=1, \ldots,i_N=1}^2 \prod\limits_{j=1}^N\psi_j(\omega_1^{i_1}, \ldots, \omega_j^{i_j}) \times $$ 
$$f\left(S_0\prod\limits_{s=1}^N\left(1+a_s(\omega_1^{i_{1}}, \ldots, \omega_{s-1}^{i_{s-1}}) 
\left(e^{\sigma_s(\omega_1^{i_{1}}, \ldots, \omega_{s-1}^{i_{s-1}}) \varepsilon_s(\omega_{s}^{i_{s}})}-1\right) \right)\right)=$$
\begin{eqnarray}\label{tin5vika7}
\sum\limits_{i_n=1}^2 \psi_n(\omega_1^{1}, \ldots, \omega_{n-1}^{1},  \omega_n^{i_n})
f\left(S_0
e^{\sigma_n(\omega_1^{1}, \ldots, \omega_{n-1}^{1}) \varepsilon_n(\omega_{n}^{i_n})}  \right).
\end{eqnarray}
 Therefore, 
$$\sup\limits_{Q \in M}\int\limits_{\Omega}f(S_N) dQ \geq$$
\begin{eqnarray}\label{tinpupa5vika7}
\sup\limits_{\omega_n^1 \in \Omega_n^{0-}, \omega_n^2 \in \Omega_n^{0+}}\sum\limits_{i_n=1}^2 \psi_n(\omega_1^{1}, \ldots, \omega_{n-1}^{1},  \omega_n^{i_n})
f\left(S_0
e^{\sigma_n(\omega_1^{1}, \ldots, \omega_{n-1}^{1}) \varepsilon_n(\omega_{n}^{i_n})}  \right).
\end{eqnarray}
Further,
$$\sup\limits_{\omega_n^1 \in \Omega_n^{0-}, \omega_n^2 \in \Omega_n^{0+}} \sum\limits_{i_n=1}^2 \psi_n(\omega_1^{1}, \ldots, \omega_n^{i_n})\times$$
$$ f\left(S_0
e^{\sigma_n(\omega_1^{1}, \ldots, \omega_{n-1}^{1}) \varepsilon_n(\omega_{n}^{i_{n}})}  \right)=$$

$$\sup\limits_{\omega_n^1 \in \Omega_n^{0-}, \omega_n^2 \in \Omega_n^{0+}}\left[\frac{\Delta S_n^+(\omega_1^{1}, \ldots, \omega_{n-1}^{1}, \omega_n^2)}{V_n(\omega_1^{1}, \ldots, \omega_{n-1}^{1}, \omega_n^1, \omega_n^2)} f\left(S_0
e^{\sigma_n(\omega_1^{1}, \ldots, \omega_{n-1}^{1}) \varepsilon_n(\omega_{n}^{1})}\right)+\right.$$
$$\left.\frac{\Delta S_n^-(\omega_1^{1}, \ldots, \omega_{n-1}^{1}, \omega_n^1)}{V_n(\omega_1^{1}, \ldots, \omega_{n-1}^{1}, \omega_n^1, \omega_n^2)} f\left(S_0
e^{\sigma_n(\omega_1^{1}, \ldots, \omega_{n-1}^{1}) \varepsilon_n (\omega_{n}^{2})}\right)\right]\geq$$

$$\lim\limits_{\varepsilon_n(\omega_{n}^{2}) \to \infty} \lim\limits_{\varepsilon_n(\omega_{n}^{1}) \to -\infty}\left[\frac{e^{\sigma_n(\omega_1^{1}, \ldots, \omega_{n-1}^{1}) \varepsilon_n(\omega_{n}^{2})}-1}{e^{\sigma_n(\omega_1^{1}, \ldots, \omega_{n-1}^{1}) \varepsilon_n(\omega_{n}^{2})}- e^{\sigma_n(\omega_1^{1}, \ldots, \omega_{n-1}^{1}) \varepsilon_n(\omega_{n}^{1})}}\times \right. $$ 
$$ f\left(S_0e^{\sigma_n(\omega_1^{1}, \ldots, \omega_{n-1}^{1}) \varepsilon_n(\omega_{n}^{1})}\right)+$$

$$\left.\frac{1-e^{\sigma_n(\omega_1^{1}, \ldots, \omega_{n-1}^{1}) \varepsilon_n(\omega_{n}^{1})}}{e^{\sigma_n (\omega_1^{1}, \ldots, \omega_{n-1}^{1}) \varepsilon_n(\omega_{n}^{2})}- e^{\sigma_n(\omega_1^{1}, \ldots, \omega_{n-1}^{1}) \varepsilon_n(\omega_{n}^{1})}} f\left(S_0
e^{\sigma_n(\omega_1^{1}, \ldots, \omega_{n-1}^{1}) \varepsilon_n(\omega_{n}^{2})}\right)\right]=$$

\begin{eqnarray}\label{tin15vika9}
\lim\limits_{\varepsilon_n (\omega_{n}^{2}) \to \infty}\frac{1}{e^{\sigma_n(\omega_1^{1}, \ldots, \omega_{n-1}^{1}) \varepsilon_n(\omega_{n}^{2})}} f\left(S_0 e^{\sigma_n(\omega_1^{1}, \ldots, \omega_{n-1}^{1}) \varepsilon_n(\omega_{n}^{2})}
\right)=a S_0.
\end{eqnarray}
Substituting the inequality (\ref{tin15vika9}) into the inequality (\ref{tin5vika7}), we obtain the needed inequality.

Let us prove the equality (\ref{tinvika4}).
Using the Jensen inequality, we obtain
\begin{eqnarray}\label{tinvika8}
\inf\limits_{P \in M}E^P f(S_N) \geq f(E^PS_N)=f(S_0).
\end{eqnarray}
Let us prove the inverse inequality. It is  evident that
$$\sum\limits_{i_1=1, \ldots,i_N=1}^2 \prod\limits_{j=1}^N\psi_j(\omega_1^{i_1}, \ldots, \omega_j^{i_j}) \times $$ 

$$f\left(S_0\prod\limits_{s=1}^N\left(1+a_s(\omega_1^{i_{1}}, \ldots, \omega_{s-1}^{i_{s-1}}) 
\left(e^{\sigma_s(\omega_1^{i_{1}}, \ldots, \omega_{s-1}^{i_{s-1}}) \varepsilon_s (\omega_{s}^{i_{s}})}-1\right) \right)\right)\geq	$$

\begin{eqnarray}\label{tinvika9}
 \inf\limits_{P \in M}E^P f(S_N).
\end{eqnarray}
Putting in this inequality $\varepsilon_i(\omega_{i}^{1})=0, \ i=\overline{1,N},$ we obtain the needed.  The last statement about the  interval of non-arbitrage prices  follows from \cite{WalterSchacher} and  \cite{DMW90}. Theorem \ref{tinvika1} is proved.
\end{proof}

\begin{te}\label{tinvika10}
On the probability space $\{\Omega_N, {\cal F}_N, P_N\},$  let the evolution of risky asset be given by the formula (\ref{tinwau1}). Suppose that $0 \leq a_i(\omega_1, \ldots,\omega_{i-1})\leq  1,$ $ \sigma_i(\omega_1, \ldots,\omega_{i-1})>\sigma_i>0, \  i=\overline{1,N},$
 and $a_n=1$ for a certain $1 \leq n \leq N.$
If the nonnegative  payoff function $f(x),   \ x \in [0,\infty),$ satisfies the conditions:\\
1) $f(0)=K, \ f(x) \leq K, $
then 
\begin{eqnarray}\label{tinvika12}
\sup\limits_{P \in M}E^P f(S_N)=K.
\end{eqnarray}
If, in addition, the nonnegative payoff function $f(x)$ is a convex down one, then
\begin{eqnarray}\label{tinvika13}
\inf\limits_{P \in M}E^P f(S_N)=f(S_0), 
\end{eqnarray}
where $M$ is a set of equivalent maqtingale measures for the  evolution of  risky asset,  given by the formula (\ref{tinwau1}).  The interval of non-arbitrage prices of contingent liability $f(S_N)$ coincides with the set $[f(S_0), K].$
\end{te}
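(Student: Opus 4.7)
The plan is to mirror the proof of Theorem \ref{tinvika1}, reducing both the supremum and the infimum of $E^P f(S_N)$ over $P\in M$ to an explicit single-step computation on the spot measures, by exploiting the freedom to choose $\varepsilon_s(\omega_s^1)=0$ at every step $s\neq n$.

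First I would handle the supremum. The bound $\sup_{P\in M} E^P f(S_N)\leq K$ is immediate from $f(x)\leq K$. For the reverse inequality, I would apply the representation from Theorem \ref{vitochkakiss1},
\[
\sup_{P\in M} E^P f(S_N)=\sup_{\omega_i^1\in\Omega_i^{0-},\,\omega_i^2\in\Omega_i^{0+},\,i=\overline{1,N}}\int_{\Omega_N} f(S_N)\,d\mu_{\{\omega_1^1,\omega_1^2\},\ldots,\{\omega_N^1,\omega_N^2\}},
\]
together with the expansion (\ref{tinvitasja115}) of each spot measure as a sum of $2^N$ atoms. Choosing $\omega_s^1$ with $\varepsilon_s(\omega_s^1)=0$ for every $s\neq n$ (such points exist by hypothesis on $\varepsilon_s$), every branch having $i_s=2$ for some $s\neq n$ receives weight zero: indeed $\Delta S_s^-(\omega_s^1)=0$ forces $\psi_s(\omega_s^2)=0$, while $\psi_s(\omega_s^1)=1$, so the nontrivial step-$s$ factor of $S_N$ collapses to $1$. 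The problem therefore reduces to the two atoms at step $n$, where $a_n=1$ makes the step-$n$ update factor $e^{\sigma_n(\omega^1,\ldots,\omega^{n-1})\varepsilon_n(\omega_n^{i_n})}$; the relevant weights are given by (\ref{2tinwau1})--(\ref{3tinwau1}). I would then let $\varepsilon_n(\omega_n^1)\to-\infty$ and $\varepsilon_n(\omega_n^2)\to+\infty$, which is justified by the positivity hypotheses $P_n^0(\varepsilon_n<-t)>0$ and $P_n^0(\varepsilon_n>t)>0$. In the double limit the weight of the down atom tends to $1$ while $S_N\to 0$ there, so by continuity of $f$ at $0$ we get $f(S_N)\to f(0)=K$; the up atom has vanishing weight and $f$ is bounded by $K$, so its contribution disappears, yielding $\sup_{P\in M}E^P f(S_N)\geq K$.

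For the infimum, Jensen's inequality together with the martingale property gives $E^P f(S_N)\geq f(E^P S_N)=f(S_0)$ for every $P\in M$. For the matching upper bound I would use the same spot-measure construction, but now set $\varepsilon_s(\omega_s^1)=0$ for \emph{every} $s$, including $s=n$. The same cancellation argument shows the spot measure concentrates on the single path where $i_s=1$ for all $s$; along that path $S_N=S_0$, so $\int f(S_N)\,d\mu=f(S_0)$, whence $\inf_{P\in M} E^P f(S_N)\leq f(S_0)$. The final assertion about the non-arbitrage price interval $[f(S_0),K]$ then follows from the two extremal evaluations above combined with the general criterion of \cite{WalterSchacher} and \cite{DMW90} identifying the non-arbitrage prices with the range of expected payoffs over the equivalent martingale measures.

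The main obstacle I anticipate is the apparent degeneracy when $\varepsilon_s(\omega_s^1)=0$: the quantity $\Delta S_s^-(\omega_s^1)$ vanishes, and one must verify that the spot-measure weights $\psi_s$ extend continuously to this limiting configuration. I would resolve this by observing, from (\ref{2tinwau1})--(\ref{3tinwau1}), that the ratios $\Delta S_s^\pm/V_s$ depend on $\varepsilon_s$ only through the expression $(e^{\sigma_s\varepsilon_s^+}-1)/(e^{\sigma_s\varepsilon_s^+}-e^{\sigma_s\varepsilon_s^-})$, which is jointly continuous at $\varepsilon_s^-=0$ with limiting values $1$ and $0$ respectively; alternatively one may take a sequence $\varepsilon_s(\omega_s^1)\to 0^-$ along strictly negative values, which exists by the positivity hypothesis, and pass to the limit in the expected payoff.
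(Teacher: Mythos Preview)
Your proposal is correct and follows essentially the same route as the paper: the trivial upper bound $K$, the spot-measure reduction via $\varepsilon_s(\omega_s^1)=0$ for $s\neq n$ followed by the double limit $\varepsilon_n(\omega_n^1)\to-\infty$, $\varepsilon_n(\omega_n^2)\to+\infty$ at step $n$ to reach $f(0)=K$, Jensen for the lower bound on the infimum, and the choice $\varepsilon_i(\omega_i^1)=0$ for all $i$ for the matching upper bound. Your explicit treatment of the degeneracy at $\varepsilon_s(\omega_s^1)=0$ is in fact more careful than the paper's, which simply performs the substitution without comment.
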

\begin{proof}
Due to Theorem \ref{vitochkakiss1}, the equality
\begin{eqnarray}\label{12tinmmarstina2}
\sup\limits_{Q \in M}\int\limits_{\Omega_N}f(S_N) dQ=\sup\limits_{\omega_i^1 \in \Omega_i^{0-}, \omega_i^2 \in \Omega_i^{0+}, i=\overline{1,N}}\int\limits_{\Omega_N}f(S_N) d \mu_{\{\omega_1^1,\omega_1^2\},\ldots ,\{\omega_N^1, \omega_N^2\}}
\end{eqnarray}
is valid, where for the spot measure $\mu_{\{\omega_1^1,\omega_1^2\},\ldots ,\{\omega_N^1, \omega_N^2\}}(A)$ the representation 
$$\mu_{\{\omega_1^1,\omega_1^2\},\ldots ,\{\omega_N^1, \omega_N^2\}}(A)=$$
\begin{eqnarray}\label{tinvitasja115}
\sum\limits_{i_1=1}^2\ldots \sum\limits_{i_N=1}^2\prod\limits_{j=1}^N\psi_j(\omega_1^{i_1}, \ldots, \omega_j^{i_j})\chi_{A}(\omega_1^{i_1}, \ldots, \omega_N^{i_N}), \quad A \in {\cal F}_N,
\end{eqnarray}
is true, and
$$\sup\limits_{\omega_i^1 \in \Omega_i^{0-}, \omega_i^2 \in \Omega_i^{0+}, i=\overline{1,N}}\int\limits_{\Omega_N}f(S_N) d \mu_{\{\omega_1^1,\omega_1^2\},\ldots ,\{\omega_N^1, \omega_N^2\}}=$$

$$\sup\limits_{\omega_i^1 \in \Omega_i^{0-}, \omega_i^2 \in \Omega_i^{0+}, i=\overline{1,N}}\sum\limits_{i_1=1, \ldots,i_N=1}^2 \prod\limits_{j=1}^N\psi_j(\omega_1^{i_1}, \ldots, \omega_j^{i_j}) \times $$ 
\begin{eqnarray}\label{tinvika5}
  f\left(S_0\prod\limits_{s=1}^N\left(1+a_s(\omega_1^{i_{1}}, \ldots, \omega_{s-1}^{i_{s-1}}) 
\left(e^{\sigma_s(\omega_1^{i_{1}}, \ldots, \omega_{s-1}^{i_{s-1}}) \varepsilon_s(\omega_{s}^{i_{s}})}-1\right) \right)\right).
\end{eqnarray}
It is evident that
\begin{eqnarray}\label{tinvika14}
\sup\limits_{P \in M}E^P f(S_N)\leq K.
\end{eqnarray}
Further,
$$\sup\limits_{Q \in M}\int\limits_{\Omega}f(S_N) dQ \geq$$
$$\sum\limits_{i_1=1, \ldots,i_N=1}^2 \prod\limits_{j=1}^N\psi_j(\omega_1^{i_1}, \ldots, \omega_j^{i_j}) \times $$ 
\begin{eqnarray}\label{putinvikapu14}
  f\left(S_0\prod\limits_{s=1}^N\left(1+a_s(\omega_1^{i_{1}}, \ldots, \omega_{s-1}^{i_{s-1}}) 
\left(e^{\sigma_s(\omega_1^{i_{1}}, \ldots, \omega_{s-1}^{i_{s-1}}) \varepsilon_s(\omega_{s}^{i_{s}})}-1\right) \right)\right).
\end{eqnarray}
In the right hand side of the last inequality, let us put $\varepsilon_s (\omega_{s}^{1})=0, \ s \neq n.$
We obtain 
$$\sum\limits_{i_1=1, \ldots,i_N=1}^2 \prod\limits_{j=1}^N\psi_j(\omega_1^{i_1}, \ldots, \omega_j^{i_j}) \times $$ 
$$f\left(S_0\prod\limits_{s=1}^N\left(1+a_s(\omega_1^{i_{1}}, \ldots, \omega_{s-1}^{i_{s-1}}) 
\left(e^{\sigma_s(\omega_1^{i_{1}}, \ldots, \omega_{s-1}^{i_{s-1}}) \varepsilon_s(\omega_{s}^{i_{s}})}-1\right) \right)\right)=$$
\begin{eqnarray}\label{1tin5vika7}
\sum\limits_{i_n=1}^2 \psi_n(\omega_1^{1}, \ldots, \omega_{n-1}^{1}, \omega_n^{i_n})f\left(S_0
e^{\sigma_n (\omega_1^{1}, \ldots, \omega_{n-1}^{1}) \varepsilon_n (\omega_{n}^{i_{n}})}\right).
\end{eqnarray}
From the last equality, we obtain
$$\sup\limits_{Q \in M}\int\limits_{\Omega}f(S_N) dQ \geq$$
\begin{eqnarray}\label{1tinpupas5vika7}
\sup\limits_{\omega_n^1 \in \Omega_n^{0-}, \omega_n^2 \in \Omega_n^{0+}}\sum\limits_{i_n=1}^2 \psi_n(\omega_1^{1}, \ldots, \omega_{n-1}^{1}, \omega_n^{i_n})f\left(S_0
e^{\sigma_n(\omega_1^{1}, \ldots, \omega_{n-1}^{1}) \varepsilon_n(\omega_{n}^{i_{n}})}\right).
\end{eqnarray}
Further,
$$\sup\limits_{\omega_n^1 \in \Omega_n^{0-}, \omega_n^2 \in \Omega_n^{0+}} \sum\limits_{i_n=1}^2 \psi_n(\omega_1^{1}, \ldots,  \omega_{n-1}^{1}, \omega_n^{i_n})f\left(S_0
e^{\sigma_n(\omega_1^{1}, \ldots, \omega_{n-1}^{1}) \varepsilon_n(\omega_{n}^{i_{n}})}\right)=$$ 
$$\sup\limits_{\omega_n^1 \in \Omega_n^{0-}, \omega_n^2 \in \Omega_n^{0+}}\left[\frac{\Delta S_n^+(\omega_1^{1}, \ldots, \omega_{n-1}^{1}, \omega_n^2)}{V_n(\omega_1^{1}, \ldots, \omega_{n-1}^{1}, \omega_n^1, \omega_n^2)} f\left(S_0
e^{\sigma_n (\omega_1^{1}, \ldots, \omega_{n-1}^{1}) \varepsilon_n (\omega_{n}^{1})}\right)+\right.$$
$$\left.\frac{\Delta S_n^-(\omega_1^{1}, \ldots, \omega_{n-1}^{1}, \omega_n^1)}{V_n(\omega_1^{1}, \ldots, \omega_{n-1}^{1}, \omega_n^1, \omega_n^2)} f\left(S_0
e^{\sigma_n(\omega_1^{1}, \ldots, \omega_{n-1}^{1}) \varepsilon_n(\omega_{n}^{2})}\right)\right]\geq$$

$$\lim\limits_{\varepsilon(\omega_{n}^{2}) \to \infty} \lim\limits_{\varepsilon(\omega_{n}^{1}) \to -\infty}\left[\frac{e^{\sigma_n(\omega_1^{1}, \ldots, \omega_{n-1}^{1}) \varepsilon_n(\omega_{n}^{2})}-1}{e^{\sigma_n(\omega_1^{1}, \ldots, \omega_{n-1}^{1}) \varepsilon_n (\omega_{n}^{2})}- e^{\sigma_n(\omega_1^{1}, \ldots, \omega_{n-1}^{1}) \varepsilon_n(\omega_{n}^{1})}} f\left(S_0
e^{\sigma_n(\omega_1^{1}, \ldots, \omega_{n-1}^{1}) \varepsilon_n(\omega_{n}^{1})}\right)+\right.$$

$$\left.\frac{1-e^{\sigma_n(\omega_1^{1}, \ldots, \omega_{n-1}^{1}) \varepsilon_n(\omega_{n}^{1})}}{e^{\sigma_n(\omega_1^{1}, \ldots, \omega_{n-1}^{1}) \varepsilon_n(\omega_{n}^{2})}- e^{\sigma_n(\omega_1^{1}, \ldots, \omega_{n-1}^{1}) \varepsilon_n(\omega_{n}^{1})}} f\left(S_0
e^{\sigma_n(\omega_1^{1}, \ldots, \omega_{n-1}^{1}) \varepsilon_n(\omega_{n}^{2})}\right)\right]=$$
\begin{eqnarray}\label{1tin15vika9}
f(0)=K.
\end{eqnarray}
Substituting the inequality (\ref{1tin15vika9}) into the inequality (\ref{1tin5vika7}), we obtain the needed inequality.

Let us prove the equality (\ref{tinvika13}). Due to the convexity of the  payoff function $f(x),$  using the Jensen  inequality, we obtain
\begin{eqnarray}\label{tinvika17}
\inf\limits_{P \in M}E^P f(S_N) \geq f(E^PS_N)=f(S_0).
\end{eqnarray}
Let us prove the inverse inequality. It is  evident that
$$\sum\limits_{i_1=1, \ldots,i_N=1}^2 \prod\limits_{j=1}^N\psi_j(\omega_1^{i_1}, \ldots, \omega_j^{i_j}) \times $$ 

$$  f\left(S_0\prod\limits_{s=1}^N\left(1+a_s(\omega_1^{i_{1}}, \ldots, \omega_{s-1}^{i_{s-1}}) 
\left(e^{\sigma_s(\omega_1^{i_{1}}, \ldots, \omega_{s-1}^{i_{s-1}}) \varepsilon_s(\omega_{s}^{i_{s}})}-1\right) \right)\right)\geq $$
\begin{eqnarray}\label{tinvika18}
 \inf\limits_{P \in M}E^P f(S_N).
\end{eqnarray}
Putting in this inequality $\varepsilon_i(\omega_{i}^{1})=0, \  i=\overline{1,N},$ we obtain the needed.  The last statement about the  interval of non-arbitrage prices  follows from \cite{WalterSchacher} and  \cite{DMW90}.
Theorem \ref{tinvika10} is proved.
\end{proof}

\begin{te}\label{tinnavika36}
On the probability space $\{\Omega_N, {\cal F}_N, P_N\},$  let the evolution of risky asset be given by the formula (\ref{tinwau2}). Suppose that $0 \leq a_i\leq  1,$ $ \sigma_i(\omega_1, \ldots,\omega_{i-1})>\sigma_i>0,$
$ \   i=\overline{1,N}.$ 
If the nonnegative payoff function $f(x),   \ x \in [0,\infty),$ satisfies the conditions:\\
1) $f(0)=0, \ f(x) \leq a x, \  \lim\limits_{x \to \infty}\frac{f(x)}{x}=a,\ a >0,$
then the inequalities
\begin{eqnarray}\label{tinnavika37}
f\left(S_0 \prod\limits_{i=1}^N(1-a_i)\right)+ a S_0 \left(1- \prod\limits_{i=1}^N(1-a_i)\right) \leq \sup\limits_{P \in M}E^P f(S_N)\leq a S_0
\end{eqnarray}
are true.
If, in addition,  the nonnegative payoff function $f(x)$ is a convex down one, then
\begin{eqnarray}\label{tinnavika38}
\inf\limits_{P \in M}E^P f(S_N)=f(S_0), 
\end{eqnarray}
where $M$ is the set of equivalent martingale measures for the  evolution of  risky asset,  given by the formula (\ref{tinwau2}).
\end{te}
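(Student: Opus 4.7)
The upper bound $\sup_{P\in M}E^Pf(S_N)\leq aS_0$ is immediate from $f(x)\leq ax$ and the martingale property $E^PS_N=S_0$, so the work lies entirely in the lower bound and in the infimum. The plan is to verify, exactly as in the proof of Theorem \ref{vitochkakiss1}, that the evolution \eqref{tinwau2} under the present hypotheses satisfies the structural assumptions $\Omega_n^\pm=\Omega_n^{0\pm}\times\Omega_{n-1}$, the moment bound \eqref{tinvitochka4tinwau1} and hence \eqref{vitunjawau2}, and the pointwise bounds \eqref{tinvitochka4tinwau4}--\eqref{tinvitochka4tinwau5} with $a_n^0=a_n^1=a_n$. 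Consequently the spot-measure representation applies and
\[
\sup_{P\in M}E^Pf(S_N)=\sup_{\omega_i^1\in\Omega_i^{0-},\,\omega_i^2\in\Omega_i^{0+}}\int_{\Omega_N}f(S_N)\,d\mu_{\{\omega_1^1,\omega_1^2\},\ldots,\{\omega_N^1,\omega_N^2\}}.
\]

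For the lower bound I expand the spot sum over index vectors $(i_1,\ldots,i_N)\in\{1,2\}^N$, identified with subsets $T=\{j:i_j=2\}$, and write $x_j=\sigma_j(\omega_1^1,\ldots,\omega_{j-1}^1)\varepsilon_j(\omega_j^2)>0$, $y_j=-\sigma_j(\omega_1^1,\ldots,\omega_{j-1}^1)\varepsilon_j(\omega_j^1)>0$. The elementary identities
\[
\psi_j(\omega_j^1)=\frac{e^{x_j}-1}{e^{x_j}-e^{-y_j}},\qquad \psi_j(\omega_j^2)=\frac{1-e^{-y_j}}{e^{x_j}-e^{-y_j}},
\]
together with $(1+a_j(e^{x_j}-1))\psi_j(\omega_j^2)\to a_j(1-e^{-y_j})$ as $x_j\to\infty$ and $(1-a_j(1-e^{-y_j}))\to(1-a_j)$ as $y_j\to\infty$, control every factor. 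For $T=\emptyset$ the associated summand converges, in the iterated limit $x_j,y_j\to\infty$, to $f(S_0\prod_{j=1}^N(1-a_j))$. For $T\neq\emptyset$, the argument of $f$ tends to $+\infty$, and $\lim_{x\to\infty}f(x)/x=a$ combined with the factorisation above yields the limit $aS_0\prod_{j\in T}a_j\prod_{j\notin T}(1-a_j)$. Summing over $T$ and using $\sum_{T\subseteq\{1,\ldots,N\}}\prod_{j\in T}a_j\prod_{j\notin T}(1-a_j)=1$ gives
\[
\sup_{P\in M}E^Pf(S_N)\;\geq\;f\!\left(S_0\prod_{i=1}^N(1-a_i)\right)+aS_0\!\left(1-\prod_{i=1}^N(1-a_i)\right),
\]
which is the claim.

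For the infimum, Jensen's inequality applied to the convex function $f$ and the martingale $S_n$ gives $\inf_{P\in M}E^Pf(S_N)\geq f(E^PS_N)=f(S_0)$. For the reverse inequality I invoke the standing assumption that there exists $\omega_i^1\in\Omega_i^0$ with $\varepsilon_i(\omega_i^1)=0$, which necessarily lies in $\Omega_i^{0-}$. With this choice, $\Delta S_i^-(\omega_1,\ldots,\omega_{i-1},\omega_i^1)=0$ for every $i$, so $\psi_i^2\equiv 0$ and $\psi_i^1\equiv 1$ along the path $(\omega_1^1,\ldots,\omega_N^1)$; only the all-"down" term survives in the spot-measure sum, producing $\int f(S_N)\,d\mu=f(S_0)$, whence $\inf_{P\in M}E^Pf(S_N)\leq f(S_0)$.

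The delicate step I expect to be the main obstacle is the passage to the limit $x_j,y_j\to\infty$ inside the \emph{finite but coupled} spot sum: one must simultaneously drive the "up" weight $\psi_j(\omega_j^2)$ to zero and the "up" payoff factor $(1+a_j(e^{x_j}-1))$ to infinity at exactly compensating rates, for every $j\in T$ and for every $T$. The linear asymptotic $f(x)/x\to a$ gives the correct limit constant, while the global bound $f(x)\leq ax$ prevents the $T\neq\emptyset$ contributions from overshooting; after this step the remaining manipulations are the Bernoulli identity for $\sum_T\prod_{j\in T}a_j\prod_{j\notin T}(1-a_j)$ and the direct evaluation at $\varepsilon_i(\omega_i^1)=0$ used for the infimum.
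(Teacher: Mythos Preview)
Your proof is correct, but the lower-bound argument follows a genuinely different route from the paper's.

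The paper proceeds \emph{iteratively}: it first takes $\sup$ over $\omega_N^1,\omega_N^2$ alone (sending $\varepsilon_N(\omega_N^1)\to-\infty$, $\varepsilon_N(\omega_N^2)\to+\infty$) to obtain the one-step reduction
\[
\sup_{\omega_N^1,\omega_N^2}\sum_{i_N}\psi_N\cdot f(\cdots)\;\geq\;f\bigl(S_{N-1}(1-a_N)\bigr)+a\,a_N S_{N-1},
\]
then substitutes this back into the $(N{-}1)$-dimensional spot sum and repeats, arriving at $f\bigl(S_0\prod(1-a_i)\bigr)+aS_0\sum_{i}a_i\prod_{s>i}(1-a_s)$, which telescopes to the stated bound. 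Your approach instead expands the full $2^N$-term spot sum over subsets $T$, takes the simultaneous limit in all $\varepsilon_j$, and closes with the Bernoulli identity $\sum_{T}\prod_{j\in T}a_j\prod_{j\notin T}(1-a_j)=1$. Your method makes the combinatorial structure explicit; the paper's recursion avoids the $2^N$-expansion and handles the path-dependence of $\sigma_j$ more cleanly, since at each step the prefix $(\omega_1^{i_1},\dots,\omega_{j-1}^{i_{j-1}})$ is held fixed while only the last coordinate moves.

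One notational slip worth flagging: you write $x_j=\sigma_j(\omega_1^1,\dots,\omega_{j-1}^1)\varepsilon_j(\omega_j^2)$, but in the spot sum $\sigma_j$ is evaluated at the actual path prefix $(\omega_1^{i_1},\dots,\omega_{j-1}^{i_{j-1}})$, which varies with $T$. This does not damage the argument---the uniform lower bound $\sigma_j(\cdot)\geq\sigma_j>0$ guarantees all exponentials diverge in the right direction and the ratio $\psi_j(\omega_j^2)\cdot\bigl(1+a_j(e^{x_j'}-1)\bigr)\to a_j$ holds along every path---but you should make this explicit rather than suppress the dependence. The upper bound and the infimum (Jensen plus the choice $\varepsilon_i(\omega_i^1)=0$) are handled exactly as in the paper.
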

\begin{proof} As before,
$$a S_0 \geq \sup\limits_{\omega_i^1 \in \Omega_i^{0-}, \omega_i^2 \in \Omega_i^{0+}, i=\overline{1,N}}\int\limits_{\Omega_N}f(S_N) d \mu_{\{\omega_1^1,\omega_1^2\},\ldots ,\{\omega_N^1, \omega_N^2\}}=$$

$$\sup\limits_{\omega_i^1 \in \Omega_i^{0-}, \omega_i^2 \in \Omega_i^{0+}, i=\overline{1,N}}\sum\limits_{i_1=1, \ldots,i_N=1}^2 \prod\limits_{j=1}^N\psi_j(\omega_1^{i_1}, \ldots, \omega_j^{i_j}) \times $$ 
\begin{eqnarray}\label{1tinvika5}
  f\left(S_0\prod\limits_{s=1}^N\left(1+a_s 
\left(e^{\sigma_s(\omega_1^{i_{1}}, \ldots, \omega_{s-1}^{i_{s-1}}) \varepsilon_s(\omega_{s}^{i_{s}})}-1\right) \right)\right).
\end{eqnarray}

$$\sup\limits_{\omega_N^1 \in \Omega_N^{0-}, \omega_N^2 \in \Omega_N^{0+}} \sum\limits_{i_N=1}^2 \psi_N(\omega_1^{i_1}, \ldots, \omega_N^{i_N})  \times $$ 
$$f\left(S_0\prod\limits_{s=1}^N\left(1+a_s 
\left(e^{\sigma_s(\omega_1^{i_{1}}, \ldots, \omega_{s-1}^{i_{s-1}}) \varepsilon_s(\omega_{s}^{i_{s}})}-1\right) \right)\right)
=$$ 
$$\sup\limits_{\omega_N^1 \in \Omega_N^{0-}, \omega_N^2 \in \Omega_N^{0+}}\left[\frac{\Delta S_N^+(\omega_1^{i_{1}}, \ldots, \omega_{N-1}^{i_{N-1}}, \omega_N^2)}{V_N(\omega_1^{i_{1}}, \ldots, \omega_{N-1}^{i_{N-1}}, \omega_N^1, \omega_N^2)}\times\right.$$ 
$$
 f\left(S_{N-1}\left(1+a_N 
\left(e^{\sigma_N(\omega_1^{i_{1}}, \ldots, \omega_{N-1}^{i_{N-1}}) \varepsilon_N(\omega_{N}^{1})}-1\right) \right)\right)+$$

$$\left.\frac{\Delta S_N^-(\omega_1^{i_{1}}, \ldots, \omega_{N-1}^{i_{N-1}}, \omega_N^1)}{V_N(\omega_1^{i_{1}}, \ldots, \omega_{N-1}^{i_{N-1}}, \omega_N^1, \omega_N^2)}
 f\left(S_{N-1}\left(1+a_N 
\left(e^{\sigma_N(\omega_1^{i_{1}}, \ldots, \omega_{N-1}^{i_{N-1}}) \varepsilon_N(\omega_{N}^{2})}-1\right) \right)\right)
\right]\geq$$

$$\lim\limits_{\varepsilon_N(\omega_{N}^{2}) \to \infty} \lim\limits_{\varepsilon_N(\omega_{N}^{1}) \to -\infty}\left[\frac{e^{\sigma_N(\omega_1^{i_{1}}, \ldots, \omega_{N-1}^{i_{N-1}}) \varepsilon_N(\omega_{N}^{2})}-1}{e^{\sigma_N(\omega_1^{i_{1}}, \ldots, \omega_{N-1}^{i_{N-1}}) \varepsilon_N(\omega_{N}^{2})}- e^{\sigma_N(\omega_1^{i_{1}}, \ldots, \omega_{N-1}^{i_{N-1}}) \varepsilon_N(\omega_{N}^{1})}}\times\right.$$
$$ f\left(S_{N-1}\left(1+a_N 
\left(e^{\sigma_N(\omega_1^{i_{1}}, \ldots, \omega_{N-1}^{i_{N-1}}) \varepsilon_N(\omega_{N}^{1})}-1\right) \right)\right)
+$$

$$\left.\frac{1-e^{\sigma_N(\omega_1^{i_{1}}, \ldots, \omega_{N-1}^{i_{N-1}}) \varepsilon_N(\omega_{N}^{1})}}{e^{\sigma_N(\omega_1^{i_{1}}, \ldots, \omega_{N-1}^{i_{N-1}}) \varepsilon_N(\omega_{N}^{2})}- e^{\sigma_N(\omega_1^{i_{1}}, \ldots, \omega_{N-1}^{i_{N-1}}) \varepsilon_N(\omega_{N}^{1})}} \times\right.$$
$$ \left. f\left(S_{N-1}\left(1+a_N 
\left(e^{\sigma_N(\omega_1^{i_{1}}, \ldots, \omega_{N-1}^{i_{N-1}}) \varepsilon_N(\omega_{N}^{2})}-1\right) \right)\right)\right]=$$
\begin{eqnarray}\label{tinna1tinvika5}
 f(S_{N-1}(1-a_N))+a a_N S_{N-1},
\end{eqnarray}
where we put
\begin{eqnarray}\label{pupsyk5}
S_{N-1}=S_0\prod\limits_{s=1}^{N-1}\left(1+a_s 
\left(e^{\sigma_s(\omega_1^{i_{1}}, \ldots, \omega_{s-1}^{i_{s-1}}) \varepsilon_s(\omega_{s}^{i_{s}})}-1\right) \right).
\end{eqnarray}
Substituting  the inequality (\ref{tinna1tinvika5}) into  (\ref{1tinvika5}), we obtain the inequality

$$\sup\limits_{\omega_i^1 \in \Omega_i^{0-}, \omega_i^2 \in \Omega_i^{0+}, i=\overline{1,N}}\sum\limits_{i_1=1, \ldots,i_N=1}^2 \prod\limits_{j=1}^N\psi_j(\omega_1^{i_1}, \ldots, \omega_j^{i_j}) \times $$ 
$$ f\left(S_0\prod\limits_{s=1}^N\left(1+a_s 
\left(e^{\sigma_s(\omega_1^{i_{1}}, \ldots, \omega_{s-1}^{i_{s-1}}) \varepsilon_s(\omega_{s}^{i_{s}})}-1\right) \right)\right)\geq$$

$$\sup\limits_{\omega_i^1 \in \Omega_i^{0-}, \omega_i^2 \in \Omega_i^{0+}, i=\overline{1,N-1}}\sum\limits_{i_1=1, \ldots,i_{N-1}=1}^2 \prod\limits_{j=1}^{N-1}\psi_j(\omega_1^{i_1}, \ldots, \omega_j^{i_j}) \times $$ 
\begin{eqnarray}\label{tinna12vika1}
 f\left(S_0(1-a_N)\prod\limits_{s=1}^{N-1}\left(1+a_s 
\left(e^{\sigma_s(\omega_1^{i_{1}}, \ldots, \omega_{s-1}^{i_{s-1}}) \varepsilon_s(\omega_{s}^{i_{s}})}-1\right) \right)\right)+  a a_N S_0. 
\end{eqnarray}
Applying $(N-1)$ times the inequality (\ref{tinna12vika1}), we obtain the inequality

$$\sup\limits_{Q \in M}\int\limits_{\Omega}f(S_N) dQ \geq f(S_0 \prod\limits_{i=1}^N(1-a_i))+ a S_0 \sum\limits_{i=1}^N a_i \prod\limits_{s=i+1}^N(1-a_s)=$$
\begin{eqnarray}\label{tinna3vika1}
f\left(S_0 \prod\limits_{i=1}^N(1-a_i)\right)+ a S_0\left(1-\prod\limits_{i=1}^N(1-a_i)\right).
\end{eqnarray}

Let us prove the equality (\ref{tinnavika38}). Using the Jensen inequality, we obtain
\begin{eqnarray}\label{tinna1vika38}
\inf\limits_{P \in M}E^P f(S_N)\geq f(S_0). 
\end{eqnarray}
Let us prove the inverse inequality. It is  evident that 
$$\sum\limits_{i_1=1, \ldots,i_N=1}^2 \prod\limits_{j=1}^N\psi_j(\omega_1^{i_1}, \ldots, \omega_j^{i_j}) \times $$ 
$$  f\left(S_0\prod\limits_{s=1}^N\left(1+a_s 
\left(e^{\sigma_s(\omega_1^{i_{1}}, \ldots, \omega_{s-1}^{i_{s-1}}) \varepsilon_s(\omega_{s}^{i_{s}})}-1\right) \right)\right)\geq $$
\begin{eqnarray}\label{tinna1vika3}
 \inf\limits_{P \in M}E^P f(S_N).
\end{eqnarray}
Putting in the inequality (\ref{tinna1vika3}) $\varepsilon_n(\omega_n)=0, n=\overline{1,N}, $ we obtain the inverse inequality.
\end{proof}

\begin{te}\label{tinnvika39}
On the probability space $\{\Omega_N, {\cal F}_N, P_N\},$  let the evolution of risky asset be given by the formula (\ref{tinwau2}). Suppose that $0 \leq a_i\leq  1,$ $ \sigma_i(\omega_1, \ldots,\omega_{i-1})>\sigma_i>0,$
$ \   i=\overline{1,N}.$ 
If the nonnegative payoff function $f(x),   \ x \in [0,\infty),$ satisfies the  conditions:\\
1) $f(0)=K, \ f(x) \leq K, $
then 
\begin{eqnarray}\label{tinnvika40}
f\left(S_0\prod\limits_{i=1}^N(1-a_i)\right)  \leq \sup\limits_{P \in M}E^P f(S_N)\leq K.
\end{eqnarray}
If, in addition,  the nonnegative payoff function $f(x)$ is a convex down one, then
\begin{eqnarray}\label{pusstinnvika41}
\inf\limits_{P \in M}E^P f(S_N)=f(S_0), 
\end{eqnarray}
where $M$ is the set of equivalent martingale measures for the  evolution of  risky asset, given by the formula (\ref{tinwau2}).
\end{te}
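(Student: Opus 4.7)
The plan is to mirror closely the proof of Theorem \ref{tinnavika36}, adapting only those steps where the growth hypothesis on $f$ enters. First, the upper bound $\sup_{P\in M} E^P f(S_N)\le K$ is immediate from $f(x)\le K$ and the fact that every $P\in M$ is a probability measure; no further argument is needed. For the lower bound, I would invoke Theorem \ref{vitochkakiss1} to reduce the supremum over $M$ to a supremum over the spot measures, so that
\[
\sup_{P\in M}E^P f(S_N)=\sup_{\omega_i^1\in\Omega_i^{0-},\,\omega_i^2\in\Omega_i^{0+}}\sum_{i_1,\dots,i_N=1}^{2}\prod_{j=1}^{N}\psi_j(\omega_1^{i_1},\dots,\omega_j^{i_j})\,f\!\left(S_0\prod_{s=1}^{N}\bigl(1+a_s(e^{\sigma_s(\cdots)\varepsilon_s(\omega_s^{i_s})}-1)\bigr)\right).
\]

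Next I would carry out, exactly as in (\ref{1tinvika5})--(\ref{tinna1tinvika5}), the one-step supremum over $(\omega_N^1,\omega_N^2)$, letting $\varepsilon_N(\omega_N^1)\to-\infty$ and $\varepsilon_N(\omega_N^2)\to+\infty$. The first term converges to $f(S_{N-1}(1-a_N))$ because its weight tends to $1$ and its argument tends to $S_{N-1}(1-a_N)$. In the second term the weight behaves like $e^{-\sigma_N\varepsilon_N(\omega_N^2)}$, and since $f$ is uniformly bounded by $K$ (this is where the present hypothesis replaces the linear-growth hypothesis of Theorem \ref{tinnavika36}), the whole product tends to $0$. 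The single-step bound therefore becomes
\[
\sup_{\omega_N^1,\omega_N^2}\sum_{i_N=1}^{2}\psi_N(\cdot,\omega_N^{i_N})f(S_N)\;\ge\;f(S_{N-1}(1-a_N)),
\]
which is the analogue of (\ref{tinna1tinvika5}) with the linear contribution $a\,a_N S_{N-1}$ replaced by $0$. Iterating $N$ times, exactly as in (\ref{tinna12vika1})--(\ref{tinna3vika1}), gives
\[
\sup_{P\in M}E^P f(S_N)\;\ge\;f\!\left(S_0\prod_{i=1}^{N}(1-a_i)\right),
\]
which combined with the trivial upper bound proves (\ref{tinnvika40}).

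For (\ref{pusstinnvika41}) with $f$ convex, Jensen's inequality together with the martingale property $E^P S_N=S_0$ gives $\inf_{P\in M}E^P f(S_N)\ge f(S_0)$. The reverse inequality is obtained by the same device as in Theorems \ref{tinvika1} and \ref{tinnavika36}: in the spot-measure expansion of $E^P f(S_N)$, choose $\varepsilon_i(\omega_i^1)=\varepsilon_i(\omega_i^2)=0$ for every $i$ (such points exist by the standing assumption that each $\varepsilon_i$ has a zero), which makes the argument of $f$ collapse to $S_0$ and the weights sum to $1$, yielding $f(S_0)$.

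The only non-routine step is the passage to the limit in the second summand of the one-step supremum, where the linear-growth cancellation that produced the $a\,a_N S_{N-1}$ term in the previous theorem is no longer available. I expect this to be the only place where care is needed; uniform boundedness of $f$ resolves it cleanly, since the weight $\frac{1-e^{\sigma_N\varepsilon_N(\omega_N^1)}}{e^{\sigma_N\varepsilon_N(\omega_N^2)}-e^{\sigma_N\varepsilon_N(\omega_N^1)}}$ is $O(e^{-\sigma_N\varepsilon_N(\omega_N^2)})$ and $f$ is bounded, so the product vanishes regardless of how $f$ behaves at infinity. Everything else is a transcription of the argument already given for Theorem \ref{tinnavika36}.
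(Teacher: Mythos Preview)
Your proposal is correct and follows essentially the same route as the paper: the trivial upper bound from $f\le K$, the one-step limit $\varepsilon_N(\omega_N^1)\to-\infty$, $\varepsilon_N(\omega_N^2)\to+\infty$ in the spot-measure expansion (where boundedness of $f$ kills the second summand, yielding $f(S_{N-1}(1-a_N))$ with no additive linear term), iteration down to $f(S_0\prod(1-a_i))$, and Jensen plus degeneration for the infimum. One small slip: you cannot take $\varepsilon_i(\omega_i^2)=0$, since $\omega_i^2\in\Omega_i^{0+}$ forces $\varepsilon_i(\omega_i^2)>0$; the paper (as in Theorems \ref{tinvika1} and \ref{tinnavika36}) sets only $\varepsilon_i(\omega_i^1)=0$, which already makes $\Delta S_i^-=0$, so the weight on $\omega_i^2$ vanishes and the spot expectation collapses to $f(S_0)$ regardless of $\omega_i^2$.
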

\begin{proof}
Let us obtain the estimate from below. Really,
$$a K \geq \sup\limits_{\omega_i^1 \in \Omega_i^{0-}, \omega_i^2 \in \Omega_i^{0+}, i=\overline{1,N}}\int\limits_{\Omega_N}f(S_N) d \mu_{\{\omega_1^1,\omega_1^2\},\ldots ,\{\omega_N^1, \omega_N^2\}}=$$

$$\sup\limits_{\omega_i^1 \in \Omega_i^{0-}, \omega_i^2 \in \Omega_i^{0+}, i=\overline{1,N}}\sum\limits_{i_1=1, \ldots,i_N=1}^2 \prod\limits_{j=1}^N\psi_j(\omega_1^{i_1}, \ldots, \omega_j^{i_j}) \times $$ 
\begin{eqnarray}\label{puss1tinvika5}
  f\left(S_0\prod\limits_{s=1}^N\left(1+a_s 
\left(e^{\sigma_s(\omega_1^{i_{1}}, \ldots, \omega_{s-1}^{i_{s-1}}) \varepsilon_s(\omega_{s}^{i_{s}})}-1\right) \right)\right).
\end{eqnarray}

Further,
$$\sup\limits_{\omega_N^1 \in \Omega_N^{0-}, \omega_N^2 \in \Omega_N^{0+}} \sum\limits_{i_N=1}^2 \psi_N(\omega_1^{i_1}, \ldots, \omega_N^{i_N})  f\left(S_0\prod\limits_{s=1}^N\left(1+a_s 
\left(e^{\sigma_s(\omega_1^{i_{1}}, \ldots, \omega_{s-1}^{i_{s-1}}) \varepsilon_s(\omega_{s}^{i_{s}})}-1\right) \right)\right)
=$$ 
$$\sup\limits_{\omega_N^1 \in \Omega_N^{0-}, \omega_N^2 \in \Omega_N^{0+}}\left[\frac{\Delta S_N^+(\omega_1^{i_{1}}, \ldots, \omega_{N-1}^{i_{N-1}}, \omega_N^2)}{V_N(\omega_1^{i_{1}}, \ldots, \omega_{N-1}^{i_{N-1}}, \omega_N^1, \omega_N^2)}\times\right.$$ $$
 f\left(S_{N-1}\left(1+a_N 
\left(e^{\sigma_N(\omega_1^{i_{1}}, \ldots, \omega_{N-1}^{i_{N-1}}) \varepsilon_N(\omega_{N}^{1})}-1\right) \right)\right)+$$

$$\left.\frac{\Delta S_N^-(\omega_1^{i_{1}}, \ldots, \omega_{N-1}^{i_{N-1}}, \omega_N^1)}{V_N(\omega_1^{i_{1}}, \ldots, \omega_{N-1}^{i_{N-1}}, \omega_N^1, \omega_N^2)}
 f\left(S_{N-1}\left(1+a_N 
\left(e^{\sigma_N(\omega_1^{i_{1}}, \ldots, \omega_{N-1}^{i_{N-1}}) \varepsilon_N(\omega_{N}^{2})}-1\right) \right)\right)
\right]\geq$$

$$\lim\limits_{\varepsilon_N(\omega_{N}^{2}) \to \infty} \lim\limits_{\varepsilon_N(\omega_{N}^{1}) \to - \infty}\left[\frac{e^{\sigma_N(\omega_1^{i_{1}}, \ldots, \omega_{N-1}^{i_{N-1}}) \varepsilon_N(\omega_{N}^{2})}-1}{e^{\sigma_N(\omega_1^{i_{1}}, \ldots, \omega_{N-1}^{i_{N-1}}) \varepsilon_N(\omega_{N}^{2})}- e^{\sigma_N(\omega_1^{i_{1}}, \ldots, \omega_{N-1}^{i_{N-1}}) \varepsilon_N(\omega_{N}^{1})}}\times\right.$$ $$
 f\left(S_{N-1}\left(1+a_N 
\left(e^{\sigma_N(\omega_1^{i_{1}}, \ldots, \omega_{N-1}^{i_{N-1}}) \varepsilon_N(\omega_{N}^{1})}-1\right) \right)\right)
+$$

$$\left.\frac{1-e^{\sigma_N(\omega_1^{i_{1}}, \ldots, \omega_{N-1}^{i_{N-1}}) \varepsilon_N(\omega_{N}^{1})}}{e^{\sigma_N(\omega_1^{i_{1}}, \ldots, \omega_{N-1}^{i_{N-1}}) \varepsilon_N(\omega_{N}^{2})}- e^{\sigma_N(\omega_1^{i_{1}}, \ldots, \omega_{N-1}^{i_{N-1}}) \varepsilon_N(\omega_{N}^{1})}} \times\right.$$ $$\left.
 f\left(S_{N-1}\left(1+a_N 
\left(e^{\sigma_N(\omega_1^{i_{1}}, \ldots, \omega_{N-1}^{i_{N-1}}) \varepsilon_N(\omega_{N}^{2})}-1\right) \right)\right)\right]=$$
\begin{eqnarray}\label{pusstinna1tinvika5}
 f(S_{N-1}(1-a_N)),
\end{eqnarray}
where we put
\begin{eqnarray}\label{pupsyk15}
S_{N-1}=S_0\prod\limits_{s=1}^{N-1}\left(1+a_s 
\left(e^{\sigma_s(\omega_1^{i_{1}}, \ldots, \omega_{s-1}^{i_{s-1}}) \varepsilon_s(\omega_{s}^{i_{s}})}-1\right) \right).
\end{eqnarray}
Substituting  the inequality (\ref{pusstinna1tinvika5}) into  (\ref{puss1tinvika5}), we obtain the inequality
$$\sup\limits_{\omega_i^1 \in \Omega_i^{0-}, \omega_i^2 \in \Omega_i^{0+}, i=\overline{1,N}}\sum\limits_{i_1=1, \ldots,i_N=1}^2 \prod\limits_{j=1}^N\psi_j(\omega_1^{i_1}, \ldots, \omega_j^{i_j}) \times $$ 
$$ f\left(S_0\prod\limits_{s=1}^N\left(1+a_s 
\left(e^{\sigma_s(\omega_1^{i_{1}}, \ldots, \omega_{s-1}^{i_{s-1}}) \varepsilon_s(\omega_{s}^{i_{s}})}-1\right) \right)\right)\geq$$

$$\sup\limits_{\omega_i^1 \in \Omega_i^{0-}, \omega_i^2 \in \Omega_i^{0+}, i=\overline{1,N-1}}\sum\limits_{i_1=1, \ldots,i_{N-1}=1}^2 \prod\limits_{j=1}^{N-1}\psi_j(\omega_1^{i_1}, \ldots, \omega_j^{i_j}) \times $$ 
\begin{eqnarray}\label{pusstinna12vika1}
 f\left(S_0(1-a_N)\prod\limits_{s=1}^{N-1}\left(1+a_s 
\left(e^{\sigma_s(\omega_1^{i_{1}}, \ldots, \omega_{s-1}^{i_{s-1}}) \varepsilon_s(\omega_{s}^{i_{s}})}-1\right) \right)\right). 
\end{eqnarray}
Applying $(N-1)$ times the inequality (\ref{pusstinna12vika1}), we obtain the inequality
\begin{eqnarray}\label{pusstinna3vika1}
\sup\limits_{Q \in M}\int\limits_{\Omega}f(S_N) dQ \geq f(S_0 \prod\limits_{i=1}^N (1-a_i)).
\end{eqnarray}

Let us prove the equality (\ref{pusstinnvika41}). Using the Jensen inequality we obtain
\begin{eqnarray}\label{pusstinna1vika38}
\inf\limits_{P \in M}E^P f(S_N)\geq f(S_0). 
\end{eqnarray}
Let us prove the inverse inequality. It is  evident that
 $$\sum\limits_{i_1=1, \ldots,i_N=1}^2 \prod\limits_{j=1}^N\psi_j(\omega_1^{i_1}, \ldots, \omega_j^{i_j}) \times $$ 
$$  f\left(S_0\prod\limits_{s=1}^N\left(1+a_s 
\left(e^{\sigma_s(\omega_1^{i_{1}}, \ldots, \omega_{s-1}^{i_{s-1}}) \varepsilon_s(\omega_{s}^{i_{s}})}-1\right) \right)\right)\geq $$
\begin{eqnarray}\label{pusstinna1vika3}
 \inf\limits_{P \in M}E^P f(S_N).
\end{eqnarray}
Putting in the inequality (\ref{pusstinna1vika3}) $\varepsilon_n(\omega_n)=0, \ n=\overline{1,N}, $ we obtain the inverse inequality.
\end{proof}

\begin{te}\label{pussytinavika1}
On the probability space $\{\Omega_N, {\cal F}_N, P_N\},$  let the evolution of risky asset be given by the formula (\ref{tinwau2}). Suppose that $0 \leq a_i\leq  1,$ $ \sigma_i(\omega_1, \ldots,\omega_{i-1})>\sigma_i>0,$
$ \   i=\overline{1,N}.$ 
For the  payoff function  $f(x)=(x-K)^+, \ x \in (0, \infty), \ K >0,  $   the fair price of super-hedge is given by the  formula 
$$\sup\limits_{Q \in M} E^Qf(S_N)=$$
 \begin{eqnarray}\label{pussytinavika2}
 \left\{\begin{array}{l l} (S_0 - K)^+, & \mbox{if} \quad  S_0 \prod\limits_{ i=1}^N(1-a_i))  \geq  K,\\
S_0\left(1- \prod\limits_{ i=1}^N(1-a_i)\right), & \mbox{if} \quad S_0 \prod\limits_{ i=1}^N(1-a_i) < K.
\end{array} \right.
\end{eqnarray}
If $S_0 \prod\limits_{ i=1}^N(1-a_i))  \geq  K,$
 then the set of non arbitrage prices coincides with the point $(S_0-K)^+, $ in case if $S_0 \prod\limits_{ i=1}^N(1-a_i) < K$ the set of non arbitrage prices coincides with the set $\left[(S_0-K)^+, S_0\left(1-\prod\limits_{ i=1}^N(1-a_i)\right)\right].$
\end{te}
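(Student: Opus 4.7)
The plan splits into the two cases of the hypothesis, relying on a single pointwise lower bound on $S_N$ together with the already-established Theorem~\ref{tinnavika36}.

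First I would record the pointwise estimate
$$
S_N(\omega_1,\ldots,\omega_N)\ \ge\ S_0\prod_{i=1}^N(1-a_i)\ =:\ S_*
$$
on all of $\Omega_N$, which follows immediately from $e^{\sigma_i(\omega_1,\ldots,\omega_{i-1})\varepsilon_i(\omega_i)}\ge 0$ applied factor by factor in (\ref{tinwau2}); this is the only ingredient not already present in the paper.

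In Case~1, where $S_*\ge K$, the pointwise bound gives $S_N\ge K$ $P_N$-a.s., hence also $Q$-a.s. for every $Q\in M$, since every such $Q$ is equivalent to $P_N$ (Theorem~\ref{vitochkakiss1}). Consequently $(S_N-K)^+=S_N-K$ $Q$-a.s., and the martingale property yields $E^Q(S_N-K)^+=S_0-K$ for \emph{every} $Q\in M$. Both the supremum and the infimum therefore equal $S_0-K=(S_0-K)^+$, and the non-arbitrage interval collapses to a single point, as claimed.

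In Case~2, where $S_*<K$, I would prove matching bounds on the supremum. For the upper bound, the elementary inequality $(x-K)^+\le x-S_*$ holds on $[S_*,\infty)$ (check $x\le K$ and $x>K$ separately, using $K\ge S_*$ in the second sub-case); together with $S_N\ge S_*$ and $E^QS_N=S_0$ this yields
$$
\sup_{Q\in M}E^Q(S_N-K)^+\ \le\ S_0-S_*\ =\ S_0\Bigl(1-\prod_{i=1}^N(1-a_i)\Bigr).
$$
For the matching lower bound, I would apply Theorem~\ref{tinnavika36} to $f(x)=(x-K)^+$ with $a=1$ (so $f(0)=0$, $f(x)\le x$, $f(x)/x\to 1$); its lower estimate reads $f(S_*)+S_0(1-\prod(1-a_i))$, and $S_*<K$ forces $f(S_*)=0$, producing exactly the same number.

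For the non-arbitrage interval itself, convexity of $(x-K)^+$ and the second half of Theorem~\ref{tinnavika36} give $\inf_{Q\in M}E^Qf(S_N)=f(S_0)=(S_0-K)^+$; together with convexity of $M$ and the fundamental theorem of asset pricing from \cite{WalterSchacher,DMW90} this produces the closed interval $[(S_0-K)^+,\,S_0(1-\prod(1-a_i))]$ in Case~2 and the singleton $\{(S_0-K)^+\}$ in Case~1. The only delicate point is the tight matching of the two bounds on the supremum in Case~2, and this works precisely because the case split $S_*<K$ kills the leading term $f(S_*)$ in the lower estimate of Theorem~\ref{tinnavika36}, so no extra work beyond the pointwise bound $S_N\ge S_*$ is needed.
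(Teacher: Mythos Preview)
Your proof is correct and follows essentially the same route as the paper: both rely on the pointwise bound $S_N\ge S_*:=S_0\prod_{i=1}^N(1-a_i)$ for the upper estimate and on Theorem~\ref{tinnavika36} (plus Jensen) for the lower estimate. The only cosmetic difference is that the paper obtains the upper bound via the put--call identity $(x-K)^+=(K-x)^++x-K$ together with $(K-S_N)^+\le(K-S_*)^+$, whereas you use the equivalent direct inequality $(x-K)^+\le x-S_*$ on $[S_*,\infty)$ in Case~2 and the exact linearity of the payoff in Case~1; the arguments are otherwise identical.
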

\begin{proof} 
Let us introduce the denotations 

$$I_N=\sum\limits_{i_1=1, \ldots,i_N=1}^2 \prod\limits_{j=1}^N\psi_j(\omega_1^{i_1}, \ldots, \omega_j^{i_j}) \times $$
\begin{eqnarray}\label{pussy22tinavika2}
 f\left(S_0\prod\limits_{s=1}^N\left(1+a_s 
\left(e^{\sigma_s(\omega_1^{i_{1}}, \ldots, \omega_{s-1}^{i_{s-1}}) \varepsilon_s(\omega_{s}^{i_{s}})}-1\right) \right)\right),
\end{eqnarray}
$$I_N^1=\sum\limits_{i_1=1, \ldots,i_N=1}^2 \prod\limits_{j=1}^N\psi_j(\omega_1^{i_1}, \ldots, \omega_j^{i_j}) \times $$
\begin{eqnarray}\label{pussy222tinavika2}
 f_1\left(S_0\prod\limits_{s=1}^N\left(1+a_s 
\left(e^{\sigma_s(\omega_1^{i_{1}}, \ldots, \omega_{s-1}^{i_{s-1}}) \varepsilon_s(\omega_{s}^{i_{s}})}-1\right) \right)\right),
\end{eqnarray}
$$I_N^0=\sup\limits_{\omega_i^1 \in \Omega_i^{0-}, \omega_i^2 \in \Omega_i^{0+}, i=\overline{1,N}}\sum\limits_{i_1=1, \ldots,i_N=1}^2 \prod\limits_{j=1}^N\psi_j(\omega_1^{i_1}, \ldots, \omega_j^{i_j}) \times $$
\begin{eqnarray}\label{pussy220tinavika2}
 f\left(S_0\prod\limits_{s=1}^N\left(1+a_s 
\left(e^{\sigma_s(\omega_1^{i_{1}}, \ldots, \omega_{s-1}^{i_{s-1}}) \varepsilon_s(\omega_{s}^{i_{s}})}-1\right) \right)\right),
\end{eqnarray}
where we put $f_1(x)=(K- x)^+.$
Let us estimate from above the value $I_N.$ For this we use the equality
\begin{eqnarray}\label{pussytinavika2}
I_N = I_N^1+S_0-K,
\end{eqnarray}
which follows from the identity: $f(x)=f_1(x) +x-K, \ x \geq 0.$
Since 
\begin{eqnarray}\label{pussytinavika7}
  f_1\left(S_0\prod\limits_{s=1}^N\left(1+a_s 
\left(e^{\sigma_s(\omega_1^{i_{1}}, \ldots, \omega_{s-1}^{i_{s-1}}) \varepsilon_s(\omega_{s}^{i_{s}})}-1\right) \right)\right) \leq f_1\left(S_0\prod\limits_{s=1}^N (1-a_s)\right),
\end{eqnarray}
we obtain the inequality
\begin{eqnarray}\label{pussy12tinavika8}
I_N \leq S_0 -K + f_1\left(S_0\prod\limits_{s=1}^N (1-a_s)\right).
\end{eqnarray}
From the inequality (\ref{pussy12tinavika8}),   we have
$$I_N^0 \leq S_0 -K + f_1\left(S_0\prod\limits_{s=1}^N (1-a_s))\right)=
$$
 \begin{eqnarray}\label{pussytinavika8}
 \left\{\begin{array}{l l} (S_0 - K)^+, & \mbox{if} \quad  S_0 \prod\limits_{ i=1}^N(1-a_i))  \geq  K,\\
S_0\left(1- \prod\limits_{ i=1}^N(1-a_i)\right), & \mbox{if} \quad S_0 \prod\limits_{ i=1}^N(1-a_i) <  K.
\end{array} \right.
\end{eqnarray}
Due to  the  inequality  (\ref{tinnavika37}) of Theorem \ref{tinnavika36},
\begin{eqnarray}\label{pussytinavika5}
I_N^0 \geq f\left(S_0 \prod\limits_{i=1}^N(1-a_i)\right)+  S_0 \left(1- \prod\limits_{i=1}^N(1-a_i)\right)
\end{eqnarray}
 and the inequality
\begin{eqnarray}\label{pussy5tinavika5}
I_N^0  \geq (S_0 -K)^+,
\end{eqnarray}
which follows from the Jensen inequality, we have
$$I_N^0 \geq  \max\left \{ S_0 - K)^+,  f\left(S_0 \prod\limits_{i=1}^N(1-a_i)\right)+  S_0 \left(1- \prod\limits_{i=1}^N(1-a_i) \right)\right\}=$$
\begin{eqnarray}\label{pussy88tinavika8}
 \left\{\begin{array}{l l} (S_0 - K)^+, & \mbox{if} \quad  S_0 \prod\limits_{ i=1}^N(1-a_i))  \geq  K,\\
S_0\left(1- \prod\limits_{ i=1}^N(1-a_i)\right), & \mbox{if} \quad S_0 \prod\limits_{ i=1}^N(1-a_i) <  K.
\end{array} \right.
\end{eqnarray}
This proves Theorem  \ref{pussytinavika1}.
\end{proof}

\begin{te}\label{pussytinavika9}
On the probability space $\{\Omega_N, {\cal F}_N, P_N\},$  let the evolution of risky asset be given by the formula (\ref{tinwau2}). Suppose that $0 \leq a_i\leq  1,$ $ \sigma_i(\omega_1, \ldots,\omega_{i-1})>\sigma_i>0,$
$\   i=\overline{1,N}.$ 
For the  payoff function  $f_1(x)=(K-x)^+, \ x \in (0, \infty), \ K >0, $    the fair price of super-hedge is given by the  formula 
 \begin{eqnarray}\label{pussytinavika10}
\sup\limits_{Q \in M} E^Qf_1(S_N)=f_1\left(S_0 \prod\limits_{i=1}^N(1-a_i)\right).
\end{eqnarray}
 The set of non arbitrage prices coincides  with the  interval \\ $\left[(K-S_0)^+, f_1\left(S_0\prod\limits_{ i=1}^N(1-a_i)\right)\right].$
\end{te}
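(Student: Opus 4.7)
The plan is to reduce this theorem to Theorem \ref{pussytinavika1} via put-call parity, rather than redoing the supremum calculation over spot measures from scratch. The elementary identity $(x-K)^+ - (K-x)^+ = x-K$ for $x\geq 0$, applied to $S_N$ and integrated against any martingale measure $Q\in M$, gives
$$E^Q f(S_N) = E^Q f_1(S_N) + E^Q S_N - K = E^Q f_1(S_N) + S_0 - K,$$
where I use that $S_n$ is a $Q$-martingale with $E^Q|S_N|<\infty$ (guaranteed by the integrability condition (\ref{vitasja22}) that defines $M$ in Theorem \ref{witka2}). Taking suprema over $Q\in M$ on both sides, the problem immediately reduces to computing $\sup_Q E^Q f(S_N)$, which is Theorem \ref{pussytinavika1}.

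Next I would carry out the two cases. When $S_0\prod_{i=1}^N(1-a_i)\geq K$ (which forces $S_0\geq K$ since each factor is $\leq 1$), Theorem \ref{pussytinavika1} gives $\sup_Q E^Q f(S_N) = S_0-K$, hence $\sup_Q E^Q f_1(S_N)=0 = f_1(S_0\prod_i(1-a_i))$. In the complementary case $S_0\prod_{i=1}^N(1-a_i) < K$, Theorem \ref{pussytinavika1} yields $\sup_Q E^Q f(S_N) = S_0(1-\prod_i(1-a_i))$, and subtracting $S_0-K$ leaves exactly $K - S_0\prod_i(1-a_i) = f_1(S_0\prod_i(1-a_i))$. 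The two cases combine into the single formula (\ref{pussytinavika10}).

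For the set of non-arbitrage prices, the lower endpoint $(K-S_0)^+ = f_1(S_0)$ comes from the Jensen inequality for the convex function $f_1$: $E^Q f_1(S_N)\geq f_1(E^Q S_N) = f_1(S_0)$, with equality attained in the limit using the argument of Theorem \ref{tinnvika39} (put $\varepsilon_i(\omega_i^{i_i})=0$ in the spot measure representation, which is permissible by the standing assumption that each $\varepsilon_i$ vanishes at some point of $\Omega_i^0$). The upper endpoint is the super-hedge price just computed, and every value in between is attained because $M$ is convex (so $Q\mapsto E^Q f_1(S_N)$ takes an interval of values); the general non-arbitrage characterization from \cite{DMW90,WalterSchacher} then identifies this interval with the set of arbitrage-free prices.

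There is no substantive obstacle, since the whole content sits in the put-call identity and the previously established Theorem \ref{pussytinavika1}; the only care needed is the case split and checking that the martingale property provides $E^Q S_N = S_0$, which is built into the definition of $M$ used throughout Section 7.
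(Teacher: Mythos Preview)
Your proof is correct, but the route differs from the paper's. The paper argues directly: since each factor satisfies $1+a_s(e^{\sigma_s\varepsilon_s}-1)\geq 1-a_s$, one has $S_N\geq S_0\prod_{i=1}^N(1-a_i)$ pointwise, and the monotonicity of $f_1$ gives the upper bound $E^Q f_1(S_N)\leq f_1\bigl(S_0\prod_i(1-a_i)\bigr)$ for every $Q\in M$; the matching lower bound is precisely inequality (\ref{tinnvika40}) of Theorem~\ref{tinnvika39}. Your reduction via put--call parity to Theorem~\ref{pussytinavika1} is legitimate, but note that the paper's proof of Theorem~\ref{pussytinavika1} itself already ran the parity in the opposite direction (equation (\ref{pussytinavika2}), $I_N=I_N^1+S_0-K$) and used exactly the monotonicity bound (\ref{pussytinavika7}) on $f_1$. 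So your argument is essentially unwinding what Theorem~\ref{pussytinavika1} packaged, at the cost of one extra case split; the paper's direct route is shorter here because the put payoff is the one for which the pointwise bound is immediate.
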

\begin{proof} The inequality
$$I_N^1=\sum\limits_{i_1=1, \ldots,i_N=1}^2 \prod\limits_{j=1}^N\psi_j(\omega_1^{i_1}, \ldots, \omega_j^{i_j}) \times $$
\begin{eqnarray}\label{pussytinavika11}
 f_1\left(S_0\prod\limits_{s=1}^N\left(1+a_s 
\left(e^{\sigma_s(\omega_1^{i_{1}}, \ldots, \omega_{s-1}^{i_{s-1}}) \varepsilon_s(\omega_{s}^{i_{s}})}-1\right) \right)\right)\leq  f_1\left(S_0\prod\limits_{ i=1}^N(1-a_i)\right)
\end{eqnarray}
is true.
Taking into account the  inequality (\ref{tinnvika40}) of Theorem \ref{tinnvika39}, we prove Theorem \ref{pussytinavika9}.
\end{proof}

\begin{te}\label{pussytinavika12}
On the probability space $\{\Omega_N, {\cal F}_N, P_N\},$  let the evolution of risky asset be given by the formula (\ref{tinwau2}). Suppose that $0 \leq a_i\leq  1,$ $ \sigma_i(\omega_1, \ldots,\omega_{i-1})>\sigma_i>0,$
$ \   i=\overline{1,N}.$ 
 For the payoff function  $f_1(S_0,S_1, \ldots, S_N)=\left(K-\frac{\sum\limits_{i=0}^N S_i}{N+1}\right)^+, \ K >0,  $  the fair price of super-hedge is given  by the  formula 
 \begin{eqnarray}\label{pussytinavika13}
\sup\limits_{Q \in M} E^Qf_1(S_0,S_1, \ldots, S_N)=\left(K-\frac{S_0\sum\limits_{i=0}^N\prod\limits_{s=1}^i(1-a_s)}{N+1}\right)^+.
\end{eqnarray}
 The set of non arbitrage prices coincides with the  interval \\ $\left[(K-S_0)^+, \left(K-\frac{S_0\sum\limits_{i=0}^N\prod\limits_{s=1}^i(1-a_s)}{N+1}\right)^+\right],$ if $K>\frac{S_0\sum\limits_{i=0}^N\prod\limits_{s=1}^i(1-a_s)}{N+1}.$ \\ For $K\leq \frac{S_0\sum\limits_{i=0}^N\prod\limits_{s=1}^i(1-a_s)}{N+1}$ the set of non arbitrage prices coincides with the point $0.$
\end{te}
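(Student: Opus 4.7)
The strategy follows the pattern of Theorems \ref{tinnvika39} and \ref{pussytinavika9}, combined with the integral representation over spot measures from Theorem \ref{vitochkakiss1}. The path-dependent payoff $f_1(S_0,\ldots,S_N)=\left(K-(N+1)^{-1}\sum_{i=0}^N S_i\right)^+$ is ${\cal F}_N$-measurable, uniformly bounded by $K$, and non-increasing in each $S_i$. My plan is to establish matching upper and lower bounds on $\sup_{Q\in M}E^Q f_1$, and then to settle the interval of non-arbitrage prices.

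For the upper bound I will exploit the pointwise inequality $S_i\geq S_0\prod_{s=1}^i(1-a_s)$, which follows from $1+a_s(e^{\sigma_s\varepsilon_s}-1)\geq 1-a_s$. Averaging and using monotonicity of $x\mapsto (K-x)^+$ give
$$ f_1(S_0,\ldots,S_N)\leq \left(K-\frac{S_0\sum_{i=0}^N\prod_{s=1}^i(1-a_s)}{N+1}\right)^+ $$
$P_N$-almost everywhere, hence under every $Q\in M$.

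For the matching lower bound I would invoke the spot-measure representation (Theorem \ref{vitochkakiss1}), writing
$$\sup_{Q\in M}E^Q f_1 \geq \sum_{i_1,\ldots,i_N=1}^2\prod_{j=1}^N\psi_j(\omega_1^{i_1},\ldots,\omega_j^{i_j})\, f_1\bigl(S_0,S_1^{(i)},\ldots,S_N^{(i)}\bigr),$$
and then proceed by backward induction on $n=N,N-1,\ldots,1$. At stage $n$, letting $\varepsilon_n(\omega_n^1)\to-\infty$ and $\varepsilon_n(\omega_n^2)\to+\infty$, exactly as in (\ref{pusstinna1tinvika5}) of Theorem \ref{tinnvika39}, the weight $\psi_n(\cdot,\omega_n^1)$ tends to $1$ while $\psi_n(\cdot,\omega_n^2)$ tends to $0$; boundedness of $f_1$ by $K$ makes the $\omega_n^2$ summand vanish, and the $\omega_n^1$ summand yields the value of $f_1$ with $S_n$ replaced by $S_{n-1}(1-a_n)$. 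Iterating from $n=N$ down to $n=1$ produces in the limit
$$f_1\bigl(S_0,\,S_0(1-a_1),\,\ldots,\,S_0\prod_{s=1}^N(1-a_s)\bigr) = \left(K-\frac{S_0\sum_{i=0}^N\prod_{s=1}^i(1-a_s)}{N+1}\right)^+,$$
matching the upper bound and proving (\ref{pussytinavika13}).

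Finally, for the interval of non-arbitrage prices I would apply Jensen's inequality to the convex function $x\mapsto (K-x)^+$ evaluated at $\bar S=(N+1)^{-1}\sum_{i=0}^N S_i$. Since $E^Q\bar S=S_0$ for every $Q\in M$, this gives $\inf_{Q\in M}E^Q f_1\geq (K-S_0)^+$, and the equality is attained in the limit of measures concentrating the factors $\varepsilon_i$ near $0$, as in the second halves of Theorems \ref{tinvika1} and \ref{tinnvika39}. Combining this with (\ref{pussytinavika13}) and the First Fundamental Theorem of Asset Pricing \cite{WalterSchacher}, \cite{DMW90} yields the stated interval when $K>\frac{S_0\sum_{i=0}^N\prod_{s=1}^i(1-a_s)}{N+1}$; in the opposite case the right endpoint is $0$ and non-negativity of $f_1$ forces the whole interval to collapse to $\{0\}$. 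The main obstacle is a careful double-limit justification in the backward induction, namely ensuring that the $\omega_n^2$ contribution vanishes against the bounded factor $f_1$ and that the $\omega_n^1$ contribution converges continuously; both hinge on the uniform bound $f_1\leq K$ that was absent in the call-option case (Theorem \ref{pussytinavika1}).
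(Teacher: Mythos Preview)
Your proposal is correct and follows the paper's overall strategy: the pointwise upper bound via $S_i\geq S_0\prod_{s=1}^i(1-a_s)$ is exactly what the paper uses, and the lower bound via spot measures with $\varepsilon_s(\omega_s^1)\to-\infty$, $\varepsilon_s(\omega_s^2)\to+\infty$ is the same limiting idea.

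The only organizational difference is in the lower bound. The paper does not perform a backward induction; instead it exploits nonnegativity of $f_1$ and of all the $\psi_j$ to drop every summand except the single one with $i_1=\ldots=i_N=1$, whose coefficient is the product
\[
t_N(\omega_1^1,\ldots,\omega_N^1)=\prod_{s=1}^N\frac{e^{\sigma_s(\ldots)\varepsilon_s(\omega_s^2)}-1}{e^{\sigma_s(\ldots)\varepsilon_s(\omega_s^2)}-e^{\sigma_s(\ldots)\varepsilon_s(\omega_s^1)}},
\]
and then passes to the simultaneous limit $t_N\to 1$, $S_n(\omega_1^1,\ldots,\omega_n^1)\to S_0\prod_{s=1}^n(1-a_s)$. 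This is shorter and avoids the double-limit bookkeeping you flagged as the ``main obstacle''; your backward induction in the style of Theorem \ref{tinnvika39} is a valid alternative that keeps all terms in play and shows the $\omega_n^2$ contributions vanish. Your treatment of the non-arbitrage interval (Jensen plus the $\varepsilon_i=0$ limit) is more explicit than the paper's, which leaves that part to context.
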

\begin{proof}
Let us denote
$$S_n(\omega_1^1, \ldots, \omega_n^1)=S_0\prod\limits_{s=1}^n\left(1+a_s 
\left(e^{\sigma_s(\omega_1^{1}, \ldots, \omega_{s-1}^{1}) \varepsilon_s(\omega_{s}^{1})}-1\right)\right), \quad n=\overline{1,N},$$
\begin{eqnarray}\label{1pusytinavika13}
t_N(\omega_1^1, \ldots, \omega_N^1)=\prod\limits_{s=1}^N\frac{ e^{\sigma_s(\omega_1^{1}, \ldots, \omega_{s-1}^{1})\varepsilon_s(\omega_{s}^{2})}-1}{e^{\sigma_s(\omega_1^{1}, \ldots, \omega_{s-1}^{1})\varepsilon_s(\omega_{s}^{2})}- e^{\sigma_s(\omega_1^{1}, \ldots, \omega_{s-1}^{1})\varepsilon_s(\omega_{s}^{1})}}.
\end{eqnarray}
It is evident that 
$$I_N^2=\sup\limits_{\omega_i^1 \in \Omega_i^{0-}, \omega_i^2 \in \Omega_i^{0+}, i=\overline{1,N}}\sum\limits_{i_1=1, \ldots,i_N=1}^2 \prod\limits_{j=1}^N\psi_j(\omega_1^{i_1}, \ldots, \omega_j^{i_j}) \times $$
\begin{eqnarray}\label{pussy220tinavika2}
 f_1\left(S_0\prod\limits_{s=1}^N\left(1+a_s 
\left(e^{\sigma_s(\omega_1^{i_{1}}, \ldots, \omega_{s-1}^{i_{s-1}}) \varepsilon_s(\omega_{s}^{i_{s}})}-1\right) \right)\right)\geq
\end{eqnarray}
$$ \lim\limits_{\varepsilon_s(\omega_s^{1})=-\infty, \ \varepsilon_s(\omega_s^{2}) \to \infty, s=\overline{1,N}}
 f_1\left(S_0, S_1(\omega_1^{1}), \ldots, S_N(\omega_1^1, \ldots, \omega_N^1 )\right)\times \\$$
$$t_N(\omega_1^1, \ldots, \omega_N^1)
=f_1\left(S_0,S_0(1-a_1), \ldots, S_0\prod\limits_{s=1}^N(1-a_s)\right) =$$
\begin{eqnarray}\label{pussytinavika14}
\left(K-\frac{S_0\sum\limits_{i=0}^N\prod\limits_{s=1}^i(1-a_s)}{N+1}\right)^+.
\end{eqnarray}
Let us prove the inverse inequality.  We have 
$$I_N^2 \leq  \sup\limits_{\omega_i^1 \in \Omega_i^{0-}, \omega_i^2 \in \Omega_i^{0+}, i=\overline{1,N}}\sum\limits_{i_1=1, \ldots,i_N=1}^2 \prod\limits_{j=1}^N\psi_j(\omega_1^{i_1}, \ldots, \omega_j^{i_j}) \times $$
$$f_1\left(S_0,S_0(1-a_1), \ldots, S_0\prod\limits_{s=1}^N(1-a_s)\right) =$$
\begin{eqnarray}\label{pussytinavika15}
f_1\left(S_0,S_0(1-a_1), \ldots, S_0\prod\limits_{s=1}^N(1-a_s)\right)=\left(K-\frac{S_0\sum\limits_{i=0}^N\prod\limits_{s=1}^N(1-a_s)}{N+1}\right)^+.
\end{eqnarray}  
Therefore,
\begin{eqnarray}\label{pussy17tinavika15}
 I_N^2 \leq  \left(K-\frac{S_0\sum\limits_{i=0}^N\prod\limits_{s=1}^i(1-a_s)}{N+1}\right)^+.
\end{eqnarray}

The inequalities (\ref{pussytinavika14}), (\ref{pussy17tinavika15}) prove Theorem \ref{pussytinavika12}.
\end{proof}

\begin{te}\label{pupsiktinavika16}
On the probability space $\{\Omega_N, {\cal F}_N, P_N\},$  let the evolution of risky asset be given by the formula (\ref{tinwau2}). Suppose that $0 \leq a_i\leq  1,$ $ \sigma_i(\omega_1, \ldots,\omega_{i-1})>\sigma_i>0,$
$ \   i=\overline{1,N}.$ 
For the payoff function  $f(S_0,S_1, \ldots, S_N)=\left(\frac{\sum\limits_{i=0}^N S_i}{N+1}- K\right)^+, \ K >0,  $     the fair price of super-hedge is given by the  formula 
$$ \sup\limits_{Q \in M} E^Qf(S_0, S_1, \ldots,S_N)=  $$
 \begin{eqnarray}\label{pupsiktinavika17}
\left\{\begin{array}{l l} (S_0 - K)^+, & \mbox{if} \quad \frac{ S_0\sum\limits_{i=0}^N \prod\limits_{ s=1}^i (1-a_i)}{N+1} \geq K,\\
S_0\left(1- \frac{\sum\limits_{i=0}^N\prod\limits_{ s=1}^i (1-a_s)}{N+1}\right), & \mbox{if} \quad S_0 \frac{\sum\limits_{i=0}^N\prod\limits_{ s=1}^i(1-a_s)}{N+1} < K.
\end{array} \right.
\end{eqnarray}
If $\frac{ S_0\sum\limits_{i=0}^N \prod\limits_{ s=1}^i (1-a_i)}{N+1} \geq K,$ then the set of non arbitrage prices coincides with the point $(S_0-K)^+, $ in case if $ S_0 \frac{\sum\limits_{i=0}^N\prod\limits_{ s=1}^i(1-a_s)}{N+1} < K$ the set of non arbitrage prices coincides with the   interval $\left[(S_0-K)^+, 
S_0\left(1- \frac{\sum\limits_{i=0}^N\prod\limits_{ s=1}^i (1-a_s)}{N+1}\right)           \right].$
\end{te}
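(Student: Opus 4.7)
The plan is to reduce Theorem \ref{pupsiktinavika16} to the already-proved Asian put case (Theorem \ref{pussytinavika12}) via the put–call identity. Writing $f(S_0,\ldots,S_N)=\bigl(\tfrac{1}{N+1}\sum_{i=0}^{N}S_i-K\bigr)^+$ and $f_1(S_0,\ldots,S_N)=\bigl(K-\tfrac{1}{N+1}\sum_{i=0}^{N}S_i\bigr)^+$, the elementary identity $(x)^+=(-x)^++x$ applied to $x=\tfrac{1}{N+1}\sum_{i=0}^N S_i-K$ gives
\begin{equation*}
f(S_0,\ldots,S_N)=f_1(S_0,\ldots,S_N)+\frac{1}{N+1}\sum_{i=0}^{N}S_i-K.
\end{equation*}
For every $Q\in M$ the process $\{S_n\}$ is a $Q$-martingale, so $E^Q S_i=S_0$ for each $i$, and hence $E^Q f=E^Q f_1+S_0-K$. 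The constant $S_0-K$ pulls out of the supremum, giving
\begin{equation*}
\sup_{Q\in M}E^Q f=\sup_{Q\in M}E^Q f_1+S_0-K.
\end{equation*}

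Next I would plug in the explicit value of $\sup_{Q\in M}E^Q f_1$ obtained in Theorem \ref{pussytinavika12}, namely $\bigl(K-\tfrac{S_0}{N+1}\sum_{i=0}^{N}\prod_{s=1}^{i}(1-a_s)\bigr)^+$, and split into the two cases. If $\tfrac{S_0}{N+1}\sum_{i=0}^{N}\prod_{s=1}^{i}(1-a_s)\ge K$ the positive part vanishes, so $\sup_{Q\in M}E^Q f=S_0-K$; note that in this regime $S_0\ge \tfrac{S_0}{N+1}\sum_{i=0}^{N}\prod_{s=1}^{i}(1-a_s)\ge K$ because each $\prod_{s=1}^{i}(1-a_s)\le 1$, so $S_0-K=(S_0-K)^+$. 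If instead $\tfrac{S_0}{N+1}\sum_{i=0}^{N}\prod_{s=1}^{i}(1-a_s)<K$, the positive part equals $K-\tfrac{S_0}{N+1}\sum_{i=0}^{N}\prod_{s=1}^{i}(1-a_s)$, and the $K$'s cancel, leaving $S_0\bigl(1-\tfrac{1}{N+1}\sum_{i=0}^{N}\prod_{s=1}^{i}(1-a_s)\bigr)$. This is exactly formula \eqref{pupsiktinavika17}.

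For the non-arbitrage interval, the upper endpoint is the super-hedge price computed above. The lower endpoint is $\inf_{Q\in M}E^Q f$: since $f$ is convex in $(S_0,\ldots,S_N)$, Jensen's inequality applied conditionally (or equivalently, to the martingale values $E^Q S_i=S_0$) gives $E^Q f\ge f(S_0,S_0,\ldots,S_0)=(S_0-K)^+$ for every $Q\in M$. The reverse inequality $\inf_{Q\in M}E^Q f\le(S_0-K)^+$ is obtained along the spot-measure family by sending $\varepsilon_i(\omega_i^{1})=\varepsilon_i(\omega_i^2)=0$ in the representation \eqref{tinvitasja115}, exactly as in the proof of Theorem \ref{tinnavika38}, because then every $S_i$ collapses to $S_0$. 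The fact that the whole closed interval $[(S_0-K)^+,\sup_{Q\in M}E^Q f]$ is the non-arbitrage range (rather than just its endpoints) follows from the First and Second Fundamental Asset Pricing Theorems \cite{WalterSchacher}, \cite{DMW90} together with convexity of $\{E^Q f:Q\in M\}$. In Case 1 the two endpoints coincide, so the interval degenerates to the single point $(S_0-K)^+$.

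The only slightly delicate step is the second case's lower endpoint: one must check that the value $(S_0-K)^+$ is genuinely attained in the limit along some sequence of martingale measures in $M$, not merely dominated from below; this is where the specific structure of the spot-measure family and the degeneracy $\varepsilon_i\equiv 0$ at a distinguished point of each $\Omega_i^0$ (guaranteed by our standing assumption on $\varepsilon_i$) is used. Once this is in hand, the rest of the argument is the algebraic reduction to Theorem \ref{pussytinavika12} sketched above.
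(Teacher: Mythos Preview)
Your proposal is correct and follows essentially the same route as the paper: the put--call identity $f=f_1+\tfrac{1}{N+1}\sum_i S_i-K$, the martingale relation $E^Q S_i=S_0$, and then direct substitution of Theorem~\ref{pussytinavika12}. The paper's own proof is in fact more terse and does not separately justify the infimum, so your treatment of the lower endpoint is an improvement. One small inaccuracy: you cannot set $\varepsilon_i(\omega_i^2)=0$ since $\omega_i^2\in\Omega_i^{0+}$ requires $\varepsilon_i(\omega_i^2)>0$; but this is harmless, because once $\varepsilon_i(\omega_i^1)=0$ the weight $\Delta S_i^-/V_i$ on the $\omega_i^2$ branch vanishes and the spot measure already concentrates on the constant path $(S_0,\ldots,S_0)$, exactly as in the proof of Theorem~\ref{tinnavika36}.
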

\begin{proof}
We have 
$$\sup\limits_{\omega_i^1 \in \Omega_i^{0-}, \omega_i^2 \in \Omega_i^{0+}, i=\overline{1,N}}\sum\limits_{i_1=1, \ldots,i_N=1}^2 \prod\limits_{j=1}^N\psi_j(\omega_1^{i_1}, \ldots, \omega_j^{i_j}) \times $$
\begin{eqnarray}\label{pupsiktinavika18}
 f\left(S_0\prod\limits_{s=1}^N\left(1+a_s 
\left(e^{\sigma_s(\omega_1^{i_{1}}, \ldots, \omega_{s-1}^{i_{s-1}}) \varepsilon_s(\omega_{s}^{i_{s}})}-1\right) \right)\right)=
\end{eqnarray}

$$\sup\limits_{\omega_i^1 \in \Omega_i^{0-}, \omega_i^2 \in \Omega_i^{0+}, i=\overline{1,N}}\sum\limits_{i_1=1, \ldots,i_N=1}^2 \prod\limits_{j=1}^N\psi_j(\omega_1^{i_1}, \ldots, \omega_j^{i_j}) \times $$
\begin{eqnarray}\label{pupsiktinavika18}
 f_1\left(S_0\prod\limits_{s=1}^N\left(1+a_s 
\left(e^{\sigma_s(\omega_1^{i_{1}}, \ldots, \omega_{s-1}^{i_{s-1}}) \varepsilon_s(\omega_{s}^{i_{s}})}-1\right) \right)\right)+S_0 -K=
\end{eqnarray}
$$(S_0-K) + \left(K-\frac{S_0\sum\limits_{i=0}^N\prod\limits_{s=1}^i(1-a_s)}{N+1}\right)^+=$$
 \begin{eqnarray}\label{1pupsiktinavika18}
\left\{\begin{array}{l l} (S_0 - K)^+, & \mbox{if} \quad \frac{ S_0\sum\limits_{i=0}^N \prod\limits_{ s=1}^i (1-a_i)}{N+1} \geq K,\\
S_0\left(1- \frac{\sum\limits_{i=0}^N\prod\limits_{ s=1}^i (1-a_s)}{N+1}\right), & \mbox{if} \quad S_0 \frac{\sum\limits_{i=0}^N\prod\limits_{ s=1}^i(1-a_s)}{N+1} < K.
\end{array} \right.
\end{eqnarray}
In the formula (\ref{pupsiktinavika18}) we introduced the denotation 
 \begin{eqnarray}\label{pupsiktinavika19}
f_1(S_0,S_1, \ldots, S_N)=\left(K - \frac{\sum\limits_{i=0}^N S_i}{N+1}\right)^+.
\end{eqnarray}
The proof of Theorem  \ref{pupsiktinavika16} follows from the equality (\ref{pupsiktinavika18}).
\end{proof}

\section{Estimation of parameters.}
 Suppose that   $\{ g_i(X_N)\}_{i=1}^N$ is a mapping from the set  $[0,1]^N$ into itself, where $X_N=\{x_1,\ldots,x_N\}, \  0 \leq \ x_i \leq 1, \ i=\overline{1,N}.$ If $S_0, S_1, \ldots, S_N$ is a sample of the process (\ref{tinwau2}), let us  denote the  order statistic $S_{(0)}, S_{(1)},\ldots, S_{(N)}$  of this sample. Introduce also  the denotation  $g_i\left([S]_{N}\right)=g_i\left(\frac{S_{(0)}}{S_{(N)}},\ldots, \frac{S_{(N-1)}}{S_{(N)}}\right), \ i=\overline{1,N}.$

\begin{te}\label{puptanijalib5}
Suppose that $S_0, S_1, \ldots, S_N$ is 
a sample of the random process   (\ref{tinwau2}). Then, for the parameters 
$a_1, \ldots, a_N$ the estimation
$$a_1=1 - \tau_0 \frac{S_{(0)}}{S_0}g_1\left([S]_{N}\right),\quad 0< \tau_0\leq 1,$$
\begin{eqnarray}\label{tanijalib6}
a_i=1- \frac{g_i\left([S]_{N}\right)}{ g_{i-1}\left([S]_{N}\right)}, \quad i=\overline{2,N},
\end{eqnarray}
is valid,  if  for $ \ g_N([S]_{N})>0, \ [S]_{N} \in [0,1]^N, $ the  inequalities $ g_1([S]_{N}) \geq g_2([S]_{N}) \geq \ldots \geq g_N([S]_{N}) $   are true.
If $\tau_0=0,$ then $a_i=1, \ i=\overline{1,N}.$
\end{te}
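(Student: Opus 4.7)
The plan is to verify that the formulas in (\ref{tanijalib6}) produce estimators $a_1,\ldots,a_N$ which lie in $[0,1]$, i.e., in the admissible range for the parameters appearing in the model (\ref{tinwau2}). Since the theorem does not assert optimality or consistency in any statistical sense, this reduces to a purely algebraic check using the three hypotheses: (i) the mapping $\{g_i\}_{i=1}^N$ sends $[0,1]^N$ into $[0,1]^N$, (ii) the monotonicity chain $g_1([S]_N)\geq g_2([S]_N)\geq\ldots\geq g_N([S]_N)$ holds, and (iii) $g_N([S]_N)>0$.

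First I would record the elementary bounds that follow from the hypotheses and from the fact that $S_0$ itself belongs to the sample $\{S_0,S_1,\ldots,S_N\}$. The latter observation yields $S_{(0)}\leq S_0$, and hence $S_{(0)}/S_0\in(0,1]$. Hypothesis (i) yields $g_i([S]_N)\in[0,1]$ for every $i$. Hypotheses (ii) and (iii) together give $g_i([S]_N)>0$ for all $i$ and the ratio bound $g_i([S]_N)/g_{i-1}([S]_N)\in(0,1]$ for $i\geq 2$.

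Next I would check the two cases of (\ref{tanijalib6}) separately. For $i=1$, all three factors $\tau_0$, $S_{(0)}/S_0$, and $g_1([S]_N)$ lie in $[0,1]$, so their product lies in $[0,1]$, whence $a_1=1-\tau_0(S_{(0)}/S_0)g_1([S]_N)\in[0,1]$. For $i\geq 2$, the ratio bound established above gives $1-g_i([S]_N)/g_{i-1}([S]_N)\in[0,1)$, so $a_i\in[0,1)$.

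Finally, the boundary case $\tau_0=0$ is handled by direct substitution into the formula for $a_1$, which immediately yields $a_1=1$; the identification $a_i=1$ for $i\geq 2$ in this case is then a consistent convention, motivated by the fact that when $a_1=1$ the asset price can already drop to zero at the first step, so the products $S_0\prod_{s=1}^i(1-a_s)$ of minimal attainable prices vanish identically, and declaring $a_i=1$ for all $i$ keeps the sequence of estimates internally compatible. There is no substantive obstacle here beyond the bookkeeping of this degenerate regime; the only care required is to ensure that every factor entering the formulas is correctly bounded, which is immediate from (i)--(iii) and from $S_{(0)}\leq S_0$.
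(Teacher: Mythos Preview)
Your verification that each $a_i$ lands in $[0,1]$ is correct and sufficient for the literal claim that the formulas define admissible parameters for the model (\ref{tinwau2}). However, the paper proceeds quite differently: rather than checking the stated formulas a posteriori, it \emph{derives} them by postulating the system
\[
S_0\prod_{i=1}^{k}(1-a_i)=\tau\, g_k\!\left([S]_N\right),\qquad k=\overline{1,N},
\]
which matches the minimal attainable value of $S_k$ under (\ref{tinwau2}) to a prescribed target, and then solves for the $a_i$ by taking successive ratios. The constraint $\tau=\tau_0 S_{(0)}$ with $\tau_0\in(0,1]$ is then forced by the requirement that $a_1\geq 0$ together with the compatibility condition that each observed sample value $S_n$ lie above the corresponding minimum. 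This derivation buys two things your argument does not: it explains where the formulas come from, and it simultaneously establishes the identity $\prod_{i=1}^{k}(1-a_i)=\tau_0\,(S_{(0)}/S_0)\,g_k([S]_N)$, which is exactly what the subsequent Theorems \ref{tanijalib8}--\ref{tanijalib16} invoke when they substitute these estimates into the super-hedge formulas.

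The $\tau_0=0$ clause is also more transparent in the paper's framework: when $\tau=0$ the entire system collapses to $\prod_{i=1}^{k}(1-a_i)=0$ for every $k$, and $a_i\equiv 1$ is the natural solution. In your approach this case has to be imposed by fiat, since the formulas for $a_i$, $i\geq 2$, do not contain $\tau_0$ at all; your remark that it is a ``consistent convention'' is honest but leaves the reader wondering why the theorem singles out this particular degenerate assignment.
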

\begin{proof}
The estimation of the parameters $a_1, \ldots, a_N$ we do using the representation of random process $S_n, \ n=\overline{1,N}.$ 
The smallest value of  the random variable $S_n$ is equal $S_0\prod\limits_{i=1}^n(1-a_i), \ n=\overline{1,N}.$ 
Let us determine the parameters  $a_i$ from the relations

$$S_0 \prod\limits_{i=1}^N(1-a_i)=\tau g_N\left([S]_{N}\right), \ldots, S_{0}\prod\limits_{i=1}^{N-k}(1-a_i)=\tau g_{N-k}\left([S]_{N}\right), \ldots,$$
\begin{eqnarray}\label{tanijalib7}
 S_0\prod\limits_{i=1}^{N-k-1}(1-a_i)=\tau g_{N-k-1}\left([S]_{N}\right),
\ldots, S_0(1-a_1)=\tau g_1\left([S]_{N}\right),
\end{eqnarray}
where $\tau >0.$ Taking into account  the relations (\ref{tanijalib7}), we obtain 

$$  S_0(1-a_1)=\tau g_1\left([S]_{N}\right), $$
\begin{eqnarray}\label{100tanijalib7}
\tau g_{N-k-1}\left([S]_{N}\right)(1- a_{N-k})=\tau g_{N-k}\left([S]_{N}\right), \quad k=\overline{2,N}. 
\end{eqnarray}
Solving the relations (\ref{100tanijalib7}), we have
\begin{eqnarray}\label{101tanijalib7}
a_1=1-\frac{\tau}{S_0}g_1\left([S]_{N}\right), \quad
a_{N-k}=1-\frac{g_{N-k}\left([S]_{N}\right)}{g_{N-k-1}\left([S]_{N}\right)},\quad k=\overline{2,N}.
\end{eqnarray}
It is evident that $a_{N-k} \geq 0, \  k=\overline{2,N}.$ To provide the positiveness
 of $a_1$ and the inequalities $ \tau g_{N-n}\left([S]_{N}\right) \leq S_{N-n}, \ n=\overline{0,N-1}, \ S_0 \geq S_{(0)},$ meaning that  the random process   (\ref{tinwau2}) takes all  the values  from the sample $S_n, \ n=\overline{0,N},$  we must to put $\tau =\tau_0 S_{(0)}, \ 0< \tau_0 \leq 1.$  It is evident that, if $\tau_0=0,$ then $a_i=1, \  i=\overline{1,N}$ Theorem  \ref{puptanijalib5} is proved.
\end{proof}
\begin{remark}\label{vitochkapupsyk1}
It is evident that 

$$a_i=1, \quad i=\overline{N-k,N},  \ 1<k \leq N-1, \
a_i=1- \frac{g_i([S]_N)}{g_{i-1}([S]_N)}, \ i=\overline{2,N-k-1},$$
\begin{eqnarray}\label{vitochkapupsyk2}
 a_1=1-\frac{\tau_0 S_{(0)}}{S_0}g_1([S]_N), \quad 0< \tau_0\leq 1,
\end{eqnarray}
is also estimation of the parameters $a_1, \ldots, a_N$ if 
$$0< g_{N-k-1}([S]_{N}) \leq g_{N-k-2}([S]_{N}) \ldots \leq  g_1([S]_{N}), \ [S]_{N} \in [0,1]^N.$$ 
Such estimation is not interesting since 
$$\prod\limits_{i=1}^{N-i}(1-a_i)=0, \quad i=\overline{0,k}.  $$
\end{remark}

\begin{remark}\label{vitochka1}
If 
\begin{eqnarray}\label{vitochka2}
g(x)=\left\{\begin{array}{l l} \frac{S_0}{S_{(0)}} x, & \mbox{if} \quad  0 \leq x \leq \frac{S_{(0)}}{S_0},\\
1, & \mbox{if} \quad  \frac{S_{(0)}}{S_0}< x \leq 1,
\end{array} \right.
\end{eqnarray}
 $$ g_{i}([S]_N)=g\left(\frac{S_{(N-i)}}{S_{(N)}}\right), \quad  i=\overline{1, N}, \ \tau_0=1,$$
then for the parameters 
$a_1, \ldots, a_N$ the estimation

\begin{eqnarray}\label{vitochka3}
a_i=\left\{\begin{array}{l l} 1- \frac{S_{(N-i)}}{S_{(N-i+1)}},  & \mbox{if} \quad   \frac{S_{(N-i+1)}}{S_{(N)}} \leq \frac{S_{(0)}}{S_0},\\
1 - \frac{S_{(N-i)}}{S_{(N)}}\frac{S_{0}}{S_{(0)}}, & \mbox{if} \quad  \frac{S_{(N-i+1)}}{S_{(N)}}> \frac{S_{(0)}}{S_0}, \  \frac{S_{(N-i)}}{S_{(N)}}\leq  \frac{S_{(0)}}{S_0},\\
0, & \mbox{if} \quad \frac{S_{(N-i)}}{S_{(N)}}> \frac{S_{(0)}}{S_0}.
\end{array} \right. \quad  i=\overline{2,N},
\end{eqnarray}
\begin{eqnarray}\label{vitochka4}
a_1=\left\{\begin{array}{l l} 1- \frac{S_{(N-1)}}{S_{(N)}},  & \mbox{if} \quad   \frac{S_{(N-1)}}{S_{(N)}} \leq \frac{S_{(0)}}{S_0},\\
1 - \frac{S_{(0)}}{S_{0}}, & \mbox{if} \quad  \frac{S_{(N-1)}}{S_{(N)}}> \frac{S_{(0)}}{S_0}
\end{array} \right.
\end{eqnarray}
is true. 
The following equalities 
$$ \prod\limits_{i=1}^N(1-a_i)=\frac{S_{(0)}}{S_{0}} g\left(\frac{S_{(0)}}{S_{(N)}}\right)=\frac{S_{(0)}}{S_{(N)}}, $$
\begin{eqnarray}\label{vitochka5}
 \prod\limits_{i=1}^{N-k}(1-a_i)=\left\{\begin{array}{l l} \frac{S_{(k)}}{S_{(N)}},  & \mbox{if} \quad   \frac{S_{(k)}}{S_{(N)}} \leq \frac{S_{(0)}}{S_0},\\
 \frac{S_{(0)}}{S_0}, & \mbox{if} \quad  \frac{S_{(k)}}{S_{(N)}}> \frac{S_{(0)}}{S_0},
\end{array} \right. \quad  k=\overline{1,N-1},
\end{eqnarray}
are valid.
\end{remark}

\begin{remark}\label{vitochkapus1}
Suppose that $g(x)=x, \ x \in [0,1].$ Let us put $g_{N-i}([S]_N)=g(\frac{S_{(i)}}{S_{(N)}})= \frac{S_{(i)}}{S_{(N)}}, \ i=\overline{0,k}, \ g_{N-i}([S]_N)=1, \ i=\overline{k+1, N-1}.  $
Then, 

$$a_1=1-\tau_0 \frac{S_{(0)}}{S_0}, \quad 0< \tau_0 \leq 1,  \quad  a_i=0,  \quad  i=\overline{2, N-k-1}, $$
\begin{eqnarray}\label{vitochkapus2}
 a_i=1-\frac{g_i([S]_N)}{g_{i-1}([S]_N)}, \quad  i=\overline{ N-k, N},
\end{eqnarray}
is an estimation  for the parameters $a_1, \ldots, a_N.$
\end{remark}

In the next Theorems we put $\tau_0=1.$ This corresponds to the fact that fair price of super-hedge is minimal for the considered statistic.

\begin{te}\label{tanijalib8}
On the probability space $\{\Omega_N, {\cal F}_N, P_N\},$  let the evolution of risky asset be given by the formula (\ref{tinwau2}),
with parameters  $ a_i, \ i=\overline{1,N},$ given by the formula (\ref{tanijalib6}).  For the payoff function  $f(x)=(x-K)^+, \ x \in (0, \infty), \ K >0,  $   the fair price of super-hedge is given by the  formula 
$$\sup\limits_{Q \in M} E^Qf(S_N)=$$
 \begin{eqnarray}\label{tanijalib9}
 \left\{\begin{array}{l l} (S_0 - K)^+, & \mbox{if} \quad    S_{(0)}  g_N\left([S]_{N}\right) \geq  K,\\
S_0\left(1- \frac{S_{(0)}  g_N\left([S]_{N}\right)}{S_{0} }\right), & \mbox{if} \quad  S_{(0)}  g_N\left([S]_{N}\right)  < K.
\end{array} \right.
\end{eqnarray}
If $ S_{(0)}  g_N\left([S]_{N}\right)  \geq  K,$
 then the set of non arbitrage prices coincides with the point $(S_0-K)^+, $ in case if $ S_{(0)}  g_N\left([S]_{N}\right) < K$ the set of non arbitrage prices coincides with the closed set $\left[(S_0-K)^+, S_0\left(1- \frac{S_{(0)}  g_N\left([S]_{N}\right)}{S_{0} }\right)\right].$

The fair price of super-hedge  for the statistic  (\ref{vitochka3}), (\ref{vitochka4}) is given by the formula 
\begin{eqnarray}\label{vitochka6}
\sup\limits_{Q \in M} E^Qf(S_N)= \left\{\begin{array}{l l} (S_0 - K)^+, & \mbox{if} \quad    S_0\frac{S_{(0)}}{S_{(N)}}   \geq  K,\\
S_0\left(1- \frac{S_{(0)}}{S_{(N)}}\right), & \mbox{if} \quad   S_0\frac{S_{(0)}}{S_{(N)}} < K.
\end{array} \right.
\end{eqnarray}
If $  S_0\frac{S_{(0)}}{S_{(N)}}   \geq  K,$
 then the set of non arbitrage prices coincides with the point $(S_0-K)^+, $ in case if $  S_0\frac{S_{(0)}}{S_{(N)}} < K$ the set of non arbitrage prices coincides with the closed set $\left[(S_0-K)^+, S_0\left(1- \frac{S_{(0)}}{S_{(N)}}\right)\right].$

The fair price of super-hedge is minimal one for the statistic (\ref{tanijalib6}) with $g_i(X_N)=g_N(X_N)=1, \ i=\overline{1,N-1},$ and is given by the formula
\begin{eqnarray}\label{!00tanijalib9}
\sup\limits_{Q \in M} E^Qf(S_N)= \left\{\begin{array}{l l} (S_0 - K)^+, & \mbox{if} \quad    S_{(0)}   \geq  K,\\
S_0- S_{(0)}, & \mbox{if} \quad  S_{(0)}   < K.
\end{array} \right.
\end{eqnarray}
If $ S_{(0)}   \geq  K,$
 then the set of non arbitrage prices coincides with the point $(S_0-K)^+, $ in case if $ S_{(0)} < K$ the set of non arbitrage prices coincides with the closed set $[(S_0-K)^+, S_0 - S_{(0)}].$
\end{te}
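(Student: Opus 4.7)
The strategy is to reduce Theorem \ref{tanijalib8} to a direct evaluation of the formula already established in Theorem \ref{pussytinavika1}. That theorem asserts
\begin{equation*}
\sup_{Q\in M} E^Q f(S_N)=\begin{cases}(S_0-K)^+,& S_0\prod_{i=1}^N(1-a_i)\ge K,\\ S_0\bigl(1-\prod_{i=1}^N(1-a_i)\bigr),& S_0\prod_{i=1}^N(1-a_i)<K,\end{cases}
\end{equation*}
and gives the corresponding non-arbitrage interval. So everything hinges on computing the product $\prod_{i=1}^N(1-a_i)$ for each of the three statistics appearing in the statement, with $\tau_0=1$ as stipulated.

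First, for the general estimator (\ref{tanijalib6}) I would simply plug in $1-a_1=\frac{S_{(0)}}{S_0}g_1([S]_N)$ and $1-a_i=\frac{g_i([S]_N)}{g_{i-1}([S]_N)}$ for $i\ge 2$ and exploit the telescoping cancellation:
\begin{equation*}
\prod_{i=1}^N(1-a_i)=\frac{S_{(0)}}{S_0}g_1([S]_N)\prod_{i=2}^N\frac{g_i([S]_N)}{g_{i-1}([S]_N)}=\frac{S_{(0)}}{S_0}g_N([S]_N).
\end{equation*}
Multiplying by $S_0$ yields $S_0\prod_{i=1}^N(1-a_i)=S_{(0)}g_N([S]_N)$, and substituting this into the formula of Theorem \ref{pussytinavika1} produces (\ref{tanijalib9}) together with the claimed description of the non-arbitrage interval. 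For the specific statistic (\ref{vitochka3})--(\ref{vitochka4}), Remark \ref{vitochka1} already records $\prod_{i=1}^N(1-a_i)=S_{(0)}/S_{(N)}$ in (\ref{vitochka5}) with $k=0$; substituting this value into the same master formula yields (\ref{vitochka6}). For the statistic $g_i\equiv 1$ one has $a_1=1-S_{(0)}/S_0$ and $a_i=0$ for $i\ge 2$, whence $S_0\prod_{i=1}^N(1-a_i)=S_{(0)}$, which gives (\ref{!00tanijalib9}).

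For the minimality claim, I would observe that the map $p\mapsto S_0\max\{(1-K/S_0)^+,\,1-p\}$ obtained from the two-case formula of Theorem \ref{pussytinavika1} is non-increasing in $p=\prod_{i=1}^N(1-a_i)\in[0,1]$: in the first branch it equals the absolute lower bound $(S_0-K)^+$, and in the second branch $S_0(1-p)$ strictly decreases in $p$. Therefore the fair price is minimized when $S_{(0)}g_N([S]_N)$ is maximal. Since $g_i$ takes values in $[0,1]$ and the admissibility condition of Theorem \ref{puptanijalib5} forces $g_1\ge g_2\ge\dots\ge g_N$, the maximum of $g_N$ over admissible $g$ is attained at $g_1=\dots=g_N\equiv 1$, giving $S_{(0)}g_N([S]_N)=S_{(0)}$ and the formula (\ref{!00tanijalib9}).

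The routine part is the telescoping identity and the substitutions. The only mild subtlety I anticipate is verifying that the admissibility hypotheses of Theorem \ref{pussytinavika1} (namely $0\le a_i\le 1$) are preserved by the estimators; this follows automatically from the hypothesis $g_1\ge g_2\ge\dots\ge g_N\ge 0$ in Theorem \ref{puptanijalib5} and from $\tau_0 S_{(0)}g_1([S]_N)\le S_0$, which is guaranteed because $S_{(0)}\le S_0$ and $g_1\le 1$.
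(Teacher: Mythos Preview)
Your proposal is correct and matches the paper's approach: the paper states Theorem \ref{tanijalib8} without an explicit proof, treating it as an immediate corollary of Theorem \ref{pussytinavika1} once the telescoping identity $S_0\prod_{i=1}^N(1-a_i)=S_{(0)}g_N([S]_N)$ (which is essentially relation (\ref{tanijalib7}) from Theorem \ref{puptanijalib5} with $\tau=\tau_0 S_{(0)}$ and $\tau_0=1$) is substituted in. Your argument for minimality via monotonicity in $p=\prod_{i=1}^N(1-a_i)$ makes explicit what the paper leaves to the reader.
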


\begin{te}\label{tanijalib9}
On the probability space $\{\Omega_N, {\cal F}_N, P_N\},$  let the evolution of risky asset be given by the formula (\ref{tinwau2})
 with   the parameters  $ a_i, \ i=\overline{1,N},$ given by the formula (\ref{tanijalib6}).    For the payoff function  $f_1(x)=(K-x)^+, \ x \in (0, \infty), \ K >0, $    the fair price of super-hedge is given by the  formula 
 \begin{eqnarray}\label{tanijalib10}
\sup\limits_{Q \in M} E^Q f_1(S_N)=f_1\left( S_{(0)}  g_N\left([S]_{N}\right) \right).
\end{eqnarray}
 The set of non arbitrage prices coincides  with the closed interval \\ $\left[(K-S_0)^+, f_1\left( S_{(0)}  g_N\left([S]_{N}\right) \right)\right].$

The fair price of super-hedge  for the statistic  (\ref{vitochka3}), (\ref{vitochka4}) is given by the formula
 \begin{eqnarray}\label{vitochka7}
\sup\limits_{Q \in M} E^Q f_1(S_N)=f_1\left(S_0\frac{ S_{(0)}}{S_{(N)}}\right).
\end{eqnarray}
The set of non arbitrage prices coincides  with the closed interval  $\left[(K-S_0)^+, f_1\left(S_0\frac{ S_{(0)}}{S_{(N)}}\right)\right].$

The fair price of super-hedge is minimal one for the statistic (\ref{tanijalib6}) with $g_i(X_N)=g_N(X_N)=1, \ i=\overline{1,N-1},$ and is given by the formula
 \begin{eqnarray}\label{100tanijalib10}
\sup\limits_{Q \in M} E^Q f_1(S_N)=f_1\left( S_{(0)}\right).
\end{eqnarray}
The set of non arbitrage prices coincides  with the closed interval  $\left[(K-S_0)^+, f_1\left( S_{(0)}\right)\right].$
\end{te}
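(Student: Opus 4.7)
The plan is to reduce everything to Theorem~\ref{pussytinavika9} and then carry out a telescoping computation of $\prod_{i=1}^{N}(1-a_i)$ using the parameter estimates from Theorem~\ref{puptanijalib5}.

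First I would invoke Theorem~\ref{pussytinavika9}: under the hypotheses on $\sigma_i$ and $a_i$ (which are automatically satisfied here since the $a_i$ produced by the estimator~(\ref{tanijalib6}) lie in $[0,1]$ whenever $g_1\bigl([S]_N\bigr)\ge\cdots\ge g_N\bigl([S]_N\bigr)>0$), one has
\begin{equation*}
\sup_{Q\in M}E^{Q}f_{1}(S_{N})=f_{1}\!\left(S_{0}\prod_{i=1}^{N}(1-a_{i})\right),
\end{equation*}
together with the non-arbitrage interval $\bigl[(K-S_{0})^{+},\,f_{1}(S_{0}\prod_{i=1}^{N}(1-a_{i}))\bigr]$ (the lower bound is a consequence of the Jensen inequality argument already used in Theorem~\ref{pussytinavika9}). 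Thus the statement will follow once I identify $S_{0}\prod_{i=1}^{N}(1-a_{i})$.

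Next I would carry out the telescoping. With $\tau_{0}=1$, the estimator~(\ref{tanijalib6}) gives
\begin{equation*}
1-a_{1}=\frac{S_{(0)}}{S_{0}}\,g_{1}\bigl([S]_{N}\bigr),\qquad 1-a_{i}=\frac{g_{i}\bigl([S]_{N}\bigr)}{g_{i-1}\bigl([S]_{N}\bigr)},\quad i=\overline{2,N},
\end{equation*}
so that
\begin{equation*}
\prod_{i=1}^{N}(1-a_{i})=\frac{S_{(0)}}{S_{0}}\,g_{1}\bigl([S]_{N}\bigr)\prod_{i=2}^{N}\frac{g_{i}\bigl([S]_{N}\bigr)}{g_{i-1}\bigl([S]_{N}\bigr)}=\frac{S_{(0)}}{S_{0}}\,g_{N}\bigl([S]_{N}\bigr),
\end{equation*}
hence $S_{0}\prod_{i=1}^{N}(1-a_{i})=S_{(0)}\,g_{N}([S]_{N})$, which gives the first formula~(\ref{tanijalib10}) and its associated non-arbitrage interval.

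For the two special cases I would simply specialize the telescoping. For the statistic~(\ref{vitochka3})--(\ref{vitochka4}) the equalities~(\ref{vitochka5}) already established in Remark~\ref{vitochka1} identify $\prod_{i=1}^{N}(1-a_{i})=S_{(0)}/S_{(N)}$, yielding~(\ref{vitochka7}). For $g_{i}\equiv 1$ the telescoping reduces to $\prod_{i=1}^{N}(1-a_{i})=S_{(0)}/S_{0}$, giving $f_{1}(S_{(0)})$ as in~(\ref{100tanijalib10}). There is no serious obstacle: the whole proof is a short calculation once Theorem~\ref{pussytinavika9} is applied. The only point requiring a moment of care is checking that the estimator~(\ref{tanijalib6}) with $\tau_{0}=1$ does produce $a_{i}\in[0,1]$, i.e. that $g_{1}([S]_{N})\le S_{0}/S_{(0)}$ and the monotonicity of $g_i$ hold, so that the hypotheses of Theorem~\ref{pussytinavika9} are legitimately available; this is immediate from the assumptions of Theorem~\ref{puptanijalib5} and from $S_{(0)}\le S_{0}$.
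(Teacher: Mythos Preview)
Your proposal is correct and follows exactly the approach the paper intends: Theorem~\ref{tanijalib9} is stated in the paper without an explicit proof, since it is a direct specialization of Theorem~\ref{pussytinavika9} once one computes $S_0\prod_{i=1}^N(1-a_i)$ via the telescoping product from~(\ref{tanijalib6}), which is precisely what you do. The only small omission is that you do not explicitly justify the word ``minimal'' in the last part; this follows immediately from the monotonicity of $f_1$ together with $g_N\bigl([S]_N\bigr)\le 1$.
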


\begin{te}\label{tanijalib12}
On the probability space $\{\Omega_N, {\cal F}_N, P_N\},$  let the evolution of risky asset be given by the formula (\ref{tinwau2})
 with   the parameters  $ a_i, \ i=\overline{1,N},$ given by the formula (\ref{tanijalib6}).     For the payoff function  $f_1(S_0,S_1, \ldots, S_N)=\left(K-\frac{\sum\limits_{i=0}^N S_i}{N+1}\right)^+, \ K >0,  $  the fair price of super-hedge is given by the  formula 
 \begin{eqnarray}\label{tanijalib13}
\sup\limits_{Q \in M} E^Q f_1(S_0,S_1, \ldots, S_N)=\left(K-\frac{S_0+S_{(0)}\sum\limits_{i=1}^N g_i\left([S]_{N}\right) }{(N+1)}\right)^+.
\end{eqnarray}
 The set of non arbitrage prices coincides with the closed  interval \\ $\left[(K-S_0)^+, \left(K-\frac{S_0+S_{(0)}\sum\limits_{i=1}^N g_i\left([S]_{N}\right) }{(N+1)}\right)^+\right],$ if $K>\frac{S_0+S_{(0)}\sum\limits_{i=1}^N g_i\left([S]_{N}\right) }{(N+1)}.$ \\ For $K\leq \frac{S_0+S_{(0)}\sum\limits_{i=1}^N g_i\left([S]_{N}\right) }{(N+1)}$ the set of non arbitrage prices coincides with the point $0.$

The fair price of super-hedge is minimal one for the statistic (\ref{tanijalib6}) with $g_i(X_N)=g_N(X_N)=1, \ i=\overline{1,N-1},$ and is given by the  formula
 \begin{eqnarray}\label{100tanijalib13}
\sup\limits_{Q \in M} E^Q f_1(S_0,S_1, \ldots, S_N)=\left(K-\frac{S_0+S_{(0)} N }{(N+1)}\right)^+.
\end{eqnarray}
The set of non arbitrage prices coincides with the closed  interval \\ $\left[(K-S_0)^+, \left(K-\frac{S_0+S_{(0)} N }{(N+1)}\right)^+\right],$ if $K>\frac{S_0+S_{(0)} N }{(N+1)}.$  For $K\leq \frac{S_0+S_{(0)} N}{(N+1)}$ the set of non arbitrage prices coincides with the point $0.$

\end{te}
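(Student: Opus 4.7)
My plan is to reduce Theorem \ref{tanijalib12} to Theorem \ref{pussytinavika12} by substituting the explicit estimator (\ref{tanijalib6}) into the formula already established there. Since Theorem \ref{pussytinavika12} holds for an arbitrary choice of parameters $0 \le a_i \le 1$ (and $\tau_0 = 1$ is fixed for the estimator), we may apply it directly to the evolution (\ref{tinwau2}) with the estimated parameters and obtain
\begin{eqnarray*}
\sup_{Q \in M} E^Q f_1(S_0,\ldots,S_N) = \left(K - \frac{S_0 \sum_{i=0}^N \prod_{s=1}^i (1-a_s)}{N+1}\right)^+.
\end{eqnarray*}

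The main (and essentially only) computational step is to evaluate $\prod_{s=1}^i (1-a_s)$ for the estimator (\ref{tanijalib6}) with $\tau_0 = 1$. By definition $1 - a_1 = \frac{S_{(0)}}{S_0} g_1([S]_N)$ and $1 - a_s = \frac{g_s([S]_N)}{g_{s-1}([S]_N)}$ for $s \ge 2$, so the product telescopes:
\begin{eqnarray*}
\prod_{s=1}^i (1-a_s) = \frac{S_{(0)}}{S_0} g_1([S]_N) \prod_{s=2}^i \frac{g_s([S]_N)}{g_{s-1}([S]_N)} = \frac{S_{(0)} g_i([S]_N)}{S_0}, \quad i \ge 1,
\end{eqnarray*}
while the $i=0$ term contributes $1$. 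Summing gives $S_0 \sum_{i=0}^N \prod_{s=1}^i (1-a_s) = S_0 + S_{(0)} \sum_{i=1}^N g_i([S]_N)$, which upon substitution yields the claimed formula (\ref{tanijalib13}).

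For the non-arbitrage price interval, I invoke the corresponding assertion of Theorem \ref{pussytinavika12}: the lower bound $(K-S_0)^+$ comes from Jensen's inequality applied to the martingale $S_n$ (so $E^Q f_1 \ge f_1(S_0,\ldots,S_0) = (K-S_0)^+$), while the upper bound coincides with the super-hedge price just computed; the cases $K > \frac{S_0 + S_{(0)} \sum_{i=1}^N g_i([S]_N)}{N+1}$ and $K \le \frac{S_0 + S_{(0)} \sum_{i=1}^N g_i([S]_N)}{N+1}$ are distinguished exactly as before. Finally, the minimal-price formula (\ref{100tanijalib13}) and its non-arbitrage interval are obtained by setting $g_i([S]_N) = g_N([S]_N) = 1$ for $i=\overline{1,N-1}$ in the general formula, giving $\sum_{i=1}^N g_i([S]_N) = N$. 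I do not anticipate a genuine obstacle here; the entire content of the theorem is the telescoping identity above, and the remainder is verbatim quotation of Theorem \ref{pussytinavika12}.
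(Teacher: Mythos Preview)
Your approach is correct and matches the paper's intended argument: Theorem \ref{tanijalib12} is stated in the paper without a separate proof precisely because it is a direct specialization of Theorem \ref{pussytinavika12}, obtained by substituting the estimator (\ref{tanijalib6}) with $\tau_0=1$ and using the telescoping identity $S_0\prod_{s=1}^i(1-a_s)=S_{(0)}\,g_i([S]_N)$ (which is exactly relation (\ref{tanijalib7}) with $\tau=S_{(0)}$). The only point you pass over lightly is the word ``minimal'' in the last part: since each $g_i([S]_N)\le 1$, one has $\sum_{i=1}^N g_i([S]_N)\le N$, so the super-hedge price $\bigl(K-\tfrac{S_0+S_{(0)}\sum g_i}{N+1}\bigr)^+$ is indeed smallest when all $g_i\equiv 1$; adding this one-line remark would make the argument complete.
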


\begin{te}\label{tanijalib16}
 On the probability space $\{\Omega_N, {\cal F}_N, P_N\},$  let the evolution of risky asset be given by the formula (\ref{tinwau2})
with   the parameters  $ a_i, \ i=\overline{1,N},$ given by the formula (\ref{tanijalib6}).       For the payoff function  $f(S_0,S_1, \ldots, S_N)=\left(\frac{\sum\limits_{i=0}^N S_i}{N+1}- K\right)^+, \ K >0,  $     the fair price of super-hedge is given by the  formula 
$$ \sup\limits_{Q \in M} E^Qf(S_0, S_1, \ldots,S_N)=  $$
 \begin{eqnarray}\label{tanijalib17}
\left\{\begin{array}{l l} (S_0 - K)^+, & \mbox{if} \quad \frac{S_0+S_{(0)}\sum\limits_{i=1}^N g_i\left([S]_{N}\right) }{(N+1)} \geq K,\\
\left(S_0- \frac{S_0+S_{(0)}\sum\limits_{i=1}^N g_i\left([S]_{N}\right) }{(N+1)}\right), & \mbox{if} \quad \frac{S_0+S_{(0)}\sum\limits_{i=1}^N g_i\left([S]_{N}\right) }{(N+1)} < K.
\end{array} \right.
\end{eqnarray}
If $\frac{S_0+S_{(0)}\sum\limits_{i=1}^N g_i\left([S]_{N}\right) }{(N+1)} \geq K,$ then the set of non arbitrage prices coincides with the point $(S_0-K)^+, $ in case if
 $\frac{S_0+S_{(0)}\sum\limits_{i=1}^N g_i\left([S]_{N}\right) }{(N+1)} < K$ the set of non arbitrage prices coincides with the closed  interval $\left[(S_0-K)^+, \left(S_0- \frac{S_0+S_{(0)}\sum\limits_{i=1}^N g_i\left([S]_{N}\right) }{(N+1)}\right)\right].$

The fair price of super-hedge is minimal one for the statistic (\ref{tanijalib6}) with $g_i(X_N)=g_N(X_N)=1, \ i=\overline{1,N-1},$ and is given by the formula
 $$ \sup\limits_{Q \in M} E^Qf(S_0, S_1, \ldots,S_N)=  $$
 \begin{eqnarray}\label{100tanijalib17}
\left\{\begin{array}{l l} (S_0 - K)^+, & \mbox{if} \quad \frac{S_0+S_{(0)} N }{(N+1)} \geq K,\\
\left(S_0- \frac{S_0+S_{(0)} N }{(N+1)}\right), & \mbox{if} \quad \frac{S_0+S_{(0)} N }{(N+1)} < K.
\end{array} \right.
\end{eqnarray}
If $\frac{S_0+S_{(0)} N }{(N+1)} \geq K,$ then the set of non arbitrage prices coincides with the point $(S_0-K)^+, $ in case if
 $\frac{S_0+S_{(0)} N }{(N+1)} < K$ the set of non arbitrage prices coincides with the closed  interval $\left[(S_0-K)^+, \left(S_0- \frac{S_0+S_{(0)} N }{(N+1)}\right)\right].$
\end{te}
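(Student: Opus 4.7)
The plan is to reduce this theorem to Theorem \ref{pupsiktinavika16}, which gives the fair super-hedge price for the Asian-type call payoff $f(S_0,S_1,\ldots,S_N)=\left(\frac{\sum_{i=0}^N S_i}{N+1}-K\right)^+$ on the evolution (\ref{tinwau2}) in terms of the products $\prod_{s=1}^i(1-a_s)$. Since the parameters $a_i$ are now specified by the estimation formula (\ref{tanijalib6}) with $\tau_0=1$, everything reduces to computing these products explicitly for this choice of $a_i$ and substituting into the formula of Theorem \ref{pupsiktinavika16}.

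First I will verify that the estimators (\ref{tanijalib6}) satisfy $0\leq a_i\leq 1$, which is exactly the condition imposed in Theorem \ref{pupsiktinavika16}; this uses the monotonicity $g_1([S]_N)\geq g_2([S]_N)\geq\ldots\geq g_N([S]_N)>0$ from Theorem \ref{puptanijalib5} together with $S_{(0)}\leq S_0$. Next, I will compute by telescoping
\begin{equation*}
\prod_{s=1}^i(1-a_s)=\frac{S_{(0)}}{S_0}\,g_1([S]_N)\prod_{s=2}^i\frac{g_s([S]_N)}{g_{s-1}([S]_N)}=\frac{S_{(0)}}{S_0}\,g_i([S]_N),\qquad i=\overline{1,N},
\end{equation*}
so that
\begin{equation*}
S_0\sum_{i=0}^N\prod_{s=1}^i(1-a_s)=S_0+S_{(0)}\sum_{i=1}^N g_i([S]_N).
\end{equation*}

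Substituting this identity into formula (\ref{pupsiktinavika17}) of Theorem \ref{pupsiktinavika16} yields immediately the two cases in (\ref{tanijalib17}) according to whether $\frac{S_0+S_{(0)}\sum_{i=1}^N g_i([S]_N)}{N+1}$ is at least $K$ or strictly less than $K$. The statement about the interval of non-arbitrage prices then follows from the general results of \cite{WalterSchacher} and \cite{DMW90}, combined with the Jensen bound $\inf_{Q\in M}E^Q f(S_0,\ldots,S_N)\geq (S_0-K)^+$ giving the left endpoint. The specialization to $g_i\equiv 1$ just replaces $\sum_{i=1}^N g_i([S]_N)$ by $N$, which produces (\ref{100tanijalib17}).

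The only real work is the telescoping identity and checking the admissibility $a_i\in[0,1]$ of the estimator; both are short and there is no serious obstacle, because the heavy lifting (identification of the sup of $E^Q f$ over all martingale measures via spot measures, and the passage to the limits $\varepsilon_s(\omega_s^1)\to-\infty$, $\varepsilon_s(\omega_s^2)\to+\infty$) was already carried out in Theorem \ref{pupsiktinavika16}. The main point to be careful about is that $\tau_0=1$ corresponds, as noted in the paragraph preceding Theorem \ref{tanijalib8}, to the minimal fair super-hedge price over the admissible range $0<\tau_0\leq 1$, which is why no further optimization over $\tau_0$ is required in the statement.
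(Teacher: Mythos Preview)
Your proposal is correct and is exactly the approach the paper intends: the paper states Theorem \ref{tanijalib16} without a separate proof, treating it as the direct specialization of Theorem \ref{pupsiktinavika16} obtained by inserting the estimators (\ref{tanijalib6}) with $\tau_0=1$. The telescoping identity $\prod_{s=1}^i(1-a_s)=\frac{S_{(0)}}{S_0}\,g_i([S]_N)$ that you write out is precisely the content of the relations (\ref{tanijalib7}) in Theorem \ref{puptanijalib5}, so your reduction matches the paper's implicit argument.
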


\section{Conclusions.}
Section 1 provides an overview of the achievements and formulates the main problem that has been solved.
Section 2 contains the formulation of conditions which must satisfy the evolution of risky assets.
In Section 3, conditions (\ref{1vitasja7}) - (\ref{3vitasja7}) are formulated for the set of nonnegative random variables with the help of which a family of measures is constructed in a recurrent way. In Lemma \ref{witka0}, conditions were found for the existence of bounded nonnegative random variables satisfying the conditions  (\ref{1vitasja7}) - (\ref{3vitasja7}).
In Lemma \ref{witka1}, it was proved that the family of measures introduced in the recurrent way is equivalent to the original measure.

Theorem \ref{witka2} gives sufficient conditions under which the introduced family of measures is the set of martingale measures equivalent to the original measure for the evolution of risky assets considered in Section 1.

In Section 4, relying on the concept of an exhaustive decomposition of a measurable space, in Lemma \ref{witka4}, we prove an integral inequality for a nonnegative random variable for the constructed family of martingale measures.

In Theorem \ref{witka5}, for a special class of evolutions of risky assets for the nonnegative random variable satisfying the integral inequality, obtained in Lemma \ref{witka4}, a pointwise system of inequalities is obtained.

In Lemma \ref{witka6}, on the basis of Lemma  \ref{witka4},  we obtained a pointwise system of inequalities for a nonnegative random variable for the general case of the evolution of risky assets.

Theorem \ref{witka7} contains sufficient conditions under the fulfillment of which   the resulting system of inequalities with respect to the nonnegative random variable has a solution whose right-hand side satisfies the condition: the conditional expectation of the right-hand side of the inequality with respect to the filtration is equal to 1.

Theorem \ref{1witka7} solves the same problem as in Theorem \ref{Tinnna1} for the general case of the evolution of risky assets.

In Section 5, based on the inequalities obtained in Theorems \ref{witka7} and \ref{1witka7}, we prove a theorem on the optional decomposition of nonnegative super-martingales with respect to the family of equivalent martingale measures.

The description of the family of equivalent martingale measures given in Theorem \ref{witka2} is rather general, therefore, in Section 6, a spot set of measures is introduced.
In Lemma \ref{vitasja114}, the representation is obtained for the family of spot measures.

Based on the concept of the spot family of measures, the family   of $\alpha$-spot
  measures based on a set of positive random variables is introduced.
Theorem  \ref{vitasja121} provides sufficient conditions for the integral over the set of $\alpha$-spot measures to be an integral over the set of spot measures.

 In Theorem \ref{10vitakolja121}, sufficient conditions are given when the family of spot measures is a family of martingale measures and the constructed family of measures, that is an integral over the set of $\alpha$-spot  measures, is a family of martingale measures being  equivalent to the original measure.

Theorem \ref{100vitakolja121} describes the class of evolutions of risky assets for which the family of equivalent martingale measures is such that each martingale measure is an integral over the set of spot measures.

Section 7 is devoted to the application of the results obtained in the previous sections. A class of random processes is considered, which contains well-known processes of the type  ARCH and GARCH ones. Two types of random processes are considered, those for which the price of an asset cannot go down to zero and those for which the price can go down to zero during the period under consideration. The first class of processes describes the evolution of well-managed assets. We will call these assets relatively stable.

Theorem \ref{vitochkakiss1} asserts that for the evolution of relatively stable assets in the period under consideration, the family of martingale measures is one and the same.
The family of martingale measures for the evolution of risky assets whose price can drop to zero is contained in the family of martingale measures for the evolution of relatively stable assets. Each of the martingale measures for the considered class of evolutions is an integral over the set of spot martingale measures. On this basis, the fair price of the super hedge is given by the formula (\ref{vitochkakiss2}).
In Theorems \ref{tinvika1} and \ref{tinvika10}, an interval of non-arbitrage prices is found for a wide class of payoff functions in the case when evolution describes relatively unstable assets.
This range is quite wide for the payment functions of standard put and call options. The fair price of the super hedge is in this case the starting price of the underlying asset. In Theorems \ref{tinnavika36}, \ref{tinnvika39} estimates are found for the fair price of the super-hedge for the introduced class of evolutions with respect to stable assets.
In Theorems \ref{pussytinavika1} and \ref{pussytinavika9}, formulas are found for the fair price of contracts with call and put options for the evolution of assets described by parametric processes.

In Theorems \ref{pussytinavika12} and \ref{pupsiktinavika16}, the same formulas are found for Asian-type put and call options. A characteristic feature of these estimates is that for the evolution of relatively stable assets, the fair price of the super hedge is less than the price of the initial price of the asset.

 In Section 8, the estimates of the parameters of risky assets included in the evolution are obtained. This result is contained in Theorem \ref{puptanijalib5}.
In Theorems \ref{tanijalib8} and \ref{tanijalib9}, formulas are found for the fair price of contracts with call and put options for the obtained parameter estimates, and the interval of non-arbitrage prices for different statistics is found. The same results are contained in Theorems \ref{tanijalib12}, \ref{tanijalib16} for Asian-style call and put options.

\vskip 5mm

\end{document}